\def\llncs{0}
\def\fullpage{1}
\def\anonymous{0}
\def\authnote{1}
\def\notxfont{0}
\def\submission{0}
\def\llncs{1}
\definecolor{darkblue}{rgb}{0,0,0.6}
\definecolor{darkgreen}{rgb}{0,0.5,0}
\definecolor{maroon}{rgb}{0.5,0.1,0.1}
\definecolor{dpurple}{rgb}{0.2,0,0.65}
\DeclareMathAlphabet{\mathpzc}{OT1}{pzc}{m}{it}
\newtheoremstyle{thicktheorem}%
{\topsep}
{\topsep}
{\itshape}{}%
{\bfseries}%
{.}
{ }%
{\thmname{#1}\thmnumber{ #2}%
		\thmnote{ (#3)}%
}
\newtheoremstyle{remark}
{\topsep}
{\topsep}
	{}
	{}
	{}
	{.}
	{ }
	{\textit{\thmname{#1}}\thmnumber{ #2}
			\thmnote{ (#3)}%
	}
	\theoremstyle{thicktheorem}
	\newtheorem{theorem}{Theorem}[section]
	\newtheorem{lemma}[theorem]{Lemma}
	\newtheorem{corollary}[theorem]{Corollary}
	\newtheorem{definition}[theorem]{Definition}
	\theoremstyle{remark}
	\newtheorem{remark}[theorem]{Remark}
\Crefname{MyClaim}{Claim}{Claims}
	\crefname{theorem}{Theorem}{Theorems}
	\crefname{assumption}{Assumption}{Assumptions}
	\crefname{construction}{Construction}{Constructions}
	\crefname{corollary}{Corollary}{Corollaries}
	\crefname{conjecture}{Conjecture}{Conjectures}
	\crefname{definition}{Definition}{Definitions}
	\crefname{exmaple}{Example}{Examples}
	\crefname{experiment}{Experiment}{Experiments}
	\crefname{counterexample}{Counterexample}{Counterexamples}
	\crefname{lemma}{Lemma}{Lemmata}
	\crefname{observation}{Observation}{Observations}
	\crefname{proposition}{Proposition}{Propositions}
	\crefname{remark}{Remark}{Remarks}
	\crefname{claim}{Claim}{Claims}
	\crefname{fact}{Fact}{Facts}
	\crefname{note}{Note}{Notes}
 \crefname{appendix}{App.}{Appendices}
 \crefname{section}{Sec.}{Sections}
\renewcommand*{\backref}[1]{}
	\renewcommand*{\backref}[1]{(Cited on page~#1.)}
\newcommand{\mor}[1]{}
\newcommand{\minki}[1]{}
\newcommand{\takashi}[1]{}
\newcommand{\mor}[1]{$\ll$\textsf{\color{red} Tomoyuki: { #1}}$\gg$}
\newcommand{\takashi}[1]{$\ll$\textsf{\color{orange} Takashi: { #1}}$\gg$}
\newcommand{\minki}[1]{$\ll$\textsf{\color{darkgreen} Minki: { #1}}$\gg$}
\newcommand{\shira}[1]{$\ll$\textsf{\color{violet} Yuki: { #1}}$\gg$}
\newcommand{\Tr}{\mathrm{Tr}}
\newcommand{\com}{\mathsf{com}}
\newcommand{\SD}{\mathsf{SD}} 
\newcommand{\Good}{\mathsf{Good}}
\newcommand{\BQP}{\mathbf{BQP}}
\newcommand{\puzz}{\mathsf{puzz}}
\newcommand{\ans}{\mathsf{ans}}
\newcommand{\Samp}{\algo{Samp}}
\newcommand{\Supp}{\mathrm{Supp}}
\newcommand{\seteq}{\coloneqq}
\newcommand{\cA}{\mathcal{A}}
\newcommand{\cB}{\mathcal{B}}
\newcommand{\cC}{\mathcal{C}}
\newcommand{\cD}{\mathcal{D}}
\newcommand{\cE}{\mathcal{E}}
\newcommand{\cF}{\mathcal{F}}
\newcommand{\cO}{\mathcal{O}}
\newcommand{\cP}{\mathcal{P}}
\newcommand{\cQ}{\mathcal{Q}}
\newcommand{\cR}{\mathcal{R}}
\newcommand{\cS}{\mathcal{S}}
\newcommand{\cU}{\mathcal{U}}
\newcommand{\cV}{\mathcal{V}}
\def\makeuppercase#1{
\expandafter\newcommand\csname tl#1\endcsname{\widetilde{#1}}
}
\def\makelowercase#1{
\expandafter\newcommand\csname tl#1\endcsname{\widetilde{#1}}
}
\newcommand{\N}{\mathbb{N}}
\newcommand{\secp}{\lambda}
\newcommand{\B}{\entity{B}}
\newcommand{\view}{\mathsf{view}}
\newcommand*{\algo}[1]{\ensuremath{\mathsf{#1}}}
\newcommand*{\entity}[1]{\mathcal{#1}}
\newenvironment{boxfig}[2]{\begin{figure}[#1]\fbox{\begin{minipage}{0.97\linewidth}
                        \vspace{0.2em}
                        \makebox[0.025\linewidth]{}
                        \begin{minipage}{0.95\linewidth}
            {{
                        #2 }}
                        \end{minipage}
                        \vspace{0.2em}
                        \end{minipage}}}{\end{figure}}
\newcommand{\bit}{\{0,1\}}
\newcommand{\Ver}{\algo{Ver}}
\newcommand{\st}{\algo{st}}
\newcommand{\TD}{\algo{TD}}
\newcommand{\negl}{{\mathsf{negl}}}
\newcommand{\poly}{{\mathrm{poly}}}
\DeclareRobustCommand
\title{Cryptographic Characterization of Quantum Advantage}
\author{\empty}\institute{\empty}
\author{}
\author{
Tomoyuki Morimae\inst{1} \and Yuki Shirakawa\inst{1} \and Takashi Yamakawa\inst{2,3,1}
}
\institute{
 Yukawa Institute for Theoretical Physics, Kyoto University, Kyoto, Japan \and NTT Social Informatics Laboratories, Tokyo, Japan \and NTT Research Center for Theoretical Quantum Information, Atsugi, Japan
}
\author[1]{Tomoyuki Morimae}
\author[1]{ Yuki Shirakawa}
\author[2,3,1]{ Takashi Yamakawa}
\affil[1]{{\small Yukawa Institute for Theoretical Physics, Kyoto University, Kyoto, Japan}\authorcr{\small tomoyuki.morimae@yukawa.kyoto-u.ac.jp} \authorcr{\small yuki.shirakawa@yukawa.kyoto-u.ac.jp}}
\affil[2]{{\small NTT Social Informatics Laboratories, Tokyo, Japan}\authorcr{\small takashi.yamakawa@ntt.com}}
\affil[3]{{\small NTT Research Center for Theoretical Quantum Information, Atsugi, Japan}}
\date{}
\begin{document}

\maketitle

\begin{abstract}
Quantum computational advantage refers to an existence of computational tasks that are easy for quantum computing but
hard for classical one.
Unconditionally showing quantum advantage is beyond our current understanding of
complexity theory, and therefore some computational assumptions are needed.
Which complexity assumption is necessary and sufficient for quantum advantage?
In this paper, we show that inefficient-verifier proofs of quantumness (IV-PoQ) exist if and only if
classically-secure one-way puzzles (OWPuzzs) exist.
As far as we know, this is the first time that a complete cryptographic characterization of
quantum advantage is obtained.
IV-PoQ are a generalization of proofs of quantumness (PoQ) where 
the verifier is efficient during the interaction but may use unbounded time afterward.
IV-PoQ capture various types of quantum advantage previously studied, such as sampling-based quantum advantage and searching-based one.
Previous work [Morimae and Yamakawa, Crypto 2024] showed that IV-PoQ can be constructed from OWFs, but 
a construction of IV-PoQ from weaker assumptions was left open. Our result solves the open problem,
because OWPuzzs are believed to be weaker than OWFs.
OWPuzzs are one of the most fundamental quantum cryptographic primitives 
implied by many quantum cryptographic primitives weaker than one-way functions (OWFs), such as 
pseudorandom unitaries (PRUs), pseudorandom state generators (PRSGs), and one-way state generators (OWSGs). 
The equivalence between IV-PoQ and classically-secure OWPuzzs 
therefore highlights that if
there is no quantum advantage, then these fundamental cryptographic primitives do not exist.
The equivalence also means that quantum advantage is an example of the applications of OWPuzzs.
Except for commitments, no application of OWPuzzs was known before. Our result shows that quantum advantage is another
application of OWPuzzs, which solves the open question of [Chung, Goldin, and Gray, Crypto 2024]. 
Moreover, it is the first quantum-computation-classical-communication (QCCC) application of OWPuzzs.
To show the main result, we introduce 
several new concepts and show some results that will be of independent interest. 
In particular, we introduce an interactive (and average-case) version of 
sampling problems where the task is to sample the transcript obtained by a classical interaction between two quantum polynomial-time algorithms.
We show that quantum advantage in interactive sampling problems is equivalent to the existence of IV-PoQ,
which
is considered as an interactive (and average-case) version of Aaronson's result [Aaronson, TCS 2014],
$\mathbf{SampBQP}\neq\mathbf{SampBPP}\Leftrightarrow \mathbf{FBQP}\neq\mathbf{FBPP}$.
Finally, we also introduce zero-knowledge IV-PoQ and study sufficient and necessary conditions for their existence.
\end{abstract}

\thispagestyle{empty}
\newpage

\setcounter{tocdepth}{2}
\tableofcontents
\thispagestyle{empty}
\newpage

\setcounter{page}{1}
\section{Introduction}
\label{sec:introduction} 
Quantum computational advantage refers to the existence of computational tasks that are easy for quantum computing
but hard for classical one. Unconditionally showing quantum advantage is extremely hard, and is beyond our current understanding of
complexity theory.\footnote{There are several interesting results (such as \cite{BGK18}) that show unconditional quantum
advantage by restricting classical computing. In this paper, we consider any polynomial-time classical computing.}
Some computational assumptions are therefore required.
Which complexity assumption is necessary and sufficient for quantum advantage?
As far as we know, no complete characterization of quantum advantage has been achieved before.

In this paper, we identify a cryptographic assumption that is
necessary and sufficient for quantum advantage.
Our main result is the following one:\footnote{
In this paper, all classically-secure OWPuzzs are 
ones with $(1-\negl(\secp))$-correctness and $(1-1/\poly(\secp))$-security.
Unlike quantumly-secure OWPuzzs, we do not know how to amplify the gap for classically-secure OWPuzzs.}\footnote{\label{footnote_nonuniform}In this paper, we consider the uniform adversarial model (i.e., adversaries are modeled as Turing machines), and some steps of the proofs of \Cref{thm:main} crucially rely on the uniformity of the adversary. (See \Cref{sec:uniform} for the detail.) We leave it open to prove (or disprove) the non-uniform variant of \Cref{thm:main}.}

\begin{theorem}
\label{thm:main}
Inefficient-verifier proofs of quantumness
(IV-PoQ) exist if and only if classically-secure one-way puzzles (OWPuzzs) exist.    
\end{theorem}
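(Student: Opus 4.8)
The plan is to prove the two implications of \Cref{thm:main} separately and to connect them through a new, interactive and average-case, notion of sampling problems. We call an \emph{interactive sampling problem} the task --- specified by a pair of quantum polynomial-time algorithms --- of reproducing, up to $1/\poly(\secp)$ statistical error, the distribution of the classical transcript produced when the two algorithms interact. The first step is an interactive, average-case analogue of Aaronson's equivalence $\mathbf{SampBQP}\neq\mathbf{SampBPP}\Leftrightarrow\mathbf{FBQP}\neq\mathbf{FBPP}$, namely: quantum advantage for interactive sampling problems holds if and only if IV-PoQ exist. One direction is almost immediate from the definitions, since an IV-PoQ already consists of two interacting algorithms together with an inefficient predicate that scores transcripts, so completeness and soundness say precisely that the honest transcript distribution is classically unsamplable. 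The other direction is an interactive search-to-decision / sampling-to-search reduction that turns a sampling gap into a verifiable searching gap; it must be implemented with a uniform adversary, which is one of the places where the whole argument is confined to the uniform model (cf.\ Footnote~\ref{footnote_nonuniform}).

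For the direction ``IV-PoQ $\Rightarrow$ classically-secure $\mathsf{OWPuzz}$'' I would go through the interactive-sampling characterization to obtain interacting algorithms $\cP,\cV$ whose transcript distribution no classical polynomial-time algorithm can sample, and then extract a one-way puzzle from this gap: $\Samp(1^\secp)$ runs $\langle\cP,\cV\rangle$ and outputs as $\puzz$ the part of the transcript controlled by $\cV$ and as $\ans$ the prover's messages, with $\Ver$ given by the protocol's unbounded final verdict (possibly augmented by an inefficient typicality check on $\puzz$). Correctness is IV-PoQ completeness. For security, a classical adversary that, given a genuine $\puzz$, produces accepting prover messages should contradict either soundness of the IV-PoQ or the classical unsamplability of the transcript distribution. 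The subtlety is that, because the verifier's messages in the IV-PoQ arrive \emph{sequentially} while a one-way-puzzle adversary sees all of $\puzz$ at once, one cannot simply designate the verifier's messages as the puzzle; reconciling this (via a suitable normal form for IV-PoQ, or by iterating the argument round by round) and making it quantitative under the weak $(1-\negl(\secp))$-correctness / $(1-1/\poly(\secp))$-security that a classically-secure $\mathsf{OWPuzz}$ is allowed --- recall these parameters cannot be amplified --- is where the work goes.

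The converse, ``classically-secure $\mathsf{OWPuzz}$ $\Rightarrow$ IV-PoQ'', is the part I expect to be the main obstacle. From $(\Samp,\Ver)$ I would build a constant-round IV-PoQ in which the honest prover samples $(\puzz,\ans)\leftarrow\Samp(1^\secp)$, sends $\puzz$, and later reveals $\ans$, while the inefficient verifier accepts iff $\Ver(\puzz,\ans)$ accepts \emph{and} $\puzz$ is \emph{typical}, i.e.\ $\Pr[\Samp$ outputs $\puzz$ as its puzzle$]$ is not much below its average value --- a quantity the unbounded verifier computes exactly. Completeness holds because low-probability puzzles carry negligible total mass. Soundness is delicate precisely because a malicious \emph{classical} prover may choose its own $\puzz$, so $\mathsf{OWPuzz}$ security cannot be invoked directly: the argument dichotomizes on whether a classical polynomial-time algorithm can sample puzzles close in distribution to the real puzzle distribution --- if it cannot, the typicality test already rejects every classical prover; if it can, a cheating prover yields, via a hybrid/averaging argument over the verifier's coins, a classical algorithm breaking one-wayness of $\Samp$ on a genuinely distributed puzzle. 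Making this dichotomy quantitative under the non-amplifiable security of classically-secure $\mathsf{OWPuzz}$, and routing the verifier's statistical tests through the interactive-sampling machinery so that they can be ``simulated'' rather than literally performed, is where I expect the real difficulty to lie; it is also what, once more, forces the proof into the uniform model (Footnote~\ref{footnote_nonuniform}).
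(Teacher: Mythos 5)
Your high-level skeleton matches the paper's: you correctly identify the interactive, average-case sampling notion (the paper's Int-QASs) and the equivalence ``Int-QAS $\Leftrightarrow$ IV-PoQ'' as the organizing principle, and you correctly sense that both main implications hinge on a dichotomy between ``the relevant quantum distribution is classically samplable'' and ``it is not.'' However, there are genuine gaps in how you instantiate both directions, and they concern precisely the ideas the paper's proof actually turns on. First, in the direction IV-PoQ $\Rightarrow$ OWPuzz, one-way functions never appear in your argument, yet they are the load-bearing object. The paper's dichotomy is not ``samplable vs.\ unsamplable transcripts'' alone but the QAS/OWF condition (\cref{def:QAS/OWF}): if the transcript distribution \emph{is} classically samplable by some $\cS$, one defines $f(k,r)=(k,\tau_{k-1},c_k)$ from $\cS$, uses the non-existence of (distributional) OWFs to get an inverter $\cR$, and composes $\cS$ with $\cR$ to build a classical cheating prover that answers each round adaptively (\cref{thm:Int-QAS_to_QAS/OWF}); a universal OWF construction (\cref{cor:univ_DistOWF}) is then needed to make $f$ independent of $\cS$. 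Your proposed puzzle (``$\puzz=$ the verifier's messages of the honest transcript, $\ans=$ the prover's messages'') is not the paper's construction and its security does not reduce to IV-PoQ soundness: an OWPuzz adversary attacks a non-interactive object whose challenge is correlated with the honest quantum prover's messages, which is a different attack model from a malicious prover who must answer adaptively, and you do not resolve this. The paper avoids the issue entirely by routing through the QAS/OWF condition and producing the OWPuzz either from the OWF branch or from a \emph{non-interactive} IV-PoQ with trivial puzzle $\puzz=1^\secp$ (\cref{thm:QAS/OWF_to_OWPuzz}).

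Second, in the direction OWPuzz $\Rightarrow$ IV-PoQ, your ``typicality'' test ($\Pr[\Samp\text{ outputs }\puzz]$ not much below average) is not sound: a classical prover can deterministically output a single fixed high-probability $\puzz^*$ for which it has precomputed an accepting $\ans^*$, passing both the typicality check and $\Ver$, so the test does not ``reject every classical prover'' even when the puzzle distribution is classically unsamplable. The paper's replacement is the Aaronson-style test via time-bounded prefix Kolmogorov complexity on $N$ i.i.d.\ samples, $K^T_U(y_1,\dots,y_N)\ge \log\frac{1}{p_{y_1}\cdots p_{y_N}}-\log^2\secp$ (\cref{lem:QAS_IVPoQ}), whose soundness analysis (via Kraft's inequality, Pinsker, and a marginal-KL bound) is exactly what converts any accepted classical prover into a classical sampler for the quantum distribution; the OWF branch is then handled by the actual OWF-based IV-PoQ of \cref{lem:OWF_IVPoQ}, not by ``breaking one-wayness via averaging over the verifier's coins'' (your protocol's verifier has no coins before receiving $\puzz$). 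Finally, both of your dichotomies are stated globally (``if it cannot \dots if it can''), but the sets of security parameters on which the sampler succeeds and on which the inverter succeeds are chosen by different adversaries and need not intersect infinitely often; this is exactly why the paper defines the QAS/OWF condition security-parameter-wise and proves the quantifier-exchange lemma (\cref{lem:quantifier}). Without that bookkeeping, the universal-OWF step, and the Kolmogorov-complexity test, the proposal does not yield a proof of \cref{thm:main}.
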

As far as we know, this is the first time that a complete cryptographic characterization of quantum advantage is obtained.

\paragraph{What are IV-PoQ?}
IV-PoQ are a generalization of proofs of quantumness (PoQ)~\cite{JACM:BCMVV21}. 
A PoQ is an interactive protocol between a prover and a classical probabilistic polynomial-time (PPT)
verifier over a classical channel. 
There exists a quantum polynomial-time (QPT) prover such that the verifier accepts with high probability ({\it completeness}), but
for any PPT prover the verifier rejects with high probability ({\it soundness}).
PoQ can be constructed from several cryptographic assumptions, such as
(noisy) trapdoor claw-free functions with the adaptive-hardcore-bit property~\cite{JACM:BCMVV21},
trapdoor 2-to-1 collision-resistant hash functions~\cite{NatPhys:KMCVY22}, 
(full-domain) trapdoor permutations~\cite{ITCS:MorYam23},
quantum homomorphic encryptions~\cite{STOC:KLVY23},
or knowledge assumptions~\cite{knowledge_assumptions}.
Non-interactive PoQ are possible based on the hardness of factoring~\cite{FOCS:Shor94} or random oracles \cite{10.1145/3658665}.

IV-PoQ~\cite{C:MorYam24} are the same as PoQ except that the verifier's final computation to make the decision can be unbounded.
IV-PoQ are a generalization of PoQ, and as we will explain later, IV-PoQ capture various types of quantum advantage studied so far,
such as sampling-based quantum advantage and searching-based one.

Completely identifying a necessary and sufficient assumption for the existence of (IV-)PoQ remained open.
In particular, in \cite{C:MorYam24}, IV-PoQ were constructed from
classically-secure OWFs,
but the problem of constructing IV-PoQ from weaker assumptions was left open in that paper.
Our main result \cref{thm:main} solves the open problem, because as we will explain later,
OWPuzzs are believed to be weaker than OWFs~\cite{Kre21,C:MorYam22,STOC:KhuTom24}.
Moreover, a known necessary condition for the existence of IV-PoQ
was only the almost trivial one: ${\bf BPP}\neq{\bf PP}$\footnote{The output probability distribution of any QPT algorithm can be computed by 
a classical polynomial-time deterministic algorithm that queries the $\mathbf{PP}$ oracle~\cite{ForRog99}. Therefore, if $\mathbf{BPP}=\mathbf{PP}$, a PPT prover can cheat the verifier.}~\cite{C:MorYam24}.
Our main result \cref{thm:main} improves this to
a highly non-trivial necessary condition, namely, the existence of classically-secure OWPuzzs.\footnote{This is an improvement,
because if classically-secure OWPuzzs exist then $\mathbf{BPP}\neq\mathbf{PP}$.}

\paragraph{What are OWPuzzs?}
In classical cryptography, the existence of OWFs is the minimum assumption~\cite{FOCS:ImpLub89}, because
many primitives exist if and only if OWFs exist, such as pseudorandom generators (PRGs), pseudorandom functions (PRFs),
zero-knowledge, commitments, digital signatures, and secret-key encryptions (SKE), 
and almost all primitives imply OWFs. 
On the other hand, recent active studies have revealed that in quantum cryptography 
OWFs are not necessarily the minimum assumption.
Many fundamental primitives have been introduced, such as pseudorandom unitaries (PRUs)~\cite{C:JiLiuSon18}, pseudorandom function-like state generators (PRFSGs)~\cite{C:AnaQiaYue22}, unpredictable state generators (UPSGs)~\cite{cryptoeprint:2024/701},
pseudorandom state generators (PRSGs)~\cite{C:JiLiuSon18}, one-way state generators (OWSGs)~\cite{C:MorYam22},
EFI pairs~\cite{ITCS:BCQ23}, and one-way puzzles (OWPuzzs)~\cite{STOC:KhuTom24}. 
They could exist even if OWFs do not exist~\cite{Kre21,STOC:KQST23,STOC:LomMaWri24},
but still imply several useful applications such as message authentication codes~\cite{C:AnaQiaYue22},
commitments~\cite{C:MorYam22,C:AnaQiaYue22}, multi-party computations~\cite{C:MorYam22,C:AnaQiaYue22,C:BCKM21b,EC:GLSV21}, 
secret-key encryptions~\cite{C:AnaQiaYue22}, private-key quantum money~\cite{C:JiLiuSon18}, 
digital signatures~\cite{C:MorYam22}, etc.

In particular, one-way puzzles (OWPuzzs) are one of the most fundamental primitives in this ``cryptographic world below OWFs''.
A OWPuzz is a pair $(\Samp,\Ver)$ of two algorithms. $\Samp$ is a QPT algorithm that takes $1^\secp$ as input
and outputs two classical bit strings $\puzz$ and $\ans$.
$\Ver$ is an unbounded algorithm that takes $\puzz$ and $\ans'$ as input, and outputs $\top$ or $\bot$.
We require two properties, correctness and security.
Correctness requires that
$\Ver$ accepts $(\puzz,\ans)$ sampled by $\Samp$ with large probability.
Security requires that for any QPT algorithm $\cA$ that takes $\puzz$ as input and outputs $\ans'$, 
$\Ver$ accepts $(\puzz,\ans')$ with only small probability.
In particular, when the security is required only for all PPT adversaries, we say that a OWPuzz is classically-secure. 
(Note that the $\Samp$ algorithm of classically-secure OWPuzzs is QPT, not PPT, even when we consider classical security.)
\if0
Classically-secure OWPuzzs seem to be weaker than classically-secure OWFs, because
classically-secure OWFs imply classically-secure OWPuzzs\footnote{Let $f:\bit^*\to\bit^*$ be a classically-secure OWF. 
From it, we construct a classically-secure OWPuzz $(\Samp,\Ver)$ as follows. $\Samp$: On input $1^\secp$,
choose $x\gets\bit^\secp$, and output $\puzz\coloneqq f(x)$ and $\ans\coloneqq x$.
$\Ver$: Accept if and only if $f(\ans')=\puzz$.}, and
there is an oracle separation between classically-secure OWFs and
classically-secure OWPuzzs.\footnote{Let us consider the oracle $(\cU,\mathbf{PSPACE})$, where $\cU$ is the Haar random oracle of \cite{Kre21}.
By using the result of \cite{Kre21}, pseudorandom unitaries~\cite{C:JiLiuSon18} exist relative to the oracle.
Then from \cite{C:MorYam22} and \cite{cryptoeprint:2023/1620}, quantumly-secure (and therefore classically-secure) OWPuzzs exist relative to the oracle.
On the other hand, classically-secure OWFs do not exist relative to the oracle, because of the following two reasons.
First, we can just ignore $\cU$ because it is not used anywhere: the evaluation algorithms of OWFs and adversaries are all classical.
Second, the adversary can then trivially break the OWFs by querying the {\bf PSPACE} oracle.
}
\fi

As is shown in \cref{fig:result_summary}, OWPuzzs (and therefore classically-secure OWPuzzs) are implied by
many primitives such as
\begin{itemize}
    \item 
    PRUs, PRFSGs, UPSGs, PRSGs, (pure) OWSGs,
    \item
(pure) private-key quantum money,
secret-key encryption schemes,
digital signatures,
\item
many quantum-computation-classical-communication
(QCCC) primitives\footnote{Here, QCCC primitives are primitives with local quantum computation and classical communication. For example, in QCCC commitments, sender and receiver are QPT, while the message exchanged between them are classical.}~\cite{C:ChuGolGra24,STOC:KhuTom24,cryptoeprint:2024/1707},
\item 
quantum EFID pairs.\footnote{It is a pair $(G_0,G_1)$ of two QPT algorithms that output classical bit strings
such that the output distributions are statistically far but computationally indistinguishable.
EFID pairs imply OWPuzzs by defining $(\Samp,\Ver)$ as follows.
$\Samp:$ On input $1^\secp$, choose $(b_1,...,b_{\ell(\secp)})\gets\bit^{\ell(\secp)}$ and set $\ans\coloneqq (b_1,...,b_{\ell(\secp)})$,
where $\ell$ is a certain polynomial.
Run $x_{b_i}\gets G_{b_i}(1^\secp)$ for each $i\in[\ell(\secp)]$. Set $\puzz \coloneqq (x_{b_1},...,x_{b_\ell})$.
$\Ver:$ Given $\puzz=(x_{b_1},...,x_{b_\ell})$ and $\ans'=(b_1',...,b_\ell')$, 
output $\top$ if and only if $x_{b_i}$ is in the set of all outputs of $G_{b_i'}(1^\secp)$ for each $i\in[\ell(\secp)]$.}
\end{itemize}
Our \cref{thm:main} therefore highlights that if there is no quantum advantage,
then all of these quantum cryptographic primitives do not exist.

On the other hand, although many primitives imply OWPuzzs, no application of OWPuzzs is known except for commitments~\cite{STOC:KhuTom24}.
Finding more applications of OWPuzzs is one of the most important goals in this field.
Our result \cref{thm:main} shows that OWPuzzs imply quantum advantage, which
demonstrates that quantum advantage is another application of OWPuzzs. 
Moreover, we emphasize that this is the first application of OWPuzzs in the QCCC setting:
IV-PoQ are a QCCC primitive because the communication between the verifier and the prover is classical, while
commitments~\cite{STOC:KhuTom24} constructed from OWPuzzs are those over quantum channels.
The question of the existence of QCCC applications of OWPuzzs was raised in \cite{C:ChuGolGra24}.
We solve the open problem.

\paragraph{Why IV-PoQ?}
In addition to (IV-)PoQ, there are mainly two other approaches to demonstrate quantum advantage,
namely sampling-based quantum advantage and searching-based one.
Here we argue that IV-PoQ capture both of them,
and therefore
identifying a necessary and sufficient assumption for the existence of IV-PoQ is significant.

A sampling problem
is a task of sampling from some distributions. There are several distributions that are easy to sample with QPT algorithms
but hard with PPT algorithms,
such as output distributions of random quantum circuits~\cite{NatPhys:BFNV19}, Boson Sampling circuits~\cite{STOC:AarArk11}, constant-depth circuits~\cite{TD04}, IQP circuits~\cite{BreJozShe10,BreMonShe16}, and one-clean-qubit circuits~\cite{FKMNTT18,MorDQC1additive}.
Several assumptions are known to be sufficient for quantum advantage in sampling problems,
but these assumptions are
newly-introduced assumptions that were not studied before such as an average-case 
$\#{\bf P}$-hardness of approximating some functions.
Moreover, quantum advantage in sampling problems is in general not known to be verifiable (even inefficiently).
On the other hand, one advantage of sampling-based quantum advantage (and others relying on newly-introduced assumptions) is that 
experimental realizations with NISQ machines seem to be easier.\footnote{Although NISQ experimental realizations of quantum advantage are very important goals,
in this paper, we focus on theoretical upper and lower bounds by assuming that any polynomial-time quantum computing is possible.}
As we will explain later, we introduce an average-case version of 
$\mathbf{SampBQP}\neq \mathbf{SampBPP}$\footnote{For the definitions of $\mathbf{SampBQP}$ and $\mathbf{SampBPP}$, see \cref{def:Samplingproblems,def:SampBQP}.}, and show that it
is equivalent to the existence of non-interactive IV-PoQ.
IV-PoQ therefore capture sampling-based quantum advantage.

A search problem is a task of finding an element $z$ that satisfies a relation $R(z)=1$.
Several search problems have been shown to be easy for QPT algorithms but hard for PPT algorithms.
Their classical hardness, however, relies on 
newly-introduced assumptions that were not studied before,
such as QUATH~\cite{CCC:AarChe17} and XQUATH~\cite{AarGun19}, 
or relies on random oracles~\cite{STOC:Aaronson10,STOC:ACCGSW23}.
One advantage of searching-based quantum advantage over sampling-based one is that quantum advantage can be verified at least inefficiently
when $R$ is computable.
(We can check $R(z)=1$ or not by computing $R(z)$.) 
Because such inefficiently-verifiable searching-based quantum advantage is equivalent to the existence of non-interactive IV-PoQ,
IV-PoQ capture inefficiently-verifiable searching-based quantum advantage.
There are some search problems that are efficiently verifiable such as Factoring~\cite{FOCS:Shor94} 
and Yamakawa-Zhandry problem~\cite{10.1145/3658665}
but the former is based on the hardness of a specific problem,
and the latter relies on the random oracle model.
Quantum advantage in efficiently-verifiable search problems is captured by non-interactive PoQ,
and therefore by IV-PoQ.

\paragraph{Summary.}
In summary, we have shown that IV-PoQ are existentially equivalent to classically-secure OWPuzzs.
We believe that this result is significant mainly because of the following five reasons.
\begin{enumerate}
    \item 
    As far as we know, this is the first time that a complete cryptographic characterization of quantum advantage is achieved.
    \item 
    IV-PoQ capture various types of quantum advantage studied so far including sampling-based quantum advantage and searching-based one.
    \item 
    The previous result \cite{C:MorYam24} constructed IV-PoQ from classically-secure OWFs, but the problem
    of constructing IV-PoQ from weaker assumptions was left open. We solve the open problem.
    \item 
    OWPuzzs are implied by many important primitives, such as PRUs, PRSGs, and OWSGs. Therefore, our main result shows that
    if there is no quantum advantage, then these quantum cryptographic primitives do not exist.
    \item
    No application of OWPuzzs was known before except for commitments (and therefore multiparty computations).
    We show that quantum advantage is another application of OWPuzzs.
    Moreover, it is the first QCCC application of OWPuzzs.
    This solves the open problem of \cite{C:ChuGolGra24}.
\end{enumerate}

\if0
\takashi{I think it's natural to introduce it as a weakest form of quantum advantage that captures everything reviewed so far. Then we may ask a question if we can obtain interactive sampling quantum advantage from weaker assumptions than those needed for IV-PoQ, etc. 
Alternatively, we may explain that IV-PoQ captures everything mentioned so far, especially also sampling, by the Aaronson's equivalence theorem. Without this, it may look artificial to choose IV-PoQ as a target.
}
\takashi{I think we should clarify what are captured or not captured by IV-PoQ at some point.}
\fi

\begin{figure}[H]
    \centering
    \begin{tikzpicture}[transform shape,scale=0.75]
        \node (QAS/OWF) at (-2,-1) {QAS/OWF} ;
        \node (cs-OWPuzz) at (-5,-1) {cs-OWPuzzs};
        \node (qs-OWPuzz) at (-5,1) {qs-OWPuzzs};
        \node (OWSG) at (-5,2) {OWSGs};
        \node (EFID) at (-7,2) {EFID};
        \node (EFI) at (-7,0) {EFI};
        \node (PRSG) at (-5,3) {PRSGs};
        \node (1-PRSG) at (-8,2.5) {1-PRSGs};
        \node (PRFSGs) at (-5,4) {PRFSGs};
        \node (PRU) at (-5,5) {PRUs};
        \node (qs-OWF) at (-5,6) {qs-OWFs};
        \node (cs-OWF) at (-3,2.5) {cs-OWFs};
        \node (IV-PoQ) at (1,-1) {IV-PoQ};
        \node[align=center] (PC-IV-PoQ) at (-0.5,-3) {public-coin \\ IV-PoQ};
        \node[align=center] (QV-IV-PoQ) at (2.5,-3) {quantum-verifier \\ IV-PoQ};
        \node[align=center] (NI-IV-PoQ) at (1,0.5) {non-interactive \\ IV-PoQ};
        \node[align=center] (HV-SZK-IV-PoQ) at (1,4) {honest-verifier \\ statistical zero-knowledge \\ IV-PoQ};
        \node[align=center] (CZK-IV-PoQ) at (1,2) {computational zero-knowledge \\ IV-PoQ};
        \node (Int-QAS) at (3.5,-1) {Int-QASs};
        \node (QAS) at (3.5,0.5) {QASs};
        \node (OWF or samp) at (-2,-4) {qs-OWFs or $\mathbf{SampBPP}\neq\mathbf{SampBQP}$};
        \node (samp) at (6.7,-0.4) {$\mathbf{SampBPP}\neq\mathbf{SampBQP}$};
        \node (QAA) at (6.5,4) {QAA + $\mathbf{P}^{\mathbf{\#P}}\not\subseteq\mathbf{(io)BPP}^{\mathbf{NP}}$};

        \draw[->,red] (QAA) -- (QAS);
        \draw[->] (QAA) -- (samp);
        
        \draw[->] (qs-OWF) -- (PRU);
        \draw[->] (PRU) -- (PRFSGs);
        \draw[->] (PRFSGs) -- (PRSG);
        \draw[->] (PRSG) -- (OWSG);
        \draw[->] (OWSG) -- (qs-OWPuzz);
        \draw[->] (qs-OWPuzz) -- (cs-OWPuzz);
        \draw[->] (qs-OWF) -- (cs-OWF);
        \draw[->] (cs-OWF) -- (cs-OWPuzz);
        \draw[->] (EFID) -- (qs-OWPuzz);
        \draw[->] (EFID) -- (EFI);
        \draw[->] (qs-OWPuzz) -- (EFI);
        \draw[->] (PRSG) -- (1-PRSG);
        \draw[->] (1-PRSG) -- (EFI);
        \draw[->] (cs-OWF) -- (IV-PoQ);

        \draw[<->,red] (cs-OWPuzz) -- (QAS/OWF);
        \draw[<->,red] (QAS/OWF) -- (IV-PoQ);
        \draw[<->,red] (IV-PoQ) -- (Int-QAS);

        \draw[->] (NI-IV-PoQ) -- (IV-PoQ);
        \draw[->] (QAS) -- (Int-QAS);
        \draw[<->,red] (NI-IV-PoQ) -- (QAS);
        \draw[->,red] (QAS/OWF) -- (OWF or samp);
        \draw[->,red] (QAS) -- (samp);

        \draw[<->,red] (PC-IV-PoQ) -- (IV-PoQ);
        \draw[<->,red] (QV-IV-PoQ) -- (IV-PoQ);

        \draw[->,red] (HV-SZK-IV-PoQ) -- (cs-OWF);
        \draw[->,red] (cs-OWF) -- (CZK-IV-PoQ);
    \end{tikzpicture}
    \caption{Summary of results. Black arrows are known or trivial implications. Red arrows are our results.
    ``qs'' stands for quantumly-secure and ``cs'' stands for classically-secure. 1-PRSGs are single-copy-secure PRSGs~\cite{C:MorYam22}.
    QAA stands for the quantum advantage assumption.
    }\label{fig:result_summary}
\end{figure}

\subsection{Additional Results}
In addition to the main result, \cref{thm:main}, we obtain several important results.
In the following, we explain them.
All our results are also summarized in \cref{fig:result_summary}.

\paragraph{Relations to the sampling complexity.}
In the field of quantum advantage, more studied notion of sampling-based quantum advantage is 
$\mathbf{SampBQP}\neq\mathbf{SampBPP}$. 
We can show the following relation between IV-PoQ and $\mathbf{SampBQP}\neq\mathbf{SampBPP}$:\footnote{Note that ``quantumly-secure'' in the theorem is not a typo. The reason why we can get quantumly-secure OWFs from
classically-secure OWPuzzs is, roughly speaking, that if $\mathbf{SampBPP}=\mathbf{SampBQP}$, then classically-secure OWFs means quantumly-secure OWFs. For details, see \cref{sec:QASOWFcondition}.}
\begin{theorem}
\label{thm:main2}
If IV-PoQ exist, then
quantumly-secure OWFs exist
or $\mathbf{SampBPP}\neq \mathbf{SampBQP}$. 
\end{theorem}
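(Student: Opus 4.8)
The plan is to combine \cref{thm:main} with the sampling/search collapse. By \cref{thm:main}, IV-PoQ imply classically-secure OWPuzzs, so it suffices to prove: if classically-secure OWPuzzs exist and $\mathbf{SampBPP}=\mathbf{SampBQP}$, then quantumly-secure OWFs exist (if instead $\mathbf{SampBPP}\neq\mathbf{SampBQP}$, we are already done).

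First I would \emph{dequantize the puzzle sampler}. Let $(\Samp,\Ver)$ be a classically-secure OWPuzz with some fixed correctness–security gap polynomial. The map $1^\secp\mapsto\Samp(1^\secp)$ is a sampling problem in $\mathbf{SampBQP}$ (witnessed by $\Samp$ itself, which outputs classical strings), hence by hypothesis it lies in $\mathbf{SampBPP}$: there is a PPT algorithm $\Samp'$ whose output on $(1^\secp,1^{1/\eps})$ is within total-variation distance $\eps$ of $\Samp(1^\secp)$. Hardwiring $\eps$ to a small enough inverse polynomial (a constant fraction of the original gap), $(\Samp',\Ver)$ is a one-way puzzle with a \emph{classical} PPT sampler that remains $(1-1/\poly)$-correct and $(1-1/\poly)$-secure against PPT adversaries with a noticeable gap.

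Next I would show that \emph{a classically-sampled, classically-secure OWPuzz yields a classically-secure OWF}. Define $f(r):=\puzz$, where $(\puzz,\ans)\gets\Samp'(1^{\secp};r)$ and $\secp$ is read off from $|r|$. A PPT distributional inverter for $f$ (in the sense of \cite{FOCS:ImpLub89}) would, given $\puzz$, produce $r'$ distributed close to the coins of $\Samp'$ conditioned on outputting $\puzz$; recomputing $\ans'$ from $\Samp'(1^\secp;r')$ and invoking correctness of $(\Samp',\Ver)$ then breaks its security. Hence $f$ is distributionally one-way against PPT adversaries, and by \cite{FOCS:ImpLub89} a classically-secure OWF $g$ exists (this step is essentially folklore; see also \cite{STOC:KhuTom24}). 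Finally I would \emph{upgrade classical to quantum security} using the collapse a second time: if a QPT adversary $\cA$ inverted $g$ with probability at least $1/p(n)$ infinitely often for a fixed polynomial $p$, then the sampling problem $(1^n,y)\mapsto\cA(1^n,y)$ is in $\mathbf{SampBQP}=\mathbf{SampBPP}$, so a PPT algorithm simulates $\cA$ on every input up to error $\eps$; setting $\eps=1/(2p(n))$ yields a PPT inverter for $g$ succeeding with probability at least $1/(2p(n))$ infinitely often, contradicting the classical security of $g$. Thus $g$ is quantumly secure. (The fact that we obtain \emph{quantum} rather than merely classical security is precisely this second use of $\mathbf{SampBPP}=\mathbf{SampBQP}$.)

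The main obstacle is that $\mathbf{SampBPP}=\mathbf{SampBQP}$ is only a statement about \emph{sampling} from parametrized distributions, so both dequantization steps must be phrased as ``a PPT algorithm reproduces the output distribution of a QPT algorithm on every input up to $1/\poly$ statistical error'', and one must verify that this $1/\poly$ slack does not destroy the relevant gap. This is delicate for the puzzle because, unlike quantumly-secure OWPuzzs, classically-secure OWPuzzs have no known gap amplification, so the correctness and security bounds of $(\Samp',\Ver)$ have to be tracked explicitly and $\eps$ chosen only after the gap polynomial is fixed. The distributional one-way function argument of the middle step is routine, but it must be arranged so that the adversary it induces against the puzzle is \emph{classical} (PPT), so that only classical — not quantum — security of the puzzle is needed there; the quantum-security upgrade is then cleanly isolated in the last step.
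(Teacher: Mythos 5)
Your proposal is correct, but it takes a different decomposition from the paper's. The paper obtains \cref{thm:main2} by combining the implication IV-PoQ $\Rightarrow$ QAS/OWF condition (\cref{lem:IV-PoQ_to_Int-SampQA,thm:Int-QAS_to_QAS/OWF}) with \cref{thm:QASOWF_OWF_Samp}, whose first half is a case analysis directly on the QAS/OWF condition: either no PPT sampler approximates the QAS candidate $\cQ$ on a cofinite set of security parameters, in which case $\{\cQ(1^\secp)\}_\secp$ witnesses $\mathbf{SampBPP}\neq\mathbf{SampBQP}$, or some $\Sigma_\cS$ is cofinite, in which case $f$ is a classically-secure OWF. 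You instead invoke the full \cref{thm:main} to get a classically-secure OWPuzz, dequantize its sampler under $\mathbf{SampBPP}=\mathbf{SampBQP}$, and run the standard distributional-inverter argument (essentially the same reduction as in \cref{thm:OWPuzz_to_QAS/OWF}, but with a PPT sampler so that \cref{lem:OWF_DistOWF} applies directly) to get a classically-secure OWF. Your final classical-to-quantum upgrade is identical to the second half of the paper's proof of \cref{thm:QASOWF_OWF_Samp}. The trade-off: your middle step is more elementary and self-contained (it is precisely the folklore observation that a classically-samplable, classically-secure OWPuzz is a classically-secure DistOWF), and you correctly identify the one delicate point, namely that the $1/\poly$ simulation error must be fixed after the (non-amplifiable) correctness–security gap; but your route depends on the strictly larger portion of the paper's machinery needed for \cref{thm:main} (in particular the Kolmogorov-complexity construction behind \cref{lem:QAS_IVPoQ} used in \cref{thm:QAS/OWF_to_OWPuzz}), whereas the paper's proof of \cref{thm:main2} needs only the ``IV-PoQ $\Rightarrow$ QAS/OWF'' direction and avoids that machinery entirely.
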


Because the existence of quantumly-secure OWFs implies $\mathbf{NP}\nsubseteq\mathbf{BQP}$,
\cref{thm:main2} also means that
if IV-PoQ exist, then
$\mathbf{NP}\nsubseteq \mathbf{BQP}$ or 
$\mathbf{SampBPP}\neq \mathbf{SampBQP}$. 
This characterizes a lower bound of IV-PoQ in terms of worst-case complexity class assumptions.
Note that this lower bound improves the previous known bound, $\mathbf{BPP}\neq\mathbf{PP}$, of \cite{C:MorYam24}.\footnote{First, $\mathbf{NP}\not\subseteq\mathbf{BQP}$ implies 
$\mathbf{BPP}\neq\mathbf{PP}$, because if $\mathbf{BPP}=\mathbf{PP}$, then $\mathbf{NP}\subseteq\mathbf{PP}=\mathbf{BPP}\subseteq\mathbf{BQP}$.
Second, $\mathbf{SampBPP}\neq\mathbf{SampBQP}$ implies $\mathbf{BPP}\neq\mathbf{PP}$, because a classical deterministic
polynomia-time algorithm that queries the $\mathbf{PP}$ oracle can compute the output distribution of any QPT algorithm~\cite{ForRog99}.}

\if0
\cite{cryptoeprint:2023/161} shows that IV-PoQ can be constructed from (classically-secure) OWFs, and \cite{cryptoeprint:2023/161} leaves the open problem of
whether IV-PoQ can be constructed from the assumption of $\mathbf{P}\neq\mathbf{NP}$ or not. \takashi{Is that written in the paper? What was in my mind was whether we can get AI-IV-PoQ from $\mathbf{P}\neq\mathbf{NP}$ because it seems hopeless to get the average-case hardness such as soundness of IV-PoQ from a worst-case assumption, and I don't think our result does not give new insight about this problem. 
I feel this paragraph is misleading and suggest to remove it. }
\cref{thm:main2} suggests that solving the open problem positively should be very hard, because if we can show it positively,
then from \cref{thm:main2} we conclude that 
$\mathbf{P}\neq\mathbf{NP}$ implies the existence of quantumly-secure OWFs or 
$\mathbf{SampBPP}\neq \mathbf{SampBQP}$, which is surprising. 
\fi

\if0
Moreover, by combining \cref{thm:main,thm:main2}, we have the following corollary:
\begin{corollary}
\label{coro:OWPuzz}
If OWPuzzs exist, then
quantumly-secure OWFs exist
or $\mathbf{SampBPP}\neq \mathbf{SampBQP}$. 
\end{corollary}
\mor{kokokara}
OWPuzzs are implied by many fundamental primitives such as
PRUs, PRFSGs, PRSGs, and OWSGs,
but it is an open problem whether OWPuzzs imply them.
The open problem is not yet solved, but \cref{thm:main2}
at least shows the following ``win-win'' result: if OWPuzzs exist, then
all stronger quantum cryptographic primitives (PRUs, PRFSs, PRSGs, and OWSGs) exist (because they are constructed from quantumly-secure OWFs),
or quantum advantage exists (in the sense of 
$\mathbf{SampBPP}\neq \mathbf{SampBQP}$).
\mor{Or the above corollary means that OWPuzzs imply SampP is not equal to SampBQP, because OWPuzzs will not imply OWF?}
\takashi{I don't think that argument works. (I remember we discuss something similar before.)}

Even if OWFs do not exist, several quantum cryptographic primitives, such as PRUs, PRFSs, PRSGs, and OWSGs, could exist~\cite{Kre21}.
However, \cref{coro:OWPuzz} shows that for the existence of these primitives in the absense of OWFs, $\mathbf{SampBPP}\neq\mathbf{SampBQP}$ is necessary.
(This is because \cref{coro:OWPuzz} means that if OWFs do not exist and $\mathbf{SampBPP}=\mathbf{SampBQP}$, then OWPuzzs do not exist.) 
\mor{kokomade}
\fi

\paragraph{\bf Quantum advantage samplers (QASs).}
To show the main result, we introduce a new concept, which we call quantum advantage
samplers (QASs). The existence of QASs is an average-case version of $\mathbf{SampBQP}\neq\mathbf{SampBPP}$.

Let $\cA$ be a QPT algorithm that takes $1^\secp$ as input and outputs classical bit strings.
$\cA$ is a quantum advantage sampler (QAS) 
if there exists a polynomial $p$ such that for any PPT algorithm $\cB$,
    \begin{align}     
    \label{QASs}
    \SD(\cA(1^\secp),\cB(1^\secp)) > \frac{1}{p(\secp)}
    \end{align}
holds for all sufficiently large $\secp\in\mathbb{N}$,
where $\SD$ is the statistical distance, 
and $\cA(1^\secp)$ (resp. $\cB(1^\secp)$) is the output probability distribution 
of $\cA$ (resp. $\cB$) on input $1^\secp$. 

Intuitively, \cref{QASs} means that the output distribution of the QPT algorithm $\cA$ cannot be classically efficiently sampled. 
How is this different from the more studied notion, $\mathbf{SampBPP}\neq\mathbf{SampBQP}$?
The existence of QASs can be considered as an average-case version of $\mathbf{SampBPP}\neq\mathbf{SampBQP}$.\footnote{
$\mathbf{SampBPP}$ and $\mathbf{SampBQP}$ are worst-case complexity classes, and therefore
$\{\cA(1^\secp)\}_\secp\not\in\mathbf{SampBPP}$ means that there is at least one $\secp\in\mathbb{N}$
such that $\cA(1^\secp)$ cannot be classically efficiently sampled, while the 
definition of QASs requires that
$\{\cA(1^\secp)\}_\secp$ cannot be classically efficiently sampled for all sufficiently large $\secp\in\mathbb{N}$. 
In addition, there is a subtle technical difference: $\{\cA_\secp\}_\secp \not\in\mathbf{SampBPP}$ means that $\cA(1^\secp)$ cannot be classically efficiently sampled  
for a precision $\epsilon$, but this $\epsilon$ could be $\negl(\secp)$,
while QASs requires that the precision is $1/\poly(\secp)$.
For more details, see \cref{sec:quantum_sampling}.}
In fact, the existence of QASs implies $\mathbf{SampBPP}\neq \mathbf{SampBQP}$ (\cref{lem:QAS_to_SampAd}),
but the inverse does not seem to hold. 
In order to show our main result, the worst-case notion of
$\mathbf{SampBPP}\neq\mathbf{SampBQP}$
is not enough
for our cryptographic (and therefore average-case) argument,
and therefore we introduce QASs.
We believe that the new concept, QASs, will be useful for other
future studies of quantum sampling advantage in the context of quantum cryptography.

We show the following result.
\begin{theorem}\label{thm:samp_search}
QASs exist if and only if non-interactive IV-PoQ exist.    
\end{theorem}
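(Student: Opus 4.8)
The plan is to read the honest prover of a non-interactive IV-PoQ as a sampler, and conversely, letting the inefficient verifier play the role of a statistical test that certifies ``this output distribution lies far from every classically samplable one''.

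\textbf{Non-interactive IV-PoQ $\Rightarrow$ QAS.} Given a non-interactive IV-PoQ $(\prover,\verifier)$ with completeness $c(\secp)$ and soundness $s(\secp)$, $c-s\ge 1/q(\secp)$ for a polynomial $q$, I would take $\cA\coloneqq\prover$, which on input $1^\secp$ outputs the single classical message it sends, and argue that $\cA$ is a QAS with the fixed polynomial $p\coloneqq 2q$. If it were not, then (since $\cA$ and $p$ are fixed) there is a PPT $\cB$ with $\SD(\cA(1^\secp),\cB(1^\secp))\le 1/p(\secp)$ for infinitely many $\secp$; running $\cB$ as a malicious prover makes $\verifier$ accept with probability $\ge c(\secp)-1/(2q(\secp))>s(\secp)$ for those $\secp$, contradicting soundness. (If the verifier is allowed a public message, the honest prover samples it too and the verifier checks consistency.)

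\textbf{QAS $\Rightarrow$ non-interactive IV-PoQ.} Fix a QAS $\cA$ and a polynomial $p^{\ast}$ witnessing it. Let $\mathsf{K}(\cdot\mid 1^\secp)$ denote prefix Kolmogorov complexity with $1^\secp$ given for free (or, if one insists on a halting verifier, its $T$-time-bounded variant with $T=2^{\poly(\secp)}$ chosen large enough for the coding theorem used below), put $t\coloneqq\log^2\secp$, and let $S=S(\secp)=\poly(\secp)$ be a repetition count to be fixed. The protocol: the honest prover outputs $Y=(y_1,\dots,y_S)$ with $y_i\gets\cA(1^\secp)$ i.i.d.; writing $P\coloneqq\cA(1^\secp)^{\otimes S}$, which the unbounded verifier can evaluate exactly because $\cA$ is a fixed poly-size circuit, the verifier accepts iff $\mathsf{K}(Y\mid 1^\secp)\ge -\log P(Y)-t$, i.e.\ $Y$ is almost incompressible relative to $P$. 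Completeness is $1-\negl$ since for every distribution $P$ a Kraft-inequality argument gives $\Pr_{Y\sim P}[\mathsf{K}(Y\mid 1^\secp)<-\log P(Y)-t]\le 2^{-t}$. For soundness, suppose a PPT $\cB$ is accepted with probability $>1-1/(2p^{\ast}(\secp))$ for infinitely many $\secp$; write $E\coloneqq\cB(1^\secp)$. Since $E$ is samplable by a fixed poly-time machine, a (time-bounded, conditional) coding theorem gives $\mathsf{K}(Y\mid 1^\secp)\le -\log E(Y)+O(\log\secp)$ for all $Y$ outside an $E$-negligible set; on the intersection $\mathsf{Acc}'$ of that set with the accept event we obtain $E(Y)\le 2^{t+O(\log\secp)}P(Y)$ and $E(\mathsf{Acc}')\ge 1-1/(2p^{\ast})-\negl$. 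Hence the conditional distribution $E'\coloneqq E(\cdot\mid\mathsf{Acc}')$ is pointwise $\le 2^{r}P$ with $r\coloneqq t+O(\log\secp)=\polylog(\secp)$, so for relative entropy $\mathrm{KL}(E'\,\|\,P)\le r$; since $P$ is an $S$-fold product, superadditivity of relative entropy over marginals (with a product reference) plus convexity give $\mathrm{KL}\big(\tfrac1S\sum_iE'_i\,\big\|\,\cA(1^\secp)\big)\le r/S$, and Pinsker's inequality yields $\SD\big(\tfrac1S\sum_iE'_i,\cA(1^\secp)\big)\le O(\sqrt{r/S})$. But the PPT algorithm ``run $\cB(1^\secp)$ and output a uniformly random coordinate of the result'' has output distribution $\tfrac1S\sum_iE_i$, which is within $1-E(\mathsf{Acc}')\le 1/(2p^{\ast})+\negl$ of $\tfrac1S\sum_iE'_i$; hence it is within $1/(2p^{\ast})+O(\sqrt{r/S})+\negl$ of $\cA(1^\secp)$. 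As $r$ is polylogarithmic and $p^{\ast}$ a fixed polynomial, choosing $S$ a large enough polynomial forces this below $1/p^{\ast}(\secp)$ infinitely often, contradicting the QAS property. The resulting protocol has completeness $1-\negl$ and soundness $1-1/(2p^{\ast})$, a $1/\poly$ gap.

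\textbf{Expected main obstacle.} The delicate step is soundness of the second implication: ruling out a cheating prover that simply hard-codes (or cheaply computes) an output that ``looks like'' a sample from $\cA(1^\secp)^{\otimes S}$. This is precisely why the verifier tests incompressibility of $Y$ relative to $P$ rather than, say, typicality of the empirical log-likelihood --- which a fixed typical string would pass --- and it is what forces the analysis through a time-bounded, conditional coding theorem with only $O(\log\secp)$ loss (affordable to a halting verifier) and through tensorization of relative entropy together with Pinsker, in order to amortize the unavoidable incompressibility slack $t=\polylog(\secp)$ down below the fixed threshold $1/p^{\ast}$ --- which is exactly what dictates how large the repetition count $S$ must be. As in \Cref{thm:main}, the uniformity of the adversary and the ``infinitely often'' quantifier implicit in the negation of the QAS property must be tracked carefully throughout.
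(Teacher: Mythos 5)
Your proposal is correct and follows essentially the same route as the paper (\cref{lem:non-int-IV-PoQ_QAS,lem:QAS_to_IV-PoQ,lem:QAS_IVPoQ}): the first direction is verbatim the paper's argument, and the second is the same Aaronson-style construction — repeat the sampler $S$ times and have the unbounded verifier test time-bounded prefix Kolmogorov incompressibility of $Y$ relative to the product distribution, with completeness from Kraft plus Markov and soundness from the coding theorem for efficiently samplable distributions, conditioning on acceptance, superadditivity of KL over marginals, Pinsker, and the random-coordinate sampler. The only quantitative slip is your time bound $T=2^{\poly(\secp)}$: since the protocol must be fixed before the PPT adversary, whose coding-theorem cost is $2^{O(n^a)}$ for an exponent $a$ depending on its running time, $T$ must dominate $2^{n^a}$ for every constant $a$ — the paper takes $T(n)=2^{2^{n}}$ — so "chosen large enough" cannot be realized by any fixed polynomial in the exponent.
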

Because
the existence of QASs is an average-case version of $\mathbf{SampBQP}\neq\mathbf{SampBPP}$
while
the existence of non-interactive IV-PoQ is an average-case version of
$\mathbf{FBQP}\neq\mathbf{FBPP}$,
\cref{thm:samp_search} can be considered as an average-case version of
\cite{Aar14}'s result $\mathbf{SampBPP}\neq\mathbf{SampBQP}\Leftrightarrow\mathbf{FBPP}\neq\mathbf{FBQP}$.

In previous works on sampling-based quantum advantage~\cite{NatPhys:BFNV19,STOC:AarArk11,BreMonShe16,MorDQC1additive},
$\mathbf{SampBQP}\neq\mathbf{SampBPP}$
was derived based on three assumptions.
First one is the complexity assumption of $\mathbf{P}^{\mathbf{\# P}}\not\subseteq\mathbf{BPP}^{\mathbf{NP}}$, which especially means that the polynomial-time hierarchy will not collapse
to the third level.
Second one is the assumption that computing some functions (such as matrix permanents and Ising partition functions)
within a certain multiplicative error is $\mathbf{\# P}$-hard on average.
The third one is so-called the anti-concentration, which roughly means that the output probability of the quantum algorithm is not so concentrating.\footnote{For several models, such as the IQP model~\cite{BreMonShe16} and the one-clean-qubit model~\cite{MorDQC1additive}, the anti-concentration can be shown.}
For simplicity, we call the combination of the last two assumptions just quantum advantage assumption.

The quantum advantage assumption has been traditionally studied in the field of quantum advantage to show sampling-based quantum advantage (in terms of $\mathbf{SampBQP}\neq\mathbf{SampBPP}$) of several ``non-universal'' models such as
random quantum circuits~\cite{NatPhys:BFNV19}, Boson Sampling circuits~\cite{STOC:AarArk11}, IQP circuits~\cite{BreMonShe16}, and one-clean-qubit circuits~\cite{MorDQC1additive}.
We can show that the quantum advantage assumption also implies
the existence of QASs.\footnote{This result was not included in the previous version of this manuscript. We obtained this result after reading \cite{cryptoeprint:2024/1490}.}
\begin{theorem}
If the quantum advantage assumption holds and $\mathbf{P}^{\mathbf{\# P}}\not\subseteq\mathbf{ioBPP}^{\mathbf{NP}}$, then QASs exist.    
\end{theorem}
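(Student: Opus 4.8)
The natural candidate for a QAS is the sampler induced by the non-universal quantum advantage family. Concretely, let $n=n(\secp)$ be a polynomially-bounded input length and let $\cA$ be the QPT algorithm that, on input $1^\secp$, samples a circuit $C$ from the (classically efficiently sampleable) ensemble underlying the quantum advantage assumption --- e.g.\ a random $n$-qubit circuit, a random IQP circuit, or a suitably discretized BosonSampling instance --- runs $C$ on $\ket{0^n}$, measures in the computational basis to obtain $x\in\bit^n$, and outputs the pair $(C,x)$. I would prove that $\cA$ is a QAS by contradiction: suppose it is not, so that for \emph{every} polynomial $q$ there is a PPT algorithm $\cB$ with $\SD(\cA(1^\secp),\cB(1^\secp))\le 1/q(\secp)$ for infinitely many $\secp$. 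Fix $q$ to be a sufficiently large polynomial (inverse to the constant/inverse-polynomial total-variation threshold for which the quantum advantage assumption is formulated), let $\cB$ be the resulting sampler, and let $S\subseteq\N$ be the infinite set of $\secp$ on which the bound holds.

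For $\secp\in S$ I would run the Stockmeyer-style argument of Aaronson--Arkhipov and Bremner--Montanaro--Shepherd, tracking all parameters quantitatively. First, since the joint laws of $(C,x)$ are $1/q$-close, a Markov/averaging step shows that for all but a $1/\poly$ fraction of $C$ drawn from the true ensemble, $\cB$'s circuit-marginal probability is within a $(1\pm 1/\poly)$ factor of the true one and the conditional output distribution $\cB(1^\secp)\mid C$ is $1/\poly$-close to the ideal output distribution $p_C$ of $C$. Second, Stockmeyer's approximate counting places in $\mathbf{FBPP}^{\mathbf{NP}}$ the task of computing, on input $(C,x)$, a $(1\pm 1/\poly)$-multiplicative estimate of $\Pr[\cB(1^\secp)=(C,x)]$; composing with the previous step yields an $\mathbf{FBPP}^{\mathbf{NP}}$ procedure that, for a $1-1/\poly$ fraction of ensemble $C$ and a $1-1/\poly$ fraction of $x\sim p_C$, outputs a $(1\pm 1/\poly)$-multiplicative estimate of $p_C(x)$.

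Third, the anti-concentration part of the quantum advantage assumption converts ``$x$ drawn from $p_C$'' into ``$x$ drawn uniformly'': since $p_C(x)=\Omega(2^{-n})$ with constant probability over uniform $x$, a uniformly random $x$ lands in the ``good'' set with $1-1/\poly$ probability as well, so the same procedure multiplicatively estimates $p_C(x)$ for a $1-1/\poly$ fraction of $(C,x)$ with $C$ from the ensemble and $x$ uniform. By the average-case $\mathbf{\#P}$-hardness part of the quantum advantage assumption, this task is $\mathbf{\#P}$-hard, i.e.\ there is a randomized polynomial-time Turing reduction from a $\mathbf{\#P}$-complete problem to it. Hence for $\secp\in S$ the corresponding $\mathbf{\#P}$-complete problem is solved in $\mathbf{BPP}^{\mathbf{BPP}^{\mathbf{NP}}}=\mathbf{BPP}^{\mathbf{NP}}$; using downward self-reducibility / padding of the $\mathbf{\#P}$-complete problem so that a single input length of the approximation oracle suffices, this gives $\mathbf{P}^{\mathbf{\#P}}\subseteq\mathbf{ioBPP}^{\mathbf{NP}}$, contradicting the hypothesis. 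Therefore $\cA$ is a QAS and QASs exist.

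\textbf{Main obstacle.} The conceptually routine but technically delicate points are (i) passing from closeness of the \emph{joint} law of $(C,x)$ to closeness of the conditionals $p_C$ for typical $C$ while absorbing the mismatch between $\cB$'s circuit-marginal and the true ensemble, and (ii) the ``infinitely often'' bookkeeping: because QASs require the separation for \emph{all} sufficiently large $\secp$ whereas the negation only furnishes a classical sampler that is close infinitely often, the conclusion can only be $\mathbf{P}^{\mathbf{\#P}}\subseteq\mathbf{ioBPP}^{\mathbf{NP}}$ rather than $\mathbf{P}^{\mathbf{\#P}}\subseteq\mathbf{BPP}^{\mathbf{NP}}$ --- which is exactly why the hypothesis is stated with $\mathbf{ioBPP}^{\mathbf{NP}}$ --- and one must check that the $\mathbf{\#P}$ Turing reduction does not blow the ``good'' input lengths $\secp\in S$ up into lengths that lie outside $S$, which holds as long as the size relation between the $\mathbf{\#P}$-instance and the queried approximation instances is monotone and polynomially bounded.
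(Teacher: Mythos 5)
Your proposal is correct and follows essentially the same route as the paper: the paper merely factors the argument through an intermediate notion called hardness of quantum probability estimation (QPE) — the step from the quantum advantage assumption plus $\mathbf{P}^{\mathbf{\# P}}\not\subseteq\mathbf{ioBPP}^{\mathbf{NP}}$ to hardness of QPE (which packages the anti-concentration and average-case $\mathbf{\# P}$-hardness ingredients) is deferred to Khurana--Tomer, while the step from hardness of QPE to QASs is exactly your Stockmeyer-plus-Markov contradiction argument, run directly on the joint distribution rather than via circuit-marginals and conditionals. The infinitely-often bookkeeping you flag is handled in the paper in the same way you describe.
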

As we have explained above, QASs are cryptographically a more natural notion of sampling-based quantum advantage, and the existence of QASs seems to be stronger than
$\mathbf{SampBQP}\neq\mathbf{SampBPP}$. Therefore this theorem reveals that the quantum advantage assumption traditionally studied in the field actually implies
a stronger form of quantum advantage
(modulo the difference between $\mathbf{BPP}$ and $\mathbf{ioBPP}$).

\paragraph{\bf Interactive quantum advantage samplers (Int-QASs).}
We also introduce an interactive version of QASs, which we call interactive QASs (Int-QASs).
Int-QASs  
are a generalization of QASs to interactive settings.
An Int-QAS is a pair $(\cA,\cC)$ of two interactive QPT algorithms $\cA$ and $\cC$ that 
communicate over a classical channel. Its security roughly says that
no PPT algorithm $\cB$ that interacts with $\cC$ can sample the transcript of $(\cA,\cC)$.
(Here, the transcript is the sequence of all classical messages exchanged between $\cA$ and $\cC$.)
More precisely, 
$(\cA,\cC)$ is an Int-QAS if
there exists a polynomial $p$ such that for 
any PPT algorithm $\cB$ that interacts with $\cC$,
    \begin{equation}     
    \SD(\langle\cA,\cC\rangle(1^\secp),\langle\cB,\cC\rangle(1^\secp)) > \frac{1}{p(\secp)}
    \end{equation}
holds for all sufficiently large $\secp\in\mathbb{N}$,
where $\SD$ is the statistical distance, and
$\langle\cA,\cC\rangle(1^\secp)$ (resp. $\langle\cB,\cC\rangle(1^\secp)$) is the probability distribution over the transcript of the interaction between $\cA$ (resp. $\cB$) and $\cC$
on input $1^\secp$. 

It is easy to see that IV-PoQ imply Int-QASs: for any IV-PoQ $(\cP,\cV_1,\cV_2)$, where $\cP$ is the prover,
$\cV_1$ is the efficient verifier, and $\cV_2$ is the inefficient verifier, we have only to take $(\cA,\cC)=(\cP,\cV_1)$.
However, the opposite direction is not immediately clear. We show that the opposite direction can be also shown.
We hence have the following result.
\begin{theorem}
\label{thm:IVPoQ_IntQAS}
Int-QASs exist
if and only if 
IV-PoQ exist. 
\end{theorem}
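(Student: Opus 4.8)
The plan is to prove both directions, the forward one being essentially immediate and the reverse one being the substantive work. For the forward direction, given an IV-PoQ $(\cP,\cV_1,\cV_2)$ I would set $(\cA,\cC):=(\cP,\cV_1)$, viewing the classical PPT $\cV_1$ as a (trivially) QPT machine. Suppose some PPT $\cB$ interacting with $\cC$ produced a transcript distribution that is $\le 1/p$-close to $\langle\cA,\cC\rangle$ for a polynomial $p$ with $1/p$ below the completeness--soundness gap. The inefficient verifier's accept predicate is a function of the transcript $\tau$ together with $\cV_1$'s coins $r$; crucially, the conditional distribution of $r$ given $\tau$ is fixed by $\cV_1$'s code and by $\tau$ alone, hence does not depend on who the prover is, so the statistical distance between the joint distributions of $(\tau,r)$ equals that between the transcript marginals. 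Therefore $\cV_2$ would accept $\cB$'s transcripts with probability at least (completeness) $-\,1/p$, which exceeds the soundness threshold, contradicting soundness; chasing the ``for all sufficiently large $\secp$'' quantifier through the contrapositive then yields the Int-QAS property (here the uniformity of the model, as flagged in \Cref{footnote_nonuniform}, is what lets us extract a single cheating $\cB$).

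For the reverse direction I would build an IV-PoQ from an Int-QAS $(\cA,\cC)$ as the interactive analogue of the construction underlying \Cref{thm:samp_search}. Put $\cP:=\cA$ and use $\cC$ as the during-interaction verifier. Since $\cC$ is only QPT whereas an IV-PoQ verifier must be classical PPT, I would first target a \emph{quantum-verifier} IV-PoQ (the QV-IV-PoQ notion of \Cref{fig:result_summary}) with $\cV_1:=\cC$, and then invoke the equivalence between QV-IV-PoQ and IV-PoQ shown elsewhere in the paper to descend to the classical-verifier case. The inefficient $\cV_2$, on a transcript $\tau$ together with the verifier's coins, runs a likelihood-threshold test: using unbounded time it computes $q(\tau)=\Pr[\langle\cA,\cC\rangle=\tau]$ and accepts iff $q(\tau)\ge c_\secp$ for a suitable threshold $c_\secp$. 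Completeness holds because $\langle\cA,\cC\rangle$ places all but a tiny amount of mass on the heavy set $S_\secp=\{\tau:q(\tau)\ge c_\secp\}$.

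Soundness is where the Int-QAS hypothesis enters and is the main obstacle. If a PPT prover $\cB^\ast$ interacting with $\cC$ makes $\cV_2$ accept with non-negligible probability, then $\langle\cB^\ast,\cC\rangle$ places substantial mass on $S_\secp$; since $\langle\cA,\cC\rangle$ is also concentrated on $S_\secp$ and, after refining $S_\secp$ to a dyadic bucket $q(\tau)\in[2^{-i-1},2^{-i})$, assigns every element of that bucket probability within a factor $2$, the restriction of $\cB^\ast$ to the bucket can be renormalized into a PPT sampler whose output is within $1/p$ of $\langle\cA,\cC\rangle$ --- contradicting the Int-QAS guarantee. Making this rigorous requires: (i) a single inefficient verifier sound against \emph{all} PPT provers, obtained by mixing the test over the polynomially many dyadic scales $c_\secp=2^{-i}$ (the histogram argument of \cite{Aar14}, exactly as in \Cref{thm:samp_search}); (ii) verifying that the $1/\poly(\secp)$ advantage promised by the Int-QAS leaves enough slack to beat both the $1/\poly$-size buckets and the completeness error; and (iii) carrying the reduction through the interaction, i.e. ensuring the classical sampler extracted from $\cB^\ast$ faithfully reproduces $\cC$'s messages --- which is fine in the QV-IV-PoQ setting because the reduction may run $\cC$ honestly and only the final decision is inefficient. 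I expect points (i)--(ii), getting one verifier to work uniformly against every PPT prover while all the $1/\poly$ parameters line up, to be the delicate part; the remainder is bookkeeping parallel to the non-interactive case.
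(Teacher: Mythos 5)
Your forward direction is essentially the paper's proof of \Cref{lem:IV-PoQ_to_Int-SampQA} (and is even slightly more general than needed, since in \Cref{def:IVPoQ} the inefficient verifier $\cV_2$ receives only the transcript, not $\cV_1$'s coins, so the accept probability is already a function of the transcript distribution alone). The reverse direction, however, departs from the paper and contains genuine gaps. The paper does \emph{not} generalize the Aaronson-style search-from-sampling construction to the interactive setting; it instead proves Int-QAS $\Rightarrow$ QAS/OWF condition (\Cref{thm:Int-QAS_to_QAS/OWF}, an Ostrovsky-style argument using a distributional-OWF inverter to build an \emph{online} PPT prover from a transcript sampler) and then QAS/OWF $\Rightarrow$ IV-PoQ (\Cref{thm:QAS/OWF_to_IV-PoQ}). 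There are two reasons your direct route fails. First, the verifier you describe --- accept iff $q(\tau)=\Pr[\langle\cA,\cC\rangle=\tau]\ge c_\secp$ --- is not sound: a PPT prover whose output is concentrated on a small subset of the heavy set passes the test, yet its restriction to a dyadic bucket, renormalized, is nowhere near $\langle\cA,\cC\rangle$ (it misses almost all of the bucket). Ruling this out is exactly what the Kolmogorov-complexity condition in \Cref{lem:QAS_IVPoQ} is for, via \Cref{lem:Kolm_coding}: a PPT prover cannot output strings whose description length is much smaller than their log-likelihood under its \emph{own} distribution. A bare likelihood threshold has no such guarantee.

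Second, and more fundamentally, even with the correct Kolmogorov test the analysis requires $N=\poly(\secp)$ repetitions (the error bound is $\sqrt{2\log^2\secp/N}$, useless for $N=1$), and the extracted object is a \emph{stand-alone} sampler (run the prover on $N$ copies, output a random coordinate). The Int-QAS definition (\Cref{def:IntQAS}) demands a PPT algorithm that \emph{interacts with a single copy of the QPT machine $\cC$} and matches the transcript distribution online. To embed the real $\cC$ in one coordinate and get a single-copy adversary, the reduction would have to simulate the other $N-1$ copies of $\cC$ itself --- impossible for a PPT reduction since $\cC$ is QPT. Your remark that ``the reduction may run $\cC$ honestly'' is precisely the step that fails. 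This is why the paper needs the OWF-inverter: when neither a QAS nor a DistOWF exists, the inverter applied to $f(k,r)=(k,\tau_{k-1},c_k)$ recovers randomness for the transcript sampler $\cS$ consistent with the partial transcript, yielding a genuinely online PPT prover. Finally, note that your last step (descending from quantum-verifier IV-PoQ to IV-PoQ via \Cref{thm:equivalence_variants}) is circular: that equivalence is itself proved in the paper via the chain Int-QAS $\Rightarrow$ QAS/OWF $\Rightarrow$ public-coin IV-PoQ, i.e., via the very implication you are trying to establish.
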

This theorem is considered as an interactive (and average-case) version 
of $\mathbf{SampBPP}\neq\mathbf{SampBQP}\Leftrightarrow
\mathbf{FBPP}\neq\mathbf{FBQP}$,
because 
the existence of Int-QASs is an interactive (and average-case) version of $\mathbf{SampBQP}\neq\mathbf{SampBPP}$
while
the existence of IV-PoQ is an interactive (and average-case) version of $\mathbf{FBQP}\neq\mathbf{FBPP}$.

\paragraph{\bf QAS/OWF condition.}
In addition to QASs and Int-QASs, we also introduce another new concept, the QAS/OWF condition, which is inspired by the SZK/OWF condition~\cite{Vad06}.
As we will explain later, the QAS/OWF condition plays a pivotal role to show our main result. 
Roughly speaking, the QAS/OWF condition is satisfied if 
there is a pair of candidates of a QAS and a classically-secure OWF such that for all sufficiently large security parameters, either of them is secure.\footnote{See \cref{def:QAS/OWF} for the precise definition.} If a QAS exists or a classically-secure OWF exists, then the QAS/OWF is satisfied, but the converse is unlikely. For example, if there are (candidates of) a QAS that is secure for all odd security parameters and a OWF that is secure for all even security parameters, then the QAS/OWF is satisfied, but it does not necessarily imply either of a QAS or a OWF.  

\if0
If QASs exist or classically-secure OWFs exist, then the QAS/OWF condition is satisfied.
Regarding the inverse, we can show the following. (Its proof is given in ???)
\begin{lemma}
If the QAS/OWF condition holds, then classically-secure OWFs exist or io-QASs exist.    
\end{lemma}
Because the existence of io-QASs implies $\mathbf{SampP}\neq\mathbf{SampBQP}$,
we also have
\begin{lemma}
If the QAS/OWF condition holds, then classically-secure OWFs exist or 
$\mathbf{SampP}\neq\mathbf{SampBQP}$.
\end{lemma}
\fi

\if0
Actually, we can show the following.
\begin{theorem}
If the QAS/OWF condition is satisfied, then
quantumly-secure OWFs exist or
$\mathbf{SampBQP}\neq\mathbf{SampBPP}$.
\end{theorem}
\fi

We show that the QAS/OWF condition is equivalent to both
the existence of IV-PoQ and the existence of classically-secure OWPuzzs:
\begin{theorem}
\label{thm:IVPoQ_QASOWF}
IV-PoQ exist if and only if the QAS/OWF condition holds.
\end{theorem}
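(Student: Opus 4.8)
I would prove the two implications separately, using the stated equivalences \cref{thm:samp_search} and \cref{thm:IVPoQ_IntQAS}, the construction of IV-PoQ from classically-secure one-way functions of \cite{C:MorYam24}, and standard tools (amplification, universal one-way functions, and the equivalence of one-way and distributionally one-way functions \cite{FOCS:ImpLub89}).

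\emph{From the QAS/OWF condition to IV-PoQ.} The plan is an ``OR-of-hardness'' combiner. Let $(\cA,f)$ witness the QAS/OWF condition, with the polynomial $p$ from \cref{def:QAS/OWF}: for every sufficiently large $\secp$, either $f$ is one-way at $\secp$, or $\SD(\cA(1^\secp),\cB(1^\secp))>1/p(\secp)$ for every PPT $\cB$. From $f$, \cite{C:MorYam24} gives an IV-PoQ $\Pi_f$ with unconditional completeness whose soundness at $\secp$ holds whenever $f$ is one-way at $\secp$; from $\cA$, the construction behind \cref{thm:samp_search} gives a non-interactive IV-PoQ $\Pi_\cA$ with unconditional completeness and a noticeable completeness--soundness gap at every $\secp$ at which $\cA$ satisfies the $1/p$-bound. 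First I would amplify each of $\Pi_f,\Pi_\cA$ separately, length by length, to completeness $1-\negl$ everywhere and soundness $\negl$ at every $\secp$ where the relevant primitive is secure (standard amplification by uniform sequential repetition). Then I would let the inefficient verifier of the combined protocol run both (amplified) sub-protocols and accept iff both sub-verifiers accept: completeness is $1-\negl$ (the honest QPT prover runs both honest provers), and at every large $\secp$ one of the two components is sound there, so soundness is $\negl$. Hence IV-PoQ exist. (Equivalently, one can run $\Pi_f,\Pi_\cA$ in parallel as Int-QASs, note that statistical distance does not grow under marginalization onto either coordinate, and invoke \cref{thm:IVPoQ_IntQAS}.)

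\emph{From IV-PoQ to the QAS/OWF condition.} Fix an IV-PoQ $(\cP,\cV_1,\cV_2)$. I would first have $\cV_1$ (which is PPT) reveal its random tape, so that $\cV_2$ is a function of the transcript, and amplify so that completeness $c$ and soundness $s$ satisfy $c-s\ge 1/q(\secp)$. \emph{Step 1:} $(\cP,\cV_1)$ is an Int-QAS with polynomial $p$ chosen so that $1/p<(c-s)/2$; for if some PPT $\cB$ had $\SD(\langle\cP,\cV_1\rangle(1^\secp),\langle\cB,\cV_1\rangle(1^\secp))\le 1/p(\secp)$ for infinitely many $\secp$, then at those $\secp$ the inefficient verifier accepts $\cB$'s transcript with probability $\ge c-1/p>s$, contradicting soundness. \emph{Step 2:} I claim the pair $(\cA^{\mathrm{QAS}},f^{\mathrm{OWF}})$ witnesses the QAS/OWF condition, where $\cA^{\mathrm{QAS}}(1^\secp)$ runs the honest interaction of $\cP$ with $\cV_1$ and outputs the transcript $T_\secp$, and $f^{\mathrm{OWF}}$ is a universal one-way function (one-way at a given length whenever any one-way function is, in the uniform model). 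Suppose the claim failed; unwinding the uniform negation of \cref{def:QAS/OWF} yields a \emph{single} PPT sampler $\cB'$ and a \emph{single} PPT inverter $\cI$ such that, for infinitely many $\secp$, both $\SD(\cB'(1^\secp),T_\secp)\le 1/p'(\secp)$ and $\cI$ inverts $f^{\mathrm{OWF}}$ at $\secp$, and I may take the polynomial $p'$ as large as I like. From $\cI$ and \cite{FOCS:ImpLub89} I would build, for those $\secp$, a uniform efficient distributional inverter for arbitrary efficiently computable functions; applying it to the prefix maps $r\mapsto(\text{first }i{-}1\text{ messages of }\cB'(r))$ and composing with $\cB'$ produces, for each round $i$, an efficient sampler for $\cB'$'s $i$-th message conditioned on the true history, which (using $\SD(\cB'(1^\secp),T_\secp)\le 1/p'$ and the standard bound $\Exp_{x}\,\SD(P_{Y\mid x},Q_{Y\mid x})\le 2\,\SD(P_{XY},Q_{XY})$) is $1/\poly$-close on average to $\cP$'s conditional. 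A round-by-round hybrid then gives a PPT $\cB$ whose interaction with $\cV_1$ has transcript $1/p$-close to $T_\secp$ for infinitely many $\secp$, contradicting Step 1. Hence the QAS/OWF condition holds.

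\emph{Where the difficulty lies.} I expect Step 2 of the second direction to be the crux, and within it the uniformity bookkeeping: one must turn two \emph{infinitely-often} facts --- that $\cA^{\mathrm{QAS}}$ is not a QAS and that $f^{\mathrm{OWF}}$ is not one-way, each witnessed by its own uniform machine on its own set of security parameters --- into a \emph{single} uniform PPT adversary that defeats the Int-QAS property on infinitely many security parameters. Making the two ``bad'' parameter-sets coincide is exactly what the uniform formulation of the QAS/OWF condition is for, and is where adversary uniformity is essential (cf.\ footnote~\ref{footnote_nonuniform}); the remaining ingredients --- the coin-revealing and gap amplification for IV-PoQ, the \cite{FOCS:ImpLub89} equivalence, the hybrid over rounds, and the conditional-distance estimate --- are routine.
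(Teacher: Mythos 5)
Your reverse direction (IV-PoQ $\Rightarrow$ QAS/OWF) is essentially the paper's route: it factors through the Int-QAS property of $(\cP,\cV_1)$ (\cref{lem:IV-PoQ_to_Int-SampQA}) and then runs the Ostrovsky-style round-by-round simulation (\cref{thm:Int-QAS_to_QAS/OWF}); placing a universal OWF directly in the witness pair and deriving a distributional inverter for the sampler's prefix maps is the same mechanism the paper packages as \cref{lem:quantifier} plus \cref{cor:univ_DistOWF}, and your observation that the definition of ``OWF on $\Sigma_\cS$'' is what forces the two infinitely-often bad sets to coincide is exactly right.

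The forward direction, however, has a genuine gap in the soundness argument for the combined protocol. You paraphrase \cref{def:QAS/OWF} as ``for every sufficiently large $\secp$, either $f$ is one-way at $\secp$, or $\SD(\cA(1^\secp),\cB(1^\secp))>1/p(\secp)$ for every PPT $\cB$,'' and then conclude that at every large $\secp$ one of the two sub-protocols is sound. This per-parameter dichotomy is not what the condition gives, and cannot be: for any \emph{fixed} $\secp$ there is always some PPT $\cB$ (hardcoding an approximation of $\cA(1^\secp)$) with $\SD(\cA(1^\secp),\cB(1^\secp))\le 1/p(\secp)$, so read literally your dichotomy collapses to ``$f$ is a OWF,'' which the QAS/OWF condition does not imply. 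The condition only guarantees that $f$ is one-way on the \emph{sampler-dependent} set $\Sigma_\cS$, so there is no adversary-independent partition of $\mathbb{N}$ on which to base the combiner. The correct argument (as in \cref{thm:QAS/OWF_to_IV-PoQ}) must run in the other order: given a PPT cheater $\cP^*$ against the AND-combiner succeeding on an infinite set $\Lambda$, first use its success against the QAS-component to \emph{extract} a concrete PPT sampler $\cS^*$ with $\SD(\cQ(1^\secp),\cS^*(1^\secp))\le 1/p(\secp)$ on all but finitely many $\secp\in\Lambda$ --- this is why the QAS-to-non-interactive-IV-PoQ construction must come with the reduction-style guarantee of the third item of \cref{lem:QAS_IVPoQ}, not merely ``soundness wherever $\cA$ is a QAS'' --- conclude $\Lambda\subseteq\Sigma_{\cS^*}$ up to a finite set, and only then invoke one-wayness of $f$ on $\Sigma_{\cS^*}$ to make the OWF-component (\cref{lem:OWF_IVPoQ}) sound on $\Lambda$ and contradict $\cP^*$'s success there. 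Separately, the preliminary amplification to negligible soundness is both unnecessary (the definition only requires a $1/\poly$ completeness--soundness gap, and the resulting soundness bound is of the form $1-1/\secp^{\log\secp}-1/p(\secp)^2$) and not obviously available in this regime --- the paper never amplifies and explicitly notes it does not know how to amplify the analogous classically-secure OWPuzz gap --- so that step should simply be dropped.
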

\begin{theorem}
Classically-secure OWPuzzs exist if and only if the QAS/OWF condition holds.
\end{theorem}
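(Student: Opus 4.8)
The quickest route is to chain \Cref{thm:main} with \Cref{thm:IVPoQ_QASOWF}, but since this statement is really one of the two building blocks from which \Cref{thm:main} itself is obtained, the plan is to prove it directly, establishing the two implications separately.

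\emph{From the QAS/OWF condition to classically-secure OWPuzzs.} I would begin with a candidate pair $(\cA,f)$ witnessing the condition, so that at every sufficiently large $\secp$ either the QAS candidate $\cA$ or the classically-secure OWF candidate $f$ is secure at $\secp$. From $f$ I would build the textbook OWPuzz $(\Samp_f,\Ver_f)$: $\Samp_f(1^\secp)$ samples $x\gets\bit^\secp$ and outputs $(\puzz,\ans)\coloneqq(f(x),x)$, while $\Ver_f$ accepts iff $f(\ans')=\puzz$; this is perfectly correct and is secure exactly at the $\secp$ at which $f$ is. From $\cA$ I would build an OWPuzz $(\Samp_\cA,\Ver_\cA)$ secure exactly at the $\secp$ at which $\cA$ is: using the constructive, $\secp$-by-$\secp$ reading of \Cref{thm:samp_search} I would first turn $\cA$ into a non-interactive IV-PoQ, amplify its completeness to $1-\negl$ and its soundness to $\negl$ by independent parallel repetition, and then let $\Samp_\cA(1^\secp)$ run the amplified honest prover to obtain its message $m$ and output $(\puzz,\ans)\coloneqq(1^\secp,m)$, with $\Ver_\cA(1^\secp,m')$ running the inefficient verification predicate on $m'$; correctness is inherited from completeness, and any PPT algorithm that, given $\puzz=1^\secp$, outputs an accepting $m'$ is literally a cheating PPT prover, so security is inherited from soundness. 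Finally I would combine the two by checking both: $\Samp(1^\secp)\coloneqq(\Samp_f(1^\secp),\Samp_\cA(1^\secp))$ and $\Ver$ accepts iff both $\Ver_f$ and $\Ver_\cA$ accept. This is $(1-\negl)$-correct, and breaking it at $\secp$ requires producing an answer accepted by \emph{both} components (an attacker on one component can sample the other puzzle itself), so the combined OWPuzz is secure at every $\secp$ at which $\cA$ or $f$ is, i.e., at all large $\secp$; since only two components are combined, no gap amplification of the OWPuzz is needed.

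\emph{From classically-secure OWPuzzs to the QAS/OWF condition.} Given a classically-secure OWPuzz $(\Samp,\Ver)$ with security polynomial $q$, I would take as the QAS candidate the QPT algorithm $\cA$ that runs $(\puzz,\ans)\gets\Samp(1^\secp)$ and outputs the \emph{pair} $(\puzz,\ans)$, declared with a polynomial $p$ (in the sense of the QAS definition) chosen much larger than $q$, and as the classically-secure OWF candidate a fixed universal one-way function $f$. Fixing a large $\secp$, if $\cA$ is not secure at $\secp$ then some PPT $\cB$ has $\SD(\cB(1^\secp),\cA(1^\secp))\le 1/p(\secp)$. I would then consider the function $g_\secp$ sending randomness $r$ to the puzzle part of $\cB(1^\secp;r)$ and argue it is hard to \emph{distributionally} invert by PPT algorithms at $\secp$: a distributional inverter, run on a challenge $\puzz\gets\Samp(1^\secp)$ (whose law is $1/p(\secp)$-close to that of $g_\secp(r)$ for uniform $r$), returns $r'$ with $g_\secp(r')=\puzz$ such that $\cB(1^\secp;r')=(\puzz,\ans')$ with $(\puzz,\ans')$ distributed within $O(1/p(\secp))$ of $\cB(1^\secp)\approx\Samp(1^\secp)$, hence $\Ver(\puzz,\ans')=\top$ with probability at least $1-O(1/p(\secp))-\negl(\secp)>1-1/q(\secp)$, contradicting classical security of the OWPuzz. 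By the $\secp$-local Impagliazzo--Luby transformation from distributional to standard one-wayness, this would yield a genuine classically-secure OWF that is hard at $\secp$, so the universal $f$ is hard at $\secp$. Hence at every large $\secp$, $\cA$ or $f$ is secure, which is the QAS/OWF condition.

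I expect the main obstacle to be not conceptual but a matter of discipline: every ingredient -- \Cref{thm:samp_search}, parallel-repetition amplification, the Impagliazzo--Luby transformation, and the two-component verification combiner -- must operate $\secp$ by $\secp$ so as to be faithful to the "secure at $\secp$" semantics of the QAS/OWF condition, and (recalling the footnote to \Cref{thm:main} that classically-secure OWPuzz gaps cannot be amplified) all parameters must stay within the $(1-\negl)$-correctness, $(1-1/\poly)$-security regime. In the second direction the extra care is to fix the QAS candidate's polynomial large enough that the accumulated statistical-distance losses in the reduction still push the inversion success probability past the OWPuzz's security threshold, and to check that the standard distributional-to-standard one-wayness transformation survives this bookkeeping.
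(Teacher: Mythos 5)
Your overall architecture coincides with the paper's proofs of \Cref{thm:QAS/OWF_to_OWPuzz,thm:OWPuzz_to_QAS/OWF}. In one direction you AND-combine the standard OWF-based puzzle with a trivial-puzzle OWPuzz ($\puzz=1^\secp$, $\ans=$ the prover's message of the Aaronson-style non-interactive protocol built from the sampler candidate); in the other you take the contrapositive, turn a successful PPT sampler for $\Samp$ into a function of its randomness whose distributional inverter would break the puzzle, and use the universal construction plus Impagliazzo--Luby to obtain a single $f$ independent of the sampler --- which is exactly what \Cref{lem:quantifier} packages.

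The one step that does not go through is your proposed amplification of the sampler-based component: you cannot ``amplify its soundness to $\negl$ by independent parallel repetition.'' That protocol is sound only against PPT provers while its verifier is unbounded, so the standard direct-product/threshold amplification arguments for puzzles (which require the reduction to verify candidate solutions efficiently) do not apply; a trivial-puzzle OWPuzz is essentially the same object, and the paper explicitly states that amplifying the security of classically-secure OWPuzzs beyond $1-1/\poly$ is not known --- you even invoke this constraint yourself in your closing paragraph, contradicting the amplification step. Fortunately the step is also unnecessary: \Cref{lem:QAS_IVPoQ} already gives $(1-\negl)$-completeness together with $\bigl(1-\frac{1}{\secp^{\log\secp}}-\frac{1}{p(\secp)^2}\bigr)$-soundness, and since the OWPuzz definition only requires a $1/\poly$ correctness--security gap, the paper simply carries this gap through the combiner. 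A second, smaller point: \Cref{def:QAS/OWF} is not ``at every large $\secp$ either $\cQ$ or $f$ is secure''; it says that for \emph{every} PPT sampler $\cS$, $f$ is one-way on the set $\Sigma_\cS$ of parameters where $\cS$ succeeds. Your combiner analysis should therefore be run as in the paper: from an attacker on the combined puzzle succeeding on an infinite set $\Lambda$, extract (via the conditional form of \Cref{lem:QAS_IVPoQ}) a concrete sampler $\cS^*$ that is $1/p$-close to $\cQ$ on all sufficiently large $\secp\in\Lambda$, conclude that $f$ is one-way there, and derive a contradiction with the induced attack on the OWF-based component.
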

By combining these two results, we obtain our main result, \cref{thm:main}.

\if0
\begin{figure}[htbp]
\begin{center}
\includegraphics[width=0.4\textwidth]{OWSGfig.eps}
\end{center}
\caption{
Summary of results.
}
\label{fig1}
\end{figure}
\fi

\paragraph{Variants of IV-PoQ.}
Recall that the verifier of IV-PoQ must be PPT during the interaction. We consider the following two variants of IV-PoQ, \emph{public-coin IV-PoQ}, where all the verifier's messages must be uniformly random strings, and \emph{quantum-verifier IV-PoQ}, where the verifier is allowed to be QPT instead of PPT during the interaction. Clearly, public-coin IV-PoQ is a special case of IV-PoQ (since uniformy random strings can be sampled in PPT), and IV-PoQ is a special case of quantum-verifier IV-PoQ (since PPT computations can be simulated in QPT).  
We show implications in the other direction, making them equivalent in terms of existence. \begin{theorem}\label{thm:intro_equivalence_variants}
    The existence of public-coin IV-PoQ, IV-PoQ, and quantum-verifier IV-PoQ are equivalent. 
\end{theorem}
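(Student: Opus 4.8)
The plan is to close the implication cycle: public-coin IV-PoQ $\Rightarrow$ IV-PoQ $\Rightarrow$ quantum-verifier IV-PoQ $\Rightarrow$ IV-PoQ $\Rightarrow$ public-coin IV-PoQ. The first two arrows are the trivial ones already noted above (uniformly random strings are PPT-samplable, and PPT computation is a restriction of QPT computation), so the real content is the last two arrows, namely quantum-verifier IV-PoQ $\Rightarrow$ IV-PoQ and IV-PoQ $\Rightarrow$ public-coin IV-PoQ.

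For quantum-verifier IV-PoQ $\Rightarrow$ IV-PoQ, I would route through Int-QASs and \Cref{thm:IVPoQ_IntQAS}. Given a quantum-verifier IV-PoQ $(\cP,\cV_1,\cV_2)$ with a QPT interactive verifier $\cV_1$, the same construction that shows ``IV-PoQ imply Int-QASs'' applies verbatim: set $(\cA,\cC):=(\cP,\cV_1)$, which is a legal Int-QAS candidate precisely because the definition of Int-QAS already permits a QPT second party $\cC$. Completeness says $\cV_2$ accepts the honest transcript with high probability, soundness says $\cV_2$ accepts the transcript produced by any PPT $\cB$ interacting with $\cV_1$ with noticeably smaller probability, and since $\cV_2$'s verdict is a randomized function of what the verifier holds at the end of the interaction, the gap between these two acceptance probabilities lower-bounds $\SD(\langle\cP,\cV_1\rangle(1^\secp),\langle\cB,\cV_1\rangle(1^\secp))$ by $1/\poly(\secp)$. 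Hence $(\cP,\cV_1)$ is an Int-QAS, and \Cref{thm:IVPoQ_IntQAS} yields a (PPT-verifier) IV-PoQ.

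The main obstacle is IV-PoQ $\Rightarrow$ public-coin IV-PoQ: there is no black-box way to turn an arbitrary private-coin IV-PoQ into a public-coin one, since its soundness may rely on the verifier's coins being hidden from the PPT prover during the interaction. I would instead descend to the underlying primitive. By \Cref{thm:IVPoQ_QASOWF} (and \Cref{thm:main}), the existence of IV-PoQ is equivalent to the QAS/OWF condition, so it suffices to build a \emph{public-coin} IV-PoQ assuming the QAS/OWF condition. The key observation is that in an IV-PoQ it is the \emph{verifier}'s messages, not the prover's, that must be uniformly random, so the prover is still free to send commitments and other structured messages. Concretely, I would re-examine the construction of IV-PoQ from the QAS/OWF condition: its QAS-component can be taken non-interactive (the verifier sends nothing, or an ignored random string --- trivially public-coin), and its OWF-component can be realized so that all of the verifier's messages are the uniformly random first message of a receiver-public-coin, classical-communication commitment (e.g.\ Naor-style) together with uniformly random challenges; the two components are run in parallel, with the inefficient verifier $\cV_2$ accepting iff both sub-verifiers accept. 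The crux, and where the real work lies, is to carry out this public-coin instantiation in detail and re-verify completeness and soundness under the public-coin restriction --- in particular, that the OWF-based sub-protocol still admits an unbounded-time final verification and that soundness survives once the verifier's challenges are public. The remaining implications of the theorem then follow by composing with the trivial directions above.
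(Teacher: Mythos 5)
Your proposal is correct and follows essentially the same route as the paper: both reduce the hard direction to the QAS/OWF condition via Int-QASs (noting that the Int-QAS definition and the IV-PoQ-to-Int-QAS argument already tolerate a QPT verifier), and both then observe that the IV-PoQ built from the QAS/OWF condition is public-coin because its QAS component is non-interactive and its OWF component can be instantiated with a receiver-public-coin commitment (the paper notes the \cite{HNORV09}-based construction of \cite{C:MorYam24} is already public-coin). The only cosmetic difference is that you traverse the equivalence machinery twice (quantum-verifier $\Rightarrow$ IV-PoQ, then IV-PoQ $\Rightarrow$ public-coin) where the paper does it in one pass.
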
 
This theorem suggests that the power of IV-PoQ is robust to the choice of the computational power of the verifier during the interaction. 

\paragraph{Zero-knowledge IV-PoQ.}
We define the zero-knowledge property for IV-PoQ, which roughly requires that the verifier's view can be simulated by a PPT simulator. Intuitively, this ensures that the verifier learns nothing from the prover beyond what could have been computed in PPT. We say that an IV-PoQ satisfies statistical (resp. computational) zero-knowledge if for any PPT malicious verifier,  there is a PPT simulator that statistically (resp. computationally) simulates the verifier's view. We say that an IV-PoQ satisfies honest-verifier statistical zero-knowledge if there is a PPT simulator that statistically simulates the honest verifier's view. We prove the following results. 
\begin{theorem}
    If honest-verifier statistical zero-knowledge IV-PoQ exist, then classically-secure OWFs exist.
\end{theorem}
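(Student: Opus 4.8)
The plan is to build a classically-secure one-way function directly from an honest-verifier statistical zero-knowledge IV-PoQ $(\cP,\cV_1,\cV_2)$, using the zero-knowledge simulator $\Sim$ as the combinatorial core. The starting point is the observation that, by completeness, the honest transcript distribution $\langle\cP,\cV_1\rangle(1^\secp)$ is accepted by $\cV_2$ with probability $1-\negl(\secp)$; by honest-verifier statistical zero-knowledge, $\Sim(1^\secp)$ is statistically close to the verifier's view in that interaction, and hence the transcript it outputs is also accepted by $\cV_2$ with probability $1-\negl(\secp)$. On the other hand, by soundness, \emph{no} PPT prover can produce a transcript accepted by $\cV_2$ with non-negligible probability, so in particular no PPT algorithm can sample from (anything statistically close to) the transcript part of $\Sim(1^\secp)$ together with a consistent continuation. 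The key tension to exploit: $\Sim$ is itself PPT and produces accepting transcripts, yet soundness forbids a PPT \emph{prover} from doing so in interaction with $\cV_1$. The resolution is that $\Sim$ does not interact with the real $\cV_1$ — it hallucinates the verifier's coins — so the way to derive a contradiction (and thus a hard problem) is to argue that if some object were invertible/learnable, one could convert $\Sim$ into a genuine PPT cheating prover.

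Concretely, I would follow the template of the SZK/OWF-style arguments (cf. the QAS/OWF condition and \cite{Vad06}) adapted to this setting. First, define a candidate OWF (or, more robustly, a distributional one-way function, then apply the standard amplification to a genuine OWF) whose "inversion" would let a PPT machine sample the next simulator message conditioned on a prefix of the transcript. Second, show that a PPT prover strategy against $(\cV_1,\cV_2)$ can be assembled from such a sampler: the prover, upon receiving each real verifier message, uses the conditional-sampling oracle to extend the partial transcript consistently with $\Sim$'s distribution; by the statistical zero-knowledge guarantee the joint distribution it induces is statistically close to $\Sim(1^\secp)$, hence to the honest view, hence accepted by $\cV_2$ with probability $1-\negl$. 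This contradicts soundness, so the candidate function cannot be efficiently invertible on a noticeable fraction of inputs — i.e.\ it is a (distributional) classically-secure OWF. Third, invoke Impagliazzo–Luby to upgrade distributional one-wayness to standard one-wayness, obtaining a classically-secure OWF.

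The main obstacle I expect is the \emph{conditional resampling} step: honest-verifier statistical zero-knowledge only promises a simulator for the \emph{whole} view, not an efficiently samplable family of conditional distributions "given a transcript prefix, output the next message." Bridging this gap is exactly what a one-way (really, distributional one-way) function is engineered to provide — the function should map $\Sim$'s internal randomness to the transcript prefix at each round, so that inverting it yields a fresh sample of the randomness consistent with that prefix, and re-running $\Sim$ on it gives the next message with (close to) the correct conditional law. Making this precise requires care about (i) round-by-round error accumulation staying $1/\poly$ rather than blowing up, which is fine since IV-PoQ have polynomially many rounds and statistical distance adds; (ii) handling the inefficient verifier $\cV_2$ cleanly — it is only used as an accept predicate in the soundness contradiction and never run by any efficient party, so it poses no algorithmic difficulty; and (iii) the uniform-adversary subtlety flagged in the paper's footnote, meaning the reduction must be uniform. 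A secondary subtlety is that we only get $(1-1/\poly)$-style gaps for classically-secure objects, but since the target here is a classically-secure OWF (not OWPuzz) the usual hardness amplification for OWFs applies, so this is not a real obstruction.
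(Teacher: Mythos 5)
Your proposal is correct, and its combinatorial core is the same Ostrovsky-style reduction that drives the paper's result, but you package it differently. The paper proves this theorem in three lines by composing machinery it has already built: IV-PoQ $\Rightarrow$ Int-QAS (\cref{lem:IV-PoQ_to_Int-SampQA}), Int-QAS $\Rightarrow$ QAS/OWF condition (\cref{thm:Int-QAS_to_QAS/OWF}), and then the observation that the honest-verifier statistical zero-knowledge simulator $\Sim$ is a PPT sampler with $\SD(\langle\cP,\cV_1\rangle(1^\secp),\Sim(1^\secp))\le\negl(\secp)$, so the set $\Sigma_{\Sim}$ on which the (universal) function $f$ from the QAS/OWF condition is one-way is cofinite, whence OWFs exist by \cref{lem:OWF_on_N}. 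You instead unroll the reduction inside \cref{thm:Int-QAS_to_QAS/OWF} and instantiate it directly with $\Sim$ as the sampler: define $f(k,r)=(k,\tau_{k-1},c_k)$ where $\Sim(1^\secp;r)$ produces the transcript, argue that a distributional inverter for $f$ yields a PPT next-message sampler and hence a PPT cheating prover whose acceptance probability is within an adjustable $1/\poly$ of completeness, contradicting soundness, and conclude that $f$ is a distributional OWF, upgraded via Impagliazzo--Luby. Your route is more self-contained and, for this particular theorem, genuinely simpler: because the function is allowed to depend on the fixed simulator, you do not need the universal OWF construction (\cref{lem:univ_OWF}) or the quantifier-reordering lemma (\cref{lem:quantifier}) that the paper must invoke to make $f$ independent of the sampler in the general equivalence. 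Two small points of care that you partly flag: pack the round index $k$ into the input of a single function rather than using one function per round, so that a single distributional inverter handles all rounds on the same security parameters; and note that soundness only bounds the cheating probability by $s(\secp)$ (not by a negligible quantity), so the contradiction comes from beating $s$ by exploiting the $c-s\ge 1/\poly$ gap with a sufficiently small inversion error, exactly as in your error-accumulation remark.
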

\begin{theorem}
    If classically-secure OWFs exist, then computational zero-knowledge IV-PoQ exist.
\end{theorem}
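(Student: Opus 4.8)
The plan is to start from an IV-PoQ built from classically-secure OWFs and compile it into a computational zero-knowledge one by hiding the prover's messages with commitments. By~\cite{C:MorYam24}, classically-secure OWFs imply IV-PoQ, and by \Cref{thm:intro_equivalence_variants} we may assume the IV-PoQ $(\cP,\cV_1,\cV_2)$ is public-coin, i.e., every message of $\cV_1$ is a uniformly random string. This is crucial: in a public-coin protocol $\cV_1$ never reads the prover's messages during the interaction, so we can hide them without preventing $\cV_1$ from proceeding. In the compiled protocol the prover runs $\cP$ internally, and whenever $\cP$ would send a message $m_i$ it instead sends a commitment to $m_i$ under a statistically-binding, computationally-hiding, and (as explained below) extractable commitment scheme, all of which follow from classically-secure OWFs via standard constructions (a PRG, Naor's commitment, and a standard extractable commitment). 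The inefficient verifier $\cV_2'$ of the compiled protocol reads the full transcript, recovers each committed message $m_i$ (possible since the commitment is statistically binding and $\cV_2'$ is unbounded), rejects if any commitment is invalid, and otherwise runs the original $\cV_2$ on the recovered transcript together with $\cV_1$'s coins. Syntactically this is an IV-PoQ with a QPT honest prover, and completeness is immediate from the completeness of $(\cP,\cV_1,\cV_2)$.

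Computational zero-knowledge is the easy part. Given a malicious PPT verifier $\cV^*$, the simulator runs $\cV^*$ and answers each commitment message exactly as the honest prover would, but committing to $0^{|m_i|}$ rather than to the (never-revealed) real $m_i$; in particular, the simulator need not run $\cP$ at all. A standard hybrid argument over the $\poly(\secp)$ commitments, invoking the computational hiding of the commitment against malicious receivers, shows that the simulated view is computationally indistinguishable from $\cV^*$'s real view.

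The main obstacle is soundness of the compiled protocol. A cheating PPT prover $\cP^*$ sends commitments, and to contradict the soundness of the original IV-PoQ we must recover the committed messages \emph{efficiently} --- this is why we use an \emph{extractable} statistically-binding commitment (constant-round, from classically-secure OWFs), whose committed value can be recovered by a rewinding extractor. The soundness reduction $\widetilde{\cP}$ interacts with the original $\cV_1$: upon receiving a public-coin message $v_i$ it feeds it to $\cP^*$, runs the extractable commit phase with $\cP^*$ while rewinding $\cP^*$ to extract $m_i^{\ast}$, forwards $m_i^{\ast}$ to $\cV_1$, and continues. Between receiving $v_i$ and sending $m_i^{\ast}$, $\widetilde{\cP}$ talks only to $\cP^*$, so the rewinding never disturbs the straight-line interaction with $\cV_1$ --- this is the second place where public-coinness is used. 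Whenever $\cV_2'$ would accept, every commitment of $\cP^*$ is valid, so extraction succeeds (up to negligible failure) and returns exactly the messages that $\cV_2'$ recovers (by statistical binding), hence $\cV_2$ accepts $\widetilde{\cP}$'s transcript with essentially the same probability; since $\widetilde{\cP}$ is PPT, this contradicts soundness of $(\cP,\cV_1,\cV_2)$. The remaining care is in the usual places: showing the rewinding extractor succeeds with overwhelming probability conditioned on the commit phase being accepting, handling prover aborts (which make a commitment invalid, so $\cV_2'$ rejects anyway), and bounding the number of rewinds so that $\widetilde{\cP}$ stays polynomial-time.
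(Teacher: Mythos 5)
Your proposal is correct and follows essentially the same route as the paper: take the public-coin IV-PoQ obtained from classically-secure OWFs, replace each prover message with an extractable statistically-binding commitment, let the unbounded verifier recover the committed values by brute force, simulate by committing to all-zero strings, and prove soundness by a reduction that extracts the committed messages via rewinding (which does not disturb the straight-line interaction with the public-coin verifier) and then truncates the expected-polynomial-time extractor to obtain a strict PPT cheating prover. The paper organizes the soundness argument as a hybrid over rounds and handles the same subtleties you flag (aborts, extraction failure, running-time truncation), so the two arguments coincide in substance.
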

The above theorems establish a loose equivalence between zero-knowledge IV-PoQ and classically-secure OWFs. 
However, there is still a gap between them, and filling it is left as an open problem.

\if0
\subsection{Microcrypt primitives from quantum advantage?}
\cite{cryptoeprint:2024/1490} raised the following important
open problem: Can we base Microcrypt primitives solely on quantum advantage?\footnote{\cite{cryptoeprint:2024/1490} does not construct
OWPuzzs from quantum advantage. They construct OWPuzzs from some assumptions that imply quantum advantage (plus the assumption of 
$\mathbf{P}^{\sharp \mathbf{P}}\not\subseteq io\mathbf{BQP}/\mathbf{qpoly}$).}

First, can we construct Microcrypt primitives solely from the assumption of $\mathbf{SampBQP}\neq\mathbf{SampBPP}$?
The results of the present paper suggest that it will be difficult.
By combining \cref{thm:main} and \cref{thm:main2}, 
we have that classically-secure OWPuzzs imply quantumly-secure OWFs or 
$\mathbf{SampBQP}\neq\mathbf{SampBPP}$.
From quantumly-secure OWFs, almost all Microcrypt primitives can be constructed.
Therefore, this means that if a primitive $X$ (say, PRSGs) can be constructed from the assumption of
$\mathbf{SampBQP}\neq\mathbf{SampBPP}$, then the primitive $X$ can be also constructed from classically-secure
OWPuzzs. This seems surprising. For example, there is an oracle separation between OWPuzzs and OWSGs~\cite{cryptoeprint:2024/1567}.
Even if $X$ is chosen as a weaker primitive than PRSGs, say, quantumly-secure OWPuzzs, 
it still seems surprising that quantumly-secure OWPuzzs can be constructed from classically-secure OWPuzzs.

Second, if $\mathbf{SampBQP}\neq\mathbf{SampBPP}$ is not promising for constructing Microcrypt primitives,
how about more general notion of quantum advantage, namely, IV-PoQ and Int-QASs?
Can we construct Microcrypt primitives solely from IV-PoQ or Int-QASs?
Again, it will be difficult, because, from \cref{thm:main},
if it is possible then these primitives can be also constructed from classically-secure OWPuzzs.

Finally, we do not know whether non-interactive IV-PoQ and QASs can be constructed from classically-secure OWPuzzs (or any other Microcrypt primitives).
At least, we can easily show by using similar proofs of the present paper
that if i.o.-classically-secure OWPuzzs exist then quantumly-secure OWFs exist or QASs exist.
Hence, this again suggests that Microcrypt primitives (such as quantumly-secure OWPuzzs)
will not be constructed from non-interactive IV-PoQ or QASs.
\fi


\if0
\paragraph{Complementary one-wayness and classical unsamplability.}\mor{Furui}
By using SampQA, we introduce a new notion of complementary one-wayness and classical unsamplability.
Let us consider a pair $(\cF,\cQ)$ where $\cF=\{f_\secp\}_\secp$ is an efficiently computable function family and $\cQ$ is a QPT algorithm which takes $1^\secp$ as input and outputs a classical string.
We say that $(\cF,\cQ)$ has a property of complementary one-wayness and classical unsamplability if for all sufficiently large $\secp$, at least one of the following is holds:
\begin{itemize}
    \item For any PPT algorithms, $f_\secp$ is hard to invert.
    \item For any PPT algorithms, $\cQ(1^\secp)$ is hard to simulate with at most $1/\poly(\secp)$ error..
\end{itemize}
One may think that this notion is equavalent to the existence of OWFs or SampQA, i.e.,
\begin{align*}
    &(\cF,\cQ) \text{~has ~a ~property ~of ~complementary ~one-wayness ~and ~classical ~unsamplability} \\
    &\overset{?}{\Longleftrightarrow} \cF \text{~is ~a ~one-way ~function ~family ~or~} \cQ \text{~is ~a ~SampQA.} 
\end{align*}
In fact, we can show the ``if'' direction.
However, the ``only if'' direction is not trivial and we do not know whether this direction holds. 
We emphasize that the above inequivalence is the reason that we have to introduce a new notion called complementary one-wayness and classical unsamplability.
In other words, using OWFs and SampQA separately is not sufficient for our purpose.
The detail about this issue is in \Cref{sec:quantum_sampling}.

\begin{corollary}
If IV-PoQ exist, then classically-secure OWFs exist or ${\bf SampP}\neq{\bf SampBQP}$.
\end{corollary}
This corollary clarify a necessary condition of IV-PoQ.
Since PoQ are special cases of IV-PoQ where the verifiers are efficinet, the same necessary condition also holds for PoQ.

If IV-PoQ can be constructed from the assumption of ${\bf NP}\neq{\bf P}$, then
from the above theorem,
${\bf NP}\neq{\bf P}$ implies the existence of classically-secure OWFs or
${\bf SampP}\neq{\bf SampBQP}$,
which is very surprising. 
Therefore, we observe that solving the open problem (2) affirmatively seems to be very difficult.
\begin{corollary}
If classically-secure one-way puzzles exist, then classically-secure OWFs exist or ${\bf SampP}\neq{bf SampBQP}$.
\end{corollary}

\if0
\begin{theorem}
Honest-verifier zero-knowledge proofs of quantumness exist if and only if
zero-knowledge proofs of quantumness exist.
\end{theorem}

\begin{theorem}
Honest-verifier zero-knowledge proofs of quantumness exist if and only if
classically-secure OWFs exist.
\end{theorem}
\fi

\fi

\subsection{Technical Overview}\label{sec:overview}
Our main result is \cref{thm:main}, which shows that IV-PoQ exist if and only if classically-secure OWPuzzs exist.
In this technical overview, we provide intuitive explanations for the result.
To show it, the QAS/OWF condition plays a pivotal role.
For ease of presentation, we think of the QAS/OWF condition as just the condition that ``a QAS exists or a classically-secure OWF exists'' in this overview. Though this is stronger than the actual definition, this rough description is enough for understanding our ideas.  

We first show IV-PoQ $\Leftrightarrow$ QAS/OWF condition.
We next show classically-secure OWPuzzs $\Leftrightarrow$ QAS/OWF condition.
By combining them, we finally obtain the main result.
In the following, we explain each step.

\paragraph{Step 1: IV-PoQ $\Rightarrow$ QAS/OWF condition.}
Its proof is inspired by \cite{CCC:Ost91}. 
However, we emphasize that our proof is not a trivial application of \cite{CCC:Ost91}. As we will explain later,
the same proof of \cite{CCC:Ost91} does not work in our setting, and therefore
we had to overcome several technical challenges to show the result.

Let $(\cP,\cV_1,\cV_2)$ be an $\ell$-round IV-PoQ,
where $\cP$ is the prover, $\cV_1$ is the efficient verifier, and $\cV_2$ is the inefficient verifier.
(We count two messages as a single round.)
Without loss of generality, we can assume that $\cV_1$ first sends a message.
Let $(c_1,a_1,...,c_\ell,a_\ell)$ be the transcript (i.e., the sequence of all messages exchanged) of the interaction between $\cP$ and $\cV_1$, 
where $c_i$ is $\cV_1$'s $i$-th message and $a_i$ is $\cP$'s $i$-th message.
Our goal is, by assuming that the QAS/OWF condition is not satisfied, to construct a classical PPT adversary $\cP^*$
that breaks the soundness of the IV-PoQ.

If the QAS/OWF condition is not satisfied, then, roughly speaking,
both of the following two conditions are satisfied:
\begin{enumerate}
    \item[(a)]
QASs do not exist. In other words, the output probability distribution of any QPT algorithm can be approximately sampled with a PPT algorithm.
\item[(b)]
Classically-secure OWFs do not exist.
\end{enumerate}
From (a), the distribution of the transcript generated by the interaction between $\cP$ and $\cV_1$ 
can be approximately sampled with a PPT algorithm $\cS$.
One might think that if a malicious PPT prover of the IV-PoQ just runs $\cS$, the soundness of the IV-PoQ is broken.
However, it is not correct: The ability to classically efficiently sample from the distribution $\langle \cP,\cV_1\rangle(1^\secp)$ 
is not enough to break the soundness of the IV-PoQ, because what the PPT adversary $\cP^*$ has to do is not to sample $(c_1,a_1,...,c_\ell,a_\ell)$
but to sample ``correct'' $a_k$ given the transcript $(c_1,a_1,...,c_{k-1},a_{k-1},c_k)$ obtained so far for every $k\in[\ell]$.

We use (b) to solve it.
From $\cS$, we define a function $f$ as follows.
\begin{enumerate}
    \item 
    Get an input $(k,r)$.
    \item 
    Run $(c_1,a_1,...,c_\ell,a_\ell)=\cS(1^\secp;r)$.\footnote{For a PPT algorithm $\cA$, $y=\cA(x;r)$ means that $\cA$'s output is $y$ when the input is $x$ and the random seed is $r$.}
    \item 
    Output $(k,c_1,a_1,...,c_{k-1},a_{k-1},c_k)$.
\end{enumerate}
From (b), OWFs do not exist. Then, distributional OWFs do not exist as well~\cite{FOCS:ImpLub89}.\footnote{An efficiently-computable function $f:\bit^*\to\bit^*$ is called
a classically-secure (resp. quantumly-secure) distributional OWF if for any PPT (resp. QPT) adversary $\cA$, 
the statistical distance between $(x,f(x))$ and $(\cA(f(x)),f(x))$ is large for random $x$.}
This means that there exists a PPT algorithm $\cR$ such that
the statistical distance between $(x,f(x))$ and $(\cR(f(x)),f(x))$ is small for random $x$.
Therefore, for each $k\in[\ell]$, the following PPT adversary $\cP^*$ can return ``correct'' $a_k$ given
$(c_1,a_1,...,c_{k-1},a_{k-1},c_k)$.
\begin{enumerate}
\item
Take $(c_1,a_1,...,c_{k-1},a_{k-1},c_k)$ as input.
    \item 
    Run $(k',r')\gets \cR(k,c_1,a_1,...,c_{k-1},a_{k-1},c_k)$.
    \item 
    Run $(c_1',a_1',...,c_\ell',a_\ell')=\cS(1^\secp;r')$.
    \item 
    Output $a'_{k'}$.
\end{enumerate}
In this way, we can break the soundness of the IV-PoQ.
Hence we have shown that IV-PoQ $\Rightarrow$ the QAS/OWF condition.

The idea underlying this proof is similar to that of \cite{CCC:Ost91}. In \cite{CCC:Ost91}, it was shown that
if SZK is average hard then OWFs exist.
To show it, \cite{CCC:Ost91} used the zero-knowledge property to
guarantee the existence of a PPT simulator that can sample the transcript between the verifier and the prover.
From that simulator, \cite{CCC:Ost91} constructed a OWF.
Very roughly speaking, our $\cS$ that comes from (a) corresponds to the zero-knowledge simulator of \cite{CCC:Ost91}:
the transcript of \cite{CCC:Ost91} can be PPT sampled because of the zero-knowledge property
while our transcript can be PPT sampled because QASs do not exist.
However, there are several crucial differences between our setting and \cite{CCC:Ost91}'s.
In particular,
in the setting of \cite{CCC:Ost91} the constructed OWF can depend on the simulator.
On the other hand, in our setting, we finally want to construct a OWF that is independent of $\cS$.\footnote{In the precise definition of QAS/OWF condition,
the OWFs should be independent of $\cS$. Otherwise, we do not know how to show
the other direction, namely, the QAS/OWF condition $\Rightarrow$ IV-PoQ.}
In order to solve the issue,
we use the universal construction of OWFs \cite{STOC:Levin85,EC:HKNRR05}.
We first construct a OWF $f_\cS$ from each $\cS$ as we have explained above, and next
construct a OWF $g$ that is independent of $\cS$
by using the universal construction.
In addition to this issue, there are several other points where the direct application of \cite{CCC:Ost91} does not work, but for details,
see the main body of the paper.

Note that in the actual proof, we do not directly show
IV-PoQ $\Rightarrow$ the QAS/OWF condition.
We first show 
IV-PoQ $\Rightarrow$ Int-QAS, and then show
Int-QAS $\Rightarrow$ the QAS/OWF condition
in order to obtain stronger results
and to avoid repeating similar proofs twice.
However, the proof of
Int-QAS $\Rightarrow$ the QAS/OWF condition is essentially the same as that explained above.

\paragraph{Step 2: Classically-secure OWPuzzs $\Rightarrow$ QAS/OWF condition.}
Its proof is similar to that of step 1.
Let $(\Samp,\Ver)$ be a classically-secure OWPuzz.
Assume that the QAS/OWF condition is not satisfied. This roughly means that both of the following two conditions are satisfied.
\begin{itemize}
 \item[(a)]
QASs do not exist.
\item[(b)]
Classically-secure OWFs do not exist.
\end{itemize}
From (a), there exists a PPT algorithm $\cS$ such that the output probability distribution of $\cS(1^\secp)$ is
close to that of $\Samp(1^\secp)$ in the statistical distance.
From such $\cS$, we construct a function $f$ as follows.
\begin{enumerate}
    \item 
    Get a bit string $r$ as input.
    \item 
    Compute $(\puzz,\ans)=\cS(1^\secp;r)$.
    \item 
    Output $\puzz$.
\end{enumerate}
From (b), OWFs do not exist. This means that distributional OWFs do not exist as well.
Therefore, there exists a PPT algorithm $\cR$ such that the statistical distance
between $(x,f(x))$ and $(\cR(f(x)),f(x))$ is small for random $x$. 
From $\cS$ and $\cR$, we can construct a PPT adversary $\cA$ that breaks the security of the OWPuzz as follows:
\begin{enumerate}
    \item 
    Take $\puzz$ as input.
    \item 
    Run $r\gets\cR(\puzz)$.
    \item 
    Run $(\puzz',\ans')\gets\cS(1^\secp;r)$.
    \item 
    Output $\ans'$.
\end{enumerate}
As in step 1, we actually need 
a OWF that is independent of $\cS$, and therefore we have to use the universal construction \cite{STOC:Levin85,EC:HKNRR05}.

\paragraph{Step 3: QAS/OWF condition $\Rightarrow$ IV-PoQ.}
Assume that the QAS/OWF condition is satisfied. Then, roughly speaking, a QAS $\cQ$ exists 
or a classically-secure OWF $f$ exists.
From $f$, we can construct an IV-PoQ by using the result of \cite{C:MorYam24}.
The non-trivial part is to construct an IV-PoQ from $\cQ$. 
For that goal, we
use the idea of \cite{Aar14}. However, as we will explain later, a direct application of \cite{Aar14}
does not work for our goal, and some new technical contributions were needed.

We construct a non-interactive IV-PoQ from $\cQ$ as follows:
\begin{enumerate}
    \item 
    The QPT prover runs $\cQ(1^\secp)$ $N$ times, where $N$ is a certain polynomial,
    and sends the result $(y_1,...,y_N)$ to the verifier, where $y_i$ is the output of the $i$-th run of $\cQ(1^\secp)$.
    \item 
    The unbounded verifier computes the Kolmogorov complexity $K(y_1,...,y_N)$\footnote{More precisely, this is 
   time-bounded prefix Kolmogorov complexity $K_U^T(y_1,...,y_N)$ with time bound $T(n)=2^{2^{n}}$
   and the universal self-delimiting machine $U$.}, and accepts if
    it is larger than a certain value.
\end{enumerate}
We can show that thus constructed non-interactive IV-PoQ satisfies completeness and soundness.
For completeness, we 
use Markov's inequality to show that $K(y_1,...,y_N)$, where $y_i\gets\cQ(1^\secp)$ for each $i\in[N]$, is large with high probability and therefore the verifier accepts.
To evaluate the bound of Markov's inequality, we
use Kraft's inequality for the prefix Kolmogorov complexity, which says that $\sum_x2^{-K(x)}\le 1$. 
For soundness, assume that there exists a PPT adversary $\cP^*$ that outputs
$(y_1',...,y_N')$ that is accepted by the verifier with high probability,
which means that $\log\frac{1}{\Pr[(y_1',...,y_N')\gets\cQ(1^\secp)^{\otimes N}]}\lesssim K(y_1',...,y_N')$.
Because of the property of $K$, we have
$K(y_1',...,y_N')\lesssim \log \frac{1}{\Pr[(y_1',...,y_N')\gets \cP^*(1^\secp)]}$.
By combining them, we have $\log\frac{\Pr[(y_1'..,y_N')\gets\cP^*(1^\secp)]}{\Pr[(y_1',...,y_N')\gets\cQ(1^\secp)^{\otimes N}]}\approx 0$,
which roughly means that the output probability distribution of $\cQ(1^\secp)^{\otimes N}$ is close to that of $\cP^*(1^\secp)$.

From such $\cP^*$,
we can construct a PPT algorithm 
whose output probability distribution is close to that of $\cQ(1^\secp)$ in the statistical distance
as follows:
\begin{enumerate}
    \item 
Run $(y_1',...,y_N')\gets\cP^*(1^\secp)$.
\item 
Choose a random $i\in[N]$, and output $y_i'$.
\end{enumerate}
However, this means that $\cQ$ is not a QAS, which contradicts the assumption.
In this way, we can show the QAS/OWF condition $\Rightarrow$ IV-PoQ.

The idea underlying this proof is similar to that of \cite{Aar14}. In fact, the completeness part is exactly the same.
However, for the soundness part, the direct application of \cite{Aar14} does not work, because of several reasons. 
Here we explain main two issues.
First, the search problem constructed in \cite{Aar14} was not necessarily verifiable even in unbounded time since Kolmogorov complexity is uncomputable in general.
This is problematic for our goal, because what we want to construct is a non-interactive IV-PoQ where the prover's message should be verified {\it at least inefficiently}. 
\cite{Aar14} slightly mentioned an extension of the result to the time-bounded case, but there was no proof.
Second, \cite{Aar14} constructed a search advantange from $\mathbf{SampBQP}\neq\mathbf{SampBPP}$, which is a worst-case notion.
However, what we need is a search advantage from the existence of QASs, namely, the {\it average-case version} of sampling advantage.
Hence the proof of \cite{Aar14} cannot be directly used in our setting.

\if0
\shira{To obtain our reslut, QAS $\Rightarrow$ IV-PoQ and OWF $\Rightarrow$ IV-PoQ is not enough.
The QAS/OWF condition means the pair of candidates of a QAS and a OWF and their security is complementary.
Thus, we have to combine two implications QAS $\Rightarrow$ IV-PoQ and OWF $\Rightarrow$ IV-PoQ such that one of the security of a QAS and a OWF is satisfied for all sufficiently large $\secp$ and the soundness of combined IV-PoQ is satisfied for all sufficiently large $\secp$. 
We can accomplish this because both of QAS $\Rightarrow$ IV-PoQ and OWF $\Rightarrow$ IV-PoQ are parameter wise reductions.
}
\fi

\paragraph{Step 4: QAS/OWF condition $\Rightarrow$ classically-secure OWPuzzs.}
The proof uses a similar technique as that of step 3. 
If the QAS/OWF condition is satisfied, then, roughly speaking, a classically-secure OWF $f$ exists or a QAS $\cQ$ exists.
From the OWF $f$, we can construct a classically-secure OWPuzz easily as follows:
\begin{itemize}
    \item 
$\Samp(1^\secp)\to(\puzz,\ans):$ Choose $x\gets\bit^\secp$, and output $\puzz\coloneqq f(x)$ and $\ans\coloneqq x$.
\item 
$\Ver(\puzz,\ans')\to\top/\bot:$ Accept if and only if $f(\ans')=\puzz$.
\end{itemize}

From $\cQ$, we can construct a non-interactive IV-PoQ as in step 3.
From such a non-interactive IV-PoQ, we can easily construct a classically-secure OWPuzz as follows.
\begin{itemize}
    \item 
$\Samp(1^\secp)\to(\puzz,\ans):$ Run $\tau\gets\cP(1^\secp)$, and output $\puzz\coloneqq 1^\secp$ and $\ans\coloneqq \tau$.
\item 
$\Ver(\puzz,\ans')\to\top/\bot:$ Accept if and only if $\top\gets\cV_2(1^\secp,\ans')$.
\end{itemize}

\if0
Let $L=(L_{\text{yes}},L_{\text{no}})\subseteq\bit^*$ be a promise problem and let $(\cP,\cV)$ be an interactive protocol for $L$ between an unbounded prover $\cP$ and a PPT verifier $\cV$.
We say that $(\cP,\cV)$ is zero-knowledge if for any PPT $\cV$, there exists a PPT simulator $\cS$ such that $\{\cS(x)\}_{x\in\bit^*}\approx\{\view_{\cV}\langle\cP,\cV\rangle(x)\}_{x\in\bit^*}$ for any $x\in L_{\text{yes}}$.
Consider the function $f$ that takes $x$ and $\cS$'s private randomness $r$ as input and outputs $x$ and $\cS(x;r)$.
For the sake of contradiction, assume that classically-secure OWFs do not exist.
Then, there exists a PPT algorithm $\cR$ that inverts $f$.
By using such $\cS$ and $\cR$, Ostrovsky constructed the PPT algorithm that solves $L$.
The intuition behind his idea is as follows:
To solve $L$, it suffices to simulate $\cP$, but the obstacle is that the prover $\cP$ is unbounded.
However, $L$ can imitate $\cP$ efficiently by running $\cR$ to obtain $\cS$'s private randomness $r$ and running $\cS(x;r)$ deterministically to obtain the part of $\view_{\cV}\langle\cP,\cV\rangle(x)$ that corresponds to $\cP$'s output.  
\fi

\if0
\paragraph{Classically-secure OWPuzzs $\Rightarrow$ IV-PoQ.}
We first explain the proof of classically-secure OWPuzzs $\Rightarrow$ IV-PoQ. 
The first crucial observation is that
\begin{align}
\label{central}
\mbox{\it If classically-secure OWPuzzs exist, then
$\mathbf{SampBQP}\neq \mathbf{SampBPP}$ or
classically-secure OWFs exist.}
\end{align}
This is because if
$\mathbf{SampBQP}=\mathbf{SampBPP}$, 
then
the $\Samp$ algorithm of OWPuzzs is classically efficiently simulatable,
and therefore a classically-secure OWF can be constructed.\footnote{If $\Samp$ is a PPT algorithm, we can construct a OWF family $\{f_\secp\}_\secp$ as follows.
$f_\secp(r)\coloneqq\Samp(1^\secp;r)$.} 

Once \cref{central} is obtained, it seems to be easy to get IV-PoQ, because
if
$\mathbf{SampBQP}\neq \mathbf{SampBPP}$, then
according to Aaronson's result~\cite{Aar14}, 
$\mathbf{SampBQP}\neq\mathbf{SampBPP}\Leftrightarrow\mathbf{FBQP}\neq\mathbf{FBPP}$, we have
$\mathbf{FBQP}\neq\mathbf{FBPP}$, which means that search advantage exist,
and therefore non-interactive IV-PoQ exist. 
On the other hand, if
classically-secure OWFs exist, then we can construct IV-PoQ from \cite{cryptoeprint:2023/161}.

However, this argument does not work, because of several subtle technical reasons.
In particular, because
$\mathbf{FBQP}\neq\mathbf{FBPP}$ is the worst-case notion, 
it does not necessarily imply the existence of non-interactive IV-PoQ.
We solve the issue by showing an average-case version of \cref{central}:
\begin{align}
\label{central2}
\mbox{\it If classically-secure OWPuzzs exist, then
QASs exist or classically-secure OWFs exist.}
\end{align}
The latter condition, ``QASs exist or classically-secure OWFs exist''
is formalized as the QAS/OWF condition, and therefore what we show 
is 
\begin{align}
\label{central3}
\mbox{\it 
If classically-secure OWPuzzs exist, then
the QAS/OWF condition holds.}
\end{align}

If the QAS/OWF condition is satisfied, we have a QAS $\cQ$ and a OWF $f$.
From $f$, we construct IV-PoQ based on \cite{cryptoeprint:2023/161}.
In order to obtain non-interactive IV-PoQ from $\cQ$, we want to show that
if QASs exist then non-interactive IV-PoQ exist, which is an
average-case version of
$\mathbf{SampBQP}\neq\mathbf{SampBPP}\Rightarrow\mathbf{FBQP}\neq\mathbf{FBPP}$.
This is shown based on a similar idea as that of \cite{Aar14}.
\mor{Aaronson mo setumei kaku}

\paragraph{IV-PoQ$\Rightarrow$ classically-secure OWPuzzs.}
Next we explain the other direction,
IV-PoQ$\Rightarrow$ classically-secure OWPuzzs.
The important step is to show
\begin{align}
\label{central4}
\mbox{\it IV-PoQ $\Rightarrow$ QAS/OWF condition}.
\end{align}
This is because once this is shown, from
the QAS $\cQ$ of the QAS/OWF condition, we can construct non-interactive IV-PoQ from which we obtain classically-secure OWPuzzs.\footnote{
Let $(\cP,\cV_2)$ be an non-interactive IV-PoQ. From this, we construct a classically-secure OWPuzz $(\Samp,\Ver)$ as follows:
$\Samp(1^\secp)\to(\ans,\puzz)$: Run $\tau\gets\cP(1^\secp)$, and output $\puzz\coloneqq 1^\secp$ and $\ans\coloneqq \tau$.
$\Ver(\ans',\puzz)\to\top/\bot:$ Run $\cV_2(1^\secp,\tau)$ and output its output.
}
From the OWF $f$ of the QAS/OWF condition, we can construct classically-secure OWPuzzs.
In this way we anyway obtain OWPuzzs.

Now the goal is to show \cref{central4}.
For that goal, we use the technique of \cite{CCC:Ost91}.
Assume that an IV-PoQ $(\cP,\cV_1,\cV_2)$ exist, but both QASs and OWFs do not exist.
Because QASs do not exist, there exists a PPT algorithm $\cS$ that can sample
$\langle \cV_1,\cP\rangle(1^\secp)$. 
Because OWFs do not exist, by using $\cS$, we can construct a PPT prover that can be accepted by the verifier.
\fi

\if0
\subparagraph{Improvement: classically-secure OWPuzzs imply the QAS/OWF condition.}
We can improve the above implication and show that classically-secure OWPuzzs imply the QAS/OWF condition.
In the above analysis, the important point is that both of \Cref{eq:overview_samp,eq:overview_inv} hold for infinitely many $\secp\in\mathbb{N}$.
Thus, $\mathbf{SampBPP}=\mathbf{SampBQP}$ seems too strong and not necessary.
Then is it possible to replace $\mathbf{SampBPP}\neq\mathbf{SampBQP}$ with the existence of QASs?
Actually by the definition of QASs, if QASs do not exist, there exists a PPT algorithm $\cS$ and an infinite subset $\Lambda'\subseteq\mathbb{N}$ such that
\begin{align}\label{eq:overview_samp2}
    \SD(\Samp(1^\secp),\cS(1^\secp)) \le \frac{1}{\poly(\secp)}
\end{align}
holds for all $\secp\in\Lambda'$.
However, the bad point is that the two infinite subsets $\Lambda$ and $\Lambda'$ are determined independently.
In general, there is no guarantee that the intersection $\Lambda\cap\Lambda'$ is infinite and therefore we can not derive \Cref{eq:overview_ver} for infinitely many $\secp$.
This observation means that we need an infinite subset on which both of \Cref{eq:overview_samp,eq:overview_inv} hold.

In fact, the following assumption is sufficient to derive \Cref{eq:overview_ver} for infinitely many $\secp$.
\begin{quote}
    For any $\poly$, there exists a PPT algorithm $\cS$ such that for any classically-efficiently-computable function $f$, there exists a PPT algorithm $\cR$ such that both of \Cref{eq:overview_samp2,eq:overview_inv} holds for infinitely many $\secp$.
\end{quote}
Therefore, if classically-secure OWPuzzs exist, then the following statement holds:
\begin{quote}
    There exists $\poly$ such that for any PPT algorithm $\cS$, there exists a classically-efficiently-computable function $f$ such that for any PPT algorithm $\cR$,
    \begin{align}
        \SD(\Samp(1^\secp),\cS(1^\secp)) > \frac{1}{\poly(\secp)}
    \end{align}
    or
    \begin{align}
        \SD(\{r,f(r)\}_{r\gets\bit^{n(\secp)}}, \{\cR(1^\secp,f(r)),f(r)\}_{r\gets\bit^{n(\secp)}}) > \frac{1}{\poly(\secp)}
    \end{align}
    holds for all sufficiently large $\secp$.
\end{quote}
Our goal is to show that the above statement implies the QAS/OWF condition.
We say that the QAS/OWF condition is satisfied if there exist a $\poly$ and a classically-efficiently-computable function such that for any PPT algorithm $\cS$, if we let $\Sigma:=\{\secp\mid\text{\Cref{eq:overview_samp2} holds}\}$, then $f$ is a classically-secure OWF (or equivalently, distributionally-secure variant of OWF) on $\Sigma$.
The essential difference between the above statement and the QAS/OWF condition is that whether $f$ depends on a PPT sampler $\cS$ or not.
In the definition of the QAS/OWF condition, $f$ is independent of $\cS$, whereas the function $f$ we use to break the security of classically-secure OWPuzz computes $\cS$, and therefore must depend on $\cS$.
Thus, to derive the QAS/OWF condition, what we need to do is to eliminate the dependence of $f$ on $\cS$. 
We accomplish this by using the universal construction technique of OWFs.
Roughly speaking, the universal OWFs are constructed as the concatenation of many candidates of OWFs and the universal OWFs are secure if at least one of the candidates are secure.
Now, by the above statement we have a candidate of OWF $f_\cS$ for each PPT sampler $\cS$.
By using the universal construction, we can obtain the candidate of OWF $g$ which does not depend on $\cS$.
Moreover, the security of $g$ inherits the security of each $f_\cS$, we obtain the QAS/OWF condition.

\paragraph{The QAS/OWF condition as the sufficient condition.}
\subparagraph{The first step: classically-secure OWFs imply IV-PoQ.}
The prior work \cite{cryptoeprint:2023/161} showed that classically-secure OWFs imply IV-PoQ.
By slightly modifying their proof, we can show that for any subset $\Sigma\subseteq\mathbb{N}$, classically-secure OWFs on $\Sigma$ implies IV-PoQ that satisfy the soundness on $\Sigma$.
Moreover, the constructed IV-PoQ satisfy the completeness regardless of the security of OWFs hold or not.

\subparagraph{The second step: QASs imply IV-PoQ.}
To obtain this implication, we first briefly review \cite{Aar14}.
For any $x\in\bit^\secp$, let $D_x:=\{p_y\}_{y\in\bit^{n}}$ be a distribution.
Aaronson defined the subset $S_x\subseteq\bit^{n}$ such that $y\in S_x$ if and only if $K(y|x)\ge -\log p_y-\beta$
\footnote{
Rigorously, Aaronson used the collection $Y=(y_1,...,y_N)$ and defined $S_x\subseteq\bit^{\secp N}$ by $K(Y|x)\ge -\log p_Y-\beta$.
}
, where $K$ denotes Kolmogorov complexity and $\beta$ is some positive value.
By the definition of Kolmogorov complexity, $K(y|x)$ corresponds to the minimum program length that deterministically outputs $y$ on input $x$.
To obtain the equivalence between sampling problems and search problems, Aaronson showed the following statements:
\begin{itemize}
    \item If there exists a PPT (resp. QPT) algorithm $A$ that takes $x$ as input and samples from $D_x$ with arbitrarily small error, then there exists a PPT (resp. QPT) algorithm $B$ that takes $x$ as input and outputs an element of $S_x$ with arbitrarily small error.
    \item If there exists a PPT (resp. QPT) algorithm $B$ that takes $x$ as input and outputs an element of $S_x$ with arbitrarily small error, then there exists a PPT (resp. QPT) algorithm $A$ that takes $x$ as input and samples from $D_x$ with arbitrarily small error.
\end{itemize}
\shira{Give some intuition about the above implication.}

The starting point of our result is noticing that the Aaronson's technique is valid in the case where the algorithms take unary $1^\secp$ as inputs rather than binary $x\in\bit^\secp$ and the algorithms can solve problems only with at least $1/\poly(\secp)$-error rather than arbitrarily small error.
Therefore, we can obtain the following.
\begin{itemize}
    \item If there exists a PPT (resp. QPT) algorithm $A$ that takes $1^\secp$ as input and samples from $D_\secp$ with $1/\poly(\secp)$ error, then there exists a PPT (resp. QPT) algorithm $B$ that takes $1^\secp$ as input and outputs an element of $S_\secp$ with $1/\poly(\secp)$ error.
    \item If there exists a PPT (resp. QPT) algorithm $B$ that takes $1^\secp$ as input and outputs an element of $S_\secp$ with $1/\poly(\secp)$ error, then there exists a PPT (resp. QPT) algorithm $A$ that takes $1^\secp$ as input and samples from $D_\secp$ with $1/\poly(\secp)$ error.
\end{itemize}
We construct a non-interactive IV-PoQ $(\cP,\cV)$ as follows:
Let $\cQ$ be a QAS and the prover $\cP$ simply uses $y\gets\cQ(1^\secp)$ as an oracle.
The unbounded verifier $\cV$ checks whether $y\in S_\secp$ or not, and accepts if $y\in S_\secp$.
\shira{Explain why the completeness and the soundness are satisfied briefly.}


\if0
\subparagraph{The third step: putting them into together.}
Finally, we combine the above results and show that the QAS/OWF condition implies IV-PoQ.

From any classically-efficiently-computable function $f$, we can construct a candidate of IV-PoQ $(\cP,\cV)$ that always satisfies the completeness.
If $f$ is a classically-secure OWF on $\Sigma$, then $(\cP,\cV)$ satisfies the soundness on $\Sigma$.
Similarly, from any QPT algorithm $\cQ$ that takes $1^\secp$ as input and outputs a classical string, we can construct a candidate of IV-PoQ $(\cP',\cV')$ that always satisfies the completeness.
If $\cQ$ is a QAS on $\Sigma$, then $(\cP',\cV')$ satisfies the soundness on $\Sigma$.

We combine these candidates $(\cP,\cV)$ and $(\cP',\cV')$ and construct a new candidate $(\cP'',\cV'')$ of IV-PoQ as follows.
In the interaction phase, $(\cP'',\cV_1'')$ simulates the interaction $(\cP,\cV_1)$ and $(\cP',\cV_1')$ and outputs their transcripts.
In the verification phase, $\cV_2''$ runs $\cV_2$ and $\cV_2'$ and if both of them output $\top$, then $\cV_2$ outputs $\top$.
Our goal is to show that if the QAS/OWF condition is satisfied, then $(\cP'',\cV'')$ is an IV-PoQ.
The completeness of $(\cP'',\cV'')$ immediately follows from the completeness of $(\cP,\cV)$ and $(\cP',\cV')$.
Thus, we need to show the soundness.
For the sake of contradiction, we assume that there exists a PPT algorithm $\cP^*$ that breaks the soundness of $(\cP'',\cV'')$.
Then, there exists an infinite subset $\Sigma\subseteq\mathbb{N}$ such that $\cP^*(1^\secp)$ is accepted by $\cV_2''$ with high probability for all $\secp$.
By the construction of $(\cP'',\cV'')$, for all $\secp\in\Sigma$, $\cP^*$ can be accepted by both of $\cV_2$ and $\cV_2'$ with high probability.
Thus, both of $(\cP,\cV)$ and $(\cP',\cV')$ are not IV-PoQ on $\Sigma$.
Therefore, both of classically-secure OWFs on $\Sigma$ and QASs on $\Sigma$ do not exist because of the first step and the second step.
This contradicts the QAS/OWF condition.
\fi

\if0
\subsection{Discussion}
\begin{itemize}
    \item 
    In this paper, we have clarified the necessary and sufficient condition for the existence of IV-PoQ.
    Regarding PoQ, namely, the efficiently verifiable ones, our understanding is still very limited. 
    Known sufficient conditions are only the existences of trapdoor primitives, and we do not know
    whether, say, OWFs are enough or not.
    Also for the necessary conditions, what we know are only those trivially implied by the lower bounds of IV-PoQ.
    Can we, for example, show that cryptomania-type primitives (such as public-key encryption) are necessary for the existence of PoQ? \takashi{I think this is unlikely since we have PoQ relative to a random oracle.}
    \item 
    We have shown that many primitives (such as PRUs, PRFSs, PRSGs, and OWSGs) imply IV-PoQ.
    However, we do not know whether EFI pairs and single-copy PRSGs\footnote{Single-copy PRSGs are variants of PRSGs where the adversary receives only a single copy of the pseudorandom state.} imply or are implied by IV-PoQ.
    One reason why the relation between EFI pairs (or single-copy PRSGs) and IV-PoQ is not clear is that
    EFI pairs (and single-copy PRSGs) seem to be ``genuinly quantum'' primitives unlike PRSGs and OWSGs that
    can be ``converted'' to ``classical'' primitives by using the shadow tomography~\cite{cryptoeprint:2023/1620}.
    Can we show that EFI pairs and single-copy PRSGs are related (or not related) to quantum advantage? \takashi{I think this is partially answered by the work which showed an ``evidence'' that EFI may exist even if $P=PP$.
    Though that is not a full proof at all, I think it's reasonable to conjecture that EFI doesn't imply IV-PoQ since it's broken with PP oracle.
    }
\item
IV-PoQ capture a broad class of quantum advantages, and we have shown that the existence of IV-PoQ is equivalent to that of classically-secure OWPuzzs.
Is there any other type of quantum advantage that cannot be captured by IV-PoQ (and therefore cannot be characterized by OWPuzzs)?
Exact or multiplicative-error sampling quantum advantage is not known to be verifiable and not known to be convertible to search problems, and therefore it does not seem to be captured by IV-PoQ.
If we restrict quantum advantages to verifiable ones,
can we conclude that all quantum advantages \takashi{I feel "all quantum advantages" are too ambiguous.} are captured by IV-PoQ, and therefore characterized by classically-secure OWPuzzs?
\end{itemize}
\fi
\fi

\subsection{Related Work}
Khurana and Tomer~\cite{cryptoeprint:2024/1490} have recently shown that {\it quantumly-secure} OWPuzzs can be constructed from some assumptions
that imply sampling-based quantum advantage (if a mild complexity assumption, $\mathbf{P}^{\sharp \mathbf{P}}\not\subseteq (io)\BQP/\mathbf{qpoly}$,
is additionally introduced).
There is no technical overlap between their paper and the present paper. 
However, we here clarify relations and differences, because
in a broad perspective,
their paper and the present paper share several important motivations,
including the goal of connecting quantum advantage and ``Microcrypt'' primitives.

Firstly, 
what they actually show is not that quantum advantage implies OWPuzzs, but
that some assumptions that imply quantum advantage also imply OWPuzzs if the additional assumption,
$\mathbf{P}^{\sharp \mathbf{P}}\not\subseteq (io)\BQP/\mathbf{qpoly}$,
is introduced.
On the other hand, we show that quantum advantage (in the sense of the existence of IV-PoQ)
implies OWPuzzs.
Secondly,
they construct quantumly-secure OWPuzzs, while we construct only classically-secure ones.

These differences comes from the difference of main goals. 
Their goal is to construct quantum cryptographic primitives from some well-founded assumptions that will not imply OWFs.
Therefore, the constructed primitives should be quantumly-secure.
However, in that case,
as they also mention in their paper, 
some additional assumptions that limit quantum power should be introduced,
because quantum advantage limits only classical power.
On the other hand, the goal of the present paper is to characterize quantum advantage from cryptographic assumptions,
and therefore we have to consider quantum advantage itself, not assumptions that imply quantum advantage.
Moreover, we want to avoid introducing any additional assumptions that are not related to quantum advantage. In that case,
it is likely that we have to be satisfied with classically-secure OWPuzzs.

It is an interesting open problem whether several notions of quantum advantage
studied in this paper imply quantumly-secure OWPuzzs (possibly introducing some additional assumptions
that limit quantum power).
\section{Preliminaries}\label{sec:preliminaries_2}
\subsection{Basic Notations} 
$\log x$ means $\log_2 x$ and $\ln x$ means $\log_e x$.
We use standard notations of quantum computing and cryptography.
For a bit string $x$, $|x|$ is its length.
$\mathbb{N}$ is the set of natural numbers.
We use $\secp$ as the security parameter.
$[n]$ means the set $\{1,2,...,n\}$.
For a finite set $S$, $x\gets S$ means that an element $x$ is sampled uniformly at random from the set $S$.
$\negl$ is a negligible function, and $\poly$ is a polynomial.
All polynomials appear in this paper are positive, but for simplicity we do not explicitly mention it.
PPT stands for (classical) probabilistic polynomial-time and QPT stands for quantum polynomial-time. 
For an algorithm $\cA$, $y\gets \cA(x)$ means that the algorithm $\cA$ outputs $y$ on input $x$.
If $\cA$ is a classical probabilistic or quantum algorithm that takes $x$ as input and outputs bit strings,
we often mean $\cA(x)$ by the output probability distribution of $\cA$ on input $x$.
When $\cA$ is a classical probabilistic algorithm, $y=\cA(x;r)$ means that the output of $\cA$ is $y$ if it runs on input $x$ and with the random seed $r$.
For two interactive algorithms $\cA$ and $\cB$ that interact over a classical channel, $\tau\gets\langle \cA(x),\cB(y)\rangle$
means that the transcript $\tau$ (i.e., the sequence of all messages exchanged) is generated by the interactive protocol between $\cA$ and $\cB$
where $\cA$ takes $x$ as input and $\cB$ takes $y$ as input.
If both $\cA$ and $\cB$ take the same input $x$, we also write
it as $\tau\gets\langle \cA,\cB\rangle(x)$.
For two quantum states $\rho$ and $\sigma$, $\TD(\rho,\sigma)\coloneqq\frac{1}{2}\|\rho-\sigma\|_1$ means their trace distance,  
where $\|X\|_1\coloneqq\Tr\sqrt{X^\dagger X}$ is the trace norm.
For two probability distributions $P\coloneqq\{p_i\}_i$ and $Q\coloneqq\{q_i\}_i$, 
$\SD(Q,P)\coloneqq\frac{1}{2}\sum_i|p_i-q_i|$ is their statistical distance.
If $\rho=\sum_i p_i |\phi_i\rangle\langle \phi_i|$ and $\sigma=\sum_i q_i |\phi_i\rangle\langle \phi_i|$ for some orthonormal basis $\{|\phi_i\rangle\}_i$, we have
$\TD(\rho,\sigma)=\SD(\{p_i\}_i,\{q_i\}_i)$.

\subsection{One-Way Functions and Distributional One-Way Functions}

We first review the definition of one-way functions (OWFs).
\begin{definition}[One-Way Functions (OWFs)]
\label{def:OWFs}
A function $f:\bit^*\to\bit^*$ 
that is computable in classical deterministic polynomial-time
is a classically-secure (resp. quantumly-secure) one-way function (OWF) if
for any PPT (resp. QPT) adversary $\cA$ and any polynomial $p$, 
\begin{equation} \label{eq:OWF_condition}
\Pr[f(x')=f(x):
x\gets\bit^\secp,
x'\gets\cA(1^\secp,f(x))
]
\le\frac{1}{p(\secp)}
\end{equation} 
holds
for all sufficiently large $\secp\in\mathbb{N}$.
\end{definition}

We define a variant of OWFs, which we call OWFs on a subset $\Sigma\subseteq\mathbb{N}$. 
The difference from the standard OWFs is that security holds only when the security parameter belongs to the subset $\Sigma$ of $\mathbb{N}$.  
\begin{definition}[OWFs on $\Sigma$]\label{def:OWFsSigma}
    Let $\Sigma\subseteq\mathbb{N}$ be a set.
    A function $f:\bit^*\to\bit^*$ 
    that is computable in classical deterministic polynomial-time
    is a classically-secure (resp. quantumly-secure) OWF on $\Sigma$ if
    there exists an efficiently-computable polynomial $n$ such that
    for any PPT (resp. QPT) adversary $\cA$ and any 
    polynomial $p$
    there exists $\secp^*\in\mathbb{N}$ such that
    \begin{equation}
    \Pr[f(x')=f(x): x\gets\bit^{n(\secp)}, x'\gets\cA(1^{n(\secp)},f(x))] \le\frac{1}{p(\secp)}
    \end{equation} 
    holds
    for all $\secp\ge\secp^*$ in $\Sigma$. 
\end{definition}
\begin{remark}
    In the definition of OWFs (\Cref{def:OWFs}) the input length is treated as the security parameter, but in OWFs on $\Sigma$ (\Cref{def:OWFsSigma}), we allow the input length to be an arbitrary polynomial in the security parameter.  
\end{remark}
\begin{remark}
For any finite $\Sigma$, OWFs on $\Sigma$ always exist    
because the definition is trivially satisfied. (We have only to take $\secp^*=\secp_{max}+1$, where
$\secp_{max}$ is the largest element of $\Sigma$.)
However, we include the case when $\Sigma$ is finite in the definition for convenience. 
\end{remark}

The existence of OWFs on $\mathbb{N}\setminus \Sigma$ for a finite subset $\Sigma$ is actually equivalent to that of the standard OWFs.
Its proof is given in \cref{sec:OWF_on_N}.
\begin{lemma}
\label{lem:OWF_on_N}
Let $\Sigma\subseteq \mathbb{N}$ be a finite set. 
Classically-secure (resp. quantumly-secure) OWFs exist if and only if classically-secure (resp. quantumly-secure) OWFs on $\mathbb{N}\setminus \Sigma$ exist. 
\end{lemma}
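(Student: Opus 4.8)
The plan is to prove the two implications separately. The forward implication is immediate: if $f$ is a classically-secure (resp.\ quantumly-secure) OWF as in \Cref{def:OWFs}, then it is also an OWF on $\mathbb{N}\setminus\Sigma$ as in \Cref{def:OWFsSigma} with the polynomial $n$ taken to be the identity $n(\secp)=\secp$; the inversion bound required for all sufficiently large $\secp\in\mathbb{N}\setminus\Sigma$ is just a special case of the bound for all sufficiently large $\secp\in\mathbb{N}$ guaranteed by \Cref{def:OWFs}. (This direction does not even use finiteness of $\Sigma$.)

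For the reverse implication, let $f$ be an OWF on $\mathbb{N}\setminus\Sigma$ with associated efficiently-computable polynomial $n$. Since a constant $n$ would make $f$ trivially invertible, a routine padding modification of $f$ lets me assume that $n$ is strictly increasing, hence $n(\secp)\ge\secp$. I would then build a standard OWF $g\colon\bit^*\to\bit^*$ that realigns input length with security parameter: on $w\in\bit^m$ with $m\ge n(1)$, let $\secp$ be the unique index with $n(\secp)\le m<n(\secp+1)$, write $w=x\|z$ with $|x|=n(\secp)$, and set $g(w)\coloneqq f(x)\|z$ (on the finitely many shorter $w$, set $g(w)\coloneqq w$). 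This $g$ is deterministic polynomial-time since $n$ is efficiently computable and $\secp\le m$ (so $\secp$ is found by binary search).

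To see that $g$ is one-way, suppose not: there exist a PPT (resp.\ QPT) $\cA$, a polynomial $p$, and an infinite $M\subseteq\mathbb{N}$ with $\Pr[g(w')=g(w):w\gets\bit^m,\ w'\gets\cA(1^m,g(w))]>1/p(m)$ for every $m\in M$, where WLOG $\cA$ outputs $m$-bit strings. I would build $\cB$ attacking $f$ at parameter $\secp$: on input $1^{n(\secp)}$ and challenge $y=f(x)$ with $x\gets\bit^{n(\secp)}$, $\cB$ recovers $\secp$ from $n$, picks a padding length $\ell\gets\{0,\dots,n(\secp+1)-n(\secp)-1\}$, samples $z\gets\bit^\ell$, sets $m\coloneqq n(\secp)+\ell$, runs $w'\gets\cA(1^m,y\|z)$, parses $w'=x'\|z'$ with $|x'|=n(\secp)$, and outputs $x'$. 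Since $(x,z)$ is distributed as a uniform $w\gets\bit^m$ and $g(w)=f(x)\|z=y\|z$, whenever the randomly chosen $m$ lands in $M$ we get $g(w')=g(w)$ and hence $f(x')=f(x)$; so for every $\secp$ with $[n(\secp),n(\secp+1))\cap M\neq\emptyset$, $\cB$ wins with probability at least $\tfrac{1}{n(\secp+1)-n(\secp)}\cdot\tfrac{1}{p(m)}\ge 1/\poly(\secp)$. Because the intervals $\{[n(\secp),n(\secp+1))\}_\secp$ partition $[n(1),\infty)$ and $M$ is infinite, infinitely many such $\secp$ exist, and since $\Sigma$ is finite, infinitely many of them lie in $\mathbb{N}\setminus\Sigma$ — contradicting that $f$ is an OWF on $\mathbb{N}\setminus\Sigma$. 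The quantum case is identical, $\cB$ being only a syntactic wrapper around $\cA$.

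The step I expect to be the main obstacle is exactly this last reduction: the ``bad'' input lengths $M$ for $g$ need not lie in the image of $n$, so $\cB$ cannot simply reuse the challenge length $n(\secp)$. Randomizing the padding length $\ell$ across the entire gap $[n(\secp),n(\secp+1))$ circumvents this at the cost of only a polynomial loss, and the disjointness of the gaps ensures that infinitely many distinct — hence (by finiteness of $\Sigma$) infinitely many $\Sigma$-avoiding — security parameters end up being attacked. A secondary, purely bookkeeping point is justifying the ``WLOG $n$ strictly increasing'' reduction, which follows by padding $f$'s input with enough extra fresh random bits that are copied verbatim to the output.
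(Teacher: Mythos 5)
Your proof is correct and follows essentially the same route as the paper's: the forward direction is dismissed as trivial in both, and the reverse direction uses the same padding construction of a function $g$ that embeds $f$ at input length $n(\secp_\ell)$ inside length-$\ell$ inputs, with the finiteness of $\Sigma$ absorbing the finitely many excluded parameters. The only substantive difference is that your reduction randomizes the padding length to handle the fact that a bad input length for $g$ need not equal $n(\secp)$ for any $\secp$ — a point the paper's proof treats informally (its displayed chain silently replaces $\cA(1^\ell,\cdot)$ by $\cA(1^{n(\secp_\ell)},\cdot)$) — so your version is, if anything, slightly more careful.
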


We also review the definition of distributional one-way functions (DistOWFs).

\begin{definition}[Distributional One-Way Functions (DistOWFs) \cite{FOCS:ImpLub89}]
\label{def:DistOWFs}
A function $f:\bit^*\to\bit^*$ 
that is computable in classical deterministic polynomial-time
is a classically-secure (resp. quantumly-secure) distributional one-way function (DistOWF) if
there exists a polynomial $p$ such that
\begin{align}
\SD(
\{(x,f(x))\}_{x\gets\bit^\secp},
\{(\cA(1^\secp,f(x)),f(x))\}_{x\gets\bit^\secp})
\ge\frac{1}{p(\secp)}
\end{align}
holds
for any PPT (resp. QPT) adversary $\cA$ and for all sufficiently large $\secp\in\mathbb{N}$.  
\end{definition}

Similarly to OWFs on $\Sigma$, we define DistOWFs on $\Sigma$ for a subset $\Sigma\subseteq\mathbb{N}$.  

\begin{definition}[DistOWFs on $\Sigma$]
    \label{def:DistOWF}
    Let $\Sigma\subseteq\mathbb{N}$ be a set.
    A function $f:\bit^*\to\bit^*$ 
    that is computable in classical deterministic polynomial-time
    is a classically-secure (resp. quantumly-secure) DistOWF on $\Sigma$ if
    there exist an efficiently-computable polynomial $n$ and a polynomial $p$ such that
    for any PPT (resp. QPT) adversary $\cA$ there exists $\secp^*\in\mathbb{N}$
    such that
    \begin{align}
    \SD(\{(x,f(x))\}_{x\gets\bit^{n(\secp)}},\{(\cA(1^{n(\secp)},f(x)),f(x))\}_{x\gets\bit^{n(\secp)}}) \ge\frac{1}{p(\secp)}
    \end{align}
    holds
    for all $\secp\ge\secp^*$ in $\Sigma$. 
\end{definition}
\begin{remark}
Again, for any finite $\Sigma$, DistOWFs on $\Sigma$ always exist    
because the definition is trivially satisfied. 
However, we include the case when $\Sigma$ is finite in the definition for convenience. 
\end{remark}

It is well-known that OWFs exist if and only if DistOWFs exist \cite[Lemma 1]{FOCS:ImpLub89}.\footnote{\cite{FOCS:ImpLub89} only provides a proof sketch. Its full proof can be found in \cite[Theorem 4.2.2]{Thesis:Impagliazzo92}} By inspecting its proof, one can see that the proof gives a ``security-parameter-wise'' reduction, i.e., if the base DistOWF is secure on inputs of length $\secp$, then the resulting OWF is secure on inputs of length $n(\secp)$ for some efficiently computable polynomial $n$.\footnote{More precisely, 
for any efficiently computable function $f:\bit^*\rightarrow \bit^*$ and polynomial $p$, there are an efficiently computable function $g:\bit^*\rightarrow \bit^*$ and efficiently computable polynomial $n$ that satisfy the following:  
For any PPT adversary $\cA$ and a polynomial $q$,  there is a PPT adversary $\cB$ such that for any $\secp\in \mathbb{N}$, 
if 
  \begin{equation}
    \Pr[g(x')=g(x): x\gets\bit^{n(\secp)}, x'\gets\cA(1^{n(\secp)},g(x))] > \frac{1}{q(\secp)}, 
    \end{equation} 
then 
\begin{align}
\SD(
\{(x,f(x))\}_{x\gets\bit^\secp},
\{(\cB(1^\secp,f(x)),f(x))\}_{x\gets\bit^\secp})
< \frac{1}{p(\secp)}.
\end{align}
}   
This implies equivalence between OWFs on $\Sigma$ and DistOWFs on $\Sigma$ on any subset $\Sigma\subseteq \mathbb{N}$.  
\begin{lemma}[Based on {\cite[Lemma 1]{FOCS:ImpLub89}}]\label{lem:OWF_DistOWF}
    For any subset $\Sigma\subseteq\mathbb{N}$, classically-secure DistOWFs on $\Sigma$ exist if and only if classically-secure OWFs on $\Sigma$ exist. 
\end{lemma}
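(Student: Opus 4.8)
The plan is to prove the two implications separately; the first is immediate and the second is exactly the security-parameter-wise Impagliazzo--Luby reduction already recorded in the footnote above. First note that for finite $\Sigma$ there is nothing to prove, since both classically-secure OWFs on $\Sigma$ and classically-secure DistOWFs on $\Sigma$ trivially exist (see the remarks after \Cref{def:OWFsSigma,def:DistOWF}); so assume $\Sigma$ is infinite. For the direction ``OWFs on $\Sigma$ $\Rightarrow$ DistOWFs on $\Sigma$'', I would show that any classically-secure OWF $f$ on $\Sigma$ with input-length polynomial $n$ is itself a classically-secure DistOWF on $\Sigma$, witnessed by the same $n$ together with the constant polynomial $2$. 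If not, some PPT adversary $\cA$ satisfies
\begin{equation}
\SD\bigl(\{(x,f(x))\}_{x\gets\bit^{n(\secp)}},\ \{(\cA(1^{n(\secp)},f(x)),f(x))\}_{x\gets\bit^{n(\secp)}}\bigr)<\tfrac12
\end{equation}
for infinitely many $\secp\in\Sigma$. For each such $\secp$, the event ``the first component is an $f$-preimage of the second'' has probability $1$ under the left-hand distribution and hence probability exceeding $\tfrac12$ under the right-hand one; that is, $\cA$ outputs an $f$-preimage of $f(x)$ with probability $>\tfrac12$ on random $x\gets\bit^{n(\secp)}$, for infinitely many $\secp\in\Sigma$. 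Instantiating \Cref{def:OWFsSigma} for $f$ with the adversary $\cA$ and the polynomial $3$ yields a contradiction.

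For the direction ``DistOWFs on $\Sigma$ $\Rightarrow$ OWFs on $\Sigma$'', let $f$ be a classically-secure DistOWF on $\Sigma$ witnessed by an efficiently computable polynomial $m$ and a polynomial $p$; since a constant input length can be distributionally inverted by a hard-wired sampler, $m$ must be non-constant for infinite $\Sigma$, so $\secp$ is bounded by a fixed polynomial in $m(\secp)$. I would apply the security-parameter-wise Impagliazzo--Luby reduction (quoted in the footnote above) to $f$, with the target distributional-inversion precision chosen fine enough that it dominates $1/p$ after the substitution of $m$; this produces an efficiently computable function $g$ and an efficiently computable polynomial $N$, and I claim $g$ is a classically-secure OWF on $\Sigma$ with input-length polynomial $N$. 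Indeed, if some PPT adversary $\cA$ and polynomial $q$ refuted this, then
\begin{equation}
\Pr[\,g(x')=g(x):x\gets\bit^{N(\secp)},\ x'\gets\cA(1^{N(\secp)},g(x))\,]>\tfrac{1}{q(\secp)}
\end{equation}
for infinitely many $\secp\in\Sigma$, and for each such $\secp$ the reduction supplies a PPT adversary $\cB$ distributionally inverting $f$ on inputs of length $m(\secp)$ with statistical gap strictly below $1/p(\secp)$, contradicting that $f$ is a DistOWF on $\Sigma$ with witnesses $m$ and $p$.

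I expect the main obstacle to be entirely a matter of bookkeeping in the second implication: one must verify that the Impagliazzo--Luby transformation is genuinely \emph{security-parameter-wise}, so that the infinite set of security parameters on which the hypothetical inverter of $g$ succeeds is exactly the set on which the derived adversary breaks $f$ (rather than merely a cofinite subset), and then carefully thread the input-length polynomials $m$ and $N$ permitted by \Cref{def:OWFsSigma,def:DistOWF} through this reparametrization. Since the precise security-parameter-wise statement is already quoted in the footnote, what remains is to avoid slips in the quantifier alternation ($\exists\,\secp^*\ \forall\,\secp\ge\secp^*$ in $\Sigma$) when negating the two security notions, and to dispatch the vacuous finite/constant corner cases noted above.
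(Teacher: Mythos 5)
Your proposal is correct and matches the paper's treatment: the paper does not write out a proof of this lemma but derives it from exactly the observation you use, namely that the Impagliazzo--Luby OWF-from-DistOWF construction is a security-parameter-wise reduction (as recorded in the footnote), combined with the easy direction that a OWF is itself a DistOWF with constant gap. Your fleshing-out of both directions, including the quantifier negations over $\Sigma$ and the reparametrization between the input-length polynomials, is sound.
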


It is known that there exists a \emph{universal construction} for OWFs~\cite{STOC:Levin85}, a concrete function that is a OWF if and only if OWFs exist. 
We observe that a similar technique yields a universal construction for OWFs on $\Sigma$ as well.  
Its proof is given in \cref{sec:univ_OWF}.
\begin{lemma}\label{lem:univ_OWF}
    There exists a function $g:\bit^*\to\bit^*$ that is computable in classical deterministic polynomial-time 
    such that for any subset $\Sigma\subseteq\mathbb{N}$, if there exist classically-secure OWFs on $\Sigma$, then $g$ is a classically-secure OWF on $\Sigma$.
\end{lemma}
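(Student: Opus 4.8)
The plan is to mimic Levin's universal one-way function~\cite{STOC:Levin85} and to check that both the construction and its analysis can be carried out ``security-parameter-wise'', so that they respect an arbitrary subset $\Sigma\subseteq\mathbb{N}$. Fix a standard enumeration $M_1,M_2,\dots$ of deterministic Turing machines. I would define $g$ on an input $y$ of length $m$ roughly as follows: set $t\coloneqq\lceil\log m\rceil$; carve $y$ into disjoint blocks $y^{(1)},\dots,y^{(t)}$ where $y^{(i)}$ has length about $m\cdot 2^{-i}$ (so the blocks together fit inside $y$); for each $i\in[t]$ simulate $M_i$ on $y^{(i)}$ for $m^{3}$ steps and let $z_i$ be the output if it halts and $0$ otherwise; and output $(z_1,\dots,z_t)$ with appropriate length prefixes so that the tuple is uniquely parseable. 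Since each $M_i$ with $i\le t$ has a short description and is simulated for only polynomially many steps, $g$ is computable in classical deterministic polynomial time. The reason for the \emph{geometrically decreasing} block lengths is that, for any fixed index $i_0$, the block $y^{(i_0)}$ still carries a \emph{constant fraction} $2^{-i_0}$ of the whole input; this is what will let the associated input-length function stay a genuine polynomial, which is the delicate point below.

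The correctness proof combines a padding step with a combiner step. For the padding step: assume classically-secure OWFs on $\Sigma$ exist and fix one, $f$, together with its associated polynomial $n_f$ from \Cref{def:OWFsSigma} and a constant $c$ with $f$ computable in time $m^{c}$ on inputs of length $m$; let $D$ be a constant (depending on $f$, $c$, and the degree of $n_f$) chosen so that the function $\widetilde f$ defined by ``on input of length $\ell$, apply $f$ to the length-$n_f(\lfloor\ell^{1/D}\rfloor)$ prefix and pad the result to a canonical length'' is computable in time $\le\ell^{2}$. A routine, security-parameter-wise reduction (pad a challenge $f(x)$ with fresh random bits, run a $\widetilde f$-inverter, and read off the prefix of its answer) shows that $\widetilde f$ is a classically-secure OWF on $\Sigma$ which, moreover, is one-way on \emph{every} input length $\ell$ with $\lfloor\ell^{1/D}\rfloor\in\Sigma$ sufficiently large; for each such good $\secp\in\Sigma$ this set of lengths contains the whole polynomially wide window $[\secp^{D},(\secp+1)^{D})$. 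For the combiner step: $\widetilde f$ is computed by $M_{i_0}$ for some fixed $i_0$, so for all $m$ large enough $i_0\le\lceil\log m\rceil$ and the $i_0$-th component of $g(y)$ equals $\widetilde f(y^{(i_0)})$ (the $m^{3}$-step budget suffices because $|y^{(i_0)}|\approx m\,2^{-i_0}\le m$ and $\widetilde f$ runs in quadratic time). Since the blocks of a uniform $y$ are independent and uniform, the standard concatenation-combiner reduction produces, from any PPT inverter $\cA$ for $g$, a PPT inverter $\cB$ for $\widetilde f$ on the block length $|y^{(i_0)}|$: $\cB$ samples the other $t-1$ blocks itself, places its challenge in position $i_0$, runs $\cA$, and outputs $\cA$'s $i_0$-th block. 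Finally I would set the associated polynomial of $g$ to be $n_g(\secp)\coloneqq 2^{i_0}\bigl(\secp^{D}+\secp^{D-1}\bigr)$, which is a polynomial arranged so that the $i_0$-th block of an input of length $n_g(\secp)$ has length inside $[\secp^{D},(\secp+1)^{D})$; chaining the two reductions then shows that any PPT inverter breaking $g$ on inputs of length $n_g(\secp)$ with probability $>1/p(\secp)$ for infinitely many $\secp\in\Sigma$ would break $f$ on inputs of length $n_f(\secp)$ for the same infinitely many $\secp\in\Sigma$, contradicting the choice of $f$. Hence $g$ is a classically-secure OWF on $\Sigma$.

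The part that needs care --- and the reason this is not merely a quotation of the classical statement --- is the simultaneous reconciliation of three requirements: (i) the enumeration must include \emph{every} candidate machine eventually, so the number of combined candidates must grow with the input length; (ii) we need \emph{strong} one-wayness of $g$ directly, which the concatenation combiner with independent inputs delivers (an inverter for $g$ immediately yields one for the good block, with no intermediate ``weak OWF'' step to amplify); and (iii) the associated input-length function in \Cref{def:OWFsSigma} must be an honest polynomial in $\secp$. These force the design choices above: equal-sized blocks would make the good block's length $n_g(\secp)$ divided by the (unbounded) candidate count, hence super-polynomial, so the growing candidate count must instead be absorbed by geometrically decreasing block lengths; and the padding variant $\widetilde f$ must be one-way on a polynomially wide band of lengths rather than a single length, so that a polynomial $n_g$ can be slotted into the band. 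Everything else is bookkeeping: the per-security-parameter tracking of both reductions, along the lines already used for the security-parameter-wise equivalence in \Cref{lem:OWF_DistOWF}.
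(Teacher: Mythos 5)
Your proposal is correct and follows essentially the same route as the paper: enumerate Turing machines, evaluate a growing number of candidates on disjoint blocks of the input within a fixed polynomial time budget, pad the hypothesized OWF $f$ down to quadratic running time so that it appears in the enumeration and halts within the budget, embed the challenge in the corresponding block, and choose an explicit associated polynomial $n_g$ so that the reduction is security-parameter-wise. The only real difference is parameterization: you use $\lceil\log m\rceil$ candidates on geometrically shrinking blocks, whereas the paper uses $N=\lfloor\sqrt{m}\rfloor$ equal-sized blocks of length $N$, so the good block has length exactly $n(\secp)$ when the input length is $n(\secp)^2$. This exposes one incorrect side remark in your writeup: equal-sized blocks do \emph{not} force a super-polynomial input length, because with $t(m)=\sqrt{m}$ candidates the equation $n_g=t(n_g)\cdot n_f(\secp)$ solves to $n_g(\secp)=n_f(\secp)^2$, a perfectly good polynomial; the geometric block lengths are a valid design choice but not a forced one. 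That remark does not affect the validity of your own argument, which goes through as written.
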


By combining \Cref{lem:OWF_DistOWF,lem:univ_OWF}, we obtain the following corollary,
which will be used later.
\begin{corollary}\label{cor:univ_DistOWF}
    There exists a function $g:\bit^*\to\bit^*$ that is computable in classical deterministic polynomial-time
    such that for any subset $\Sigma\subseteq\mathbb{N}$, if there exist classically-secure DistOWFs on $\Sigma$, then $g$ is a classically-secure OWF on $\Sigma$.
\end{corollary}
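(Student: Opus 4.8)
The plan is to simply chain \Cref{lem:OWF_DistOWF} and \Cref{lem:univ_OWF}, with the one subtlety being to keep track of how the two statements are quantified over $\Sigma$. First I would invoke \Cref{lem:univ_OWF} to obtain the concrete polynomial-time computable function $g:\bit^*\to\bit^*$; the key point is that this $g$ is fixed once and for all, \emph{independently} of any choice of $\Sigma$. Then, fixing an arbitrary $\Sigma\subseteq\mathbb{N}$ and assuming classically-secure DistOWFs on $\Sigma$ exist, I would apply \Cref{lem:OWF_DistOWF} to deduce that classically-secure OWFs on $\Sigma$ exist, and finally apply \Cref{lem:univ_OWF} (instantiated at this particular $\Sigma$) to conclude that $g$ itself is a classically-secure OWF on $\Sigma$. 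Since $\Sigma$ was arbitrary, this is exactly the statement of the corollary.

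The only thing that warrants a second look is that the two ingredient lemmas compose without reintroducing a dependence of $g$ on $\Sigma$. This is fine: \Cref{lem:univ_OWF} already provides a single $g$ that works uniformly over all subsets, and \Cref{lem:OWF_DistOWF} is an equivalence holding separately for each fixed $\Sigma$, so $g$ never needs to ``know'' $\Sigma$ — it is only the a posteriori invocation of the two lemmas that refers to $\Sigma$. Similarly, the efficiently computable polynomial $n$ relating input lengths in the DistOWF-to-OWF direction (coming from the security-parameter-wise reduction of \cite{FOCS:ImpLub89}), and the one implicit in the universal construction, are both efficiently computable, so the resulting $g$ genuinely meets \Cref{def:OWFsSigma}; but this bookkeeping is already subsumed by the statements of \Cref{lem:OWF_DistOWF,lem:univ_OWF} as given.

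Accordingly, I do not expect a genuine obstacle in proving this corollary: the substantive work has already been carried out in establishing \Cref{lem:OWF_DistOWF} and \Cref{lem:univ_OWF}, and the corollary is a one-line composition of them. If I wanted to be maximally careful in the write-up, the single sentence to emphasize is that the universal $g$ is $\Sigma$-oblivious, which is precisely what makes the chaining legitimate and is what will be needed when this corollary is later applied (e.g., in the construction of the $\cS$-independent one-way function in Step~1 of the technical overview).
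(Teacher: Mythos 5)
Your proposal is correct and matches the paper exactly: the corollary is stated there as an immediate combination of \Cref{lem:OWF_DistOWF} and \Cref{lem:univ_OWF}, with no further proof given. Your observation that the universal $g$ from \Cref{lem:univ_OWF} is fixed independently of $\Sigma$, so the chaining introduces no hidden dependence, is precisely the point that makes the composition legitimate.
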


\subsection{One-Way Puzzles}
We also define one-way puzzles (OWPuzzs) on a subset $\Sigma\subseteq\mathbb{N}$,
which are a generalization of OWPuzzs defined in \cite{STOC:KhuTom24}.
If $\Sigma=\mathbb{N}$, the definition becomes the standard one~\cite{STOC:KhuTom24}, and
in that case we call them just OWPuzzs. 

\begin{definition}[One-Way Puzzles (OWPuzzs) on $\Sigma$]
Let $\Sigma\subseteq\mathbb{N}$ be a set.
A one-way puzzle (OWPuzz) on $\Sigma$ is a pair $(\Samp, \Ver)$ of algorithms such that
\begin{itemize}
\item 
$\Samp(1^\secp)\to (\puzz,\ans):$
It is a QPT algorithm that, on input the security parameter $\secp$, 
outputs a pair 
$(\puzz,\ans)$
of classical strings.
\item
$\Ver(\puzz,\ans')\to\top/\bot:$
It is an unbounded algorithm that, on input $(\puzz,\ans')$, outputs either $\top/\bot$.
\end{itemize}
They satisfy the following properties for some functions $c$ and $s$ such that $c(\secp)-s(\secp)\ge\frac{1}{\poly(\secp)}$.
\begin{itemize}
    \item 
{\bf $c$-correctness:}
There exists $\secp^*\in\mathbb{N}$ such that 
\begin{align}
\Pr[\top\gets\Ver(\puzz,\ans):
(\puzz,\ans)\gets\Samp(1^\secp)]
\ge c(\secp)
\end{align}
holds
for all $\secp\ge\secp^*$.
\item
{\bf $s$-security on $\Sigma$:}
For any QPT adversary $\cA$ there exists $\secp^{**}\in\mathbb{N}$ such that
\begin{align}
\Pr[\top\gets\Ver(\puzz,\cA(1^\secp,\puzz)):
(\puzz,\ans)\gets \Samp(1^\secp)] \le s(\secp)
\end{align}
holds for all $\secp\ge\secp^{**}$ in $\Sigma$.
\end{itemize}
\end{definition}

\begin{definition}[Classically-Secure OWPuzzs on $\Sigma$]\label{def:OWPuzz}
A OWPuzz on $\Sigma$ is called a classically-secure OWPuzz on $\Sigma$ if the security is required against PPT adversaries.
\end{definition}

\begin{remark}
Again, if $\Sigma$ is a finite set, OWPuzzs on $\Sigma$ trivially exist, but
we include such a case in the definition for the convenience.
\end{remark}

    \begin{remark}
All classically-secure OWPuzzs appearing in this paper are ones with $(1-\negl(\secp))$-correctness and
$(1-1/\poly(\secp))$-security.
    \end{remark}

\if0
\mor{toru?}
If $c$ satisfies $c(\secp)\ge 1-\frac{1}{p(\secp)}$ for any polynomial $p$ and for all sufficiently large $\secp\in\Sigma_c$, the $c$-correctness is called the $(1-\negl)$-correctness.
Similarly, if $s$ satisfies $s(\secp)\le\frac{1}{p(\secp)}$ for any polynomial $p$ and for all sufficiently large $\secp\in\Sigma_s$, the $s$-security is called the $\negl$-security. \takashi{I feel $(1-\negl)$-correctness and $\negl$-security are already defined as the cases of $c=1-\negl$ and $s=\negl$.}
\fi

\begin{remark}
It is known that $c$-correct and $s$-secure OWPuzzs with $c(\secp)-s(\secp)\ge 1/\poly(\secp)$ can be amplified to 
$(1-\negl(\secp))$-correct and $\negl(\secp)$-secure OWPuzzs~\cite{C:ChuGolGra24}. 
On the other hand, we do not know how to amplify the gap of classically-secure OWPuzzs.
\end{remark}

OWPuzzs can be constructed from OWFs.
We can show that this is also the case for the variants on $\Sigma$.
Its proof is given in \Cref{sec:OWF_OWPuzz}.
\begin{lemma}\label{lem:OWF_OWPuzz}
    Let $\Sigma\subseteq\mathbb{N}$ be a subset.
    If classically-secure OWFs on $\Sigma$ exist, then classically-secure OWPuzzs on $\Sigma$ 
    with $1$-correctness and $\negl$-security exist.
\end{lemma}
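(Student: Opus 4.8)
The plan is to use the standard construction of a one-way puzzle from a one-way function and to check that it respects the ``on $\Sigma$'' restriction, via a security-parameter-wise reduction. Let $f\colon\bit^*\to\bit^*$ be a classically-secure OWF on $\Sigma$ and let $n$ be the efficiently-computable polynomial from \Cref{def:OWFsSigma}. Define $(\Samp,\Ver)$ as follows: $\Samp(1^\secp)$ samples $x\gets\bit^{n(\secp)}$ and outputs $\puzz\coloneqq f(x)$ and $\ans\coloneqq x$; $\Ver(\puzz,\ans')$ outputs $\top$ if and only if $f(\ans')=\puzz$. Since $f$ is computable in classical deterministic polynomial time, $\Samp$ is PPT (hence in particular QPT) and $\Ver$ is an unbounded algorithm, so the syntax is fine. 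Correctness is immediate and perfect: $\Ver(f(x),x)=\top$ always holds, so the scheme is $1$-correct, and it only remains to establish $\negl$-security on $\Sigma$ against PPT adversaries (which in particular yields $c(\secp)-s(\secp)=1-\negl(\secp)\ge 1/\poly(\secp)$).

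For security, I would fix an arbitrary PPT adversary $\cA$ and an arbitrary polynomial $p$, and argue that $\Pr[\top\gets\Ver(\puzz,\cA(1^\secp,\puzz)):(\puzz,\ans)\gets\Samp(1^\secp)]\le 1/p(\secp)$ for all sufficiently large $\secp\in\Sigma$; as $p$ is arbitrary, this is exactly $\negl$-security on $\Sigma$. Suppose for contradiction that it fails: then the displayed probability exceeds $1/p(\secp)$ for infinitely many $\secp\in\Sigma$, and this probability is precisely $\Pr[f(x')=f(x):x\gets\bit^{n(\secp)},\,x'\gets\cA(1^\secp,f(x))]$, i.e.\ $\cA$ inverts $f$ on $n(\secp)$-bit inputs with that probability. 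From $\cA$ I would build a PPT inverter $\cB$ for $f$: on input $(1^{\ell},y)$, $\cB$ first tries to find $\secp$ with $n(\secp)=\ell$ (as $n$ is a fixed polynomial it is strictly increasing beyond some constant, so for all large $\ell$ in the range of $n$ this value is unique and efficiently computable; if no such $\secp$ exists, $\cB$ outputs $\bot$), and then outputs $\cA(1^\secp,y)$. Then for each of the infinitely many bad $\secp\in\Sigma$, on the input distribution $(1^{n(\secp)},f(x))$ with $x\gets\bit^{n(\secp)}$, $\cB$ recovers $\secp$ and outputs a preimage of $f(x)$ with probability $>1/p(\secp)$. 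This contradicts $f$ being a classically-secure OWF on $\Sigma$ (applied with polynomial $p$ and adversary $\cB$): there, for every polynomial there must be a threshold beyond which the inversion probability on $\Sigma$ is below the reciprocal of that polynomial, whereas we exhibited infinitely many $\secp\in\Sigma$ violating this for $p$.

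The only mildly delicate point is this reindexing between the OWPuzz security parameter $\secp$ and the OWF input length $n(\secp)$ inside $\cB$; it is routine but must be done carefully so that $\cB$ is a single uniform PPT machine and so that the ``on $\Sigma$'' quantifiers line up — which they do, since $\cB$'s behaviour on inputs of length $n(\secp)$ is tied exactly to $\cA$'s behaviour at security parameter $\secp$, and the relevant set of $\secp$'s is unchanged. Everything else (perfect correctness, the gap $c-s\ge 1/\poly$) is trivial. Alternatively, one can simply invoke that the textbook reduction ``OWFs $\Rightarrow$ OWPuzzs'' is security-parameter-wise, exactly as is done for the DistOWF/OWF equivalence recalled before \Cref{lem:OWF_DistOWF}, and read off the ``on $\Sigma$'' version directly.
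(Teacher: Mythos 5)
Your proposal is correct and follows essentially the same route as the paper: the textbook construction $\puzz=f(x)$, $\ans=x$, with perfect correctness and a security-parameter-wise reduction of OWPuzz security on $\Sigma$ to one-wayness on $\Sigma$. The only difference is that the paper defines $\puzz\coloneqq(1^{n(\secp)},f(x))$, which makes the adversary's input in the two games literally identical and so eliminates the reindexing step (recovering $\secp$ from the input length $n(\secp)$) that you handle explicitly.
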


\if0
\begin{definition}[Classically-Secure OWPuzzs on $(\Sigma_c,\Sigma_s)$]
    Let $\Sigma_c,\Sigma_s\subseteq\mathbb{N}$ be infinite subsets.
    A classically-secure OWPuzz on $(\Sigma_c,\Sigma_s)$ is a pair 
    $(\Samp, \Ver)$ 
    of algorithms that has the same syntax as that of classically-secure OWPuzzs, and
    the following properties are satisfied for some functions $c$ and $s$ such that $c(\secp)-s(\secp)\ge\frac{1}{\poly(\secp)}$.
    \begin{itemize}
        \item 
    {\bf $c$-correctness on $\Sigma_c$:} For all sufficiently large $\secp\in\Sigma_c$,
    \begin{align}
        \Pr[\top\gets\Ver(\puzz,\ans):(\puzz,\ans)\gets\Samp(1^\secp)] \ge c(\secp).
    \end{align}
    \item
    {\bf $s$-security on $\Sigma_s$:} 
    For any PPT adversary $\cA$, and for all sufficiently large $\secp\in\Sigma_s$,
    \begin{align}
        \Pr[\top\gets\Ver(\puzz,\cA(1^\secp,\puzz)):(\puzz,\ans)\gets \Samp(1^\secp)] \le s(\secp).
    \end{align}
    \end{itemize}
\end{definition}
\fi

\subsection{Inefficient-Verifier Proofs of Quantumness}
In this subsection, we define inefficient-verifier proofs of quantumness (IV-PoQ) on a subset $\Sigma\subseteq\mathbb{N}$. 
IV-PoQ defined in \cite{C:MorYam24} are special cases when $\Sigma=\mathbb{N}$.

\if0
\begin{definition}[Inefficient-Verifier Proofs of Quantumness (IV-PoQ)~\cite{cryptoeprint:2023/161}]
    An inefficient-verifier proof of quantumness (IV-PoQ) is a tuple $(\cP,\cV_1,\cV_2)$ of interactive algorithms. 
    $\cP$ (prover) is QPT, $\cV_1$ (first verifier) is PPT, and $\cV_2$ (second verifier) is unbounded.
    The protocol is divided into two phases.
    In the first phase, $\cP$ and $\cV_1$ take the security parameter $1^\secp$ as input and interact with each other over a classical channel.
    Let $\tau$ be the transcript, i.e., the sequence of all classical messages exchanged between $\cP$ and $\cV_1$.
    In the second phase, $\cV_2$ takes $\tau$ as input and outputs $\top$ (accept) or $\bot$ (reject).
    We require the following two properties for some functions $c$ and $s$ such that $c(\secp)-s(\secp)\ge 1/\poly(\secp)$.
    \begin{itemize}
        \item \textbf{$c$-completeness:}
        \begin{equation}
            \Pr[\top\gets\cV_2(\tau):\tau\gets\langle\cP,\cV_1\rangle(1^\secp)] \ge c(\secp).
        \end{equation}
        \item \textbf{$s$-soundness:} For any PPT prover $\cP^*$,
        \begin{equation}
            \Pr[\top\gets\cV_2(\tau):\tau\gets\langle\cP^*,\cV_1\rangle(1^\secp)] \le s(\secp).
        \end{equation}
    \end{itemize}
\end{definition}
\fi

\if0
\begin{remark}
    For the notational simplicity, we often use $\cV=(\cV_1,\cV_2)$ as the pair of two verifiers $\cV_1$ and $\cV_2$,
    and denote IV-PoQ by $(\cP,\cV)$.
\end{remark}
\fi

\begin{definition}[Inefficient-Verifier Proofs of Quantumness (IV-PoQ) on $\Sigma$]\label{def:IVPoQ}
    Let $\Sigma\subseteq\mathbb{N}$ be a set.
    An IV-PoQ on $\Sigma$ is a tuple $(\cP,\cV_1,\cV_2)$ of interactive algorithms. 
    $\cP$ (prover) is QPT, $\cV_1$ (first verifier) is PPT, and $\cV_2$ (second verifier) is unbounded.
    The protocol is divided into two phases.
    In the first phase, $\cP$ and $\cV_1$ take the security parameter $1^\secp$ as input and interact with each other over a classical channel.
    Let $\tau$ be the transcript, i.e., the sequence of all classical messages exchanged between $\cP$ and $\cV_1$.
    In the second phase, $\cV_2$ takes $1^\secp$ and $\tau$ as input and outputs $\top$ (accept) or $\bot$ (reject).
    We require the following two properties for some functions $c$ and $s$ such that $c(\secp)-s(\secp)\ge 1/\poly(\secp)$.
    \begin{itemize}
    \item
    {\bf $c$-completeness:} 
    There exists $\secp^*\in\mathbb{N}$ such that
        \begin{equation}
            \Pr[\top\gets\cV_2(1^\secp,\tau):\tau\gets\langle\cP,\cV_1\rangle(1^\secp)] \ge c(\secp)
        \end{equation}
    holds for all $\secp\ge\secp^*$.
    \item
    {\bf $s$-soundness on $\Sigma$:} For any PPT prover $\cP^*$ there exists $\secp^{**}\in\mathbb{N}$
    such that
        \begin{equation}
            \Pr[\top\gets\cV_2(1^\secp,\tau):\tau\gets\langle\cP^*,\cV_1\rangle(1^\secp)] \le s(\secp)
        \end{equation}
    holds for all $\secp\ge\secp^{**}$ in $\Sigma$.
    \end{itemize}
Moreover, if all the messages sent from $\cV_1$ are uniformly random strings, we say that the IV-PoQ is public-coin.  
\end{definition}

\begin{remark}
IV-PoQ on $\Sigma$ always exist for any finite set $\Sigma$, 
but we include the case in the definition
for the convenience.
\end{remark}

\begin{remark}
In the previous definition of IV-PoQ~\cite{C:MorYam24}, $\cV_2$ does not take $1^\secp$ as input.
However, this does not change the definition for interactive IV-PoQ, because $1^\secp$ can be added to the first $\cV_1$'s message. We explicitly include $1^\secp$ in the input of $\cV_2$ since we also consider non-interactive IV-PoQ in this paper. 
\end{remark}

\if0
\mor{toru?}
If $c$ satisfies $c(\secp)\ge 1-\frac{1}{p(\secp)}$ for any polynomial $p$ and for all sufficiently large $\secp\in\Sigma_c$, the $c$-completeness is called the $(1-\negl)$-completeness.
Similarly, if $s$ satisfies $s(\secp)\le\frac{1}{p(\secp)}$ for any polynomial $p$ and for all sufficiently large $\secp\in\Sigma_s$, the $s$-soundness is called the $\negl$-soundness.
\fi

\cite{C:MorYam24} showed that classically-secure OWFs imply IV-PoQ by constructing IV-PoQ from statistically-hiding and computationally-binding commitment schemes that are implied by OWFs \cite{HNORV09}.
By inspecting its proof, one can see that the proof gives a ``security-parameter-wise'' reduction, i.e., 
for any efficiently computable polynomial $n$, we can construct IV-PoQ from classically-secure OWFs such that that if the base OWF is secure on inputs of length $n(\secp)$, then the resulting IV-PoQ is sound on the security parameter $\secp$.\footnote{More precisely, 
for any efficiently computable function $f:\bit^*\rightarrow \bit^*$ and efficiently computable polynomial $n$, there is an IV-PoQ $(\cP,\cV_1,\cV_2)$  such that 
for any PPT algorithm $\cP^*$ and any polynomial $p$, there are a PPT algorithm $\cA$ and a polynomial $q$ such that for any $\secp\in \mathbb{N}$, if
\begin{align}
\Pr[\top\gets\cV_2(1^\secp,\tau):\tau\gets\langle\cP^*,\cV_1\rangle(1^\secp)]> \frac{1}{p(\secp)},
\end{align}
then
\begin{align}
\Pr[f(x')=f(x): x\gets\bit^{n(\secp)}, x'\gets\cA(1^{n(\secp)},f(x))]> \frac{1}{q(\secp)}.
\end{align}
} Thus, we have the following lemma. 
\begin{lemma}[Based on \cite{HNORV09,C:MorYam24}]\label{lem:OWF_IVPoQ}
    Let $\Sigma\subseteq\mathbb{N}$ be a set.
    If classically-secure OWFs on $\Sigma$ exist,
    then IV-PoQ on $\Sigma$ exist.
    Moreover, the constructed IV-PoQ is public-coin and satisfies $(1-\negl)$-completeness and $\negl$-soundness on $\Sigma$. 
\end{lemma}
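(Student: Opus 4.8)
The plan is to obtain the lemma by composing two known, security-parameter-wise reductions, exactly along the lines of the footnote: classically-secure OWFs on $\Sigma$ give statistically-hiding, computationally-binding commitments that are binding on $\Sigma$ via \cite{HNORV09}, and such commitments give an IV-PoQ on $\Sigma$ via \cite{C:MorYam24}. The only genuinely new content is tracking the parameter correspondence so that soundness lands on $\Sigma$ while completeness survives on all of $\mathbb{N}$.

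First I would recall that \cite{HNORV09} builds, from any OWF, a statistically-hiding and computationally-binding bit-commitment scheme in which correctness of the honest commit/reveal procedures and statistical hiding hold \emph{unconditionally}, whereas computational binding reduces to one-wayness. Inspecting that reduction, it is security-parameter-wise in the sense used in this section: there is an efficiently computable polynomial such that an adversary breaking binding at parameter $\secp$ yields an inverter for the base OWF at a corresponding input length, with polynomial loss. Given a classically-secure OWF on $\Sigma$ (which exists by hypothesis), applying this construction — up to a standard reparametrization of input lengths — yields a statistically-hiding commitment whose binding property holds for all sufficiently large $\secp\in\Sigma$. I would also take the receiver's messages to be public coins, using a public-coin statistically-hiding commitment (e.g., instantiating the interactive-hashing step publicly), which does not affect anything else.

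Next I would invoke the \cite{C:MorYam24} construction of IV-PoQ from a statistically-hiding, computationally-binding commitment. There the prover is QPT, $\cV_1$ only runs the commitment receiver together with some public challenges, and $\cV_2$ performs the inefficient accept/reject check; the scheme achieves $(1-\negl)$-completeness and $\negl$-soundness. Crucially, its completeness analysis uses only correctness and statistical hiding of the commitment, so $(1-\negl)$-completeness holds for \emph{all} sufficiently large $\secp\in\mathbb{N}$, independently of $\Sigma$. Its soundness analysis is again parameter-wise: a PPT $\cP^*$ making $\cV_2$ accept with probability $>1/p(\secp)$ yields a PPT binding adversary at $\secp$ with probability $>1/q(\secp)$ for some polynomial $q$. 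Composing with the previous reduction, any PPT $\cP^*$ violating $\negl$-soundness at infinitely many $\secp\in\Sigma$ would yield a PPT inverter for the OWF at the corresponding input lengths, contradicting its one-wayness on $\Sigma$. Since $\cV_1$ sends only (public-coin) receiver messages and public challenges, the resulting IV-PoQ on $\Sigma$ is public-coin, with $(1-\negl)$-completeness and $\negl$-soundness on $\Sigma$, as claimed.

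The main obstacle is not conceptual but bookkeeping: one must confirm that the \cite{C:MorYam24} completeness argument invokes no computational assumption — so that completeness genuinely holds on all of $\mathbb{N}$ and only soundness is tied to $\Sigma$ — and that both component reductions are available in the contrapositive, per-$\secp$ form (``break at $\secp$'' $\Rightarrow$ ``break the base primitive at some fixed polynomial in $\secp$''), so that the input-length polynomials of \Cref{def:OWFsSigma} compose without slippage. If the commitment of \cite{HNORV09} as stated is not public-coin, one substitutes a public-coin statistically-hiding commitment; nothing else changes. I expect no further difficulty.
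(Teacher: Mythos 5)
Your proposal is correct and follows essentially the same route as the paper: the paper also obtains this lemma by inspecting the \cite{HNORV09} and \cite{C:MorYam24} reductions, noting they are security-parameter-wise, that completeness uses no computational assumption (so it holds on all of $\mathbb{N}$), and that the protocol is public-coin. The only cosmetic difference is that the paper observes the commitment of \cite{HNORV09} is already public-coin (their Section 8) and that the remaining verifier messages (pairwise-independent hash descriptions and the \cite{NatPhys:KMCVY22} challenges) can be treated as public coins, so no substitution of the commitment scheme is needed.
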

\begin{remark} 
In \cite{C:MorYam24}, they do not explicitly state that the protocol is public-coin. To see that it is indeed public-coin,  
    observe that the verifier's messages of the IV-PoQ of \cite{C:MorYam24} consist of 
the receiver's messages of a statistically hiding commitment scheme of \cite{HNORV09}, descriptions of pairwise independent hash functions, and uniformly random strings from the verifier of \cite{NatPhys:KMCVY22}. As mentioned in \cite[Section 8]{HNORV09}, their commitment scheme is public-coin. Moreover, we can assume that a description of a pairwise independent hash function is public-coin without loss of generality since we can treat the randomness for choosing the function as its description. Thus, the IV-PoQ of \cite{C:MorYam24} is public-coin.
\end{remark}

\subsection{Sampling Complexity}

\begin{definition}[Sampling Problems~\cite{Aar14,ITCS:ABK24}]
\label{def:Samplingproblems}
A (polynomially-bounded) sampling problem $S$ is a collection of probability distributions $\{D_x\}_{x\in\bit^*}$, 
where $D_x$ is a distribution over $\bit^{p(|x|)}$, for some fixed polynomial $p$.
\end{definition}

\begin{definition}[{\bf SampBPP} and {\bf SampBQP}~\cite{Aar14,ITCS:ABK24}] 
\label{def:SampBQP}
{\bf SampBPP} is the class of (polynomially-bounded) sampling problems 
$S=\{D_x\}_{x\in\bit^*}$ for which there exists a PPT algorithm $\cB$ such that for all $x$ and all $\epsilon>0$, 
$\SD(\cB(x,1^{\lfloor 1/\epsilon \rfloor}),D_x) \le\epsilon$, 
where $\cB(x,1^{\lfloor 1/\epsilon\rfloor})$ is the output probability distribution of 
$\cB$ on input $(x, 1^{\lfloor 1/\epsilon\rfloor})$. 
{\bf SampBQP} is defined
the same way, except that $\cB$ is a QPT algorithm rather than a PPT one.
\end{definition}

\subsection{Kolmogorov Complexity}
\label{sec:Kolmogorov} 
In this subsection,
we introduce the definition of time-bounded prefix Kolmogorov complexity and prove lemmas we use in the proof of \Cref{lem:QAS_IVPoQ}.


Time-bounded prefix Kolmogorov complexity is defined using a special type of Turing machines called \emph{self-delimiting machines}.\footnote{ 
For the purpose of this paper, their detailed definition is not necessary (for which we refer to \cite[Chapter 3]{LV19}). 
The important point is, as we will explain, that the set of their valid programs is prefex-free.
} 
Let $M$ be a self-delimiting machine, $p$ be a program, and $x$ be a bit string.
We denote $M(p)=x$ (resp. in at most $t$ steps) if
$M$ takes as input a program $p$ in the input tape,
$M$ halts (resp. in at most $t$ steps),
and
the output tape contains $x$.\footnote{Actually, another additional criteria is required: 
when it halts, the head of the input tape is on the rightmost bit of $p$. This in particular means that $M$ never scans the next bit after $p$,
because the input tape of a self-delimiting machine is one-way.} 
For a self-delimiting machine $M$, we say that $p\in\bit^*$ is a valid program for $M$ if there exists $x\in\bit^*$ such that $M(p)=x$.
For any self-delimiting machine $M$, the set of all valid programs $p\in\bit^*$ for $M$ is prefix-free.
Here, a set $S\subseteq\bit^*$ of strings is called prefix-free, if any $x\in S$ is not a prefix of any other $y\in S$.\footnote{$x$ is a prefix of $y$ if there exists a bit string $z$ such that
$y=x\|z$. For example, if $x=101$ and $y=101111$, then $x$ is a prefix of $y$.}
A function $t:\mathbb{N}\to\mathbb{N}$ is called time-constructible if $t(n)\ge n$ for all $n\in\mathbb{N}$ and there exists a Turing machine that, on input $1^n$, halts and outputs the binary representation of $t(n)$ in time $O(t(n))$ for all $n\in\mathbb{N}$.

\begin{definition}[Time-Bounded Prefix Kolmogorov Complexity \cite{JL00,LV19}]
    Let $M$ be a self-delimiting machine.
    Let $t:\mathbb{N}\to\mathbb{N}$ be a time-constructible function and $x\in\bit^*$ be a bit string.
    Let 
    \begin{align}
    G^t_M(x):=\{p\in\bit^* : M(p)=x \text{ in at most $t(|x|)$ steps}\}.
    \end{align}
    The $t$-time bounded prefix Kolmogorov complexity $K^t_M(x)$ is defined as
    \begin{align}
    \label{def_K}
        K^t_M(x) := 
        \begin{cases}
            \min\{|p| : p\in G^t_M(x)\} & \text{If $G^t_M(x)$ is not empty.} \\
            \infty & \text{If $G^t_M(x)$ is empty.}
        \end{cases}
    \end{align}
\end{definition}
\begin{remark}
There is another variant called time-bounded \emph{plain} Kolmogorov complexity where $M$ is a normal Turing machine rather than a self-delimiting one. Though the definition of the plain version is simpler than that of the above prefix version, we choose to use the prefix version since it is unclear if \Cref{lem:Kolm_coding,lem:Kolm_prob} hold for the plain version (though we believe that we can prove variants of \Cref{lem:Kolm_coding,lem:Kolm_prob} for the plain version with some additional error terms, which would be also sufficient for the purpose of this paper). 
\end{remark}

The following lemma guarantees the existence of a special self-delimiting machine $U$.
We call it \emph{the universal self-delimiting machine}.
\begin{lemma}[Example 7.1.1 of \cite{LV19}]\label{lem:Kolm_optimal}
    There exists a self-delimiting machine $U$ such that for any self-delimiting machine $M$, there exists a constant $c>0$ such that
    \begin{align}
        K^{ct\log t}_U(x) \le K^t_M(x) +c
    \end{align}
    for all $x\in\bit^*$ and any time-constructible function $t:\mathbb{N}\to\mathbb{N}$.
\end{lemma}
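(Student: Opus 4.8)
The plan is to carry out the standard construction of a universal \emph{self-delimiting} machine and then verify the two quantitative claims (program length and running time) in the statement. First I would fix a prefix-free self-delimiting encoding $M\mapsto\bar M\in\bit^*$ of self-delimiting machines, so that $\{\bar M : M\text{ is self-delimiting}\}$ is prefix-free and so that a machine reading $\bar M$ off a one-way input tape can detect where $\bar M$ ends without scanning past it. Then I would define $U$ as follows: on input tape contents $w$, it first parses a prefix $\bar M$ of $w$, recovering the description of a self-delimiting machine $M$, then simulates $M$ on the remaining portion of the input tape (which $M$ reads in its own self-delimiting fashion), and outputs whatever $M$ outputs, halting exactly when $M$ halts.

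The first point to check is that $U$ is itself a legitimate self-delimiting machine, i.e.\ that its set of valid programs is prefix-free and that it obeys the halting convention (the input head ends on the rightmost consumed bit). The valid programs of $U$ are exactly the strings $\bar M p$ with $p$ a valid program for $M$. Prefix-freeness follows from a short case analysis: if $\bar M p$ is a prefix of $\bar{M'}p'$, then either $\bar M$ is a prefix of $\bar{M'}$, which forces $\bar M=\bar{M'}$ (hence $M=M'$) by prefix-freeness of the codes and then forces $p=p'$ by prefix-freeness of $M$'s programs, or $\bar{M'}$ is a proper prefix of $\bar M$, which is impossible. The halting convention is inherited, since after reading $\bar M$ and then simulating $M$'s reading of $p$, the head of $U$ rests on the rightmost bit of $\bar M p$.

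Next I would establish the two bounds. For the length bound: if $p^{*}$ is a shortest program with $M(p^{*})=x$ in at most $t(|x|)$ steps, so $|p^{*}|=K^{t}_{M}(x)$, then $U(\bar M p^{*})=x$ and $|\bar M p^{*}|=|\bar M|+K^{t}_{M}(x)$. For the time bound: parsing $\bar M$ costs $O_{M}(1)$ steps, and by the classical Hennie--Stearns logarithmic-slowdown simulation of multitape Turing machines, $U$ can simulate $t(|x|)$ steps of $M$ in at most $c_{M}\cdot t(|x|)\log t(|x|)$ steps, where $c_{M}$ depends only on $M$ (its number of tapes, alphabet size, and number of states), using time-constructibility of $t$ to bound the overhead uniformly. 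Taking $c:=\max\{|\bar M|,c_{M}\}$ then yields that $U(\bar M p^{*})=x$ within $c\,t(|x|)\log t(|x|)$ steps via a program of length at most $K^{t}_{M}(x)+c$, that is, $K^{ct\log t}_{U}(x)\le K^{t}_{M}(x)+c$; the case $K^{t}_{M}(x)=\infty$ is vacuous.

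The main obstacle I expect is the bookkeeping needed to make $U$ genuinely self-delimiting while faithfully handing the unread suffix of the one-way input tape to the simulated machine — in particular, guaranteeing that the simulation never peeks past the end of $p$ and that the head-position halting condition is preserved across the handover — together with importing the $O(t\log t)$ simulation overhead with a constant depending only on $M$. Once those are in place, the length bound and the reduction to the simulation lemma are routine.
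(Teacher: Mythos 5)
Your proposal is correct and is exactly the standard construction behind the result the paper cites (Example 7.1.1 of Li--Vit\'anyi): a universal self-delimiting machine that parses a prefix-free code $\bar M$ and simulates $M$ on the remainder, with the additive constant $|\bar M|$ giving the length bound and the Hennie--Stearns $O(t\log t)$ simulation giving the time bound. The paper offers no independent proof beyond this citation, so there is nothing further to compare.
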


We will use the following lemma to show \Cref{lem:QAS_IVPoQ}.\footnote{The lemma was shown in \cite{LV19} for the prefix Kolmogorov complexity.
Here we modify its proof for time-bounded prefix Kolmogorov complexity.}
\begin{lemma}\label{lem:Kolm_prob}
   Let $m:\mathbb{N}\to\mathbb{N}$ be a function such that $m(\secp)\ge\secp$ for all $\secp\in\mathbb{N}$. 
   For each $\secp\in\mathbb{N}$, let $\cD_\secp$ be a distribution over $m(\secp)$ bits. 
    For any $k>0$, any time-constructible function $t:\mathbb{N}\to\mathbb{N}$ such that $t(n)=\Omega(n^2)$,
    and for all sufficiently large $\secp\in\mathbb{N}$,
    \begin{align}
        \Pr_{x\gets\cD_\secp} \left[K^t_U(x)\ge \log\frac{1}{\Pr[x\gets \cD_\secp]}-k\right] \ge 1-\frac{1}{2^k},
    \end{align}
    where $U$ is the universal self-delimiting machine.
\end{lemma}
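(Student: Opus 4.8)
The plan is a direct counting argument via Kraft's inequality, following the non-time-bounded version in \cite{LV19} but carrying the time bound through. Write $p_x:=\Pr[x\gets\cD_\secp]$. It suffices to bound the probability mass of the ``bad'' set $B_\secp:=\{x\in\bit^{m(\secp)}: p_x>0 \text{ and } K^t_U(x)<\log(1/p_x)-k\}$: strings with $p_x=0$ contribute no mass, and strings with $K^t_U(x)=\infty$ are never in $B_\secp$, so the claimed inequality is equivalent to $\Pr_{x\gets\cD_\secp}[x\in B_\secp]\le 2^{-k}$.

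The first main step is to establish $\sum_{x\in\bit^{m(\secp)}:\,K^t_U(x)<\infty}2^{-K^t_U(x)}\le 1$. For each such $x$, fix a shortest witness $p^\ast_x\in G^t_U(x)$, so that $|p^\ast_x|=K^t_U(x)$ and $U(p^\ast_x)=x$ (the computation halts within $t(|x|)$ steps, hence in particular halts). The assignment $x\mapsto p^\ast_x$ is injective, since $p^\ast_x=p^\ast_{x'}$ would force $x=U(p^\ast_x)=U(p^\ast_{x'})=x'$. Because $U$ is self-delimiting, its set of valid (halting) programs is prefix-free, so Kraft's inequality gives $\sum_{p\text{ valid for }U}2^{-|p|}\le 1$; composing with the injection yields the bound. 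The one point requiring care — essentially the only subtlety of the proof — is that the time budget must not spoil this step, and it does not: a witness halting within $t(|x|)$ steps is in particular a member of $U$'s (prefix-free) set of valid programs, so the shortest witnesses still inject into it.

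The second step is a Markov-type inequality. For $x\in B_\secp$ we have $K^t_U(x)<\log(1/p_x)-k$, equivalently $p_x<2^{-k}\cdot 2^{-K^t_U(x)}$, hence
\begin{align}
\Pr_{x\gets\cD_\secp}[x\in B_\secp]=\sum_{x\in B_\secp}p_x<2^{-k}\sum_{x\in B_\secp}2^{-K^t_U(x)}\le 2^{-k}\sum_{x\in\bit^{m(\secp)}:\,K^t_U(x)<\infty}2^{-K^t_U(x)}\le 2^{-k},
\end{align}
and taking complements gives $\Pr_{x\gets\cD_\secp}[K^t_U(x)\ge\log(1/p_x)-k]\ge 1-2^{-k}$. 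I remark that this argument in fact works for every $\secp\in\mathbb{N}$ and uses neither $t(n)=\Omega(n^2)$ nor $m(\secp)\ge\secp$; these hypotheses (and ``for all sufficiently large $\secp$'') are stated only for uniformity with the companion coding-direction lemma \Cref{lem:Kolm_coding}, whose proof, unlike this one, must execute a Shannon--Fano encoder and therefore needs a generous time bound. A tempting alternative — building a self-delimiting machine that Shannon--Fano-encodes $\cD_\secp$ — does not apply here, both because $\cD_\secp$ is an arbitrary, possibly uncomputable distribution and because it would produce an \emph{upper} bound on $K^t_U$, the opposite of what is needed. Consequently there is no real obstacle; the proof is essentially the prefix-Kolmogorov Kraft inequality with the (harmless) time annotation kept throughout.
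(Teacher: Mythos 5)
Your proof is correct, and it is the same Kraft-plus-Markov argument at its core, but you take a genuinely leaner route than the paper on the one point where the proofs diverge. The paper's proof spends most of its length establishing that $K^t_U(x)<\infty$ for \emph{every} $x\in\bit^{m(\secp)}$ (by building an identity-like self-delimiting machine $M$ with $K^T_M(x)\le O(|x|)$, invoking the invariance theorem \Cref{lem:Kolm_optimal}, and using $t(n)=\Omega(n^2)$ and $m(\secp)\ge\secp$ to absorb the $cT\log T$ overhead); this is done so that a shortest program $P_x$ exists for all $x$ and Kraft's inequality can be applied to the full family $\{P_x\}_{x\in\bit^{m(\secp)}}$. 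You instead observe that strings with $K^t_U(x)=\infty$ automatically satisfy the desired inequality, restrict the Kraft sum to $\{x: K^t_U(x)<\infty\}$, and inject shortest witnesses into the prefix-free set of valid programs for $U$ — which sidesteps the finiteness sub-argument entirely. Your remark that the hypotheses $t(n)=\Omega(n^2)$, $m(\secp)\ge\secp$, and ``sufficiently large $\secp$'' are not actually used is accordingly correct; they are load-bearing only in the paper's chosen route (and in the companion \Cref{lem:Kolm_coding}). The trade-off is minor: the paper's version delivers the slightly stronger side fact that $K^t_U(x)=O(m(\secp))$ for all $x$, which your proof does not need and does not produce, while your version is shorter, holds for every $\secp$ and every arbitrary (even uncomputable) $\cD_\secp$, and makes no use of \Cref{lem:Kolm_optimal}.
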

\begin{proof}[Proof of \cref{lem:Kolm_prob}]
Let $m:\mathbb{N}\to\mathbb{N}$ be a function such that $m(\secp)\ge\secp$ for all $\secp\in\mathbb{N}$. 
For each $\secp\in\mathbb{N}$, let $\cD_\secp$ be a distribution over $m(\secp)$ bits. 
Let $t:\mathbb{N}\to\mathbb{N}$ be a time-constructible function such that $t(n)=\Omega(n^2)$.
We will show later that for all sufficiently large $\secp\in\mathbb{N}$ and for all $x\in\bit^{m(\secp)}$, 
there exists a program $P_x$ such that $|P_x|=K^t_U(x)$ and $U(P_x)=x$ in at most $t(|x|)$ steps, where $U$ is the universal self-delimiting machine.
The set $\{P_x\}_{x\in\bit^{m(\secp)}}$ is prefix-free because $U$ is a self-delimiting machine and
$P_x$ is valid for all $x\in\bit^{m(\secp)}$.
It is known that 
    $\sum_{x\in\bit^{m(\secp)}} 2^{-|P_x|} \le 1$
holds for prefix-free $\{P_x\}_{x\in\bit^{m(\secp)}}$. (It is called Kraft's inequality (Section 1.11 of \cite{LV19}).)
By Markov's inequality, for any $k>0$,
\begin{align}
    \Pr_{x\gets\cD_\secp} \left[ \frac{2^{-K^t_U(x)}}{\Pr[x\gets\cD_\secp]} \ge 2^k \right]
    &\le \frac{1}{2^k} \sum_{x\in\bit^{m(\secp)}} \Pr[x\gets\cD_\secp] \frac{2^{-K^t_U(x)}}{\Pr[x\gets\cD_\secp]} \\ 
    &= \frac{1}{2^k} \sum_{x\in\bit^{m(\secp)}} 2^{-K^t_U(x)} 
    = \frac{1}{2^k} \sum_{x\in\bit^{m(\secp)}} 2^{-|P_x|}
    \le \frac{1}{2^k}.
\end{align}
Therefore, we obtain
\begin{align}
    \Pr_{x\gets\cD_\secp} \left[K^t_U(x)\ge \log\frac{1}{\Pr[x\gets\cD_\secp]}-k\right] \ge 1-\frac{1}{2^k}
\end{align}
for any $k>0$ and for all sufficiently large $\secp\in\mathbb{N}$.

In the remaining part, we show that for all sufficiently large $\secp\in\mathbb{N}$ and for all $x\in\bit^{m(\secp)}$, there exists a program $P_x$ such that $|P_x|=K^t_U(x)$ and $U(P_x)=x$ in at most $t(|x|)$ steps.
To show it, it is sufficient to show $K^t_U(x)<\infty$ for all sufficiently large $\secp\in\mathbb{N}$ and for all $x\in\bit^{m(\secp)}$,
because in that case we have only to take $P_x$ as the program $p$ that achieves the minimum in \cref{def_K}.
In fact, in the following we show that for all sufficiently large $\secp\in\mathbb{N}$ and all $x\in\bit^{m(\secp)}$, $K_U^t(x)=O(m(\secp))$.

Later we will show that there exist a self-delimiting machine $M$ and a time-constructible function $T:\mathbb{N}\to\mathbb{N}$ that satisfy the following:
$T(n)=O(n)$ and for any $x\in\bit^*$, $K^T_M(x)\le O(|x|)$.  
Then by \Cref{lem:Kolm_optimal}, there exists a constant $c>0$ such that
\begin{align}
    K^{cT \log T}_U(x) \le K^T_M(x) +c \le O(|x|) 
\end{align}
for all $x\in\bit^*$.
Because $t(n)=\Omega(n^2)$, $t(n)\ge cT(n)\log T(n)$ for all sufficiently large $n$.
Thus, for all sufficiently large $\ell\in\mathbb{N}$ and for all $x\in\bit^\ell$,
\begin{align}
    K^t_U(x) \le K^{cT\log T}_U(x) \le O(\ell).
\end{align} 
Here, in the first inequality, we have used the fact that $K^\alpha_U(x)\le K^{\beta}_U(x)$ for any time-constructible functions $\alpha$ and $\beta$ such that $\alpha(|x|)\ge\beta(|x|)$.
Because $m(\secp)\ge \secp$, this means that 
    $K^t_U(x)\le O(m(\secp))$
for all sufficiently large $\secp\in\mathbb{N}$ and for all $x\in\bit^{m(\secp)}$.

Finally, we show that there exist a self-delimiting machine $M$ and a time-constructible function $T:\mathbb{N}\to\mathbb{N}$ such that $T(n)=O(n)$ and for all $x\in\bit^*$, $K^T_M(x)\le O(|x|)$. 
We construct $M$ as follows:
\begin{enumerate}
    \item $M$ reads the input tape cell one by one. If the input is in the form of
    $1^\ell\|0\|x\|y$, where $\ell\ge1$, $x\in\bit^*$, $|x|\ge1$, and $y$ is the string of all blanks, then
    go to the next step.\footnote{It is possible to check whether the input is in this form by scanning the input tape in the one-way.}
    Otherwise, $M$ continues reading the input tape, which means that $M$ never halts.
    \item 
    Copy the first $\ell$ bits of $x\|y$ into the output tape.
\end{enumerate}
Consider the case where the bit string $1^{|x|}\|0\|x$ with some $x\in\bit^*$ is written from the leftmost cell of the input tape and all cells to the right of $1^{|x|}\|0\|x$ are blanks.
Then, for all $x\in\bit^*$, $M(1^{|x|}\|0\|x)=x$ in at most $O(|x|)$ steps.
This means that there exists a time-constructible function $T:\mathbb{N}\to\mathbb{N}$ such that $T(n)=O(n)$ and
\begin{align}
    K^T_M(x) \le | 1^{|x|}\|0\|x | = O(|x|)
\end{align}
for all $x\in\bit^*$.
Therefore, we complete the proof.
\end{proof}

We will also use the following lemma which will be used to show \Cref{lem:QAS_IVPoQ}.\footnote{This is essentially shown in \cite{LV19}. 
However in its proof, the $O(\log\secp)$ factor in \Cref{eq:Kolm_coding} is not explicitly given.
Here we reconstruct the proof to obtain the $O(\log\secp)$ factor,
because we need it for our purpose.
}
\begin{lemma}\label{lem:Kolm_coding}
    Let $\cA$ be a PPT algorithm that, on input $1^\secp$, outputs an $m(\secp)$-bit string for all $\secp\in\mathbb{N}$. 
    Here $m:\mathbb{N}\to\mathbb{N}$ is a polynomial such that $m(\secp)\ge\secp$ for all $\secp\in\mathbb{N}$.
    Let $p_{\secp,x}:=\Pr[x\gets\cA(1^\secp)]$ for all $\secp\in\mathbb{N}$ and for all $x\in\bit^{m(\secp)}$.
    Then, for any time-constructible function $t:\mathbb{N}\to\mathbb{N}$ such that $t(n)=2^{\omega(\poly(n))}$, 
    \begin{align}\label{eq:Kolm_coding}
        K^t_U(x) \le \log\frac{1}{p_{\secp,x}} + O(\log \secp)
    \end{align}
    for all sufficiently large $\secp\in\mathbb{N}$ and for all $x\in\Supp(\cA(1^\secp))$.
    Here, $\Supp(\cA(1^\secp))$ is the support of the distribution $\cA(1^\secp)$ over the outputs of $\cA(1^\secp)$,
    and $U$ is the universal self-delimiting machine.
\end{lemma}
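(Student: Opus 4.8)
The plan is to prove a time-bounded coding theorem by building, for each $\secp$, an explicit prefix-free Shannon--Fano--Elias code for the distribution $\cA(1^\secp)$, and showing that a single self-delimiting machine can decode it within time $2^{\poly(|x|)}$. First I would fix notation: let $r(\secp)=\poly(\secp)$ be the number of random bits and $\poly(\secp)$ the running time of $\cA$ on input $1^\secp$, so that each probability $p_{\secp,x}=2^{-r(\secp)}\cdot|\{\omega\in\bit^{r(\secp)}:\cA(1^\secp;\omega)=x\}|$ is a rational with denominator $2^{r(\secp)}$, and the whole table $\{(x,p_{\secp,x})\}_{x\in\bit^{m(\secp)}}$ is \emph{exactly} computable in time $2^{r(\secp)}\cdot 2^{m(\secp)}\cdot\poly(\secp)=2^{\poly(\secp)}$ by brute force over coins and outputs. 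For $x\in\Supp(\cA(1^\secp))$ set $\ell_{\secp,x}:=\lceil\log(1/p_{\secp,x})\rceil+1$ and let $c_\secp(x)\in\bit^{\ell_{\secp,x}}$ be the first $\ell_{\secp,x}$ bits of the binary expansion of $F_\secp(x)+p_{\secp,x}/2$, where $F_\secp(x):=\sum_{y<x}p_{\secp,y}$ is the cumulative distribution function in lexicographic order. A routine interval-containment argument (using $2^{-\ell_{\secp,x}}\le p_{\secp,x}/2$) shows that the dyadic interval $[0.c_\secp(x),\,0.c_\secp(x)+2^{-\ell_{\secp,x}})$ lies inside $[F_\secp(x),F_\secp(x)+p_{\secp,x})$; since the latter intervals are pairwise disjoint over $x\in\Supp(\cA(1^\secp))$, the set $\{c_\secp(x)\}_x$ is prefix-free, and $\ell_{\secp,x}\le\log(1/p_{\secp,x})+2$.

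Next I would fix a self-delimiting encoding $\secp\mapsto\bar\secp\in\bit^{O(\log\secp)}$ with $\{\bar\secp\}_\secp$ prefix-free, and define a self-delimiting machine $M$ as follows: $M$ first reads $\bar\secp$ self-delimitingly and recovers $\secp$, then computes the table $\{(x,p_{\secp,x})\}_x$ and hence the code $c_\secp$ in time $2^{\poly(\secp)}$; it then continues reading the input bit by bit, maintaining the read string $s$, and halts outputting $x$ as soon as $s=c_\secp(x)$ for some $x\in\Supp(\cA(1^\secp))$. By prefix-freeness of $\{c_\secp(x)\}_x$, on input $\bar\secp\,\|\,c_\secp(x)$ the machine $M$ reads exactly $|\bar\secp|+\ell_{\secp,x}$ bits and outputs $x$, so $M$ is genuinely self-delimiting and its set of valid programs (a union over $\secp$ of prefix-free sets, each prefixed by a distinct $\bar\secp$) is prefix-free. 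Its running time on such an input is dominated by the table computation, i.e.\ $2^{\poly(\secp)}$; since $m$ is a polynomial with $m(\secp)\ge\secp$, we have $\secp\le m(\secp)=|x|$, so this is at most $T(|x|)$ for a time-constructible $T$ of the form $T(n)=2^{q(n)}$ with $q$ a fixed polynomial. Hence $K^T_M(x)\le|\bar\secp|+\ell_{\secp,x}\le\log(1/p_{\secp,x})+O(\log\secp)$ for all $\secp$ and all $x\in\Supp(\cA(1^\secp))$.

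Finally I would invoke the optimality of $U$ (\Cref{lem:Kolm_optimal}): there is a constant $c>0$ with $K^{cT\log T}_U(x)\le K^T_M(x)+c$ for all $x$. Since $cT(n)\log T(n)\le 2^{q'(n)}$ for a fixed polynomial $q'$, and $t(n)=2^{\omega(\poly(n))}$, we get $t(n)\ge cT(n)\log T(n)$ for all sufficiently large $n$; applying this with $n=|x|=m(\secp)$ (which tends to infinity with $\secp$) and using the monotonicity $K^\alpha_U(x)\le K^\beta_U(x)$ for $\alpha\ge\beta$ recorded in the proof of \Cref{lem:Kolm_prob}, we obtain $K^t_U(x)\le K^{cT\log T}_U(x)\le K^T_M(x)+c\le\log(1/p_{\secp,x})+O(\log\secp)$ for all sufficiently large $\secp$ and all $x\in\Supp(\cA(1^\secp))$, which is \Cref{eq:Kolm_coding}.

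The main obstacle I anticipate is bookkeeping the overhead precisely: the point of the lemma (as its footnote stresses) is the $O(\log\secp)$ term rather than something larger, so one must check that the only $\secp$-dependent cost is the self-delimiting description $\bar\secp$ that $M$ needs in order to know the output length $m(\secp)$ and to simulate $\cA(1^\secp)$, while the ``$+2$'' from Shannon--Fano--Elias rounding, the ``$+c$'' from $U$'s optimality, and the description of $M$ itself each contribute only an additive constant independent of $x$ and $\secp$. A secondary technical point is verifying that the decoding can be performed by a genuinely \emph{self-delimiting} machine that stops exactly at the end of the codeword (handled by the halt-on-first-match rule together with prefix-freeness), and that the brute-force time $2^{\poly(\secp)}$ is indeed bounded by $t(|x|)$ for all large inputs, which is precisely where the hypothesis $t(n)=2^{\omega(\poly(n))}$ is used.
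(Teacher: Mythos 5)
Your proposal is correct and follows essentially the same route as the paper: encode $x$ by a self-delimiting description of $\secp$ followed by a prefix-free codeword of length $\log(1/p_{\secp,x})+O(1)$, decode with a self-delimiting machine that brute-forces the probability table of $\cA(1^\secp)$ in time $2^{\poly(\secp)}\le 2^{\poly(|x|)}$, and then transfer to $U$ via \Cref{lem:Kolm_optimal} and the hypothesis $t(n)=2^{\omega(\poly(n))}$. The only (immaterial) difference is that you use Shannon--Fano--Elias coding via the CDF midpoint, whereas the paper uses Shannon--Fano coding with probabilities sorted in descending order; both yield the same prefix-freeness and length bound.
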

\begin{proof}[Proof of \Cref{lem:Kolm_coding}]
    For all $\secp\in\mathbb{N}$ and for all $x\in\Supp(\cA(1^\secp))$, let $c(x)$ be the code word of $x$ obtained by Shannon-Fano coding (Section 1.11 of \cite{LV19}).
    By the property of Shannon-Fano coding, the set $\{c(x)\}_{x\in\Supp(\cA(1^\secp))}$ is prefix-free for all $\secp\in\mathbb{N}$, 
    and 
    \begin{align}\label{eq:ShannonFano_bound}
        \log \frac{1}{p_{\secp,x}} < |c(x)| \le \log\frac{1}{p_{\secp,x}}+1
    \end{align}
    is satisfied for all $\secp\in\mathbb{N}$ and for all $x\in\Supp(\cA(1^\secp))$.
 
    Later we will show that there exist a self-delimiting machine $M$ and a time-constructible function $T$ that satisfy the following:
    $T(n)=2^{O(n^a)}$ for some constant $a\ge 1$ and for all $\secp\in\mathbb{N}$ and for all $x\in\Supp(\cA(1^\secp))$, 
    \begin{align}
        K^T_M(x) \le |c(x)| + O(\log\secp).
    \end{align}
    By \Cref{eq:ShannonFano_bound}, we have
    \begin{align}
        K^T_M(x) \le \log\frac{1}{p_{\secp,x}} + O(\log\secp).
    \end{align}
    By \Cref{lem:Kolm_optimal}, there exists a constant $c>0$ such that
    \begin{align}
        K^{cT\log T}_U(x) \le K^T_M(x) +c \le \log\frac{1}{p_{\secp,x}} + O(\log\secp),
    \end{align}
    where $U$ is the self-delimiting machine.

    Let 
    $t:\mathbb{N}\to\mathbb{N}$ 
    be a time-constructible function 
    such that $t(n)=2^{\omega(\poly(n))}$. 
    Then we have $t(n)\ge cT(n)\log T(n)$ for all sufficiently large $n\in\mathbb{N}$.
    Because $m(\secp)\ge\secp$ for all $\secp\in\mathbb{N}$, this means $t(m(\secp))\ge cT(m(\secp))\log T(m(\secp))$ for all sufficiently large $\secp\in\mathbb{N}$.
    Thus, for all sufficiently large $\secp\in\mathbb{N}$ and for all $x\in\Supp(\cA(1^\secp))$,
    \begin{align}
        K^t_U(x) \le K^{cT\log T}_U(x) \le \log\frac{1}{p_{\secp,x}} + O(\log\secp).
    \end{align}
    Here, in the first inequality, we have used the fact that $K^\alpha_U(x)\le K^{\beta}_U(x)$ for any time-constructible functions $\alpha$ and $\beta$ such that $\alpha(|x|)\ge\beta(|x|)$.
    
    In the remaining part, we show that there exist a self-delimiting machine $M$ and a time-constructible function $T:\mathbb{N}\to\mathbb{N}$ such that $T(n)=2^{O(n^a)}$ for some constant $a\ge 1$ and $K^T_M(x)\le|c(x)|+O(\log\secp)$ for all $\secp\in\mathbb{N}$ and for all $x\in\Supp(\cA(1^\secp))$.
    We construct the self-delimiting machine $M$ as follows: 
    \begin{enumerate}
        \item  $M$ reads the input tape cell one by one. If the input is in the form of
        $1^\ell\|0\|\hat{\secp}\|y\|z$, where $\ell\ge 1$, $\hat{\secp}\in\bit^\ell$, $y\in\bit^*$, $|y|\ge1$, and $z$ is the string of all blanks, then
        go to the next step.\footnote{It is possible to check whether the input is in this form by scanning the input tape by the $(2\ell+2)$-th cell in the one-way.}
        Otherwise, $M$ continues reading the input tape, which means that $M$ never halts.
        \item Let $\secp\in\mathbb{N}$ be the integer whose binary representation is $\hat{\secp}\in\bit^\ell$.
        \item By the brute-force computation, $M$ computes the binary representation of $p_{\secp,x}=\Pr[x\gets\cA(1^\secp)]$ for all $x\in\bit^{m(\secp)}$ and stores them in the work tape.
        \item For all $x\in\Supp(\cA(1^\secp))$, $M$ computes $c(x)$ by using Shannon-Fano coding and stores the list $\{c(x)\}_{x\in\Supp(\cA(1^\secp))}$ in the work tape.
        \item $M$ reads the input tape from the $(2\ell+2)$-th cell one by one. If there exists $x'\in\Supp(\cA(1^\secp))$ such that $y=c(x')\|z'$, where $z'=\{0,1,\square\}$, then go to the next step.
        Otherwise, $M$ continues reading the input tape, which means that $M$ never halts.
        \item $M$ halts and outputs $x'$.
    \end{enumerate}

    We evaluate the running time of $M$.
    Let $r:\mathbb{N}\to\mathbb{N}$ be the running time of $\cA$.
    Because $\cA$ is a PPT algorithm, $r$ is a polynomial and therefore there exists a constant $a\ge 1$ such that $r(n)=O(n^a)$.
    For all $\secp\in\mathbb{N}$ and for all $x\in\bit^{m(\secp)}$, the binary expansion of $p_x$ is computed in time $2^{O(r(\secp))}=2^{O(\secp^a)}$ by the brute-force computation because the running time of $\cA(1^\secp)$ is $r(\secp)$ and the number of random bits used is at most $O(r(\secp))$.
    Thus for all $\secp\in\mathbb{N}$, $M$ can obtain $\{p_x\}_{x\in\bit^{m(\secp)}}$ in time $2^{O(\secp^a+m(\secp))}$.
    Moreover in step 4, $M$ obtains $\{c(x)\}_{x\in\Supp(\cA(1^\secp))}$ as follows:
    \begin{enumerate}
        \item[i.] Sort $\{p_{\secp,x}\}_{x\in\Supp(\cA(1^\secp))}$ in the descending order from left to right. For each $x\in\Supp(\cA(1^\secp))$, let $N_x$ be the integer such that $p_{\secp,x}$ appears in the $N_x$-th position of the sorted list. 
        \item[ii.] Let $P_1:=0$. For all $2\le i \le |\Supp(\cA(1^\secp))|$, let
        \begin{align}
            P_i:=\sum_{x'\in\Supp(\cA(1^\secp)):N_{x'}< i}p_{\secp,{x'}}.
        \end{align}
        \item [iii.] For all $x\in\Supp(\cA(1^\secp))$, the code word $c(x)$ of $x$ is obtained by truncating the binary expansion of $P_{N_x}$ such that its length $|c(x)|$ satisfies
        \begin{align}
            \log\frac{1}{p_x} < |c(x)| \le \log\frac{1}{p_x} + 1.
        \end{align}
    \end{enumerate}
    Then, $M$ can obtain the list $\{c(x)\}_{x\in\Supp(\cA(1^\secp))}$ in time $2^{O(m(\secp))}$. 
    Totally in step 3 and 4, $M$ runs in time $2^{O(\secp^a+m(\secp))}$.
    Because $m(\secp)\ge\secp$ and $a\ge 1$, $\secp^a+m(\secp)=O(m(\secp)^a+m(\secp))=O(m(\secp)^a)$.
    This means $M$ runs in time $2^{O(m(\secp)^a)}$ in step 3 and 4.

    Consider the case where the bit string $1^{|\hat{\secp}|}\|0\|\hat{\secp}\|c(x)$ with some $\secp\in\mathbb{N}$ and some $x\in\Supp(\cA(1^\secp))$ is written from the leftmost cell of the input tape.
    Let all cells to the right of $1^{|\hat{\secp}|}\|0\|\hat{\secp}\|c(x)$ be blanks.
    Then, for all $\secp\in\mathbb{N}$ and for all $x\in\Supp(\cA(1^\secp))$, $M(1^{|\hat{\secp}|}\|0\|\hat{\secp}\|c(x))=x$ in at most $2^{O(|x|^a)}$ steps.
    This means that there exists a time-constructible function $T:\mathbb{N}\to\mathbb{N}$ such that $T(n)=2^{O(n^a)}$ for some constant $a\ge 1$ and for all $\secp\in\mathbb{N}$ and for all $x\in\Supp(\cA(1^\secp))$,
    \begin{align}
        K^T_M(x) \le | 1^{|\hat{\secp}|}\|0\|\hat{\secp}\|c(x) | \le |c(x)| + 2 \lceil \log\secp\rceil +1 = |c(x)| + O(\log\secp).
    \end{align}
    We complete the proof.

\end{proof}

\if0
\subsection{Lemmas}
In this subsection, we summarize lemmas that we will use.

\begin{lemma}
\label{lem:yokutsukau}
Let $\cA$ and $\cA'$ be algorithms such that
$\sum_a|\Pr[a\gets\cA]-\Pr[a\gets\cA']|\le\epsilon$.
Then,
for any algorithm $\cB$ and any bit string $b$,
\begin{align}
|\Pr[b\gets \cB(a):a\gets\cA]    
-\Pr[b\gets \cB(a):a\gets\cA']|\le\epsilon.    
\end{align}    
\takashi{I think this is trivial since the assumption directly means that the statistical distance between the output distributions of $\cA$ and $\cA'$ is at most $\epsilon$.
Also, I coulnd't find where this lemma is used.
}
\end{lemma}
\begin{proof}[Proof of \cref{lem:yokutsukau}]
 \begin{align}
&\left|\Pr[b\gets \cB(a):a\gets\cA]    
-\Pr[b\gets \cB(a):a\gets\cA']\right|\\
&=
\left|\sum_a \Pr[b\gets \cB(a)]\Pr[a\gets\cA]    
-\sum_a\Pr[b\gets \cB(a)]\Pr[a\gets\cA']\right|
\le
\sum_a\left|\Pr[a\gets\cA]    
-\Pr[a\gets\cA']\right|
\le
\epsilon.    
\end{align}    
\end{proof}
\fi
\section{QASs and Int-QASs}\label{sec:quantum_sampling}
In this section, we introduce two new concepts, quantum advantage samplers (QASs) and
interactive quantum advantage samplers (Int-QASs).
We also show some results on them.

\subsection{Definitions of QASs and Int-QASs}
We first define QASs on a set $\Sigma\subseteq\mathbb{N}$.
If $\Sigma=\mathbb{N}$, we call them just QASs.


\begin{definition}[Quantum Advantage Samplers (QASs) on $\Sigma$]
    Let $\Sigma\subseteq\mathbb{N}$ be a set.
    Let $\cA$ be a QPT algorithm that takes $1^\secp$ as input and outputs a classical string.
    $\cA$ is a quantum advantage sampler (QAS) on $\Sigma$ 
    if the following is satisfied: There exists a polynomial $p$ such that 
    for any PPT algorithm $\cB$ (that takes $1^\secp$ as input and outputs a classical string) 
    there exists $\secp^*\in\mathbb{N}$ such that
    \begin{equation}
        \SD(\cA(1^\secp),\cB(1^\secp)) > \frac{1}{p(\secp)}
    \end{equation}
    holds
    for all $\secp\ge\secp^*$ in $\Sigma$.
\end{definition}

\begin{remark}
For any finite set $\Sigma$, QASs on $\Sigma$ always exist, 
but 
we include the case in the definition for the convenience.
\end{remark}

We also define interactive versions of QASs, which we call Int-QASs, as follows.
\begin{definition}[Interactive Quantum Advantage Samplers (Int-QASs)]\label{def:IntQAS}
    Let $(\cA,\cC)$ be a tuple of two interactive QPT algorithms $\cA$ and $\cC$ that communicate over a classical channel. 
    $(\cA,\cC)$ is an interactive quantum advantage sampler (Int-QAS) if
    the following is satisfied:
    There exists a polynomial $p$ such that for any PPT algorithm $\cB$ that interacts with $\cC$,
    \begin{equation}     
    \SD(\langle\cA,\cC\rangle(1^\secp),\langle\cB,\cC\rangle(1^\secp)) > \frac{1}{p(\secp)}
    \end{equation}
    holds for all sufficiently large $\secp\in\mathbb{N}$.
    Here, $\langle\cA,\cC\rangle(1^\secp)$ (resp. $\langle\cB,\cC\rangle(1^\secp)$) is the probability distribution over the transcript of the interaction between $\cA$ (resp. $\cB$) and $\cC$.
\end{definition}

\subsection{Relation Between QASs and Sampling Complexity Classes}
We can show that the existence of QASs implies $\mathbf{SampBPP}\neq\mathbf{SampBQP}$. 
\begin{lemma}\label{lem:QAS_to_SampAd}
Let $\Sigma\subseteq\mathbb{N}$ be an infinite subset.
If QASs on $\Sigma$ exist, then $\mathbf{SampBPP}\neq\mathbf{SampBQP}$. 
\end{lemma}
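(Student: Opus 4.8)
The plan is to exhibit a single sampling problem that lies in $\mathbf{SampBQP}$ but not in $\mathbf{SampBPP}$, built directly from a QAS $\cA$ on $\Sigma$. First I would normalize $\cA$: without loss of generality we may assume that $\cA(1^\secp)$ always outputs a string of a fixed length $q(\secp)$ for some polynomial $q$, since composing $\cA$ with a standard injective padding (append a delimiter $1$ followed by enough $0$s to reach length $q(\secp)+1$) changes neither its quantum samplability nor its being a QAS on $\Sigma$, because statistical distance is preserved under a fixed injective post-processing. I would then define the sampling problem $S=\{D_x\}_{x\in\bit^*}$ by $D_x\coloneqq\cA(1^{|x|})$, which is a distribution over $\bit^{q(|x|)}$ and hence a polynomially-bounded sampling problem in the sense of \Cref{def:Samplingproblems}.

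The containment $S\in\mathbf{SampBQP}$ is immediate: the required QPT sampler, on input $(x,1^{\lfloor 1/\epsilon\rfloor})$, discards $\epsilon$ and everything about $x$ except its length, runs $\cA(1^{|x|})$ in time $\poly(|x|)$, and outputs the result, which is distributed exactly as $D_x$; thus the statistical distance to $D_x$ is $0\le\epsilon$ for every $\epsilon>0$. For $S\notin\mathbf{SampBPP}$ I would argue by contradiction. Suppose a PPT algorithm $\cB'$ witnesses $S\in\mathbf{SampBPP}$, and let $p$ be the polynomial guaranteed by the QAS property of $\cA$. Define a PPT algorithm $\cB$ that, on input $1^\secp$, runs $\cB'(1^\secp,1^{\lceil 2p(\secp)\rceil})$ and outputs whatever it outputs; since $\secp+\lceil 2p(\secp)\rceil=\poly(\secp)$, this $\cB$ runs in probabilistic polynomial time. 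Applying the $\mathbf{SampBPP}$ guarantee with $x=1^\secp$ and $\epsilon=1/\lceil 2p(\secp)\rceil$ gives $\SD(\cB(1^\secp),D_{1^\secp})\le 1/\lceil 2p(\secp)\rceil\le 1/(2p(\secp))$, and since $D_{1^\secp}=\cA(1^\secp)$ we conclude $\SD(\cA(1^\secp),\cB(1^\secp))\le 1/(2p(\secp))<1/p(\secp)$ for all $\secp\in\mathbb{N}$. As $\Sigma$ is infinite there are arbitrarily large $\secp\in\Sigma$, so this contradicts the defining property of the QAS $\cA$, which requires $\SD(\cA(1^\secp),\cB(1^\secp))>1/p(\secp)$ for all sufficiently large $\secp\in\Sigma$. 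Hence $S\notin\mathbf{SampBPP}$, and together with $S\in\mathbf{SampBQP}$ this yields $\mathbf{SampBPP}\neq\mathbf{SampBQP}$.

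This argument is essentially routine, so I do not expect a real obstacle; the only points requiring a little care are the fixed-output-length normalization of $\cA$ and the quantifier structure of $\mathbf{SampBPP}$, which allows arbitrarily small $\epsilon$ (even inverse-superpolynomial ones) — but the reduction only ever instantiates $\epsilon$ at the $1/\poly(\secp)$ value $1/\lceil 2p(\secp)\rceil$, so the extracted adversary $\cB$ stays genuinely PPT in $\secp$. One could alternatively let $D_x$ depend on $x$ rather than only $|x|$, but this is unnecessary since the hard instances are already the unary strings $x=1^\secp$.
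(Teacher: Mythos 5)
Your proof is correct and follows essentially the same route as the paper's: take the sampling problem induced by $\cA(1^{\secp})$, note it is trivially in $\mathbf{SampBQP}$, and use the QAS property (with $\epsilon$ instantiated at an inverse polynomial) to rule out any $\mathbf{SampBPP}$ sampler. Your write-up is in fact slightly more careful than the paper's on two minor points — the fixed-output-length normalization needed to match \Cref{def:Samplingproblems}, and the explicit conversion of the $\mathbf{SampBPP}$ sampler $\cB'(x,1^{\lfloor 1/\epsilon\rfloor})$ into a plain PPT algorithm $\cB(1^\secp)$ before invoking the QAS guarantee — both of which the paper leaves implicit.
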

\begin{proof}[Proof of \cref{lem:QAS_to_SampAd}]
    Let $\cA$ be a QAS on an infinite subset $\Sigma\subseteq\mathbb{N}$ and let $\cA(1^\secp)$ be the output distribution of $\cA$ on input $1^\secp$.
    Consider the collection $\{\cA(1^\secp)\}_{\secp\in\Sigma}$ of distributions.
   Then clearly $\{\cA(1^\secp)\}_{\secp\in\Sigma}\in\mathbf{SampBQP}$.
     What we need to show is that $\{\cA(1^\secp)\}_{\secp\in\Sigma}\notin\mathbf{SampBPP}$.
     Because $\cA$ is a QAS on $\Sigma$, 
    there exists a polynomial $p$
    such that for any PPT algorithm $\cB$, 
    there exists $\secp^*\in\mathbb{N}$ such that
    $\SD(\cA(1^\secp),\cB(1^\secp))>\frac{1}{p(\secp)}$
    for all $\secp\ge\secp^*$ in $\Sigma$.
    Because $\Sigma$ is an infinite set, there exists $x\in\Sigma$
    such that $x\ge\secp^*$.
    Then 
    for any PPT algorithm $\cB$, there exists $x$ and $1/p(x)$
    such that
    $\SD(\cA(1^x),\cB(1^x))>\frac{1}{p(x)}$,
    which means that $\{\cA(1^\secp)\}_{\secp\in\Sigma}\not\in\mathbf{SampBPP}$.
\end{proof}

\begin{remark}
Note that the other direction, namely, $\mathbf{SampBPP}\neq\mathbf{SampBQP}$ implies the existence of QASs, does not seem to hold, because of the following reason:
Assume that $\mathbf{SampBPP}\neq\mathbf{SampBQP}$. Then there exists a
sampling problem $\{D_x\}_x$ that is in $\mathbf{SampBQP}$ but not in $\mathbf{SampBPP}$.
The fact that $\{D_x\}_x\not\in\mathbf{SampBPP}$ means that
for any PPT algorithm $\cB$, there exist $x$ and $\epsilon>0$ such that
\begin{align}
\label{mitasu}
\SD(D_x,\cB(x,1^{\lfloor 1/\epsilon\rfloor}))>\epsilon.
\end{align}
This does not necessarily mean that a QPT algorithm $\cA$ that samples $\{D_x\}_x$
is a QAS. For example, 
$\epsilon$ in \cref{mitasu} could be $2^{-|x|}$.
\end{remark}

\subsection{Non-Interactive IV-PoQ Imply QASs}
In this subsection, we show the following.

\begin{lemma}\label{lem:non-int-IV-PoQ_QAS}
Let $\Sigma\subseteq\mathbb{N}$ be an infinite subset.
    If non-interactive IV-PoQ on $\Sigma$ exist, then QASs on $\Sigma$ exist.
\end{lemma}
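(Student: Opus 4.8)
The plan is to show that the honest prover of the given non-interactive IV-PoQ, viewed as a sampler of the transcript, is already a QAS on $\Sigma$. Recall that in a non-interactive IV-PoQ $(\cP,\cV_2)$ on $\Sigma$ the prover $\cP$ is a QPT algorithm that on input $1^\secp$ outputs a classical string $\tau$ (the transcript) and $\cV_2$ is unbounded; a fixed message from the verifier, such as $1^\secp$, plays no role and can be dropped, so $\cP$ is literally a QPT algorithm taking $1^\secp$ and outputting a classical string. Let $c$ and $s$ be the completeness and soundness functions, with $c(\secp)-s(\secp)\ge 1/q(\secp)$ for some polynomial $q$ and all sufficiently large $\secp$. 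I claim that $\cA:=\cP$ is a QAS on $\Sigma$, witnessed by the polynomial $p:=2q$.

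First I would suppose toward a contradiction that $p=2q$ fails to witness that $\cP$ is a QAS on $\Sigma$. Unfolding the negation of the definition, there is a PPT algorithm $\cB$ and an infinite set $T\subseteq\Sigma$ such that $\SD(\cP(1^\secp),\cB(1^\secp))\le 1/p(\secp)$ for every $\secp\in T$. Next I would plug $\cB$ in directly as a malicious PPT prover $\cP^*:=\cB$ in the soundness game. By the standard fact that no procedure (even unbounded, such as $\cV_2$) distinguishes two distributions with advantage exceeding their statistical distance, for every $\secp\in T$ we have
\[
\Pr[\top\gets\cV_2(1^\secp,\tau):\tau\gets\cB(1^\secp)]\;\ge\;\Pr[\top\gets\cV_2(1^\secp,\tau):\tau\gets\cP(1^\secp)]-\SD(\cP(1^\secp),\cB(1^\secp)).
\]
For all sufficiently large $\secp$ the first term on the right is at least $c(\secp)$ by $c$-completeness, and the second is at most $1/p(\secp)=1/(2q(\secp))\le \tfrac12\bigl(c(\secp)-s(\secp)\bigr)$; hence for all sufficiently large $\secp\in T$,
\[
\Pr[\top\gets\cV_2(1^\secp,\tau):\tau\gets\cB(1^\secp)]\;\ge\; c(\secp)-\tfrac12\bigl(c(\secp)-s(\secp)\bigr)=\tfrac12\bigl(c(\secp)+s(\secp)\bigr)\;>\;s(\secp),
\]
using $c(\secp)>s(\secp)$. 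Since $T\subseteq\Sigma$ is infinite, there are infinitely many $\secp\in\Sigma$ (in particular, beyond any threshold) at which the acceptance probability of the PPT prover $\cP^*=\cB$ exceeds $s(\secp)$, contradicting $s$-soundness on $\Sigma$. Therefore $\cP$ is a QAS on $\Sigma$.

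I do not expect a genuine obstacle here; the argument is a one-line reduction from soundness plus completeness. The only point that needs care is matching the quantifier structure: the negation of ``$\cA$ is a QAS on $\Sigma$ with polynomial $p$'' yields a single PPT $\cB$ that is $1/p$-close to $\cA$ for \emph{infinitely many} $\secp\in\Sigma$, and this is exactly what is needed to violate soundness, which forbids the acceptance bound from being exceeded at all but finitely many $\secp\in\Sigma$. The choice $p=2q$ merely leaves a $\tfrac12(c-s)$ slack so that $c(\secp)-1/p(\secp)$ stays strictly above $s(\secp)$.
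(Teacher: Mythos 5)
Your proposal is correct and follows essentially the same route as the paper: take the honest prover as the candidate QAS with witness polynomial $2q$, negate the QAS condition to obtain a PPT sampler that is $1/(2q)$-close on infinitely many $\secp\in\Sigma$, and use the statistical-distance bound against the unbounded verifier to contradict $s$-soundness. The quantifier handling (infinitely many $\secp\in\Sigma$ sufficing to break soundness) matches the paper's argument exactly.
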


\begin{proof}[Proof of \Cref{lem:non-int-IV-PoQ_QAS}]
Let $(\cP,\cV_1,\cV_2)$ be a non-interactive IV-PoQ on $\Sigma$ with $c$-completeness and $s$-soundness such that
$c(\secp)-s(\secp)\ge\frac{1}{q(\secp)}$ for some polynomial $q$. 
$\cP$ takes $1^\secp$ as input and outputs $\tau$. $\cV_1$ does nothing. 
$\cV_2$ takes $1^\secp$ and $\tau$ as input and outputs $\top/\bot$.
Then, we can show that $\cP$ is a QAS on $\Sigma$.
For the sake of contradiction, assume that 
$\cP$ is not a QAS on $\Sigma$.
Then, for any polynomial $p$, there exists a PPT algorithm $\cP^*_p$ such that 
\begin{align}
    \SD(\cP(1^\secp),\cP^*_p(1^\secp)) \le \frac{1}{p(\secp)}
\end{align}
holds for infinitely many $\secp\in\Sigma$.
$\cP^*_p$ can break the $s$-soundness of the non-interactive IV-PoQ.
In fact, if we take $p=2q$,
\begin{align}
    \Pr[\top\gets\cV_2(1^\secp,\tau):\tau\gets\cP^*_{2q}(1^\secp)] 
    &\ge \Pr[\top\gets\cV_2(1^\secp,\tau):\tau\gets\cP(1^\secp)] - \SD(\cP(1^\secp),\cP^*_{2q}(1^\secp)) \\ 
    &\ge c(\secp) - \frac{1}{2q(\secp)}\\
    &>s(\secp)
\end{align}
holds for infinitely many $\secp\in\Sigma$, 
which breaks $s$-soundness.
\end{proof}

\subsection{QASs Imply Non-Interactive IV-PoQ}
The goal of this subsection is to show the following.
\begin{lemma}\label{lem:QAS_to_IV-PoQ}
    Let $\Sigma\subseteq\mathbb{N}$ be an infinite subset.
    If QASs on $\Sigma$ exist, then non-interactive IV-PoQ on $\Sigma$ exist.
\end{lemma}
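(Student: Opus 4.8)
The plan is to turn the sampling advantage of a QAS $\cQ$ on $\Sigma$ into the \emph{search} version of that advantage, realized as a non-interactive IV-PoQ whose verifier checks a time-bounded Kolmogorov-complexity condition --- essentially the average-case, time-bounded adaptation of one direction of Aaronson's equivalence \cite{Aar14}. Let $p$ be a polynomial witnessing that $\cQ$ is a QAS on $\Sigma$, and WLOG assume $\cQ(1^\secp)$ outputs strings of length $m(\secp)\ge\secp$ (pad with a fixed string; this does not affect QAS-ness). I would fix $N(\secp):=p(\secp)^2\secp$, $\beta(\secp):=\ceil{\log(8p(\secp))}$, and the time bound $t(n):=2^{2^n}$ (time-constructible, and satisfying both $t(n)=\Omega(n^2)$ as needed for \Cref{lem:Kolm_prob} and $t(n)=2^{\omega(\poly(n))}$ as needed for \Cref{lem:Kolm_coding}). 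The prover $\cP(1^\secp)$ runs $\cQ(1^\secp)$ independently $N(\secp)$ times and sends $\tau=(y_1,\dots,y_{N(\secp)})$; $\cV_1$ does nothing; the unbounded $\cV_2(1^\secp,\tau)$ rejects unless $|\tau|=m(\secp)N(\secp)$, and otherwise computes $p_\tau:=\Pr[\tau\gets\cQ(1^\secp)^{\otimes N(\secp)}]$ and $K^t_U(\tau)$ by brute force --- both finite and computable, since by the proof of \Cref{lem:Kolm_prob} one has $K^t_U(\tau)<\infty$ for all strings of this length once $\secp$ is large --- and accepts iff $K^t_U(\tau)\ge\log\frac1{p_\tau}-\beta(\secp)$. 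Set $c(\secp):=1-2^{-\beta(\secp)}$ and $s(\secp):=1-\frac1{4p(\secp)}$, so $c(\secp)-s(\secp)\ge\frac1{8p(\secp)}=1/\poly(\secp)$.

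Completeness is immediate from \Cref{lem:Kolm_prob} applied to $\cD_\secp:=\cQ(1^\secp)^{\otimes N(\secp)}$ (the honest prover's output distribution, over $m(\secp)N(\secp)\ge\secp$ bits, so $p_\tau=\Pr[\tau\gets\cP(1^\secp)]$) with $k:=\beta(\secp)$ (the proof of that lemma applies verbatim for $k$ depending on $\secp$): for all large $\secp$, the honest transcript satisfies the acceptance inequality with probability $\ge 1-2^{-\beta(\secp)}=c(\secp)$.

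For $s$-soundness on $\Sigma$, I would argue by contradiction: suppose some PPT $\cP^*$, WLOG always outputting length-$m(\secp)N(\secp)$ strings, is accepted with probability $>s(\secp)$ for infinitely many $\secp\in\Sigma$. Write $q_\tau:=\Pr[\tau\gets\cP^*(1^\secp)]$ and let $\Good$ be the accepting set. By \Cref{lem:Kolm_coding} applied to $\cP^*$, there is a constant $c_0$ (depending only on $\cP^*$) with $K^t_U(\tau)\le\log\frac1{q_\tau}+c_0\log\secp$ for all large $\secp$ and all $\tau\in\Supp(\cP^*(1^\secp))$; combined with $\tau\in\Good$ this gives $q_\tau\le\gamma\, p_\tau$ where $\gamma:=2^{\beta(\secp)+c_0\log\secp}=\poly(\secp)$. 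Now let $\cB$ be the PPT sampler that runs $\tau=(y_1,\dots,y_N)\gets\cP^*(1^\secp)$, picks $i\in[N]$ uniformly, and outputs $y_i$; its output distribution is the average of the $N$ marginals of $q$, whereas the same average for the honest distribution $\cQ(1^\secp)^{\otimes N}$ is exactly $\cQ(1^\secp)$. Splitting $q=q|_{\Good}+q|_{\Good^c}$: the $\Good^c$ part perturbs the average marginal by at most $1-q(\Good)<\frac1{4p(\secp)}$ in statistical distance; and the normalized $\tilde\nu:=q|_{\Good}/q(\Good)$ satisfies $\tilde\nu\le\frac{\gamma}{s(\secp)}\,\cQ(1^\secp)^{\otimes N}$ pointwise, so $D_{\mathrm{KL}}(\tilde\nu\,\|\,\cQ(1^\secp)^{\otimes N})\le\log\frac{\gamma}{s(\secp)}=O(\log\secp)$, and by superadditivity of KL-divergence over the coordinates of a product distribution, together with Pinsker and Cauchy--Schwarz, the average of the marginals of $\tilde\nu$ is within $O(\sqrt{\log\secp/N(\secp)})$ of $\cQ(1^\secp)$. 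Hence $\SD(\cB(1^\secp),\cQ(1^\secp))\le O(\sqrt{\log\secp/N(\secp)})+\frac1{4p(\secp)}<\frac1{p(\secp)}$ for all sufficiently large $\secp$ (using $N(\secp)=p(\secp)^2\secp$), and as this holds for infinitely many $\secp\in\Sigma$ it contradicts $\cQ$ being a QAS on $\Sigma$ with polynomial $p$. So no PPT prover breaks $s$-soundness on $\Sigma$, and $(\cP,\cV_1,\cV_2)$ is a non-interactive IV-PoQ on $\Sigma$.

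The main obstacle is exactly this soundness analysis, which is where the proof diverges from Aaronson's worst-case, arbitrary-precision equivalence. Two things need care. First, a cheating prover only needs acceptance probability above $s$, so it may emit garbage otherwise; to keep that garbage from swamping the derived sampler one must drive $s$ (and hence $c$ via $\beta$, and hence the repetition count $N$) toward $1$ at a rate dictated by the QAS polynomial $p$, so the constructed IV-PoQ depends on $p$ --- non-constructive but harmless for an existence claim. Second, a merely polynomial pointwise domination $q_\tau\le\gamma p_\tau$ on the accepting set does not make $q$ close to $\cQ(1^\secp)^{\otimes N}$ (it can concentrate on a $1/\gamma$-measure sub-event), so one really needs the averaging over the $N$ independent coordinates to smooth the discrepancy, quantified through KL-divergence/Pinsker rather than a crude hybrid, in order to beat $1/p(\secp)$ in the end. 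By contrast, all the delicacy of time-bounded (otherwise uncomputable) Kolmogorov complexity is already packaged into \Cref{lem:Kolm_prob,lem:Kolm_coding} with $t(n)=2^{2^n}$, so that ingredient is routine.
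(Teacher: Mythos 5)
Your proof is correct and takes essentially the same route as the paper: the paper's own argument (via its \cref{lem:QAS_IVPoQ}) also has the prover send $N$ independent samples of the QAS, accepts iff the time-bounded prefix Kolmogorov complexity exceeds $\log(1/p_\tau)$ minus a small slack, gets completeness from \cref{lem:Kolm_prob} (Kraft plus Markov), and gets soundness by combining \cref{lem:Kolm_coding} with the acceptance condition to obtain pointwise domination $q_\tau\le\poly(\secp)\,p_\tau$ on the accepting set and then applying KL superadditivity over coordinates, Pinsker, and averaging of marginals. The only differences are cosmetic parameter choices (the paper uses slack $\log^2\secp$ and $N=\max(q^4,\secp)$, giving completeness $1-\secp^{-\log\secp}$) and that the paper phrases the calculation with density operators and factors the construction through an auxiliary lemma.
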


We can directly show \Cref{lem:QAS_to_IV-PoQ}. However, we first show the following lemma,
\cref{lem:QAS_IVPoQ}, because it is more convenient for our later proofs. 
We then show that if $\cA$ in 
\cref{lem:QAS_IVPoQ} is a QAS, then
$(\cP,\cV)$ in \cref{lem:QAS_IVPoQ} is actually a non-interactive IV-PoQ, which shows \cref{lem:QAS_to_IV-PoQ}.
\begin{lemma}\label{lem:QAS_IVPoQ}
    Let $\cA$ be a QPT algorithm that takes $1^\secp$ as input and outputs a classical string.
    For any polynomial $q$, there exists the following non-interactive protocol $(\cP,\cV)$:
    \begin{itemize}
    \item 
    $\cP$ is a QPT algorithm that, on input $1^\secp$, outputs a bit string $\tau$.
    $\cV$ is an unbounded algorithm that, on input $1^\secp$ and a bit string $\tau$, outputs $\top/\bot$.
        \item For all sufficiently large $\secp\in\mathbb{N}$,
        \begin{align}
            \Pr[\top\gets\cV(1^\secp,\tau):\tau\gets\cP(1^\secp)] \ge 1-\frac{1}{\secp^{\log\secp}}.
        \end{align}
        \item If there exist a PPT algorithm $\cP^*$ and an infinite subset $\Lambda\subseteq\mathbb{N}$ such that
        \begin{align}
            \Pr[\top\gets\cV(1^\secp,\tau):\tau\gets\cP^*(1^\secp)] \ge 1-\frac{1}{\secp^{\log\secp}}-\frac{1}{q(\secp)^2}
        \end{align}
        holds for all $\secp\in\Lambda$, then there exists a PPT algorithm $\cB$ such that
        \begin{align}
            \SD(\cA(1^\secp),\cB(1^\secp)) \le \frac{1}{q(\secp)}
        \end{align}
        holds for all sufficiently large $\secp\in\Lambda$.
    \end{itemize}
\if0
\takashi{
This statement requires that a single protocol $(\cP,\cV)$ works for any polynomial $q$, but in the proof, $(\cP,\cV)$ depends on $N$, and $N$ depends on $q$, so this doesn't seem to be proven.
I think this only shows a weaker statement where $(\cP,\cV)$ is allowed to depend on $q$. Fortunately, this seems to be sufficient for the proof of Lemma 3.6.
}
\shira{fixed}
\fi
\end{lemma}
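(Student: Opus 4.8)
The plan is to implement the Kolmogorov-complexity test from \cref{sec:overview}. Fix the polynomial $q$, and assume without loss of generality that $q(\secp)\ge 2$ for all large $\secp$ (the case $q\equiv 1$ is trivial since $\SD\le 1$, and the other cases only require enlarging $N$ below). Set $t(n):=2^{2^n}$, $k(\secp):=\ceil{(\log\secp)^2}$, and $N(\secp):=\ceil{q(\secp)^2(\log\secp)^3}$, and note that $t$ is time-constructible with $t(n)=\Omega(n^2)$ and $t(n)=2^{\omega(\poly(n))}$, so \emph{both} \cref{lem:Kolm_prob} and \cref{lem:Kolm_coding} apply with this $t$. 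The prover $\cP(1^\secp)$ runs $\cA(1^\secp)$ independently $N(\secp)$ times to get $y_1,\dots,y_{N(\secp)}$ and outputs $\tau:=(y_1,\dots,y_{N(\secp)})$, padded so that $|\tau|=m(\secp)\ge\secp$ (padding perturbs Kolmogorov complexities only by $O(\log\secp)$, which is absorbed below). Writing $\cD_\secp$ for the output distribution of $\cP(1^\secp)$, the verifier $\cV(1^\secp,\tau)$ rejects if $\tau\notin\Supp(\cD_\secp)$, and otherwise brute-force computes $K^t_U(\tau)$ and $\Pr[\tau\gets\cD_\secp]$ and accepts iff $K^t_U(\tau)\ge\log\tfrac{1}{\Pr[\tau\gets\cD_\secp]}-k(\secp)$; this is a well-defined unbounded deterministic algorithm, and $\cP$ is QPT since $N$ is a polynomial computable from $1^\secp$. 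Completeness is then immediate from \cref{lem:Kolm_prob} applied to $\cD_\secp$ (a distribution over $m(\secp)\ge\secp$ bits) with parameter $k(\secp)$: for all sufficiently large $\secp$,
\begin{align}
\Pr_{\tau\gets\cD_\secp}\left[K^t_U(\tau)\ge\log\tfrac{1}{\Pr[\tau\gets\cD_\secp]}-k(\secp)\right]\ \ge\ 1-2^{-k(\secp)}\ \ge\ 1-\secp^{-\log\secp},
\end{align}
which is exactly the claimed completeness bound.

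For the third item, suppose a PPT $\cP^*$ and an infinite $\Lambda\subseteq\mathbb{N}$ satisfy $\Pr[\top\gets\cV(1^\secp,\tau):\tau\gets\cP^*(1^\secp)]\ge 1-\epsilon(\secp)$ for all $\secp\in\Lambda$, where $\epsilon(\secp):=\secp^{-\log\secp}+q(\secp)^{-2}$; we may assume $\cP^*$ always outputs $m(\secp)$ bits (any other output is rejected). Put $P:=\cP^*(1^\secp)$, $p:=\cA(1^\secp)$, $Q:=\cD_\secp=p^{\otimes N(\secp)}$ (ignoring padding), and let $A$ be the accepting set, so $A\subseteq\Supp(Q)$ and $P(A)\ge 1-\epsilon(\secp)$. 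By \cref{lem:Kolm_coding} applied to $\cP^*$ with the same $t$, there is a constant $c_0$ with $K^t_U(\tau)\le\log\tfrac{1}{P(\tau)}+c_0\log\secp$ for all large $\secp$ and all $\tau\in\Supp(P)$; combining with the acceptance rule, every $\tau\in T:=A\cap\Supp(P)$ satisfies $\log\tfrac1{Q(\tau)}-k(\secp)\le K^t_U(\tau)\le\log\tfrac1{P(\tau)}+c_0\log\secp$, hence $P(\tau)\le 2^{\beta(\secp)}Q(\tau)$ with $\beta(\secp):=k(\secp)+c_0\log\secp=O((\log\secp)^2)$. Since $P(T)\ge 1-\epsilon(\secp)$, the distribution $P'$ obtained by conditioning $P$ on $T$ satisfies $\SD(P,P')\le\epsilon(\secp)$ and, for large $\secp\in\Lambda$ (where $\epsilon(\secp)\le\tfrac12$), $P'(\tau)\le 2^{\beta(\secp)+1}Q(\tau)$ for every $\tau$; consequently $D(P'\,\|\,Q)\le\beta(\secp)+1$.

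To finish, I use that $Q=p^{\otimes N(\secp)}$: by the chain rule for KL divergence and subadditivity of Shannon entropy, $D(P'\,\|\,Q)\ge\sum_{i=1}^{N(\secp)}D(P'_i\,\|\,p)$ for the coordinate marginals $P'_i$, so by convexity of KL divergence $D\bigl(\tfrac{1}{N(\secp)}\sum_i P'_i\,\|\,p\bigr)\le\tfrac{\beta(\secp)+1}{N(\secp)}$, and Pinsker's inequality yields $\SD\bigl(\tfrac{1}{N(\secp)}\sum_i P'_i,\,p\bigr)\le\sqrt{\tfrac{\beta(\secp)+1}{2N(\secp)}}$. Let $\cB$ be the PPT algorithm that runs $\cP^*(1^\secp)$, parses the output into $N(\secp)$ blocks, and outputs a uniformly random one; its output distribution is $\bar P:=\tfrac{1}{N(\secp)}\sum_i P_i$. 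Since selecting a random block is a channel, $\SD\bigl(\bar P,\tfrac{1}{N(\secp)}\sum_i P'_i\bigr)\le\SD(P,P')\le\epsilon(\secp)$, so
\begin{align}
\SD(\cA(1^\secp),\cB(1^\secp))\ =\ \SD(p,\bar P)\ \le\ \epsilon(\secp)+\sqrt{\tfrac{\beta(\secp)+1}{2N(\secp)}}
\end{align}
for all sufficiently large $\secp\in\Lambda$. With $N(\secp)=\ceil{q(\secp)^2(\log\secp)^3}$ and $\beta(\secp)=O((\log\secp)^2)$, the square-root term is $\le\tfrac{1}{4q(\secp)}$ for large $\secp$ (as $(\log\secp)^3$ eventually dominates $8(\beta(\secp)+1)$), $\secp^{-\log\secp}\le\tfrac{1}{4q(\secp)}$, and $q(\secp)^{-2}\le\tfrac{1}{2q(\secp)}$ since $q(\secp)\ge2$; adding these gives $\SD(\cA(1^\secp),\cB(1^\secp))\le\tfrac{1}{q(\secp)}$, as required. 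The main obstacle is precisely this soundness part: one must choose a single time bound (here $2^{2^n}$) that makes both Kolmogorov lemmas applicable, and — more importantly — convert the \emph{one-sided} comparison $P(\tau)\le 2^{\beta(\secp)}Q(\tau)$, which holds only on the accepting set, into a bound on the averaged single-copy marginal; this is where conditioning on $T$ (to promote the comparison to an everywhere-valid pointwise bound and hence a KL bound), the product structure of $Q$ via subadditivity of entropy, Pinsker's inequality, and a polynomially large $N$ must be combined carefully.
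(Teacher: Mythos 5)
Your proof is correct and follows essentially the same route as the paper's: the same Kolmogorov-complexity acceptance test, completeness via \cref{lem:Kolm_prob}, and a soundness argument that conditions the cheating prover's distribution on the accepting set, derives a pointwise likelihood-ratio bound from \cref{lem:Kolm_coding}, applies the KL superadditivity for product reference distributions (the paper's \cref{lem:KL_marg}) together with Pinsker's inequality, and finishes with the random-coordinate sampler $\cB$. The only differences are cosmetic — the paper phrases the argument in terms of density operators and trace distance and uses $N=q^4$ rather than $q^2(\log\secp)^3$ — so no further comment is needed.
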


\begin{proof}[Proof of \Cref{lem:QAS_IVPoQ}]
    We use the technique~\cite{Aar14} of constructing a search problem from a sampling problem.
    Let $\cA(1^\secp)\to\bit^{m(\secp)}$ be a QPT algorithm and $p_y:=\Pr[y\leftarrow\cA(1^\secp)]$ be the probability that $\cA(1^\secp)$ outputs $y\in\bit^{m(\secp)}$, where $m$ is a polynomial.
    
    For any polynomial $q$, we construct a non-interactive protocol $(\cP,\cV)$ as follows.
    \begin{enumerate}
        \item  
        If $q(\secp)^4\ge\secp$, set $N(\secp):=q(\secp)^4$. Otherwise, set $N(\secp):=\secp$.
        $\cP$ runs $y_i\gets\cA(1^\secp)$ for each $i\in[N]$, 
        and sends $(y_1,...,y_N)$ to $\cV$.
        \item $\cV$ outputs $\top$ if 
    \begin{align}\label{eq:answer_set}
        K^T_U(y_1,...,y_N) \ge \log\frac{1}{p_{y_1}\cdots p_{y_N}} - \log^2\secp,
     \end{align}
    where $K_U^T$ denotes time-bounded prefix Kolmogorov complexity with time-bound $T(mN):=2^{2^{mN}}$ and the universal self-delimiting machine $U$.
    Otherwise $\cV$ outputs $\bot$.
    Note that given $(y_1,...,y_N)$, whether $(y_1,...,y_N)$ satisfies \cref{eq:answer_set} or not can be checked in an unbounded time, because $K_U^T$ is computable.
    \end{enumerate}

    We first show the second item of the lemma. 
    From \Cref{lem:Kolm_prob},
    \begin{equation}
        \Pr_{(y_1,...,y_N)\gets\cA(1^\secp)^{\otimes N}} \left[ K^T_U(y_1,...,y_N) \ge \log\frac{1}{p_{y_1}\cdots p_{y_N}} - \log^2\secp \right] \ge 1-\frac{1}{\secp^{\log\secp}}
    \end{equation}
    for all sufficiently large $\secp\in\mathbb{N}$,
    where $\cA(1^\secp)^{\otimes N}$ means that $\cA(1^\secp)$ is executed $N$ times.
    Here we have used the facts that $m(\secp)N(\secp)\ge\secp$ for all $\secp\in\mathbb{N}$ and $T(n)=2^{2^n}=\Omega(n^2)$ to apply \Cref{lem:Kolm_prob}.
    Thus,
    \begin{align}
        &\Pr[ \top\gets\cV(1^\secp,y_1,...,y_N) : (y_1,...,y_N)\gets\cP(1^\secp) ] \\
        &= \Pr_{(y_1,...,y_N)\gets\cA(1^\secp)^{\otimes N}} \left[ K^T_U(y_1,...,y_N) \ge \log\frac{1}{p_{y_1}\cdots p_{y_N}} - \log^2\secp \right] \\
        &\ge 1-\frac{1}{\secp^{\log\secp}} 
    \end{align}
    for all sufficiently large $\secp\in\mathbb{N}$. 
    
    We next show the third item of the lemma.
    We assume that there exists a PPT algorithm $\cP^*$ and an infinite subset $\Lambda\subseteq\mathbb{N}$ such that 
    \begin{align}
        \label{assump}
        \Pr[ \top\gets\cV(1^\secp,y_1,...,y_N) : (y_1,...,y_N)\gets\cP^*(1^\secp) ] \ge 1-\frac{1}{\secp^{\log\secp}}-\frac{1}{q(\secp)^2}
    \end{align}
    holds for all $\secp\in\Lambda$.
    From this $\cP^*$, we construct the following PPT algorithm $\cB$.
     \begin{itemize}
        \item $\cB(1^\secp)\to y$:
        \begin{enumerate}
            \item Take $1^\secp$ as input.
            \item Run $(y_1,...,y_N)\gets\cP^*(1^\secp)$.
            \item Sample $i\gets[N]$.
            \item Output $y:=y_i$.
        \end{enumerate}
    \end{itemize}
    Our goal is to show    
    \begin{align}
    \label{eq:N1_4}
        \SD(\cA(1^\secp),\cB(1^\secp)) \le \frac{1}{q(\secp)} 
    \end{align}
    for all sufficiently large $\secp\in\Lambda$.
   We introduce the following projection 
    \begin{align}
        \Pi := \sum_{(y_1,...,y_N): \top\gets\cV(1^\secp,y_1,...,y_N)} |y_1,...,y_N\rangle\langle y_1,...,y_N|    
    \end{align}
    and we define the following quantum states:
\begin{align} 
C &\coloneqq \sum_{y_1,...,y_N} \Pr[(y_1,...,y_N)\gets\cP^*(1^\secp)] |y_1,...,y_N\rangle\langle y_1,...,y_N| \\ 
C'&\coloneqq \frac{\Pi C}{\Tr(\Pi C)} \\ 
C_i&\coloneqq \Tr_{\lnot i} (C) = \sum_{y_1,...,y_i,...,y_N}\Pr[(y_1,...,y_i,...,y_N)\gets\cP^*(1^\secp)] |y_i\rangle\langle y_i|\\ 
C'_i&\coloneqq \Tr_{\lnot i} (C') \\ 
B&\coloneqq \mathbb{E}_{i\gets[N]} C_i \\ 
B'&\coloneqq \mathbb{E}_{i\gets[N]} C'_i \\ 
A&\coloneqq \sum_y \Pr[y\gets\cA(1^\secp)] |y\rangle\langle y|.
\end{align}
    Here, $\Tr_{\lnot i}$ refers to the operation of tracing out all coordinates except for the $i$-th coordinate.
    Then, by the triangle inequality,
    \begin{align}
        \SD(\cA(1^\secp),\cB(1^\secp)) &= \TD(A,B) 
        \le \TD(B,B') + \TD(A,B').    
    \end{align}
    In the following, we bound $\TD(B,B')$ and $\TD(A,B')$, respectively.
    
    First, we can bound $\TD(B,B')$ as follows:
    \begin{align}
        \TD(B,B') &= \TD (\mathbb{E}_i C_i , \mathbb{E}_i C'_i) \\ 
        &\le \mathbb{E}_i \TD(C_i,C'_i) \quad (\text{By the strong convexity of $\TD$.}) \\ 
        &= \mathbb{E}_i \TD (\Tr_{\lnot i}(C),\Tr_{\lnot i}(C')) \\ 
        &\le \mathbb{E}_i \TD(C,C') \\ 
        &= \TD(C,C') \\ 
        &= \frac{1}{2} \left\| C-\frac{\Pi C}{\Tr(\Pi C)} \right\|_1 \\ 
        &\le \frac{1}{2} \|(I-\Pi)C\|_1 + \frac{1}{2} \left\| \Pi C - \frac{\Pi C}{\Tr(\Pi C)} \right\|_1 \quad (\text{By the triangle inequality.}) \\ 
        &= \frac{1}{2} \Tr|(I-\Pi)C| + \frac{1}{2} \left\|(\Tr(\Pi C)-1)\frac{\Pi C}{\Tr(\Pi C)}\right\|_1 \\ 
        &= \frac{1}{2} \Tr|(I-\Pi)C| + \frac{1}{2} |\Tr(\Pi C)-1| \left\|\frac{\Pi C}{\Tr(\Pi C)}\right\|_1 \\ 
        &= \frac{1}{2} \Tr((I-\Pi)C) + \frac{1}{2} (1-\Tr(\Pi C)) \left\|\frac{\Pi C}{\Tr(\Pi C)}\right\|_1 \label{eq:trans_trace} \\
        &= 1-\Tr(\Pi C) \\
        &= 1-\sum_{(y_1,...,y_N):\top\gets\cV(1^\secp,y_1,...,y_N)} \Pr[(y_1,...,y_N)\gets\cP^*(1^\secp)] \\ 
        &\le \frac{1}{\secp^{\log\secp}} + \frac{1}{q(\secp)^2} \quad (\text{By \Cref{assump}}.) \label{eq:TD_B}
    \end{align}
    holds for all $\secp\in\Lambda$.
    To derive \Cref{eq:trans_trace}, we have used the fact that $(I-\Pi)C$ is positive semidefinite.
    
    Next, we evaluate the upper bound of $\TD(A,B')$.
    Let $c'_{i,y_i}$ be the eigenvalue of $C'_i$ with respect to the eigenstate $|y_i\rangle$. 
    Then,
    \begin{align}
        \TD(A,B') &\le \frac{1}{N} \sum_{i=1}^N \TD(C'_i,A) \\ 
        &= \frac{1}{N} \sum_{i=1}^N \SD(\{c'_{i,y_i}\}_{y_i},\{p_y\}_y) \\ 
        &\le \sqrt{ \frac{1}{N} \sum_{i=1}^N \SD(\{c'_{i,y_i}\}_{y_i},\{p_y\}_y)^2 } \quad (\text{By Jensen's inequality.}) \\ 
        &\le \sqrt{ \frac{\ln 2}{2N} \sum_{i=1}^N D_{\text{KL}} (\{c'_{i,y_i}\}_{y_i}\|\{p_y\}_y) }. \quad (\text{By Pinsker's inequality.}) \label{eq:TD_1}
    \end{align}
    Here, $D_{\text{KL}}$ denotes the KL-divergence defined as 
   \begin{align}
    D_{\text{KL}}(\{p_x\}_x\|\{q_x\}_x):=
    \left\{
    \begin{array}{ll}
    \sum_x p_x \log \frac{p_x}{q_x}& \Supp(\{p_x\}_x)\subseteq\Supp(\{q_x\}_x)\\
    \infty&\mbox{otherwise},
    \end{array}
   \right.
   \end{align} 
   where $\Supp$ is the support,
   and Pinsker's inequality is the following one (Chapter 3 of \cite{CK11}):
  \begin{align}
    \SD(\{p_x\}_{x},\{q_x\}_x) \le \sqrt{\frac{\ln 2}{2} D_{\text{KL}} (\{p_x\}_{x}\|\{q_x\}_x)}.
  \end{align} 
   
    Let $Y:=(y_1,...,y_N)$. Let $p_Y\coloneqq \prod_{i=1}^Np_{y_i}$.
    Let $q_Y$ (resp. $q'_Y$) be the eigenvalue of $C$ (resp. $C'$) with respect to the eigenstate $|Y\rangle$. 
    Define $G := \left\{ Y \middle| \top\gets\cV(1^\secp,Y) \wedge \log\frac{q'_Y}{p_Y}>0 \right\}$
    and $G' := \left\{ Y \middle| \top\gets\cV(1^\secp,Y) \right\}$.
    Since $\{c'_{i,y_i}\}_{y_i}$ is a marginal distribution of $\{q'_Y\}_Y$,
    \begin{align}
        \sum_{i=1}^N D_{\text{KL}} (\{c'_{i,y_i}\}_{y_i}\|\{p_y\}_y)
        &\le D_{\text{KL}}(\{q'_Y\}_Y\|\{p_Y\}_Y) \quad (\text{See below.}) \label{eq:KL_marg}\\ 
        &= \sum_Y q'_Y \log \frac{q'_Y}{p_Y} \\
        &= \sum_{Y:\top\gets\cV(1^\secp,Y)} q'_Y \log \frac{q'_Y}{p_Y} \\ 
        &\le \sum_{Y\in G} q'_Y \log \frac{q'_Y}{p_Y} \\ 
        &\le \frac{1}{1-\delta} \sum_{Y\in G} q_Y \left( \log\frac{q_Y}{p_Y} + \log\frac{1}{1-\delta} \right) \quad (\text{See below.}) \label{eq:q_delta} \\
        &\le \frac{1}{1-\delta} \sum_{Y\in G} q_Y \left( \log\frac{q_Y}{p_Y} + 1 \right) \quad \left(\text{If $\delta\le\frac{1}{2}$, then $\log\frac{1}{1-\delta}\le 1$.}\right) \\
        &\le \frac{1}{1-\delta} \left( \max_{Y\in G'}\log\frac{q_Y}{p_Y} + 1 \right) \sum_{Y\in G} q_Y \\
        &\le \frac{1}{1-\delta} \left( \max_{Y\in G'}\log\frac{q_Y}{p_Y} + 1 \right) \label{eq:TD_2}
    \end{align}
    holds for all sufficiently large $\secp\in\Lambda$, where $\delta\coloneqq\frac{1}{\secp^{\log\secp}}+\frac{1}{q(\secp)^2}$, which is less than $1/2$ for sufficiently large $\secp$.
    In \cref{eq:KL_marg}, we have used the following lemma.
\begin{lemma}[\cite{STOC:Rao08}]\label{lem:KL_marg}
    Let $\cR$ be a distribution over $[M]^N$, with the marginal distribution $\cR_i$ on the $i$-th coordinate. Let $\cD$ be a distribution over $[M]$. Then
    \begin{equation}
        \sum_{i=1}^N D_{\mathrm{KL}}(\cR_i\|\cD) \le D_{\mathrm{KL}}(\cR\|\cD^N).
    \end{equation}
\end{lemma}
    In \Cref{eq:q_delta}, we have used the fact that $q_Y'\le\frac{q_Y}{1-\delta}$, which can be shown as follows.
    First, when $Y\not\in G'$, $q_Y'=0$ and the inequality trivially holds.
    Second, when $Y\in G'$,
    \begin{align}
        q_Y &= \langle Y| C |Y\rangle \\ 
        &= \langle Y| \Pi C |Y\rangle \quad (\text{Because~} Y\in G'.) \\
        &= \Tr(\Pi C) \langle Y| C' |Y \rangle \quad (\text{By the definition of~}C'.) \\ 
        &= q'_Y \sum_{Y: \top\gets\cV(1^\secp,Y)} \Pr[Y\gets\cP^*(1^\secp)] \\ 
        &= q'_Y \Pr[\top\gets\cV(1^\secp,Y):(1^\secp,Y)\gets\cP^*(1^\secp)] \\ 
        &\ge q'_Y (1-\delta).
    \end{align}
    Here the last inequality is from \cref{assump}.

    By \Cref{lem:Kolm_coding}, for all sufficiently large $\secp$ and all $Y\in\Supp(\cP^*(1^\secp))$,
    \begin{align}\label{eq:Kolm_upper}
        K^T_U(Y) \le \log\frac{1}{q_Y} + O(\log\secp).
    \end{align}
    Here we have used the facts that $m(\secp)N(\secp)\ge\secp$ for all $\secp\in\mathbb{N}$ and $T(n)=2^{2^n}=2^{\omega(\poly(n))}$ to apply \Cref{lem:Kolm_coding}.
    

    By combining \Cref{eq:Kolm_upper,eq:answer_set}, we obtain
    \begin{equation}
        \log\frac{1}{p_{y_1}\cdots p_{y_N}} \le \log\frac{1}{q_Y} + O(\log\secp) + \log^2\secp
    \end{equation}
    for any $Y\in G'\cap \Supp(\cP^*(1^\secp))$.
    Thus,
    \begin{align}
        \max_{Y\in G'}\log\frac{q_Y}{p_Y} = \max_{Y\in G' \cap \Supp(\cP^*(1^\secp))}\log\frac{q_Y}{p_Y} \le O(\log\secp) + \log^2\secp. \label{eq:TD_3}
    \end{align}
    From \Cref{eq:TD_1,eq:TD_2,eq:TD_3}, 
    \begin{align}
        \TD(A,B')
        &\le \sqrt{ \frac{\ln 2(\log^2\secp+O(\log\secp))}{2(1-\delta)N} } \\ 
        &\le \sqrt{ \frac{\log^2\secp+O(\log\secp)}{N} } \quad \left(\text{If $\delta\le 1-\frac{\ln 2}{2}$, then $\frac{1}{1-\delta}\le \frac{2}{\ln 2}$.}\right) \\ 
        &\le \sqrt{ \frac{2\log^2\secp}{N} } \label{eq:TD_A}
    \end{align}
    holds for all sufficiently large $\secp\in\Lambda$.
    Therefore, from \Cref{eq:TD_B,eq:TD_A}
    \begin{align}
        \SD(\cA(1^\secp),\cB(1^\secp)) 
        \le \frac{1}{\secp^{\log\secp}} + \frac{1}{q(\secp)^2} + \sqrt{\frac{2\log^2\secp}{N}} 
        \le \frac{q(\secp)}{q(\secp)^2}
        = \frac{1}{q(\secp)}
    \end{align}
    holds for all sufficiently large $\secp\in\Lambda$.
\end{proof}

Now we show \Cref{lem:QAS_to_IV-PoQ} by arguing that if $\cA$ 
in \cref{lem:QAS_IVPoQ} is a QAS, then 
the $(\cP,\cV)$ in \cref{lem:QAS_IVPoQ} is actually a non-interactive IV-PoQ.
\begin{proof}[Proof of \Cref{lem:QAS_to_IV-PoQ}]
Let $\Sigma\subseteq \mathbb{N}$ be an infinite subset.
    Let $\cA$ be a QAS on $\Sigma$.
    Then, from the definition of QASs on $\Sigma$, there exists a polynomial $q$ such that for any PPT algorithm $\cB$ there exists $\secp^*\in\mathbb{N}$
    such that
    \begin{align}\label{eq:QAS_condition}
        \SD(\cA(1^\secp),\cB(1^\secp)) > \frac{1}{q(\secp)}
    \end{align}
    holds for all $\secp\ge\secp^*$ in $\Sigma$.
    From \cref{lem:QAS_IVPoQ}, we can construct a non-interactive protocol $(\cP,\cV)$ from the above $\cA$ and $q$. 
    We show that $(\cP,\cV)$ is actually a non-interactive IV-PoQ on $\Sigma$
    with $(1-\frac{1}{\secp^{\log\secp}})$-completeness
    and
    $(1-\frac{1}{\secp^{\log\secp}}-\frac{1}{q(\secp)^2})$-soundness.

    First, the completeness is obtained from the second item of \cref{lem:QAS_IVPoQ}.
    Next, we show the soundness.
    For the sake of contradiction, we assume that the soundness is not satisfied,
    which means that
    there exist a PPT algorithm $\cP^*$ and an infinite subset $\Lambda\subseteq\Sigma$ such that
    \begin{align}
        \Pr[\top\gets\cV(1^\secp,\tau):\tau\gets\cP^*(1^\secp)] > 1-\frac{1}{\secp^{\log\secp}} - \frac{1}{q(\secp)^2}
    \end{align}
    holds for all $\secp\in\Lambda$.
    Then, from the third item of \Cref{lem:QAS_IVPoQ}, there exists a PPT algorithm $\cB^*$ such that
    \begin{align}
        \SD(\cA(1^\secp),\B^*(1^\secp)) \le \frac{1}{q(\secp)}
    \end{align}
    holds for all sufficiently large $\secp\in\Lambda$.
    This contradicts \Cref{eq:QAS_condition} because $\Lambda$ is an infinite subset of $\Sigma$.
    Therefore, the soundness is satisfied.
\end{proof}

\if0
One may think that OWF-SampQA pairs exist if and only if OWF or SampQA exist.
In fact, ``if'' part is correct. 
As we state in the next lemma, OWF-SampQA pairs can be easily constructed if OWF or SampQA exist.
However, ``only if'' part is not so trivial.
This is due to the way security parameters are handled.
To explore this further, let us assume that both of OWF and SampQA do not exist.
Then there exist a PPT adversaries $\cA_1$ and $\cA_2$, where $\cA_1$ breaks OWF and $\cA_2$ breaks SampQA.
Let $\Sigma_1$ be a set of security parameters for which $\cA_1$ wins with high probability in the security game of OWF. 
Similarly, let $\Sigma_2$ be a set of security parameters for which the distance between $\cA_2$ and SampQA becomes small.
Then by the assumption, both of $\Sigma_1$ and $\Sigma_2$ are infinite sets.
However in this situation, there is no guarantee about the intersection $\Sigma_1\cap\Sigma_2$.
Thus, $\Sigma_1\cap\Sigma_2$ can be finite.
On the other hand, considering $\cB_1$ and $\cB_2$ which break OWF-SampQA pairs, they must succeed on each attacks for the same security parameters.
In other words, $\Sigma_1\cap\Sigma_2$ must be infinite in the attack on OWF-SampQA pairs.

Instead of ``only if'' part, we can prove some necessary conditions of OWF-SampQA pairs by relaxing how to handle security parameters.
\fi

\subsection{QASs From Quantum Advantage Assumption}
\begin{definition}[Quantum Advantage Assumption \cite{STOC:AarArk11,BreMonShe16,cryptoeprint:2024/1490}]
\label{def:supremacy}
    We say that quantum advantage assumption holds if the following is satisfied.
    
    \begin{enumerate}
    \item
    There exists a family $\cC=\{\cC_\secp\}_{\secp\in\N}$ of distributions such that
    for each $\secp\in\N$, $\cC_\secp$ is a 
    (uniform) QPT sampleable distribution
   over quantum circuits $C$ that output $\secp$-bit classical bit strings.
 \item
    There exist polynomials $p$ and $\gamma$ such that:
    \begin{enumerate}
        \item 
        For all sufficiently large $\secp\in\N$,
        \begin{align}
            \Pr_{\substack{C\gets\cC_\secp \\ x\gets\bit^\secp}} \left[\Pr[x\gets C] \ge \frac{1}{p(\secp)2^\secp}\right] \ge \frac{1}{\gamma(\secp)}.
        \end{align}
        \item 
        For any oracle $\cO$ satisfying that for all sufficiently large $\secp\in\N$,
        \begin{align}
            \Pr_{\substack{C\gets\cC_\secp \\ x\gets\bit^\secp}} \left[ |\cO(C,x)-\Pr[x\gets C] | \le \frac{\Pr[x\gets C]}{p(\secp)} \right] \ge \frac{1}{\gamma(\secp)}-\frac{1}{p(\secp)},
        \end{align}
        we have that $\mathbf{P}^{\mathbf{\# P}}\subseteq \mathbf{BPP}^\cO$.
    \end{enumerate}
    \end{enumerate}
\end{definition}
\begin{remark}
Traditionally, we required $\mathbf{P}^{\mathbf{\# P}}\subseteq\mathbf{BPP}^\cO$, but instead of it, we could consider
$\mathbf{P}^{\mathbf{\# P}}\subseteq\mathbf{BQP}^\cO$, for example, because
$\mathbf{P}^{\mathbf{\# P}}\subseteq\mathbf{BQP}^{\mathbf{NP}}$ is also unlikely.
\end{remark}

We show that QASs can be derived from the quantum advantage assumption (\cref{def:supremacy}) plus
$\mathbf{P}^{\mathbf{\# P}}\nsubseteq \mathbf{ioBPP}^{\mathbf{NP}}$.
\begin{theorem}
\label{thm:QAAtoQAS}
If the quantum advantage assumption (\cref{def:supremacy}) holds and    
$\mathbf{P}^{\mathbf{\# P}}\nsubseteq \mathbf{ioBPP}^{\mathbf{NP}}$,
then QASs exist.
\end{theorem}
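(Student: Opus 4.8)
\textbf{Overview.} The plan is to take as the candidate QAS the natural ``quantum-supremacy sampler'' attached to the family $\cC$, and to show that an efficient classical approximate sampler for it would, via Stockmeyer's approximate counting, yield an average-case multiplicative-approximation oracle for the output probabilities of the circuits in $\cC$. By \Cref{def:supremacy}(2b) such an oracle collapses $\mathbf{P}^{\mathbf{\# P}}$ into $\mathbf{BPP}^{\mathbf{NP}}$; and since the classical sampler is only guaranteed to succeed for infinitely many $\secp$, the collapse we actually get is $\mathbf{P}^{\mathbf{\# P}}\subseteq\mathbf{ioBPP}^{\mathbf{NP}}$, contradicting the hypothesis. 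Concretely, we may assume without loss of generality that $\cC_\secp$ samples a circuit $C$ by drawing $\rho\gets\bit^{t(\secp)}$ and setting $C=C_\rho$ for an efficiently computable map $\rho\mapsto C_\rho$; the QAS candidate $\cA$ is then: on input $1^\secp$, sample $\rho\gets\bit^{t(\secp)}$, run $C_\rho$ to get $x\in\bit^\secp$, and output $(\rho,x)$. Thus $\Pr[(\rho,x)\gets\cA(1^\secp)]=2^{-t(\secp)}\Pr[x\gets C_\rho]$, and $\cA$ is QPT since $\cC_\secp$ is efficiently sampleable. Reparametrizing $C\gets\cC_\secp$ as $\rho\gets\bit^{t(\secp)}$, $C=C_\rho$, the anti-concentration bound \Cref{def:supremacy}(2a) and the approximation property \Cref{def:supremacy}(2b) can equivalently be stated with the oracle taking $(\rho,x)$ as input and the distribution being $\rho\gets\bit^{t(\secp)}$, $x\gets\bit^\secp$.

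\textbf{From a classical sampler to an approximation oracle.} Suppose for contradiction that $\cA$ is not a QAS. Negating the definition, for every polynomial there is a PPT algorithm matching $\cA$ infinitely often; apply this with a sufficiently large polynomial $p'$ (one can take $p'=\Theta(p^3)$, where $p,\gamma$ are as in \Cref{def:supremacy}) to obtain a PPT algorithm $\cB$ and an infinite set $\Lambda\subseteq\mathbb{N}$ with $\SD(\cA(1^\secp),\cB(1^\secp))\le 1/p'(\secp)$ for all $\secp\in\Lambda$. Set $\delta:=\eta:=1/(3p(\secp))$. Define a (randomized) oracle $\cO$ computable in $\mathbf{BPP}^{\mathbf{NP}}$: on input $(\rho,x)$ with $|x|=\secp$, run Stockmeyer's approximate-counting algorithm (which is in $\mathbf{FBPP}^{\mathbf{NP}}$) on the poly-size circuit computing $\cB(1^\secp)$ to obtain, with failure probability $\le 2^{-\secp}$, an estimate $\widetilde q$ with $(1-\eta)\Pr[(\rho,x)\gets\cB(1^\secp)]\le\widetilde q\le(1+\eta)\Pr[(\rho,x)\gets\cB(1^\secp)]$, and output $\cO(\rho,x):=2^{t(\secp)}\widetilde q$.

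\textbf{Analysis of $\cO$ (for $\secp\in\Lambda$).} Write $D_\cA,D_\cB$ for the output distributions of $\cA(1^\secp),\cB(1^\secp)$ and $H:=\{(\rho,x):\Pr[x\gets C_\rho]\ge 1/(p(\secp)2^\secp)\}$. First, \Cref{def:supremacy}(2a) (reparametrized) gives $\Pr_{\rho,x}[H]\ge 1/\gamma(\secp)$; note $D_\cA(\rho,x)\ge 2^{-t(\secp)}/(p(\secp)2^\secp)$ on $H$. Second, a Markov-type counting argument converts the additive bound $\SD(D_\cA,D_\cB)\le 1/p'$ into a multiplicative one on heavy pairs: the number of $(\rho,x)\in H$ with $D_\cB(\rho,x)<(1-\delta)D_\cA(\rho,x)$ is at most $(1/p')\cdot p(\secp)2^{t(\secp)+\secp}/\delta$, and symmetrically for $D_\cB(\rho,x)>(1+\delta)D_\cA(\rho,x)$; hence the fraction of $(\rho,x)\in H$ on which $D_\cB$ is \emph{not} within a $(1\pm\delta)$ factor of $D_\cA$ is $O(p(\secp)/(p'(\secp)\delta))\le 1/(2p(\secp))$ by the choice of $p'$. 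Third, whenever $D_\cB(\rho,x)\in[(1-\delta)D_\cA(\rho,x),(1+\delta)D_\cA(\rho,x)]$ and Stockmeyer does not fail, $\cO(\rho,x)=2^{t(\secp)}\widetilde q$ lies within a $(1\pm\eta)(1\pm\delta)\subseteq(1\pm 1/p(\secp))$ factor of $2^{t(\secp)}D_\cA(\rho,x)=\Pr[x\gets C_\rho]$. Combining the three (and absorbing the $2^{-\secp}$ Stockmeyer failure), $\Pr_{\rho,x}\big[\,|\cO(\rho,x)-\Pr[x\gets C_\rho]|\le \Pr[x\gets C_\rho]/p(\secp)\,\big]\ge 1/\gamma(\secp)-1/p(\secp)$ for all sufficiently large $\secp\in\Lambda$. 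Thus $\cO$ satisfies the approximation property of \Cref{def:supremacy}(2b) --- not for all large $\secp$, but for all large $\secp\in\Lambda$.

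\textbf{The main obstacle: passing to $\mathbf{ioBPP}^{\mathbf{NP}}$.} The delicate point is that \Cref{def:supremacy}(2b), as stated, yields $\mathbf{P}^{\mathbf{\# P}}\subseteq\mathbf{BPP}^\cO$ only when $\cO$ is good at all large $\secp$. The resolution is that the implication in \Cref{def:supremacy}(2b), in all its instantiations and via the standard proof template (approximate counting of hard permanents/partition functions at one length solving $\mathbf{\# P}$ at a polynomially related length), is a \emph{length-wise black-box} reduction: there is a fixed machine $R$ with access to $\cO$ and an $\mathbf{NP}$ oracle, deciding a fixed $\mathbf{P}^{\mathbf{\# P}}$-complete language $L$, whose behavior on inputs of length $n$ depends on $\cO$ only at lengths $\le\poly(n)$, and whose correctness at length $n$ follows from $\cO$ being good at the corresponding security parameter $\secp(n)$. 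Hence $\cO$ being good on the infinite set $\Lambda$ makes $R^\cO$ decide $L$ correctly on an infinite set of input lengths; since $\cO$ is itself computable in $\mathbf{BPP}^{\mathbf{NP}}$, we get $L\in\mathbf{ioBPP}^{\mathbf{BPP}^{\mathbf{NP}}}=\mathbf{ioBPP}^{\mathbf{NP}}$, i.e., $\mathbf{P}^{\mathbf{\# P}}\subseteq\mathbf{ioBPP}^{\mathbf{NP}}$, contradicting the hypothesis. Therefore $\cA$ is a QAS, so QASs exist.
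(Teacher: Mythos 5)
Your overall strategy matches the paper's: argue by contradiction, use Stockmeyer's approximate counting (\cref{lem:approx_count}) on the hypothetical classical sampler, and convert statistical closeness into multiplicative closeness on most points via a Markov-type argument. That middle portion of your proof is essentially identical to the paper's \cref{lem:supremacy_QAS} (the paper's set $\Good_\secp$ is your set of ``good'' pairs, and the parameter choices $\delta=1/(3p)$, $p'=\Theta(p^3)$ mirror the paper's $q=12p^2$). The paper differs in how it packages the argument: it introduces an intermediate notion, hardness of quantum probability estimation (\cref{def:approximation}), proves that QAA plus $\mathbf{P}^{\mathbf{\# P}}\nsubseteq\mathbf{ioBPP}^{\mathbf{NP}}$ implies QPE-hardness (\cref{lem:native_assumption}, delegated to Khurana--Tomer), and then derives QASs from QPE-hardness. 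Because QPE-hardness is quantified as ``for \emph{all} sufficiently large $\secp$ the estimator fails,'' its negation hands you an estimator that succeeds for infinitely many $\secp$, and the contradiction closes cleanly without ever needing to feed an infinitely-often-good oracle back into \cref{def:supremacy}(2b).

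The genuine gap is exactly the step you flag as ``the main obstacle.'' \cref{def:supremacy}(2b) is an abstract, black-box assumption: it promises $\mathbf{P}^{\mathbf{\# P}}\subseteq\mathbf{BPP}^{\cO}$ only for oracles $\cO$ that are good \emph{for all sufficiently large} $\secp$. Your $\cO$ is good only on the infinite set $\Lambda$, and you bridge this by asserting that the reduction underlying (2b) is ``length-wise black-box'' in all its instantiations. That property is not implied by the definition as stated, and cannot be: one can complete your partial oracle to a totally-good oracle $\cO'$ (by hardwiring the true probabilities off $\Lambda$) and invoke (2b) on $\cO'$, but then $\cO'$ is no longer computable in $\mathbf{BPP}^{\mathbf{NP}}$ at the lengths outside $\Lambda$, so nothing about $\mathbf{ioBPP}^{\mathbf{NP}}$ follows unless the reduction at input length $n$ queries the oracle only at security parameters lying in $\Lambda$ --- which is precisely the unproven structural claim. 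This is not a cosmetic issue; it is the reason the theorem's hypothesis is $\mathbf{P}^{\mathbf{\# P}}\nsubseteq\mathbf{ioBPP}^{\mathbf{NP}}$ rather than $\mathbf{P}^{\mathbf{\# P}}\nsubseteq\mathbf{BPP}^{\mathbf{NP}}$, and it is the content the paper isolates into \cref{lem:native_assumption} and attributes to a concrete proof (Theorem 5.1 of Khurana--Tomer) rather than to the definition itself. To repair your write-up you would either need to strengthen \cref{def:supremacy}(2b) to explicitly posit a length-wise reduction, or restructure as the paper does so that the infinitely-often quantifier is absorbed into an intermediate average-case hardness statement whose proof from QAA is established separately. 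A secondary, more minor point: (2b)'s oracle takes $(C,x)$ as input while yours takes $(\rho,x)$; since $\rho\mapsto C_\rho$ need not be efficiently invertible, the claimed ``equivalent'' reparametrization of (2b) also deserves justification.
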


We can directly derive the existence of QASs from the quantum advantage assumption plus 
$\mathbf{P}^{\mathbf{\# P}}\nsubseteq \mathbf{ioBPP}^{\mathbf{NP}}$.
However, we first introduce a useful notion, which we call hardness of quantum probability estimation (QPE),
and show the theorem via hardness of QPE.
\begin{definition}[Hardness of Quantum Probability Estimation (QPE)]\label{def:approximation}
We say that hardness of quantum probability estimation (QPE) holds if the following is satisfied.
\begin{enumerate}
    \item 
    There exists a family $\cD=\{\cD_\secp\}_{\secp\in\mathbb{N}}$ of distributions such that
    for each $\secp\in\N$, $\cD_\secp$ is a (uniform) QPT sampleable distribution
    over classical bit strings.
    \item
    There exists a polynomial $p$ such that for any oracle PPT algorithm $\cA^{\mathbf{NP}}$ and for all sufficiently large $\secp\in\N$,
    \begin{align}
        \Pr_{x\gets\cD_\secp} \left[ |\cA^{\mathbf{NP}}(1^\secp,x)-\Pr[x\gets\cD_\secp]| \le\frac{\Pr[x\gets\cD_\secp]}{p(\secp)} \right] \le 1-\frac{1}{p(\secp)}.
    \end{align}
\end{enumerate}
\end{definition}
\begin{remark}
A similar notion was introduced in \cite{cryptoeprint:2024/1490,cavalarmeta,cryptoeprint:2024/1539}. The main difference here is that
the hardness is for PPT algorithms with $\mathbf{NP}$ oracle.    
\end{remark}

\cref{thm:QAAtoQAS} is shown by combining the following two lemmas.
\begin{lemma}[Based on \cite{cryptoeprint:2024/1490}]\label{lem:native_assumption}
    If the quantum advantage assumption holds and 
    $\mathbf{P}^{\mathbf{\# P}}\nsubseteq \mathbf{ioBPP}^{\mathbf{NP}}$, then
    hardness of QPE holds.
\end{lemma}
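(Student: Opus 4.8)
The plan is to construct the distribution family $\cD$ demanded by \Cref{def:approximation} directly from the circuit family $\cC$ of \Cref{def:supremacy}, and then to show that any adversary refuting hardness of QPE for $\cD$ yields an approximation oracle of exactly the kind that item~2(b) of the quantum advantage assumption forbids. Concretely, let $\mathsf{Samp}_\cC$ sample $\cC_\secp$ using randomness $r\in\bit^{\ell(\secp)}$, write $C_r$ for the resulting circuit (which outputs $\secp$-bit strings), and define the QPT sampler $\cD_\secp$ by: sample $r\gets\bit^{\ell(\secp)}$, compute $C_r$, run $C_r$ and measure to get $x\in\bit^\secp$, and output $(r,x)$. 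Then $\Pr[(r,x)\gets\cD_\secp]=2^{-\ell(\secp)}\Pr[x\gets C_r]$, so any estimate of $\Pr[(r,x)\gets\cD_\secp]$, scaled by $2^{\ell(\secp)}$, becomes an estimate of $\Pr[x\gets C_r]$ with the same relative error. Let $p,\gamma$ be the polynomials of \Cref{def:supremacy} (the assumption is nontrivial only when $\gamma(\secp)<p(\secp)$ for large $\secp$, which we may assume); I would prove \Cref{def:approximation} with the polynomial $p^2$.

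Suppose this fails. Then there are an oracle PPT algorithm $\cA^{\mathbf{NP}}$ and an infinite set $S\subseteq\mathbb{N}$ with
\[
\Pr_{(r,x)\gets\cD_\secp}\!\left[\,\bigl|\cA^{\mathbf{NP}}(1^\secp,(r,x))-\Pr[(r,x)\gets\cD_\secp]\bigr|\le\tfrac{\Pr[(r,x)\gets\cD_\secp]}{p(\secp)^2}\,\right]>1-\tfrac{1}{p(\secp)^2}\qquad(\secp\in S).
\]
Set $\cO(C_r,x):=2^{\ell(\secp)}\cA^{\mathbf{NP}}(1^\secp,(r,x))$, an oracle computable by a $\mathbf{BPP}^{\mathbf{NP}}$ procedure, and let $B=\{(C,x):|\cO(C,x)-\Pr[x\gets C]|>\Pr[x\gets C]/p(\secp)\}$; the display gives $\Pr_{C\gets\cC_\secp,\,x\gets C}[(C,x)\in B]<1/p(\secp)^2$ for $\secp\in S$. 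Next I would run the standard anti-concentration transfer: for fixed $C$, each $x\in B$ with $\Pr[x\gets C]\ge\frac{1}{p(\secp)2^\secp}$ contributes at least $\frac{1}{p(\secp)2^\secp}$ to the $C$-weighted mass of $B$, so
\[
\Pr_{x\gets\bit^\secp}\!\left[(C,x)\in B\ \wedge\ \Pr[x\gets C]\ge\tfrac{1}{p(\secp)2^\secp}\right]\ \le\ p(\secp)\cdot\Pr_{x\gets C}[(C,x)\in B],
\]
and averaging over $C\gets\cC_\secp$ bounds the left side by $p(\secp)/p(\secp)^2=1/p(\secp)$. Combining with the anti-concentration bound of item~2(a) and using that $p$-relative accuracy follows from $p^2$-relative accuracy,
\[
\Pr_{C\gets\cC_\secp,\,x\gets\bit^\secp}\!\left[\,\bigl|\cO(C,x)-\Pr[x\gets C]\bigr|\le\tfrac{\Pr[x\gets C]}{p(\secp)}\,\right]\ \ge\ \tfrac{1}{\gamma(\secp)}-\tfrac{1}{p(\secp)}\qquad(\secp\in S),
\]
which is precisely the accuracy condition that item~2(b) of \Cref{def:supremacy} imposes on an oracle.

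The final step is to invoke the lengthwise form of item~2(b), following \cite{cryptoeprint:2024/1490,STOC:AarArk11}: because the $\mathbf{\# P}$-hardness reduction underlying 2(b) acts per input length, an oracle $\cO$ meeting the above accuracy condition for all $\secp$ in the infinite set $S$ yields a uniform $\mathbf{BPP}^{\mathbf{NP}^{\cO}}$ algorithm agreeing with a fixed $\mathbf{\# P}$-complete function (e.g. the permanent) on infinitely many input lengths; since $\cO$ is itself in $\mathbf{BPP}^{\mathbf{NP}}$ and $\mathbf{BPP}^{\mathbf{BPP}^{\mathbf{NP}}}=\mathbf{BPP}^{\mathbf{NP}}$, this gives $\mathbf{P}^{\mathbf{\# P}}\subseteq\mathbf{ioBPP}^{\mathbf{NP}}$, contradicting the hypothesis. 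Hence \Cref{def:approximation} holds with polynomial $p^2$. The routine parts are the polynomial choice and the anti-concentration transfer; the main obstacle — and the point where we genuinely lean on \cite{cryptoeprint:2024/1490} rather than a black-box application of 2(b) — is making the infinitely-often, lengthwise version of the $\mathbf{\# P}$-hardness reduction precise (this is exactly why the hypothesis reads $\mathbf{ioBPP}^{\mathbf{NP}}$ and not $\mathbf{BPP}^{\mathbf{NP}}$), together with the bookkeeping that folds the $\mathbf{NP}$-oracle-equipped approximator back into a plain $\mathbf{NP}$ oracle.
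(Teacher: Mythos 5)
Your proposal is correct and follows essentially the same route as the paper's intended argument, which is omitted in the text with a pointer to Theorem 5.1 of \cite{cryptoeprint:2024/1490}: define the QPE distribution by outputting the sampling seed together with the circuit's output so that estimating $\Pr[(r,x)\gets\cD_\secp]$ is equivalent (up to the factor $2^{\ell(\secp)}$) to estimating $\Pr[x\gets C_r]$, transfer the estimator's accuracy from the circuit's output distribution to the uniform distribution over $\bit^\secp$ via the heaviness threshold of item 2(a), and then invoke the infinitely-often form of item 2(b). The two points that genuinely require care --- the lengthwise/infinitely-often bookkeeping in the $\mathbf{\# P}$-hardness reduction (the reason the hypothesis is stated with $\mathbf{ioBPP}^{\mathbf{NP}}$) and folding the randomized $\mathbf{NP}$-oracle estimator back into a $\mathbf{BPP}^{\mathbf{NP}}$ oracle --- are exactly the ones you flag and defer to \cite{cryptoeprint:2024/1490}, which is also what the paper does.
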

    
\begin{lemma}\label{lem:supremacy_QAS}
    If hardness of QPE holds, then QASs exist.
\end{lemma}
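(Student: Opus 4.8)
The plan is to take $\cA$ to be the uniform QPT algorithm that, on input $1^\secp$, samples from the distribution $\cD_\secp$ in the family $\cD=\{\cD_\secp\}_{\secp}$ guaranteed by hardness of QPE (\Cref{def:approximation}), and to show that this $\cA$ is a QAS with witness polynomial $p''(\secp):=8p(\secp)^2$, where $p$ is the polynomial from \Cref{def:approximation}. I would argue by contradiction: if $\cA$ is not a QAS then, unwinding the definition of QASs on $\N$, there are a PPT algorithm $\cB$ and an infinite subset $\Lambda\subseteq\N$ with $\SD(\cA(1^\secp),\cB(1^\secp))\le 1/(8p(\secp)^2)$ for all $\secp\in\Lambda$. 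The heart of the proof is then to turn such a PPT approximate sampler for $\cD_\secp$ into an oracle PPT algorithm $\cA'^{\mathbf{NP}}$ that estimates $\Pr[x\gets\cD_\secp]$ to within a multiplicative factor $1\pm 1/p(\secp)$ on a $(1-1/p(\secp))$-fraction of $x\gets\cD_\secp$, for infinitely many $\secp$, which contradicts the second item of \Cref{def:approximation}.

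The construction of $\cA'$ proceeds in two steps. First, since $\cB$ is PPT, the probability $q_{\secp,x}:=\Pr[x\gets\cB(1^\secp)]$ equals $|\{r\in\bit^{\ell(\secp)}:\cB(1^\secp;r)=x\}|/2^{\ell(\secp)}$ for some polynomial $\ell$, and membership in the set $\{r:\cB(1^\secp;r)=x\}$ is decidable in classical polynomial time given $(1^\secp,x,r)$; hence by Stockmeyer's approximate counting theorem there is a PPT algorithm with an $\mathbf{NP}$ oracle that, on input $(1^\secp,x)$, outputs a value $\widehat q_{\secp,x}$ with $|\widehat q_{\secp,x}-q_{\secp,x}|\le q_{\secp,x}/(4p(\secp))$ except with probability at most $2^{-\secp}$ (the failure probability is driven down by independent repetition and a median, independently of $x$). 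This PPT-with-$\mathbf{NP}$-oracle algorithm is our $\cA'^{\mathbf{NP}}$. Second, I would relate $q_{\secp,x}$ to the true probability $p_{\secp,x}:=\Pr[x\gets\cD_\secp]$: from $\SD(\cA(1^\secp),\cB(1^\secp))\le 1/(8p(\secp)^2)$ one gets $\Exp_{x\gets\cD_\secp}\!\left[|p_{\secp,x}-q_{\secp,x}|/p_{\secp,x}\right]=\sum_{x:p_{\secp,x}>0}|p_{\secp,x}-q_{\secp,x}|\le 2\,\SD(\cA(1^\secp),\cB(1^\secp))\le 1/(4p(\secp)^2)$, so Markov's inequality yields $\Pr_{x\gets\cD_\secp}\!\left[|p_{\secp,x}-q_{\secp,x}|>p_{\secp,x}/(2p(\secp))\right]\le 1/(2p(\secp))$. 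Composing the two multiplicative approximations (using $\tfrac1{2p}+\tfrac1{4p}+\tfrac1{8p^2}<\tfrac1p$) shows that $\widehat q_{\secp,x}=\cA'^{\mathbf{NP}}(1^\secp,x)$ lies within a factor $1\pm 1/p(\secp)$ of $p_{\secp,x}$ with probability at least $1-\tfrac1{2p(\secp)}-2^{-\secp}>1-\tfrac1{p(\secp)}$ for all sufficiently large $\secp\in\Lambda$. Since $\Lambda$ is infinite, this contradicts hardness of QPE, so $\cA$ is a QAS and QASs exist.

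I expect the main obstacle to be the reduction from approximate classical sampling to approximate probability estimation, i.e., making the Stockmeyer step airtight: one must check that the $\mathbf{NP}$ set fed to the approximate counter is exactly the set of random tapes of $\cB(1^\secp)$ producing the target $x$, that its multiplicative error parameter and (exponentially small) failure probability can be fixed uniformly in $x$, and that these errors compose cleanly with the statistical-distance-to-multiplicative-error conversion obtained from Markov's inequality. The polynomial bookkeeping ($p''=8p^2$, with the error budget split as $\tfrac1{2p}$ for the sampling gap and $\tfrac1{4p}$ for the counting gap) is routine but needs to be arranged so that the final estimate beats the $1-1/p(\secp)$ threshold; everything else is a direct application of Markov's inequality and the triangle inequality.
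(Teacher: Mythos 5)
Your proposal is correct and follows essentially the same route as the paper's proof: instantiate the candidate QAS as the sampler for $\cD_\secp$, negate the QAS definition to obtain a PPT sampler $\cB$ that is $1/\poly$-close in statistical distance for infinitely many $\secp$, apply Stockmeyer's approximate counting to the $\#\mathbf{P}$ function counting $\cB$'s accepting random tapes to get a multiplicative estimate of $\Pr[x\gets\cB(1^\secp)]$, and use Markov's inequality to convert the statistical-distance bound into a pointwise multiplicative agreement between $\Pr[x\gets\cB(1^\secp)]$ and $\Pr[x\gets\cD_\secp]$ on a $(1-1/(2p))$-fraction of $x\gets\cD_\secp$, contradicting hardness of QPE. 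The only differences are cosmetic constants (you budget the error as $8p^2$ and $1/(4p)$ where the paper uses $12p^2$ and $1/(3p)$), so no further comparison is needed.
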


The proof of \cref{lem:native_assumption} is almost the same as Theorem 5.1 of \cite{cryptoeprint:2024/1490} and therefore we omit it.

In the proof of \cref{lem:supremacy_QAS}, we use the following lemma.
\begin{lemma}[Approximate Counting \cite{STOC:Stockmeyer83}]
    \label{lem:approx_count}
    For any $f\in\mathbf{\# P}$ and for any polynomial $p$, there exists an oracle PPT algorithm $\cR^{\mathbf{NP}}$ with $\mathbf{NP}$ oracle such that for any $x$,
    \begin{align}
        \Pr \left[ | \cR^{\mathbf{NP}}(f,x) - f(x) | \le \frac{1}{p(|x|)} \right] \ge 1-\negl(|x|).
    \end{align}
\end{lemma}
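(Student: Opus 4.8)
The plan is to prove the statement by the classical Sipser--Stockmeyer hashing argument, reading ``$f\in\mathbf{\# P}$'' in the normalized sense that is used throughout the quantum-supremacy literature: write $f(x)=g(x)/2^{q(|x|)}$ where $g(x)=|\{w\in\bit^{m(|x|)}:R(x,w)=1\}|$ for a polynomial-time relation $R$ and polynomials $m,q$, so that $f(x)\in[0,1]$. An additive $1/p$-approximation to $f$ is then implied by a multiplicative $(1\pm 1/p)$-approximation to $g$ (divide by $2^{q(|x|)}$ and note $f(x)\le 1$), and the $g(x)=0$ case is detected exactly; so it suffices to build, in $\mathbf{BPP}^{\mathbf{NP}}$, a multiplicative $(1\pm 1/p(|x|))$-approximation to $g(x)$ that succeeds with probability $1-\negl(|x|)$. (If one instead reads $f$ literally as a possibly exponentially large $\#\mathbf{P}$ function, the correct target is the multiplicative statement, and the same proof applies.)

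First I would handle a single ``gap'' query. Fix $x$, set $n=|x|$ and $m=m(n)$. For a parameter $\ell\in\{0,\dots,m\}$, sample a pairwise-independent hash function $h:\bit^m\to\bit^\ell$ and ask the $\mathbf{NP}$ oracle whether there is a $w$ with $R(x,w)=1$ and $h(w)=0^\ell$. If $g(x)\ge 16\cdot 2^\ell$, the expected number of such $w$ is at least $16$, and a second-moment (Chebyshev) bound exploiting pairwise independence shows that such a $w$ exists with probability at least $3/4$; if $g(x)\le 2^\ell/16$, Markov's inequality shows it exists with probability at most $1/16$. Repeating each query $O(1)$ times with independent $h$'s and taking the majority answer yields, with confidence $1-2^{-\Omega(1)}$ per value of $\ell$, a bit $b_\ell$ that equals $1$ whenever $g(x)\ge 16\cdot 2^\ell$ and $0$ whenever $g(x)\le 2^\ell/16$. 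Letting $\ell^\star$ be the largest $\ell$ with $b_\ell=1$ gives $2^{\ell^\star}=\Theta(g(x))$, a constant-factor approximation, and $\ell^\star$ being undefined (all $b_\ell=0$) correctly signals $g(x)=0$.

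Next I would boost the constant factor to $(1\pm 1/p)$ by the standard tensoring trick: apply the above to $R^{(t)}(x,(w_1,\dots,w_t)):=\bigwedge_{i=1}^t R(x,w_i)$, whose witness count is $g(x)^t$; this relation is still polynomial-time decidable with witness length $tm$, still polynomial, for $t:=t(n):=\lceil c\,p(n)\rceil$ with a suitable constant $c$. A factor-$C$ approximation $N$ to $g(x)^t$ gives $N^{1/t}$ within factor $C^{1/t}\le 1+1/p(n)$ of $g(x)$ once $t\ge c\,p(n)\ln C$. Then I would amplify the overall success probability: run the whole estimator $k(n):=\Theta(n)$ times with fresh randomness and output the median; since each run is correct with probability at least $2/3$, a Chernoff bound makes the median correct except with probability $2^{-\Omega(k)}=\negl(n)$. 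Every step runs a PPT machine making polynomially many $\mathbf{NP}$ queries, so the resulting $\cR^{\mathbf{NP}}$ is as required.

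The part needing the most care is the two-sided concentration in the hashing step and the bookkeeping of the gap. Markov's inequality cheaply gives the ``no false positive when $g$ is small'' direction, but the ``a surviving witness exists when $g$ is large'' direction genuinely needs the pairwise-independence second-moment bound, and one must keep the multiplicative gap between the two cases (the factor $16^2$ above) large enough that, after majority voting over $O(1)$ independent hashes and a union bound over all $\ell\le tm$, the threshold $\ell^\star$ is unambiguous and $2^{\ell^\star}$ is within the claimed constant factor of $g(x)^t$. Choosing the gap, the per-query repetition count, and $t,k$ consistently so that all the failure probabilities collapse to $\negl(n)$ is the only real work; everything else is assembling standard pieces (pairwise-independent hashing, $\mathbf{NP}$-oracle existence queries, the $t$-th-root precision boost, and median amplification).
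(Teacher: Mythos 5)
The paper gives no proof of this lemma at all: it is imported as a known result with a citation to Stockmeyer, and is immediately applied in the proof of \cref{lem:supremacy_QAS}. Your proposal is a correct and complete reconstruction of the standard Sipser--G\'acs--Stockmeyer argument behind that citation: pairwise-independent hashing with an $\mathbf{NP}$ existence query per level $\ell$, the Chebyshev/Markov two-sided gap, selection of the threshold $\ell^\star$ to get a constant-factor estimate, tensoring $R^{(t)}$ to sharpen the factor to $1\pm 1/p$, and median amplification to drive the failure probability to $\negl$. The quantitative bookkeeping (variance at most the mean under pairwise independence, $C^{1/t}\le 1+1/p$ for $t=O(p\log C)$, union bound over $\ell\le tm$) all checks out.

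One remark worth keeping: your observation about the normalization of $f$ is not pedantry. As literally stated, the lemma promises an \emph{additive} $1/p(|x|)$ error for a possibly exponentially large $\#\mathbf{P}$ function, which is vacuous as written unless $f$ is read as normalized to $[0,1]$; and in the proof of \cref{lem:supremacy_QAS} the paper actually invokes the lemma in its \emph{multiplicative} form (error $\Pr[x\gets\cS_q(1^\secp)]/(3p(\secp))$). Your proof delivers the multiplicative $(1\pm 1/p)$ guarantee for the witness count $g(x)$, with exact detection of $g(x)=0$, which is precisely the form the paper needs downstream. So the proposal is not only correct but proves the statement in the strength actually used.
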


Now, we are ready to show \cref{lem:supremacy_QAS}.
\begin{proof}[Proof of \cref{lem:supremacy_QAS}]
Let $\cD=\{\cD_\secp\}_{\secp\in\N}$ be a family of distributions that satisfies \cref{def:approximation}.
For each $\secp\in\N$, let $\cD_\secp$ be a QPT samplable distribution over $\bit^{m(\secp)}$, where $m$ is a polynomial. 
Then there exists a polynomial $p$ such that for any oracle PPT algorithm $\cA^{\mathbf{NP}}$ and for all sufficiently large $\secp\in\mathbb{N}$,
\begin{align}\label{eq:native_assumption}
    \Pr_{x\gets\cD_\secp} \left[ |\cA^{\mathbf{NP}}(1^\secp,x)-\Pr[x\gets\cD_\secp]| \le\frac{\Pr[x\gets\cD_\secp]}{p(\secp)} \right] \le 1-\frac{1}{p(\secp)}.
\end{align}
Let us consider the QPT algorithm $\cQ$ that takes $1^\secp$ as input and samples $x\gets\cD_\secp$.
We show that $\cQ$ is a QAS.
For the sake of contradiction, assume that $\cQ$ is not a QAS.
Then, for any polynomial $q$, there exists a PPT algorithm $\cS_q$ such that
\begin{align}
    \SD(\cQ(1^\secp),\cS_q(1^\secp)) \le \frac{1}{q(\secp)}
\end{align}
holds for infinitely many $\secp\in\N$.
By \cref{lem:approx_count}, for any polynomial $p$, there exists an oracle PPT algorithm $\cA^{\mathbf{NP}}_p$ such that for all $x\in\bit^{m(\secp)}$,
\begin{align}
    \Pr \left[  |\cA^{\mathbf{NP}}_p(x)-\Pr[x\gets\cS_q(1^\secp)]| \le \frac{\Pr[x\gets\cS_q(1^\secp)]}{3p(\secp)} \right] \ge 1-\negl(\secp).
\end{align}
In the following, we set $q(\secp):=12p(\secp)^2$.
For each $\secp\in\N$, we define a set $\Good_\secp$ as follows:
\begin{align}
    \Good_\secp := \left\{ x\in\bit^{m(\secp)} : |\Pr[x\gets\cQ(1^\secp)]-\Pr[x\gets\cS_q(1^\secp)]|\le\frac{\Pr[x\gets\cQ(1^\secp)]}{3p(\secp)} \right\}.
\end{align}
Then,
\begin{align}
    \Pr_{x\gets\cQ(1^\secp)} [x\notin\Good_\secp] 
    &= \Pr_{x\gets\cQ(1^\secp)} \left[ |\Pr[x\gets\cQ(1^\secp)]-\Pr[x\gets\cS_q(1^\secp)]|\ge\frac{\Pr[x\gets\cQ(1^\secp)]}{3p(\secp)} \right] \\ 
    &= \Pr_{x\gets\cQ(1^\secp)} \left[ \left|1-\frac{\Pr[x\gets\cS_q(1^\secp)]}{\Pr[x\gets\cQ(1^\secp)]}\right| \ge \frac{1}{3p(\secp)} \right] \\ 
    &\le 3p(\secp) \sum_x \Pr[x\gets\cQ(1^\secp)] \left|1-\frac{\Pr[x\gets\cS_q(1^\secp)]}{\Pr[x\gets\cQ(1^\secp)]}\right| ~(\text{By Markov's inequality.})\\ 
    &= 3p(\secp) \sum_x |\Pr[x\gets\cQ(1^\secp)]-\Pr[x\gets\cS_q(1^\secp)]| \\ 
    &= 6p(\secp) \SD(\cQ(1^\secp),\cS_q(1^\secp)) \\ 
    &\le \frac{6p(\secp)}{q(\secp)}
    = \frac{1}{2p(\secp)}
\end{align}
holds for infinitely many $\secp\in\N$.
Moreover for all sufficiently large $\secp$ and for all $x\in\Good_\secp$,
\begin{align}
    &1-\negl(\secp) \\
    &\le \Pr\left[ |\cA^{\mathbf{NP}}_p(x) - \Pr[x\gets\cS_q(1^\secp)]| \le \frac{\Pr[x\gets\cS_q(1^\secp)]}{3p(\secp)} \right] \\ 
    &\le \Pr\left[ |\cA^{\mathbf{NP}}_p(x) - \Pr[x\gets\cQ(1^\secp)]| \le \frac{\Pr[x\gets\cS_q(1^\secp)]}{3p(\secp)} + |\Pr[x\gets\cQ(1^\secp)]-\Pr[x\gets\cS_q(1^\secp)]| \right] \\ 
    &\le \Pr\left[ |\cA^{\mathbf{NP}}_p(x) - \Pr[x\gets\cQ(1^\secp)]| \le \Pr[x\gets\cQ(1^\secp)]\left\{\frac{1}{3p(\secp)}\left(1+\frac{1}{3p(\secp)}\right) + \frac{1}{3p(\secp)} \right\} \right] \\ 
    &\le \Pr\left[ |\cA^{\mathbf{NP}}_p(x) - \Pr[x\gets\cQ(1^\secp)]| \le \frac{\Pr[x\gets\cQ(1^\secp)]}{p(\secp)} \right].
\end{align}
Therefore, 
\begin{align}
    &\Pr_{x\gets\cQ(1^\secp)} \left[ |\cA^{\mathbf{NP}}_p(x)-\Pr[x\gets\cD_\secp]|\le\frac{\Pr[x\gets\cD_\secp]}{p(\secp)} \right] \\ 
    &\ge \Pr_{x\gets\cQ(1^\secp)} [x\in\Good_\secp] \Pr_{x\gets\cQ(1^\secp)} \left[ |\cA^{\mathbf{NP}}_p(x)-\Pr[x\gets\cQ(1^\secp)]|\le\frac{\Pr[x\gets\cQ(1^\secp)]}{p(\secp)} ~\middle|~ x\in\Good_\secp \right] \\ 
    &\ge \Pr_{x\gets\cQ(1^\secp)} [x\in\Good_\secp] (1-\negl(\secp)) \\ 
    &\ge \left( 1-\frac{1}{2p(\secp)} \right) (1-\negl(\secp)) \\ 
    &\ge 1-\frac{1}{p(\secp)}
\end{align}
holds for infinitely many $\secp\in\N$.
This contradicts \cref{eq:native_assumption}.

\end{proof}

\if0
By combining \cref{lem:native_assumption,lem:supremacy_QAS}, we obtain the following corollary.
\begin{corollary}
    If $\mathbf{P}^{\mathbf{\# P}}\nsubseteq \mathbf{ioBPP}^{\mathbf{NP}}$ and there exists a family $\cC=\{\cC_\secp\}_{\secp\in\mathbb{N}}$ of distributions 
    that satisfies \cref{def:supremacy}, then QASs exist.
\end{corollary}
\fi

\section{The QAS/OWF Condition}
\label{sec:QASOWFcondition}
We also introduce another new concept, which we call the QAS/OWF condition. 
\begin{definition}[The QAS/OWF Condition]\label{def:QAS/OWF}
    The QAS/OWF condition holds if there exist a polynomial $p$, a QPT algorithm $\cQ$ that takes $1^\secp$ as input and outputs a classical string, 
    and a function $f:\bit^*\to\bit^*$ that is computable in classical deterministic polynomial-time
    such that for any PPT algorithm $\cS$, the following holds:
    if we define
    \begin{align}
        \Sigma_\cS := \left\{ \secp\in\mathbb{N} \ \middle|\  \SD(\cQ(1^\secp),\cS(1^\secp)) \le \frac{1}{p(\secp)} \right\},
    \end{align}
    then $f$ is a classically-secure OWF on $\Sigma_\cS$.
\end{definition}

 

We can show the following result:
\begin{theorem}
\label{thm:QASOWF_OWF_Samp}
If the QAS/OWF condition is satisfied, then quantumly-secure OWFs exist or
$\mathbf{SampBPP}\neq\mathbf{SampBQP}$.
\end{theorem}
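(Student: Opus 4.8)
The plan is to argue by a dichotomy on whether $\mathbf{SampBPP}=\mathbf{SampBQP}$. If $\mathbf{SampBPP}\neq\mathbf{SampBQP}$ there is nothing to prove, so from now on I would assume $\mathbf{SampBPP}=\mathbf{SampBQP}$ and aim to construct a quantumly-secure OWF. Fix the data $(p,\cQ,f)$ witnessing the QAS/OWF condition as in \cref{def:QAS/OWF}.

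The first step is to turn the equality of sampling classes into a classical simulator for $\cQ$. View the collection $\{\cQ(1^\secp)\}_{\secp}$, with indices encoded in unary and outputs padded to a common polynomial length, as a polynomially-bounded sampling problem (defined arbitrarily, say as a point mass, on indices not of the form $1^\secp$). It lies in $\mathbf{SampBQP}$ because $\cQ$ samples it exactly, hence in $\mathbf{SampBPP}$, so there is a PPT machine $\cB$ with $\SD(\cB(1^\secp,1^{\lfloor 1/\epsilon\rfloor}),\cQ(1^\secp))\le\epsilon$ for all $\secp$ and all $\epsilon>0$. Hard-wiring $\epsilon=1/p(\secp)$ gives a PPT algorithm $\cS(1^\secp):=\cB(1^\secp,1^{p(\secp)})$ with $\SD(\cQ(1^\secp),\cS(1^\secp))\le 1/p(\secp)$ for every $\secp$, i.e.\ $\Sigma_\cS=\mathbb{N}$. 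By the QAS/OWF condition, $f$ is therefore a classically-secure OWF on $\mathbb{N}$ in the sense of \cref{def:OWFsSigma}, and \cref{lem:OWF_on_N} (with the empty exception set) upgrades this to: classically-secure OWFs exist. Let $g$ be one.

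The second step is to bootstrap $g$ from classical to quantum security, again using $\mathbf{SampBPP}=\mathbf{SampBQP}$. Suppose for contradiction that some QPT adversary $\cA$ and polynomial $r$ invert $g$ with probability $>1/r(\secp)$ on infinitely many $\secp$. I would regard the output distribution of $\cA$ as a polynomially-bounded sampling problem indexed by $\cA$'s own input $(1^\secp,y)$ (outputs padded to a common polynomial length, which exists since $\cA$ is QPT); it is in $\mathbf{SampBQP}=\mathbf{SampBPP}$, so some PPT $\cB'$ approximates it to any desired precision. Setting the precision to $1/(2r(\secp))$ produces a PPT algorithm $\cA'(1^\secp,y):=\cB'((1^\secp,y),1^{2r(\secp)})$ with $\SD(\cA'(1^\secp,y),\cA(1^\secp,y))\le 1/(2r(\secp))$ for all $\secp,y$; averaging over $x\gets\bit^\secp$ with $y=g(x)$ shows $\cA'$ inverts $g$ with probability $>1/(2r(\secp))$ on infinitely many $\secp$, which contradicts the classical security of $g$ upon taking the polynomial $2r$ in \cref{def:OWFs}. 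Hence $g$ is quantumly secure, and the theorem follows.

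The main obstacle is the declassification step and making the sampling-problem bookkeeping precise: one must check that the relevant distributions really form polynomially-bounded sampling problems (fixed output length, index presented to $\cB$/$\cB'$ as a bit string, unary precision parameter) and that pushing a $1/\poly$ statistical error through the inversion experiment only costs an additive $1/\poly$ in success probability — both routine but essential. A secondary subtlety is that the $f$ supplied by the QAS/OWF condition is an OWF ``on $\mathbb{N}$'' in the reparametrized sense of \cref{def:OWFsSigma}, so \cref{lem:OWF_on_N} is invoked to pass to an ordinary OWF before the bootstrapping argument is applied.
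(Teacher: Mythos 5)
Your proposal is correct and follows essentially the same route as the paper: the paper also splits into (i) deducing a classically-secure OWF when $\cQ$ admits a PPT simulator on a cofinite set (which is exactly what $\mathbf{SampBPP}=\mathbf{SampBQP}$ yields), versus concluding $\mathbf{SampBPP}\neq\mathbf{SampBQP}$ otherwise, and (ii) the identical declassification argument that upgrades the classical OWF to a quantumly-secure one under $\mathbf{SampBPP}=\mathbf{SampBQP}$. The only cosmetic difference is that you branch on $\mathbf{SampBPP}\stackrel{?}{=}\mathbf{SampBQP}$ up front while the paper branches on whether some $\Sigma_\cS$ is cofinite, which are contrapositive formulations of the same dichotomy.
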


\cref{thm:main2} is obtained by combining this theorem and the 
equivalence of IV-PoQ and the QAS/OWF condition, which will be shown in \cref{sec:equivalence}.

\begin{proof}[Proof of \cref{thm:QASOWF_OWF_Samp}]
We first show that the QAS/OWF condition implies the existence of classically-secure OWFs or $\mathbf{SampBPP}\neq\mathbf{SampBQP}$.
Let us assume that the QAS/OWF condition is satisfied.
Then, by the definition of the QAS/OWF condition,
there exist a polynomial $p$, a QPT algorithm $\cQ$, and a function $f$ that is computable in classical deterministic polynomial-time 
such that for any PPT algorithm $\cS$, if we define
\begin{align}
    \Sigma_\cS := \left\{ \secp\in\mathbb{N} \ \middle|\  \SD(\cQ(1^\secp),\cS(1^\secp)) \le \frac{1}{p(\secp)} \right\},
\end{align}
then $f$ is a classically-secure OWF on $\Sigma_\cS$.
We divide the proof into the following two cases:
\paragraph{There exist a PPT algorithm $\cS$ and a finite subset $\Lambda\subseteq\mathbb{N}$ such that $\Sigma_\cS=\mathbb{N}\setminus\Lambda$.}
In this case, from \cref{lem:OWF_on_N}, classically-secure OWFs exist. 
\paragraph{For any PPT algorithm $\cS$ and for any finite subset $\Lambda\subseteq\mathbb{N}$, $\Sigma_\cS\neq\mathbb{N}\setminus\Lambda$.}
In this case,
if we define the sampling problem $\{\cQ(1^\secp)\}_{\secp\in\mathbb{N}}$, 
for any PPT algorithm $\cS$, there exists an $x\in\mathbb{N}\setminus\Sigma_\cS$ such that 
\begin{align}
    \SD(\cQ(1^x),\cS(1^x)) >\frac{1}{p(x)},
\end{align}
which means that $\{\cQ(1^\secp)\}_{\secp\in\mathbb{N}}\notin\mathbf{SampBPP}$.
On the other hand, it is clear that $\{\cQ(1^\secp)\}_{\secp\in\mathbb{N}}\in\mathbf{SampBQP}$ and therefore $\mathbf{SampBPP}\neq\mathbf{SampBQP}$.

We next show that if classically-secure OWFs exist, then quantumly-secure OWFs exist or $\mathbf{SampBPP}\neq\mathbf{SampBQP}$.
Let $f:\bit^*\to\bit^*$ be a classically-secure OWF.
For the sake of contradiction, we assume that quantumly-secure OWFs do not exist and $\mathbf{SampBPP}=\mathbf{SampBQP}$.
Then, there exists a QPT adversary $\cA$ and a polynomial $p$ such that
\begin{align}
    \Pr[f(z)=f(x):x\gets\bit^\secp,z\gets\cA(1^\secp,f(x))] \ge \frac{1}{p(\secp)}
\end{align}
holds for infinitely many $\secp\in\mathbb{N}$.
Let $\cA(1^\secp,f(x))$ be the output distribution of $\cA$ on input $(1^\secp,f(x))$.
Then, $\{\cA(1^\secp,f(x))\}_{\secp,f(x)}\in\mathbf{SampBQP}$ and therefore $\{\cA(1^\secp,f(x))\}_{\secp,f(x)}\in\mathbf{SampBPP}$.
(Remember that we have assumed $\mathbf{SampBQP}=\mathbf{SampBPP}$.)
Thus, by the definition of $\mathbf{SampBPP}$, there exists a PPT algorithm $\cB$ such that $\SD(\cA(1^\secp,f(x)),\cB(1^\secp,f(x)))\le\frac{1}{2p(\secp)}$ for all $\secp$ and all $f(x)$.
Then, for infinitely many $\secp\in\mathbb{N}$,
\begin{align}
    &\Pr[f(z)=f(x):x\gets\bit^\secp,z\gets\cB(1^\secp,f(x))] \\
    &\ge \Pr[f(z)=f(x):x\gets\bit^\secp,z\gets\cA(1^\secp,f(x))] - \SD(\cA(1^\secp,f(x)),\cB(1^\secp,f(x))) \\ 
    &> \frac{1}{p(\secp)} - \frac{1}{2p(\secp)} 
    = \frac{1}{2p(\secp)}.
\end{align}
This means that $f$ is not classically-secure, which is the contradiction.
\end{proof}

We show a lemma that gives a rephrasing of the negation of the QAS/OWF condition. 
Before presenting the lemma, we explain its intuition and motivation.  
A straightforward negation of the QAS/OWF condition gives the following: 
For any polynomial $p$, QPT algorithm $\cQ$, 
    and a polynomial-time computable function $f$, there is a PPT algorithm $\cS$ such that $f$ is not a classically-secure OWF on $\Sigma_\cS$ where  $\Sigma_\cS$ is as defined in \Cref{def:QAS/OWF}. 
    In this statement, $f$ is not allowed to depend on $\cS$. 
    On the other hand, in the proofs of \Cref{thm:Int-QAS_to_QAS/OWF,thm:OWPuzz_to_QAS/OWF}, we need to allow $f$ to depend on $\cS$ due to a technical reason. The following lemma roughly shows that we can change the order of quantifiers of $f$ and $\cS$ in the above statement so that $f$ can depend on $\cS$. Moreover, we require that $f$'s distributional one-wayness (rather than one-wayness) is broken on $\Sigma_\cS$. 
    The formal statement is given below.    
\begin{lemma}\label{lem:quantifier}
    If the QAS/OWF condition is not satisfied, then the following statement is satisfied:
    for any QPT algorthm $\cQ$ that takes $1^\secp$ as input and outputs a classical string and for any polynomial $p$,
    there exists a PPT algorithm $\cS$ such that for any efficiently-computable polynomial $n$ and any family $\{f_\secp:\bit^{n(\secp)}\to\bit^*\}_{\secp\in\mathbb{N}}$ of functions that are computable in classical
    deterministic polynomial-time, there exists a PPT algorithm $\cR$ such that
    \begin{align}
        \SD(\cQ(1^\secp),\cS(1^\secp)) \le \frac{1}{p(\secp)}
    \end{align}
    and 
    \begin{align}
        \SD(\{x,f_\secp(x)\}_{x\gets\bit^{n(\secp)}} , \{\cR(1^{n(\secp)},f_\secp(x)),f_\secp(x)\}_{x\gets\bit^{n(\secp)}}) \le \frac{1}{p(\secp)}
    \end{align}
    hold for infinitely many $\secp\in\mathbb{N}$.
\end{lemma}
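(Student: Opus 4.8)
The plan is to use the universal construction of one-way functions to interchange the quantifiers over the one-way function and the sampler; this interchange is exactly the point that makes the statement stronger than a naive negation of the QAS/OWF condition, in which $f$ is not allowed to depend on $\cS$.

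First I would unwind the hypothesis. By \Cref{def:QAS/OWF}, the negation of the QAS/OWF condition says: for every polynomial $p$, every QPT algorithm $\cQ$, and every polynomial-time computable $f\colon\bit^*\to\bit^*$, there is a PPT algorithm $\cS$ such that $f$ is \emph{not} a classically-secure OWF on $\Sigma_\cS:=\{\secp : \SD(\cQ(1^\secp),\cS(1^\secp))\le 1/p(\secp)\}$. Fix the $\cQ$ and $p$ given in the statement, and apply this with $f:=g$, where $g$ is the \emph{fixed} polynomial-time computable function from \Cref{cor:univ_DistOWF} (so that, for every $\Sigma$, if classically-secure DistOWFs on $\Sigma$ exist then $g$ is a classically-secure OWF on $\Sigma$). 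This yields a PPT algorithm $\cS$, depending only on $\cQ$ and $p$ since $g$ is fixed, such that $g$ is not a classically-secure OWF on $\Sigma_\cS$; this is the $\cS$ we output.

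Next I would record the consequences of this choice. By the contrapositive of \Cref{cor:univ_DistOWF}, classically-secure DistOWFs on $\Sigma_\cS$ do not exist, which in particular forces $\Sigma_\cS$ to be infinite, since on a finite set DistOWFs exist trivially (the remark following \Cref{def:DistOWF}). Moreover, by the definition of $\Sigma_\cS$, the bound $\SD(\cQ(1^\secp),\cS(1^\secp))\le 1/p(\secp)$ holds for \emph{every} $\secp\in\Sigma_\cS$. It then remains to handle an arbitrary efficiently-computable polynomial $n$ and an arbitrary (uniformly) polynomial-time computable family $\{f_\secp\colon\bit^{n(\secp)}\to\bit^*\}_\secp$. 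I would assemble the $f_\secp$ into one polynomial-time computable function $F\colon\bit^*\to\bit^*$ that coincides with $f_\secp$ on inputs of length $n(\secp)$ for all sufficiently large $\secp$; this is well defined because the polynomial $n$ is eventually strictly increasing, so $n(\secp)$ determines $\secp$ for large $\secp$ (on lengths not of this form let $F$ be the identity). Since classically-secure DistOWFs on $\Sigma_\cS$ do not exist, $F$ is not a classically-secure DistOWF on $\Sigma_\cS$; unwinding the negation of the defining condition in \Cref{def:DistOWF} with its two witness polynomials instantiated as exactly $n$ and $p$, there is a PPT algorithm $\cA$ with $\SD(\{x,F(x)\}_{x\gets\bit^{n(\secp)}},\{\cA(1^{n(\secp)},F(x)),F(x)\}_{x\gets\bit^{n(\secp)}})\le 1/p(\secp)$ for infinitely many $\secp\in\Sigma_\cS$. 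Setting $\cR:=\cA$ and using that $F$ agrees with $f_\secp$ on length-$n(\secp)$ inputs for all large $\secp$, the second displayed bound of the lemma holds for infinitely many $\secp\in\Sigma_\cS$; since the first bound holds for all $\secp\in\Sigma_\cS$ (an infinite set), both hold for infinitely many $\secp$, as required.

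The step I expect to be the crux is the first one: one must feed the \emph{universal} distributional-one-way construction $g$ into the negation of the QAS/OWF condition, rather than the family $\{f_\secp\}$ one ultimately wants, so that the sampler $\cS$ is obtained before---and hence independently of---that family. After that, the argument is bookkeeping with the ``on $\Sigma$'' variants of the definitions (\Cref{def:OWFsSigma}, \Cref{def:DistOWF}, \Cref{cor:univ_DistOWF}) and the routine encoding of a function family as a single function; the only mild care needed there is to take the input-length polynomial of $F$ and the parameter polynomial of the distributional-one-wayness game to be exactly $n$ and $p$, so that the adversary for $F$ transfers verbatim to $\{f_\secp\}$.
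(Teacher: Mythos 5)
Your proof is correct and is essentially the paper's argument run in the contrapositive direction: the paper assumes the negation of the conclusion, builds a candidate DistOWF $f_\cS$ on $\Sigma_\cS$ for each $\cS$, and universalizes via \cref{cor:univ_DistOWF} to obtain the QAS/OWF condition, whereas you feed the universal $g$ into the negated QAS/OWF condition first and then conclude that no DistOWF on $\Sigma_\cS$ exists. The key mechanism — using the universal construction to make $\cS$ independent of the function family, plus the family-to-single-function encoding (which, as in the paper, implicitly needs $n$ to be eventually strictly increasing) — is identical.
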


\begin{proof}[Proof of \Cref{lem:quantifier}]
For the sake of contradiction, we assume the following:

\begin{screen}
\paragraph{Assumption 1.}
There exist a QPT algorithm $\cQ$ and a polynomial $p$ such that
for any PPT algorithm $\cS$, 
there exist an efficiently-computable polynomial $n$ and a family $\{f_\secp^\cS:\bit^{n(\secp)}\to\bit^*\}_{\secp\in\mathbb{N}}$ of functions that are computable in classical deterministic polynomial-time
such that for any PPT algorithm $\cR$,
\begin{align}\label{eq:quantifier_first}
    \SD(\cQ(1^\secp),\cS(1^\secp)) > \frac{1}{p(\secp)}
\end{align}
or
\begin{align}\label{eq:quantifier_second}
    \SD(\{x,f_\secp^\cS(x)\}_{x\gets\bit^{n(\secp)}},\{\cR(1^{n(\secp)},f_\secp^\cS(x)),f_\secp^\cS(x)\}_{x\gets\bit^{n(\secp)}}) > \frac{1}{p(\secp)}
\end{align}
holds for all sufficiently large $\secp\in\mathbb{N}$.
\end{screen}
It suffices to show that Assumption 1 implies the QAS/OWF condition.
By using the above $\{f_\secp^\cS\}_{\secp\in\mathbb{N}}$, 
we define a function $f_\cS:\bit^*\to\bit^*$ that is computable in classical deterministic polynomial-time as follows:
\begin{itemize}
    \item $f_\cS:\bit^*\to\bit^*$:
    \begin{enumerate}
        \item Take $x\in\bit^\ell$ as input.
        \item 
        Let $\secp^*$ be the maximum $\secp$ such that $n(\secp)\le\ell$. Let $x'$ be the $n(\secp^*)$-bit prefix of $x$.
        \item 
        Output $f_{\secp^*}^\cS(x')$.
    \end{enumerate}
\end{itemize}  
Then, for any $\secp$ that satisfies \Cref{eq:quantifier_second},
\begin{align}
    &\SD(\{x,f_\cS(x)\}_{x\gets\bit^{n(\secp)}},\{\cR(1^{n(\secp)},f_\cS(x)),f_\cS(x)\}_{x\gets\bit^{n(\secp)}}) \\ 
    &= \SD(\{x,f_{\secp}^\cS(x)\}_{x\gets\bit^{n(\secp)}},\{\cR(1^{n(\secp)},f_{\secp}^\cS(x)),f_{\secp}^\cS(x)\}_{x\gets\bit^{n(\secp)}}) 
    > \frac{1}{p(\secp)}.\label{eq:constructing_f_S}
\end{align}
From Assumption 1 and \cref{eq:constructing_f_S}, we obtain the following.

\begin{screen}
    \paragraph{Assumption 1'.}
    There exist a QPT algorithm $\cQ$ and a polynomial $p$ such that
    for any PPT algorithm $\cS$, there exist a function $f_\cS:\bit^*\to\bit^*$ that is computable in classical deterministic polynomial-time
    and an efficiently-computable polynomial $n$ such that for any PPT algorithm $\cR$,
    \begin{align}\label{eq:quantifier_first_2}
        \SD(\cQ(1^\secp),\cS(1^\secp)) > \frac{1}{p(\secp)}
    \end{align}
    or
    \begin{align}\label{eq:quantifier_second_2}
        \SD(\{x,f_\cS(x)\}_{x\gets\bit^{n(\secp)}},\{\cR(1^{n(\secp)},f_\cS(x)),f_\cS(x)\}_{x\gets\bit^{n(\secp)}}) > \frac{1}{p(\secp)}
    \end{align}
    holds for all sufficiently large $\secp\in\mathbb{N}$.
\end{screen}

Assumption 1' implies the following:
\begin{screen}
    \paragraph{Assumption 1''.}
    There exist a QPT algorithm $\cQ$ and a polynomial $p$ such that
    for any PPT algorithm $\cS$, 
    there exists a function $f_\cS:\bit^*\to\bit^*$ that is computable in classical deterministic polynomial-time
    such that the following holds:
    if we let 
    \begin{align}
        \Sigma_\cS := \left\{ \secp\in\mathbb{N} \ \middle|\  \SD(\cQ(1^\secp),\cS(1^\secp)) \le \frac{1}{p(\secp)} \right\},
    \end{align}
    then $f_\cS$ is a classically-secure DistOWF on $\Sigma_\cS$.
\end{screen}

By \Cref{cor:univ_DistOWF}, Assumption 1'' implies the QAS/OWF condition.
\end{proof}

\section{Equivalence of IV-PoQ and Classically-Secure OWPuzzs} \label{sec:equivalence}
Our main result, \cref{thm:main}, that IV-PoQ exist if and only if classically-secure OWPuzzs exist
is obtained by combining the following theorems. (For relations among these theorems, see \cref{fig:proof_strategy}.)
\begin{theorem}\label{lem:IV-PoQ_to_Int-SampQA}
    If IV-PoQ exist, then Int-QASs exist.
\end{theorem}
\begin{theorem}\label{thm:Int-QAS_to_QAS/OWF}
    If Int-QASs exist, then the QAS/OWF condition is satisfied.
\end{theorem}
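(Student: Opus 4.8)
The plan is to prove the statement in contrapositive form using the rephrasing of \Cref{lem:quantifier}: assume an Int-QAS exists but the QAS/OWF condition fails, and derive a contradiction by exhibiting a PPT algorithm that defeats the Int-QAS. This follows the Ostrovsky-style strategy sketched for ``IV-PoQ $\Rightarrow$ QAS/OWF'', with $\cC$ playing the role of the efficient verifier and $\cA$ that of the honest prover. A simplification here is that there is no acceptance predicate to satisfy: we only need to reproduce the transcript distribution up to statistical distance, which is exactly what breaking an Int-QAS requires.

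Let $(\cA,\cC)$ be an $\ell$-round Int-QAS with advantage polynomial $p_0$ (without loss of generality $\ell$ is a fixed polynomial, by padding, and $\cC$ sends first). Let $\cQ$ be the QPT algorithm that internally runs $\cA$ against $\cC$ on input $1^\secp$ and outputs the resulting transcript, so $\cQ(1^\secp)=\langle\cA,\cC\rangle(1^\secp)$. Choose a sufficiently large polynomial $p$ (to be fixed by the error analysis below; e.g. $p$ dominating a fixed power of $\ell\cdot p_0$). Since the QAS/OWF condition fails, \Cref{lem:quantifier} applied to $\cQ$ and $p$ yields a PPT $\cS$ such that for every efficiently computable polynomial $n$ and every polynomial-time family $\{f_\secp\}$ — which we are free to build from $\cS$ — there is a PPT $\cR$ with
\begin{align*}
  \SD\big(\langle\cA,\cC\rangle(1^\secp),\cS(1^\secp)\big) &\le \frac{1}{p(\secp)}, \\
  \SD\big(\{x,f_\secp(x)\}_{x\gets\bit^{n(\secp)}},\{\cR(1^{n(\secp)},f_\secp(x)),f_\secp(x)\}_{x\gets\bit^{n(\secp)}}\big) &\le \frac{1}{p(\secp)}
\end{align*}
for all $\secp$ in some infinite set $\Lambda$.

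Next I would define, for each $\secp$, the function $f_\secp$ that on input $(k,r)$ — with $k\in[\ell(\secp)]$ encoded in the leading bits (using enough extra bits that a uniform string induces a distribution over $(k,r)$ within $\negl(\secp)$ of uniform) and $r$ the random tape of $\cS(1^\secp)$ — runs $(c_1,a_1,\dots,c_\ell,a_\ell)=\cS(1^\secp;r)$ and outputs the partial transcript $(k,c_1,a_1,\dots,c_{k-1},a_{k-1},c_k)$. This family is polynomial-time computable with a polynomial input length $n$, so \Cref{lem:quantifier} supplies a matching PPT distributional inverter $\cR$. I then construct the PPT algorithm $\cB$ that interacts with $\cC$ as follows: having received $c_1,\dots,c_k$ from $\cC$ and having itself sent $a_1,\dots,a_{k-1}$, it computes $(k',r')\gets\cR\big(1^{n(\secp)},(k,c_1,a_1,\dots,c_{k-1},a_{k-1},c_k)\big)$, runs $(c_1',a_1',\dots,c_\ell',a_\ell')=\cS(1^\secp;r')$, and sends $a_{k'}'$ as its $k$-th message (sending a default string if $\cR$ returns an ill-formed output). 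Since $\cS$ and $\cR$ are PPT and $\ell$ is polynomial, $\cB$ is PPT and interacts with $\cC$.

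The main work, and the step I expect to be the real obstacle, is the hybrid analysis showing $\SD\big(\langle\cB,\cC\rangle(1^\secp),\langle\cA,\cC\rangle(1^\secp)\big)\le 1/p_0(\secp)$ for $\secp\in\Lambda$, which contradicts the security of the Int-QAS $(\cA,\cC)$. The conceptual point is clean: if $\cS$ reproduced $\langle\cA,\cC\rangle$ exactly and $\cR$ were a perfect inverter, then $\cB$ would sample each $a_k$ from the exact conditional $\langle\cA,\cC\rangle(a_k\mid c_1,a_1,\dots,c_k)$, and because $\cB$ faces the genuine $\cC$, the chain rule would make $\langle\cB,\cC\rangle$ identical to $\langle\cA,\cC\rangle$. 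The delicate part is propagating the two $1/p$-errors: the distributional-inversion guarantee bounds $\cB$'s per-round deviation only \emph{on average over a uniformly chosen round $k$ and over partial transcripts drawn from $\cS$}, whereas in the actual execution the partial transcripts are drawn from $\cB$'s own, not-yet-controlled, induced distribution. I would handle this with a round-by-round hybrid: $\mathsf{Hyb}_j$ interacts with $\cC$, generating $\cA$-side messages in rounds $\le j$ from the exact conditionals of $\cS$'s output distribution and in rounds $>j$ via $\cB$'s procedure, so that $\mathsf{Hyb}_0=\langle\cB,\cC\rangle$ and $\mathsf{Hyb}_\ell$ is statistically close to $\langle\cA,\cC\rangle$ (a short separate comparison using the first inequality above). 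Bounding $\SD(\mathsf{Hyb}_{j-1},\mathsf{Hyb}_j)$ by the averaged inversion error — plus the negligible encoding loss and the $\cR$-failure probability, itself controlled by the inversion error — and summing the $\ell$ telescoping terms yields $\SD(\langle\cB,\cC\rangle,\langle\cA,\cC\rangle)=O(\ell^2/p)$, which is below $1/p_0$ once $p$ was chosen large enough; the compounding-error bookkeeping (and the nested comparison needed to argue that the partial-transcript distributions appearing in the hybrids are themselves close to the input distribution of $f_\secp$) is the technical heart of the argument. Finally, as noted in the overview, the reason we invoke \Cref{lem:quantifier} rather than the raw negation of the QAS/OWF condition is precisely that $f_\secp$ here must depend on $\cS$; \Cref{lem:quantifier}, which is proved via the universal construction of one-way functions, legitimizes exactly this dependence.
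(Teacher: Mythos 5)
Your proposal is correct and follows essentially the same route as the paper's proof: the same application of \Cref{lem:quantifier} with $\cQ=\langle\cA,\cC\rangle$, the same function $f_\secp(k,r)=(k,\tau_{k-1},c_k)$ built from $\cS$'s randomness, the same PPT adversary $\cB$ that inverts to recover a consistent seed and replays $\cS$, and a round-by-round hybrid with total error $O(\ell^2/p)$. The only (immaterial) difference is that your intermediate hybrids anchor the already-switched rounds to $\cS$'s exact conditionals, whereas the paper anchors them to the real prover $\cA$ and pushes the $\cS$-versus-$\langle\cA,\cC\rangle$ switch into a single up-front step (its Equation for the next-message closeness over real-interaction transcripts); both handle the distribution-mismatch issue you correctly identify as the technical heart.
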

\begin{theorem}\label{thm:QAS/OWF_to_IV-PoQ}
    If the QAS/OWF condition is satisfied, then IV-PoQ exist.
\end{theorem}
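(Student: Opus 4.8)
The plan is to realize the required IV-PoQ as a \emph{parallel composition} of two candidate protocols in which the verifier accepts only if both components accept: one candidate IV-PoQ extracted from the classically-secure-OWF part of the QAS/OWF condition via \Cref{lem:OWF_IVPoQ}, and one non-interactive candidate extracted from the QAS part via \Cref{lem:QAS_IVPoQ}. I would stress from the start why one cannot simply invoke \Cref{lem:OWF_IVPoQ} or \Cref{lem:QAS_to_IV-PoQ} separately: \Cref{def:QAS/OWF} does not give ``a QAS exists or a classically-secure OWF exists''; it only gives a fixed $f$ that is one-way on the set $\Sigma_\cS$ \emph{determined by} each candidate classical sampler $\cS$. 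A parallel composition is what forces a cheating prover to beat both candidates on one and the same infinite set of security parameters, and it is exactly there that the two guarantees of \Cref{def:QAS/OWF} become contradictory.

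Concretely, let $p$, $\cQ$, $f$ be as given by \Cref{def:QAS/OWF}. First I would pass from $f$ to the universal function $g$ of \Cref{lem:univ_OWF}: since $f$ witnesses that classically-secure OWFs on $\Sigma_\cS$ exist for every PPT $\cS$, $g$ is a classically-secure OWF on $\Sigma_\cS$ for every PPT $\cS$, and, by inspection of the universal construction, $g$'s one-wayness on any set holds with respect to a single input-length polynomial $n$ that does not depend on that set. Then I would apply the construction behind \Cref{lem:OWF_IVPoQ} to $g$ with input length $n$, obtaining $\Pi_1=(\cP_1,\cV_1^{(1)},\cV_2^{(1)})$ that unconditionally has $(1-\negl)$-completeness and whose soundness reduces, security-parameter-wise, to one-wayness of $g$ at input length $n(\secp)$; and I would apply \Cref{lem:QAS_IVPoQ} to $\cA\coloneqq\cQ$ with $q\coloneqq p$, obtaining a non-interactive pair $\Pi_2=(\cP_2,\cV^{(2)})$ with completeness $1-\secp^{-\log\secp}$ and the conditional soundness guarantee stated there. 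The combined protocol $(\cP,\cV_1,\cV_2)$ then runs $\cP_1,\cP_2$ in parallel; $\cV_1$ relays the $\Pi_1$ messages and collects the single $\Pi_2$ message $\tau_2$; and $\cV_2(1^\secp,(\tau_1,\tau_2))$ accepts iff $\cV_2^{(1)}(1^\secp,\tau_1)=\top$ and $\cV^{(2)}(1^\secp,\tau_2)=\top$.

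For the analysis, completeness is immediate by a union bound with $c(\secp)\coloneqq 1-\negl(\secp)-\secp^{-\log\secp}$, and I would set the soundness bound $s(\secp)\coloneqq 1-\secp^{-\log\secp}-p(\secp)^{-2}$, so that $c(\secp)-s(\secp)\ge p(\secp)^{-2}-\negl(\secp)\ge\tfrac{1}{2p(\secp)^2}$ for large $\secp$. For soundness, I would suppose for contradiction that a PPT prover $\cP^*$ makes $\cV_2$ accept with probability exceeding $s(\secp)$ for all $\secp$ in an infinite set $\Lambda$. Because acceptance is a conjunction, the induced PPT provers $\cP_1^*$ (run $\cP^*$, forward the $\Pi_1$ messages) and $\cP_2^*$ (run $\cP^*$, output the $\Pi_2$ message) make $\cV_2^{(1)}$ (respectively $\cV^{(2)}$) accept with probability exceeding $s(\secp)$ on all of $\Lambda$. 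Since $s(\secp)>1-\secp^{-\log\secp}-q(\secp)^{-2}$ with $q=p$, the third item of \Cref{lem:QAS_IVPoQ} applied to $\cP_2^*$ produces a PPT $\cB$ with $\SD(\cQ(1^\secp),\cB(1^\secp))\le 1/p(\secp)$ for all sufficiently large $\secp\in\Lambda$; thus $\Lambda'\coloneqq\Lambda\cap\Sigma_\cB$ contains all sufficiently large elements of $\Lambda$ and is infinite. On the other hand, since $s(\secp)\to 1$, $\cP_1^*$ makes $\cV_2^{(1)}$ accept with probability $>1/p(\secp)$ for all large $\secp\in\Lambda'$, so the parameter-wise reduction of \Cref{lem:OWF_IVPoQ} yields a PPT adversary $\cA$ and a polynomial $q'$ inverting $g$ at input length $n(\secp)$ with probability $>1/q'(\secp)$ for all large $\secp\in\Lambda'$. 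But \Cref{def:QAS/OWF} with $\cS=\cB$ together with the universality of $g$ says that $g$ is a classically-secure OWF on $\Sigma_\cB\supseteq\Lambda'$, at input length $n(\secp)$; since $\Lambda'$ is infinite this contradicts the existence of $\cA,q'$. Hence $(\cP,\cV_1,\cV_2)$ is an IV-PoQ and \Cref{thm:QAS/OWF_to_IV-PoQ} follows.

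The step I expect to be the main obstacle is making the parallel composition genuinely collapse the two guarantees onto a common parameter set. This needs both underlying implications to be honest security-parameter-wise reductions, so that a violation on $\Lambda$ on the combined side transfers to a violation on $\Lambda$ on each component with uniform adversary resources; and it needs the input-length polynomial used to build $\Pi_1$ to be the very one with respect to which $g$ is one-way on $\Sigma_\cB$ — which is why passing to the universal one-way function $g$, with its fixed input-length polynomial, is not cosmetic but the heart of the argument. Checking that the completeness loss $\secp^{-\log\secp}$ and the soundness slack $p(\secp)^{-2}$ leave a true $1/\poly$ gap is routine and is handled by the parameter choices above.
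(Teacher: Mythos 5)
Your proposal is correct and follows essentially the same route as the paper's proof: an AND-composition of the OWF-based IV-PoQ from \Cref{lem:OWF_IVPoQ} with the non-interactive protocol from \Cref{lem:QAS_IVPoQ}, where a combined cheating prover on an infinite set $\Lambda$ is split into two component provers, the QAS-side break yields (via the third item of \Cref{lem:QAS_IVPoQ}) a sampler $\cB$ with $\Sigma_\cB$ containing all large elements of $\Lambda$, and the OWF guarantee on $\Sigma_\cB$ then contradicts the $\Pi_1$-side break. The only deviation is your detour through the universal function $g$ of \Cref{lem:univ_OWF} to pin down a single input-length polynomial; the paper applies \Cref{lem:OWF_IVPoQ} to the fixed $f$ from \Cref{def:QAS/OWF} directly, which is harmless but glosses over the point you raise.
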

\begin{theorem}\label{thm:QAS/OWF_to_OWPuzz}
    If the QAS/OWF condition is satisfied, then classically-secure OWPuzzs exist.
\end{theorem}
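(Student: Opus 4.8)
The plan is to extract from the QAS/OWF condition its two constituents --- a polynomial $p$ together with a QPT algorithm $\cQ$, and a classical deterministic polynomial-time computable function $f$ --- and to assemble from them a single candidate classically-secure OWPuzz whose security reduces, security parameter by security parameter, to the condition. This is essentially the OWPuzz-level analogue of the construction behind \Cref{thm:QAS/OWF_to_IV-PoQ}: there the $\cQ$-branch and the $f$-branch are turned into IV-PoQ and run in parallel; here we turn them into OWPuzz-like objects and combine them by a logical AND. First I would apply \Cref{lem:QAS_IVPoQ} to the algorithm $\cQ$ with the polynomial $q:=p$, obtaining a non-interactive protocol $(\cP_\cQ,\cV_\cQ)$, and define $\Samp_\cQ(1^\secp)$ to run $\tau\gets\cP_\cQ(1^\secp)$ and output $\puzz:=1^\secp$ and $\ans:=\tau$, with $\Ver_\cQ(1^\secp,\ans')$ accepting iff $\cV_\cQ(1^\secp,\ans')=\top$; by \Cref{lem:QAS_IVPoQ} this object is $(1-1/\secp^{\log\secp})$-correct. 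Independently I would let $(\Samp_f,\Ver_f)$ be the $1$-correct classically-secure OWPuzz built from $f$ by \Cref{lem:OWF_OWPuzz} (namely $\Samp_f(1^\secp)$ samples $x\gets\bit^{n(\secp)}$ for the input-length polynomial $n$ associated with $f$ and outputs $\puzz:=f(x)$ and $\ans:=x$; $\Ver_f(\puzz,\ans')$ accepts iff $f(\ans')=\puzz$); crucially $\Samp_f$ is PPT, and since this construction does not itself refer to any subset of $\mathbb{N}$, by the security-parameter-wise content of \Cref{lem:OWF_OWPuzz} it is $\negl$-secure on \emph{every} set $\Sigma$ on which $f$ is a classically-secure OWF. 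The combined $(\Samp,\Ver)$ runs $\Samp_f$ and $\Samp_\cQ$ on independent randomness, concatenates the two puzzles and the two answers, and accepts iff \emph{both} $\Ver_f$ and $\Ver_\cQ$ accept.

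Completeness is routine: $\Ver_f$ accepts the honestly generated answer with probability $1$ and $\Ver_\cQ$ does so with probability $\ge 1-1/\secp^{\log\secp}$, and the two executions are independent, so $(\Samp,\Ver)$ is $(1-1/\secp^{\log\secp})$-correct. I would set the target security to $s(\secp):=1-1/\secp^{\log\secp}-1/p(\secp)^2$, so that the correctness-security gap is $1/p(\secp)^2\ge 1/\poly(\secp)$ and $s$ has the form $1-1/\poly$, which is exactly the regime of security allowed for classically-secure OWPuzzs in this paper.

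For security I would argue by contradiction. Suppose a PPT adversary $\cA$ makes $\Ver$ accept with probability $>s(\secp)$ for all $\secp$ in an infinite set $\Lambda$. Since accepting $\Ver$ entails accepting both components, the PPT algorithm $\cA_\cQ(1^\secp)$ that samples $x\gets\bit^{n(\secp)}$, runs $\cA$ on the puzzle $(f(x),1^\secp)$, and outputs the $\Ver_\cQ$-part of $\cA$'s answer makes $\cV_\cQ$ accept with probability $>s(\secp)=1-1/\secp^{\log\secp}-1/p(\secp)^2$ on $\Lambda$ (here I use that $\Samp_f$ is PPT and $f$ is polynomial-time computable, so $\cA_\cQ$ is PPT, and that $(f(x),1^\secp)$ with $x\gets\bit^{n(\secp)}$ is distributed exactly as the puzzle output of $\Samp$). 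Applying the third item of \Cref{lem:QAS_IVPoQ} with $q=p$ yields a PPT algorithm $\cB$ with $\SD(\cQ(1^\secp),\cB(1^\secp))\le 1/p(\secp)$ for all sufficiently large $\secp\in\Lambda$; hence the set $\Sigma_\cB=\{\secp\in\mathbb{N}:\SD(\cQ(1^\secp),\cB(1^\secp))\le 1/p(\secp)\}$ of \Cref{def:QAS/OWF} contains a cofinite subset of $\Lambda$, and is in particular infinite. By the QAS/OWF condition applied with $\cS:=\cB$, $f$ is a classically-secure OWF on $\Sigma_\cB$, so $(\Samp_f,\Ver_f)$ is $\negl$-secure on $\Sigma_\cB$. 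But the PPT algorithm $\cA_f(1^\secp,\puzz_f)$ that runs $\cA$ on $(\puzz_f,1^\secp)$ and outputs the $\Ver_f$-part of its answer makes $\Ver_f$ accept with probability $>s(\secp)>1/2$ for all $\secp\in\Lambda$, and $\Lambda\cap\Sigma_\cB$ is infinite, contradicting the $\negl$-security of $(\Samp_f,\Ver_f)$ on $\Sigma_\cB$. Therefore no such $\cA$ exists, and $(\Samp,\Ver)$ is a classically-secure OWPuzz, which proves \Cref{thm:QAS/OWF_to_OWPuzz}.

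The main obstacle is the parameter-wise bookkeeping, which is also why a naive ``$f$ is a classically-secure OWF, or $\cQ$ is a QAS'' dichotomy does not suffice: the QAS/OWF condition guarantees only the \emph{complementary} statement that for each efficient sampler $\cB$ against $\cQ$, the \emph{same} function $f$ is a OWF precisely on the set $\Sigma_\cB$ where $\cB$ fails. The AND combination is engineered to turn this into a genuine OWPuzz: any adversary breaking the combined puzzle on an infinite set $\Lambda$ is forced, through \Cref{lem:QAS_IVPoQ}, to supply a sampler $\cB$ with $\Lambda$ essentially contained in $\Sigma_\cB$, and then the guaranteed OWF-security of $f$ on $\Sigma_\cB$ defeats that very adversary on the $f$-component. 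It is essential here that $f$ in \Cref{def:QAS/OWF} is independent of the sampler (so that $(\Samp_f,\Ver_f)$ can be fixed once and for all before the adversary is chosen) and that the reductions underlying \Cref{lem:QAS_IVPoQ} and \Cref{lem:OWF_OWPuzz} are security-parameter-wise; aligning the concrete thresholds of \Cref{lem:QAS_IVPoQ} (choosing $q=p$ and $s=1-1/\secp^{\log\secp}-1/p(\secp)^2$) and checking that $\cA_\cQ$ and $\cA_f$ are PPT are the remaining, purely mechanical, points.
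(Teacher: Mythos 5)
Your proposal is correct and follows essentially the same route as the paper's proof: apply \Cref{lem:QAS_IVPoQ} to $\cQ$ and \Cref{lem:OWF_OWPuzz} to $f$, combine the two puzzles by a logical AND, and in the soundness argument use the adversary's success on the $\cQ$-component together with the third item of \Cref{lem:QAS_IVPoQ} to produce a sampler $\cS$ whose $\Sigma_\cS$ contains (a cofinite subset of) $\Lambda$, so that the guaranteed one-wayness of $f$ there defeats the adversary on the $f$-component. The parameter choices ($q=p$, security threshold $1-1/\secp^{\log\secp}-1/p(\secp)^2$) and the parameter-wise bookkeeping match the paper's argument.
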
 
\begin{theorem}\label{thm:OWPuzz_to_QAS/OWF}
    If classically-secure OWPuzzs exist, then the QAS/OWF condition is satisfied.
\end{theorem}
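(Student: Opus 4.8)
The plan is to prove this by contraposition via the quantifier-swapped form of the negation supplied by \Cref{lem:quantifier}, exactly as in Step~2 of the technical overview; the argument runs parallel to the proof of \Cref{thm:Int-QAS_to_QAS/OWF}. Let $(\Samp,\Ver)$ be a classically-secure OWPuzz with $c$-correctness and $s$-security, where $c(\secp)-s(\secp)\ge 1/q(\secp)$ for some polynomial $q$ and all large $\secp$, and suppose for contradiction that the QAS/OWF condition fails. Apply \Cref{lem:quantifier} with $\cQ:=\Samp$ and $p:=4q$: this yields a PPT algorithm $\cS$ such that for every efficiently-computable polynomial $n$ and every family $\{f_\secp:\bit^{n(\secp)}\to\bit^*\}_\secp$ of deterministic polynomial-time functions, there is a PPT algorithm $\cR$ with
\[
\SD(\Samp(1^\secp),\cS(1^\secp))\le \tfrac{1}{p(\secp)}
\quad\text{and}\quad
\SD\big(\{x,f_\secp(x)\}_{x\gets\bit^{n(\secp)}},\,\{\cR(1^{n(\secp)},f_\secp(x)),f_\secp(x)\}_{x\gets\bit^{n(\secp)}}\big)\le \tfrac{1}{p(\secp)}
\]
holding for all $\secp$ in some infinite set $\Lambda$. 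I take $n(\secp)$ to be the efficiently-computable polynomial bounding the number of coins used by $\cS(1^\secp)$ and set $f_\secp(r)$ to be the $\puzz$-component of $\cS(1^\secp;r)$. It is essential here that $f_\secp$ may depend on $\cS$, which is precisely why \Cref{lem:quantifier} is phrased with that order of quantifiers (and why its proof needs the universal construction of \Cref{cor:univ_DistOWF}).

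Next I build the PPT adversary $\cA$ against $(\Samp,\Ver)$: on input $(1^\secp,\puzz)$ it computes $1^{n(\secp)}$, runs $r'\gets\cR(1^{n(\secp)},\puzz)$, computes $(\puzz',\ans')\gets\cS(1^\secp;r')$, and outputs $\ans'$. To lower-bound its success I chain three statistical-distance losses of size $1/p(\secp)$, each valid for $\secp\in\Lambda$: (i) replace the puzzle distribution $\Samp(1^\secp)$ by $\cS(1^\secp)$, legitimate because acceptance of $\cA$'s answer depends on $\Samp$'s (resp.\ $\cS$'s) output only through $\puzz$, so the data-processing inequality applies; (ii) inside $\cA$'s computation, replace the inverted seed $r'=\cR(1^{n(\secp)},f_\secp(r))$ by a fresh uniform $r$, using the distributional-inversion bound together with data processing applied to the (unbounded, randomized) map sending a pair $(a,b)$ to the bit ``$\top\gets\Ver(b,\cS(1^\secp;a)_2)$''; (iii) note that after this substitution the quantity equals $\Pr[\top\gets\Ver(\puzz,\ans):(\puzz,\ans)\gets\cS(1^\secp)]$, which is within $1/p(\secp)$ of the correctness probability $c(\secp)$. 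Combining, $\Pr[\top\gets\Ver(\puzz,\cA(1^\secp,\puzz)):(\puzz,\ans)\gets\Samp(1^\secp)]\ge c(\secp)-3/p(\secp)=c(\secp)-3/(4q(\secp))\ge s(\secp)+1/(4q(\secp))>s(\secp)$ for all large $\secp\in\Lambda$. Since $\Lambda$ is infinite, the PPT adversary $\cA$ violates $s$-security of $(\Samp,\Ver)$, a contradiction; hence the QAS/OWF condition holds.

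The step I expect to be most delicate is (ii): one has to express both $\cA$'s acceptance probability and the target quantity $\Pr[\top\gets\Ver(\puzz,\ans):(\puzz,\ans)\gets\cS(1^\secp)]$ as one and the same (deterministic, then randomized) post-processing of the two distributions $\{(\cR(1^{n(\secp)},f_\secp(r)),f_\secp(r))\}_r$ and $\{(r,f_\secp(r))\}_r$, so that the distributional-inversion bound transfers without loss; the subtlety is keeping track of the coupling between the puzzle $\puzz=f_\secp(r)$ handed to $\Ver$ and the seed actually used to produce the answer. A secondary point is the calibration of the precision: the $3/p$ loss must not swamp the correctness--soundness gap $c-s\ge 1/q$, which is why $p:=4q$ is chosen. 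All other ingredients --- the universal construction making the OWF candidate independent of $\cS$, and the equivalence between distributional one-wayness and one-wayness --- are already packaged inside \Cref{lem:quantifier}, so they do not recur in this proof.
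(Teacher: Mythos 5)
Your proposal is correct and follows essentially the same route as the paper's proof: apply \Cref{lem:quantifier} with $\cQ=\Samp$ and a fixed polynomial $p$ that is a constant multiple of $q$, define $f_\secp(r)$ as the puzzle component of $\cS(1^\secp;r)$, build the adversary by inverting $f_\secp$ on $\puzz$ and re-running $\cS$, and chain three statistical-distance losses (sampler swap, distributional inversion via data processing, sampler swap back). The only cosmetic difference is the calibration constant ($p=4q$ versus the paper's $p=6q$), and both choices leave enough of the $c-s\ge 1/q$ gap to contradict $s$-security on the infinite set $\Lambda$.
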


Proofs of these theorems are given in the following subsections.

\begin{figure}[H]
    \centering
    \begin{tikzpicture}[scale=0.8]
        \node[draw, align=center, outer sep=5pt] (a) at (-5,2) {IV-PoQ \\ (\cref{def:IVPoQ})} ;
        \node[draw, align=center, outer sep=5pt] (b) at (-5,-2) {Int-QASs \\ (\cref{def:IntQAS})} ;
        \node[draw, align=center, outer sep=5pt] (c) at (0,0) {The QAS/OWF condition \\ (\cref{def:QAS/OWF})} ;
        \node[draw, align=center, outer sep=5pt] (d) at (6,0) {Classically-secure \\ OWPuzzs \\ (\cref{def:OWPuzz})} ;
        \draw[-latex] (a) -- node[auto=right] {\cref{lem:IV-PoQ_to_Int-SampQA}} (b) ;
        \draw[-latex] (b) -- node[auto=right] {\cref{thm:Int-QAS_to_QAS/OWF}} (c) ;
        \draw[-latex] (c) -- node[auto=right] {\cref{thm:QAS/OWF_to_IV-PoQ}} (a) ;
        \draw[-latex] (c) to[out=25,in=155] node[above] {\cref{thm:QAS/OWF_to_OWPuzz}} (d) ;
        \draw[-latex] (d) to[out=205,in=335] node[below] {\cref{thm:OWPuzz_to_QAS/OWF}} (c) ;
    \end{tikzpicture}
    \caption{Relations among theorems.}\label{fig:proof_strategy}
\end{figure}

\if0
The equivalence we prove is as follows.
\begin{theorem}\label{thm:main_equivalence}
    The following are equivalent.
    \begin{enumerate}
    \item There exist IV-PoQ.  \label{item:exist_IV-PoQ}
    \item There exist Int-QASs. \label{item:exist_Int-SampQA}
    \item There exists a QPT algorithm which satisfies the QAS/OWF condition. \label{item:exist_OWF-SampQA}
    \item There exist classically-secure OWPuzzs. \label{item:exist_puzzle}
    \end{enumerate}
\end{theorem}

We break the proof into the following five parts, and shown separately. 
\begin{itemize}
    \item \ref{item:exist_IV-PoQ} imply \ref{item:exist_Int-SampQA} (See \Cref{subsec:IV-PoQ_Int-SampQA}).
    \item \ref{item:exist_Int-SampQA} imply \ref{item:exist_OWF-SampQA} (See \Cref{subsec:Int-SampQA_OWF-SampQA}). 
    \item \ref{item:exist_OWF-SampQA} imply \ref{item:exist_IV-PoQ} (See \Cref{subsec:OWF-SampQA_IV-PoQ}).
    \item \ref{item:exist_OWF-SampQA} imply \ref{item:exist_puzzle} (See \Cref{subsec:OWF-SampQA_OWPuzzle}).
    \item \ref{item:exist_puzzle} imply \ref{item:exist_OWF-SampQA} (See \Cref{subsec:OWPuzzle_OWF-SampQA}).
\end{itemize}

\fi

\subsection{Proof of \cref{lem:IV-PoQ_to_Int-SampQA}}
\label{subsec:IV-PoQ_Int-SampQA}
In this subsection, we show \cref{lem:IV-PoQ_to_Int-SampQA}, namely, IV-PoQ $\Rightarrow$ Int-QASs.

\if0
\begin{theorem}\label{lem:IV-PoQ_to_Int-SampQA}
    If IV-PoQ exist, then Int-QASs exist.
\end{theorem}
\fi
\begin{proof}[Proof of \cref{lem:IV-PoQ_to_Int-SampQA}] 
A proof is similar to that of \cref{lem:non-int-IV-PoQ_QAS}.
Let $(\cP,\cV_1,\cV_2)$ be an IV-PoQ with $c$-completeness and $s$-soundness such that $c(\secp)-s(\secp)\ge\frac{1}{q(\secp)}$ for a polynomial $q$.
Let us consider the interactive protocol $(\cP,\cV_1)$ (i.e., the first phase of the IV-PoQ) which takes $1^\secp$ as input and outputs a transcript $\tau$.
We claim that $(\cP,\cV_1)$ is an Int-QAS.
For the sake of contradiction, assume that it is not.
Then for any polynomial $p$, there exists a PPT algorithm $\cP^*_p$ such that
\begin{equation}
    \SD \left( \langle\cP^*_p,\cV_1\rangle(1^\secp),\langle\cP,\cV_1\rangle(1^\secp) \right) \le\frac{1}{p(\secp)}
\end{equation} 
holds for infinitely many $\secp\in\mathbb{N}$.
We can prove that $\cP^*_{2q}$ breaks the $s$-soundness of the IV-PoQ as follows.
\begin{align}
    &\Pr[\top\gets\cV_2(1^\secp,\tau):\tau\gets\langle\cP^*_{2q},\cV_1\rangle(1^\secp)] \\
    &\ge \Pr[\top\gets\cV_2(1^\secp,\tau):\tau\gets\langle\cP,\cV_1\rangle(1^\secp)] - \SD \left( \langle\cP^*_{2q},\cV_1\rangle(1^\secp),\langle\cP,\cV_1\rangle(1^\secp) \right) \\
    &\ge c(\secp) - \frac{1}{2q(\secp)}\\
    &>s(\secp)
\end{align}
holds for infinitely many $\secp\in\mathbb{N}$, which breaks the $s$-soundness of the IV-PoQ. 
\end{proof}

\subsection{Proof of \cref{thm:Int-QAS_to_QAS/OWF}}
\label{subsec:Int-SampQA_OWF-SampQA}
In this subsection, we show \cref{thm:Int-QAS_to_QAS/OWF}, namely,
$\mbox{Int-QASs}\Rightarrow\mbox{the QAS/OWF condition}$.

\begin{proof}[Proof of \cref{thm:Int-QAS_to_QAS/OWF}]
Let $(\cA,\cC)$ be an $\ell$-round Int-QAS, where $\ell$ is a polynomial.
   Without loss of generality, we can assume that in each round, $\cC$ first sends a message to $\cA$, and $\cA$ returns a message to $\cC$. 
   This is always possible by adding some dummy messages.
   Let $c_i$ be the $i$-th message from $\cC$ to $\cA$, and $a_i$ be the $i$-th message from $\cA$ to $\cC$.
   We denote $(c_1,a_1,...,c_\ell,a_\ell)\gets\langle\cA,\cC\rangle(1^\secp)$ to mean that the transcript $(c_1,a_1,...,c_\ell,a_\ell)$ is obtained by executing the interactive protocol 
   $(\cA,\cC)$ on input $1^\secp$. 
   For the notational simplicity, we denote $\tau_k:=(c_1,a_1,...,c_k,a_k)$ for $k\in [\ell]$.  
   
    For the sake of contradiction, assume that the QAS/OWF condition is not satisfied.
    Then, by \Cref{lem:quantifier} where we set $\mathcal{Q}=\langle\cA,\cC\rangle$, we obtain the following statement:
    for any polynomial $p$, there exists a PPT algorithm $\cS_p$ such that for any efficiently-computable polynomial $n$
    and any family $\{f_\secp:\bit^{n(\secp)}\to\bit^*\}_{\secp\in\mathbb{N}}$ of functions that are computable in classical deterministic polynomial-time, 
    there exists a PPT algorithm $\cR_p$ such that
    \begin{align}\label{eq:S_p_close}
        \SD(\langle\cA,\cC\rangle(1^\secp),\cS_p(1^\secp)) \le \frac{1}{p(\secp)}
    \end{align}
    and 
    \begin{align}\label{eq:IntQAS_cond2}
        \SD(\{x,f_\secp(x)\}_{x\gets\bit^{n(\secp)}} , \{\cR_p(1^{n(\secp)},f_\secp(x)),f_\secp(x)\}_{x\gets\bit^{n(\secp)}}) \le \frac{1}{p(\secp)}
    \end{align}
    hold for infinitely many $\secp\in\mathbb{N}$. 
    Let $u(\secp)$ be the length of the randomness used by $\cS_p(1^\secp)$.  
    By using $\cS_p$, we define a family $\{f_\secp\}_{\secp\in\mathbb{N}}$ of functions that are computable in classical deterministic polynomial-time as follows:
    \begin{itemize}
        \item $f_{\secp}:\bit^{\lceil\log\ell(\secp)\rceil+u(\secp)}\to\bit^*$: 
        \begin{enumerate}
            \item Take $(k,r)$ as input, where $k\in \bit^{\lceil\log\ell(\secp)\rceil}$ and $r\in\bit^{u(\secp)}$. 
            \item Regard $k$ as an encoding of an integer in $[2^{\lceil\log\ell(\secp)\rceil}]$.\footnote{For example, regard $k$ as a binary encoding of an integer and then add $1$.} We use the same notation $k$ to mean the corresponding integer. If $k\notin [\ell(\secp)]$, output an arbitrary fixed value, say, $0$. 
            \item Run $\cS_p(1^\secp;r)=(c_1,a_1,...,c_\ell,a_\ell)$.
            \item Output $(k,\tau_{k-1},c_k)$. 
       \end{enumerate}
    \end{itemize}
      For this specific $\{f_\secp\}_{\secp\in\mathbb{N}}$, there exists a PPT algorithm $\cR_p$ such that \Cref{eq:S_p_close,eq:IntQAS_cond2} hold for infinitely many $\secp\in \mathbb{N}$. We write $\Lambda\subseteq \mathbb{N}$ to mean the set of such $\secp$.
      Rewriting \Cref{eq:IntQAS_cond2}, we have 
        \begin{align}\label{eq:IntQAS_cond2_rewrite}
        \SD(\{(k,r),f_\secp(k,r)\}_{
       (k,r)\gets \bit^{\lceil\log\ell(\secp)\rceil+u(\secp)}} , \{\cR_p(1^{n(\secp)},f_\secp(k,r)),f_\secp(k,r)\}_{
        (k,r)\gets \bit^{\lceil\log\ell(\secp)\rceil+u(\secp)}}) \le \frac{1}{p(\secp)}
    \end{align}
    for all $\secp\in \Lambda$.   
    For any fixed $k^*\in [\ell(\secp)]$, we have 
          \begin{align}
        &\SD(\{(k,r),f_\secp(k,r)\}_{k,r} , \{\cR_p(1^{n(\secp)},f_\secp(k,r)),f_\secp(k,r)\}_{k,r})\\
&=\frac{1}{2}\sum_{k',r',\tau_{k'-1},c_{k'}}\left|
\begin{array}{ll}
\Pr_{k,r}[((k,r),f_\secp(k,r))=((k',r'),(k',\tau_{k'-1},c_{k'}))]\\
       -
       \Pr_{k,r}[(\cR_p(1^{n(\secp)},f_\secp(k,r)),f_\secp(k,r))=((k',r'),(k',\tau_{k'-1},c_{k'}))]
   \end{array}
       \right|\\
&\ge\frac{1}{2}\sum_{r',\tau_{k^*-1},c_{k^*}}\left|
\begin{array}{ll}
\Pr_{k,r}[((k,r),f_\secp(k,r))=((k^*,r'),(k^*,\tau_{k^*-1},c_{k^*}))]\\
       -
       \Pr_{k,r}[(\cR_p(1^{n(\secp)},f_\secp(k,r)),f_\secp(k,r))=((k^*,r'),(k^*,\tau_{k^*-1},c_{k^*}))]
   \end{array}
       \right|\\
       &=\frac{1}{2}\Pr_k[k=k^*]\sum_{r',\tau_{k^*-1},c_{k^*}}\left|
\begin{array}{ll}
\Pr_{r}[((k^*,r),f_\secp(k^*,r))=((k^*,r'),(k^*,\tau_{k^*-1},c_{k^*}))]\\
       -
       \Pr_{r}[(\cR_p(1^{n(\secp)},f_\secp(k^*,r)),f_\secp(k^*,r))=((k^*,r'),(k^*,\tau_{k^*-1},c_{k^*}))]
   \end{array}
       \right|\\
&=
     \frac{1}{2^{\lceil\log\ell(\secp)\rceil}}
     \SD(\{(k^*,r),f_\secp(k^*,r)\}_{r\gets\bit^{u(\secp)}} , \{\cR_p(1^{n(\secp)},f_\secp(k^*,r)),f_\secp(k^*,r)\}_{r\gets\bit^{u(\secp)}})\\
     &\ge 
     \frac{1}{2\ell(\secp)}
     \SD(\{(k^*,r),f_\secp(k^*,r)\}_{r\gets\bit^{u(\secp)}} , \{\cR_p(1^{n(\secp)},f_\secp(k^*,r)),f_\secp(k^*,r)\}_{r\gets\bit^{u(\secp)}}),
\end{align}
where $(k,r)\gets \bit^{\lceil\log\ell(\secp)\rceil+u(\secp)}$.
Combining the above  and \Cref{eq:IntQAS_cond2_rewrite}, we have 
    \begin{align}
        \SD(\{(k,r),f_\secp(k,r)\}_{r\gets\bit^{u(\secp)}} , \{\cR_p(1^{n(\secp)},f_\secp(k,r)),f_\secp(k,r)\}_{r\gets\bit^{u(\secp)}}) \le \frac{2\ell(\secp)}{p(\secp)}
    \end{align}
    for all $\secp\in \Lambda$
    and all $k\in [\ell(\secp)]$.   
    Equivalently, we have 
    \begin{align}
        \SD(\{(k,r),(k,\tau_{k-1},c_k)\} , \{(k',r'),(k,\tau_{k-1},c_k)\}) \le \frac{2\ell(\secp)}{p(\secp)}
    \end{align}
    for all $\secp\in \Lambda$ 
    where $r\gets \bit^{u(\secp)}$, $\cS_p(1^\secp;r)=(c_1,a_1,...,c_\ell,a_\ell)$, and 
    $(k',r')\gets\cR_p(1^{n(\secp)},(k,\tau_{k-1},c_k))$. 
    By the monotonicity of the statistical distance,\footnote{For any random variables $X$ and $Y$ and any algorithm $\cA$, $\SD(\cA(X),\cA(Y))\le \SD(X,Y)$.
    } the above implies  
   \begin{align} \label{eq:next_message_close_sim}
        \SD(\{a_k,(k,\tau_{k-1},c_k)\} , \{a'_{k'},(k,\tau_{k-1},c_k)\}) \le \frac{2\ell(\secp)}{p(\secp)}
    \end{align}
   for all $\secp\in \Lambda$ 
    where $r\gets \bit^{u(\secp)}$, $\cS_p(1^\secp;r)=(c_1,a_1,...,c_\ell,a_\ell)$,  
    $(k',r')\gets\cR_p(1^{n(\secp)},(k,\tau_{k-1},c_k))$,  and
  $\cS_p(1^\secp;r')=(c'_1,a'_1,...,c'_\ell,a'_\ell)$.
  By \Cref{eq:S_p_close}, the distributions of $\{a_k,(k,\tau_{k-1},c_k)\}$ and $\{a'_{k'},(k,\tau_{k-1},c_k)\}$ in \Cref{eq:next_message_close_sim} change only by at most $\frac{1}{p(\secp)}$ in terms of statistical distance if we generate $(c_1,a_1,...,c_\ell,a_\ell)\gets \langle\cA,\cC\rangle(1^\secp)$ instead of $r\gets \bit^{u(\secp)}$ and $\cS_p(1^\secp;r)=(c_1,a_1,...,c_\ell,a_\ell)$. 
Thus, we have 
   \begin{align} \label{eq:next_message_close}
        \SD(\{a_k,(k,\tau_{k-1},c_k)\} , \{a'_{k'},(k,\tau_{k-1},c_k)\}) \le \frac{2\ell(\secp)+2}{p(\secp)}
    \end{align}
   for all $\secp\in \Lambda$ 
    where $(c_1,a_1,...,c_\ell,a_\ell)\gets \langle\cA,\cC\rangle(1^\secp)$,  
    $(k',r')\gets\cR_p(1^{n(\secp)},(k,\tau_{k-1},c_k))$,  and
  $\cS_p(1^\secp;r')=(c'_1,a'_1,...,c'_\ell,a'_\ell)$.

For any polynomial $q$,  
we construct a PPT algorithm $\cB_q$ 
that satisfies
   \begin{align}\label{eq:B_break_Int_QAS}
       \SD(\langle\cA,\cC\rangle(1^\secp),\langle\cB_q,\cC\rangle(1^\secp)) \le \frac{1}{q(\secp)}
   \end{align}
for all $\secp\in \Lambda$.
Since $\Lambda$ is an infinite set, 
this means that $\cB_q$ breaks the Int-QAS, whih constradicts the assumption. 
Thus, it suffices to prove that \Cref{eq:B_break_Int_QAS} holds for  all $\secp\in \Lambda$.

Below, we give the construction of  $\cB_q$. 
Let $p(\secp):=(2\ell(\secp)+2)\ell(\secp) q(\secp)$. 
In the $k$-th round
    $\cB_q$ interacts with $\cC$  as follows:
   \begin{enumerate}
       \item Receive $c_k$ from $\cC$. 
       \item Run $(k',r')\gets\cR_p(1^{n(\secp)},(k,\tau_{k-1},c_k))$.
       \item Compute $\cS_p(1^\secp;r')=(c'_1,a'_1,...,c'_\ell,a'_\ell)$. 
       \item Return $a_{k}:=a'_{k'}$ to $\cC$.
   \end{enumerate}
For $k\in [\ell(\secp)+1]$, let $D_k$ be the distribution sampled by the following procedure: 
\begin{enumerate}
\item Run $\langle\cA,\cC\rangle(1^\secp)$ until the $(k-1)$-th round where we write $\tau_{k-1}=(c_1,a_1,\ldots,c_{k-1},a_{k-1})$ to mean the partial transcript until the $(k-1)$-th round. 
\item Use $\cB_q$ instead of $\cA$ to complete the protocol. That is, for $i=k,k+1,\ldots,\ell$, do the following:  
\begin{enumerate}
    \item Receive $c_i$ from $\cC$. 
       \item Run $(i',r')\gets\cR_p(1^{n(\secp)},(i,\tau_{i-1},c_i))$.
       \item Compute $\cS_p(1^\secp;r')=(c'_1,a'_1,...,c'_\ell,a'_\ell)$. 
       \item Return $a_{i}:=a'_{i'}$ to $\cC$.
\end{enumerate}
\item Output the full transcript $\tau_\ell=(c_1,a_1,\ldots,c_{\ell},a_{\ell})$. 
\end{enumerate}
Clearly, we have $D_1=\langle\cB_q,\cC\rangle(1^\secp)$ 
and
$D_{\ell+1}=\langle\cA,\cC\rangle(1^\secp)$ 
where the equality means equivalence as distributions. 
Thus, it suffices to prove  
\begin{align}
    \SD(D_1,D_{\ell+1})\le \frac{1}{q(\secp)}
\end{align}
for all $\secp \in \Lambda$. 
By the triangle inequality, it suffices to prove 
\begin{align}\label{eq:hybrd_k_k_plus_1}
    \SD(D_k,D_{k+1})\le \frac{1}{\ell(\secp)q(\secp)}
\end{align}
for all 
$\secp \in \Lambda$
and 
$k\in [\ell(\secp)]$.
Note that the only difference between $D_{k}$ and $D_{k+1}$ is how $a_k$ is generated. 
In $D_k$, $a_k$ is generated by the interaction between $\cA$ and $\cC$. On the other hand, in $D_{k+1}$, $a_k$ is generated by running $(k',r')\gets \cR_p(k,\tau_{k-1},c_k)$ and $\cS_p(1^\secp;r')=(c'_1,a'_1,...,c'_\ell,a'_\ell)$  and then setting $a_k:=a'_{k'}$ where  $(\tau_{k-1},c_k)$ is the partial transcript generated by the interaction between $\cA$ and $\cC$. Thus, by a straightforward reduction to \Cref{eq:next_message_close}, the distributions of $\tau_k$ in $D_{k}$ and $D_{k+1}$ differ by at most $\frac{2\ell(\secp)+2}{p(\secp)}=\frac{1}{\ell(\secp)q(\secp)}$ in terms of statistical distance. 
Moreover, both in $D_{k}$ and $D_{k+1}$, the partial transcript $\tau_k$ is extended to the full transcript $\tau_{\ell}$ in the same manner using the interaction between $\cB_q$ and $\cC$. Thus, by the monotonicity of the statistical distance,  
\Cref{eq:hybrd_k_k_plus_1} holds for all $\secp\in \Lambda$.
This completes the proof of \cref{thm:Int-QAS_to_QAS/OWF}. 
\end{proof}

    \if0
    \takashi{The following is the old proof.}
    For this specific $\{f_\secp\}_{\secp\in\mathbb{N}}$, there exists a PPT algorithm $\cR_p$ that satisfies \Cref{eq:IntQAS_cond2}.

    From $\cS_p$ and $\cR_p$, we construct a PPT algorithm $\cB_q$ that satisfies
   \begin{align}
       \SD(\langle\cA,\cC\rangle(1^\secp),\langle\cB_q,\cC\rangle(1^\secp)) \le \frac{1}{q(\secp)}
   \end{align}
   for any polynomial $q$ \takashi{I prefer to move "for any polynomial $q$" to before $\cB_q$ to clarify the dependence of $\cB_q$ on $q$.}
   \takashi{for infinitely many $\secp\in \mathbb{N}$?}. This means that $\cB_q$ breaks the Int-QAS.

On the security parameter $1^\secp$,     $\cB_q$ interacts with $\cC$ in the $k$-th round as follows:
   \begin{enumerate}
       \item Obtain $c_k$ from $\cC$ as input.
       \item Set $p:=(2+\ell)\ell q$.
       \item Run $(k',r)\gets\cR_p(1^\secp,(k,\tau_{k-1},c_k))$
       \item Compute $\cS_p(1^\secp;r)=(c'_1,a'_1,...,c'_\ell,a'_\ell)$.
       \item Return $a_{k}:=a'_{k'}$ to $\cC$.
   \end{enumerate}
   Let us define the following diagonal density operators for each $k\in[\ell(\secp)]$:
   \begin{align}
       C_k^\cA &:= \sum_{\tau_{k-1},c_k} \Pr[(\tau_{k-1},c_k)\gets\langle\cA,\cC\rangle(1^\secp)] |\tau_{k-1},c_k\rangle\langle\tau_{k-1},c_k| \\
       C_k^\cB &:= \sum_{\tau_{k-1},c_k} \Pr[(\tau_{k-1},c_k)\gets\langle\cB_q,\cC\rangle(1^\secp)] |\tau_{k-1},c_k\rangle\langle\tau_{k-1},c_k| \\
       C_k^\cS &:= \sum_{\tau_{k-1},c_k} \Pr[(\tau_{k-1},c_k)\gets\cS_p(1^\secp)] |\tau_{k-1},c_k\rangle\langle\tau_{k-1},c_k| \\
       T_k^\cA &:= \sum_{\tau_k} \Pr[\tau_k\gets\langle\cA,\cC\rangle(1^\secp)] |\tau_k\rangle\langle\tau_k| \\
       T_k^\cB &:= \sum_{\tau_k} \Pr[\tau_k\gets\langle\cB_q,\cC\rangle(1^\secp)] |\tau_k\rangle\langle\tau_k| \\
       T_k^\cS &:= \sum_{\tau_k} \Pr[\tau_k\gets\cS_p(1^\secp)] |\tau_k\rangle\langle\tau_k|.
   \end{align}
   We also introduce the following operations.
   \begin{itemize}
       \item $\cE_k^\cA:$ It is the operation that takes $(\tau_{k-1},c_k)$ as input, runs $a_k\gets\cA(\tau_{k-1},c_k)$, and outputs $\tau_k$. \takashi{What is $\cA(\tau_{k-1},c_k)$?
       Note that $\cA$ is an interactive quantum machine that may internally keep some quantum state, and then it's unclear how $a_k\gets\cA(\tau_{k-1},c_k)$ is defined. The same comment applies to $\cB_q$ and $\cC$ considering that they may use some randomness.  
       }
       In other words,
       \begin{align}
           \cE_k^\cA(|\tau_{k-1},c_k\rangle\langle\tau_{k-1},c_k|) = |\tau_{k-1},c_k\rangle\langle\tau_{k-1},c_k| \otimes \sum_{a_k}\Pr[a_k\gets\cA(\tau_{k-1},c_k)]|a_k\rangle\langle a_k|.    
       \end{align}
       \takashi{Considering the above comment, I believe $\Pr[a_k\gets\cA(\tau_{k-1},c_k)]$ should be replaced with the conditional probability that $\cA$'s $k$-th message is $a_k$ conditioned on that $\langle\cA,\cC\rangle(1^\secp)$ produces the partial transcript $(\tau_{k-1},c_k)$.}
       \item $\cE_k^\cB:$ It is the operation that takes $(\tau_{k-1},c_k)$ as input, runs $a_k\gets\cB_q(\tau_{k-1},c_k)$, and outputs $\tau_k$. In other words,
       \begin{align}
           \cE_k^\cB(|\tau_{k-1},c_k\rangle\langle\tau_{k-1},c_k|) = |\tau_{k-1},c_k\rangle\langle\tau_{k-1},c_k| \otimes \sum_{a_k}\Pr[a_k\gets\cB_q(\tau_{k-1},c_k)]|a_k\rangle\langle a_k|.    
       \end{align}
       \takashi{Similalry to the above, I believe we should use a similar conditional probability.}
       \item $\cE_k^\cC:$ It is the operation that takes $\tau_{k-1}$ as input, runs $c_k\gets\cC(\tau_{k-1})$, 
       and outputs $(\tau_{k-1},c_k)$. In other words,
       \begin{align}
           \cE_k^\cC(|\tau_{k-1}\rangle\langle\tau_{k-1}|) = |\tau_{k-1}\rangle\langle\tau_{k-1}| \otimes \sum_{c_k}\Pr[c_k\gets\cC(\tau_{k-1})]|c_k\rangle\langle c_k|.    
       \end{align}
       \takashi{Similarly, I believe $\Pr[c_k\gets\cC(\tau_{k-1})]$ should be replaced with the conditional probability that $\cC$'s $k$-th message is $c_k$ conditioned on that $\langle\cA,\cC\rangle(1^\secp)$ produces the partial transcript $\tau_{k-1}$. Note that this conditional probability depends on $\cA$. Then this may cause some problem in the following proof. (For example Eq. (127)).
       }
       \item $\cE_k^\cS:$ It is the operation that runs as follows: given $(\tau_{k-1},c_k)$ as input, it samples $(k,r)\gets f_\secp^{-1}(k,\tau_{k-1},c_k)$, \takashi{It may be useful to note that $f_\secp^{-1}(k,\tau_{k-1},c_k)$ means the preimage of $(k,\tau_{k-1},c_k)$ under $f_\secp$ and $(k,r)\gets f_\secp^{-1}(k,\tau_{k-1},c_k)$ means to uniformly sample $(k,r)$ from the preimage. (I couldn't understand the meaning of the notation at first.)} computes $\cS_p(1^\secp;r)=(c'_1,a'_1,...,c'_\ell,a'_\ell)$, and outputs $(\tau_{k-1},c_k,a_k')$. In other words,
       \begin{align} 
           \cE_k^\cS(|\tau_{k-1},c_k\rangle\langle\tau_{k-1},c_k|) 
           = |\tau_{k-1},c_k\rangle\langle\tau_{k-1},c_k| \otimes \sum_{a_k}\Pr[a_k\gets\cS_p(1^\secp) \mid (\tau_{k-1},c_k)\gets\cS_p(1^\secp)]|a_k\rangle\langle a_k|.    
       \end{align}
   \end{itemize}
   What we need to prove is that for any polynomial $q$,
   \begin{align}
       \SD(\langle\cA,\cC\rangle(1^\secp),\langle\cB_q,\cC\rangle(1^\secp)) \le \frac{1}{q(\secp)}
   \end{align}
    holds for infinitely many $\secp\in\mathbb{N}$.
   Instead of proving the above inequality directly, it is sufficient to show that for all $k\in[\ell(\secp)]$,
   \begin{equation}\label{TD_k}
       \TD \left( \cE^\cA_k \left( C^\cA_k \right) , \cE^\cB_k \left( C^\cA_k \right) \right) \le \frac{1}{\ell(\secp) q(\secp)}.
   \end{equation}
   The reason is as follows: if we assume that \Cref{TD_k} is satisfied, then for $2\le k\le\ell(\secp)$,
   \begin{align}
       \TD\left(T^\cA_k,T^\cB_k\right) &= \TD\left( \cE^\cA_k \left( C^\cA_k \right) , \cE^\cB_k \left( C^\cB_k \right) \right) \\
       &\le \TD\left( \cE^\cA_k \left( C^\cA_k \right) , \cE^\cB_k \left( C^\cA_k \right) \right) + \TD\left( \cE^\cB_k \left( C^\cA_k \right) , \cE^\cB_k \left( C^\cB_k \right) \right) \\
       &\le \frac{1}{\ell(\secp) q(\secp)} + \TD\left( \cE^\cB_k \left( C^\cA_k \right) , \cE^\cB_k \left( C^\cB_k \right) \right) \\
       &\le \frac{1}{\ell(\secp) q(\secp)} + \TD\left( C^\cA_k,C^\cB_k \right) \\
       &= \frac{1}{\ell(\secp) q(\secp)} + \TD\left( \cE^\cC_k \left( T^\cA_{k-1} \right) , \cE^\cC_k \left( T^\cB_{k-1} \right) \right) \\
       &\le \frac{1}{\ell(\secp) q(\secp)} + \TD\left( T^\cA_{k-1},T^\cB_{k-1} \right).
   \end{align}
   \takashi{As commented below (119), I believe the actual definition of $\cE^\cC_k$ should involve $\cA$. In this case, I don't think we have $\cE^\cC_k \left( T^\cB_{k-1} \right)=C^\cB_k$. 
   (We could define $\cE^\cC_k$ using $\cB$, but then we don't have $\cE^\cC_k \left( T^\cA_{k-1} \right)=C^\cA_k$.)
   }
   And for $k=1$,
   \begin{align}
       \TD\left( T^\cA_1,T^\cB_1 \right) = \TD\left( \cE^\cA_1 \left( C^\cA_1 \right) , \cE^\cB_1 \left( C^\cB_1 \right) \right) 
       = \TD\left( \cE^\cA_1 \left( C^\cA_1 \right) , \cE^\cB_1 \left( C^\cA_1 \right) \right) 
       \le \frac{1}{\ell(\secp) q(\secp)}.
   \end{align}
   By combining these inequalities repeatedly, we finally obtain
   \begin{align}
       \SD\left( \langle\cA,\cC\rangle(1^\secp), \langle\cB_q,\cC\rangle(1^\secp) \right) &= \TD\left( T^\cA_\ell,T^\cB_\ell \right) 
       \le \frac{1}{q(\secp)}.
   \end{align}
   Now we show \Cref{TD_k}.
   First,
   \begin{align}
       &\TD \left( \cE^\cA_k \left( C^\cA_k \right) , \cE^\cB_k \left( C^\cA_k \right) \right) \nonumber \\
       &\le \TD \left( \cE^\cA_k \left( C^\cA_k \right) , \cE^\cS_k \left( C^\cS_k \right) \right) + \TD \left( \cE^\cS_k \left( C^\cS_k \right) , \cE^\cB_k \left( C^\cA_k \right) \right) \\
       &= \TD \left( T^\cA_k, T^\cS_k \right) + \TD \left( \cE^\cS_k \left( C^\cS_k \right) , \cE^\cB_k \left( C^\cA_k \right) \right) \\
       &\le \TD \left( T^\cA_\ell, T^\cS_\ell \right) + \TD \left( \cE^\cS_k \left( C^\cS_k \right) , \cE^\cB_k \left( C^\cA_k \right) \right) \\
       &= \SD(\langle\cA,\cC\rangle(1^\secp),\cS_{p}(1^\secp)) + \TD \left( \cE^\cS_k \left( C^\cS_k \right) , \cE^\cB_k \left( C^\cA_k \right) \right) \\
       &\le \SD(\langle\cA,\cC\rangle(1^\secp),\cS_{p}(1^\secp)) + \TD \left( \cE^\cS_k \left( C^\cS_k \right) , \cE^\cB_k \left( C^\cS_k \right) \right) + \TD \left( \cE^\cB_k \left( C^\cS_k \right) , \cE^\cB_k \left( C^\cA_k \right) \right) \\
       &\le \SD(\langle\cA,\cC\rangle(1^\secp),\cS_{p}(1^\secp)) + \TD \left( \cE^\cS_k \left( C^\cS_k \right) , \cE^\cB_k \left( C^\cS_k \right) \right) + \TD\left( C^\cS_k,C^\cA_k \right) \\
       &\le \SD(\langle\cA,\cC\rangle(1^\secp),\cS_{p}(1^\secp)) + \TD \left( \cE^\cS_k \left( C^\cS_k \right) , \cE^\cB_k \left( C^\cS_k \right) \right) + \TD \left( T^\cA_\ell, T^\cS_\ell \right) \\
       &= 2\SD(\langle\cA,\cC\rangle(1^\secp),\cS_{p}(1^\secp)) + \TD \left( \cE^\cS_k \left( C^\cS_k \right) , \cE^\cB_k \left( C^\cS_k \right) \right). \label{eq:bound_1}
   \end{align}

\takashi{I feel $\TD \left( \cE^\cS_k \left( C^\cS_k \right) , \cE^\cB_k \left( C^\cS_k \right) \right)\le \ell/p$ is almost immediate and I don't see why such a complicated calculation is needed.
In particular, \Cref{eq:IntQAS_cond2} implies  
\begin{align}
        \SD(\{r,f_\secp(k,r)\}_{r\gets\bit^{u(\secp)}} , \{\cR_p(1^\secp,f_\secp(k,r)),f_\secp(k,r)\}_{r\gets\bit^{u(\secp)}}) \le \frac{\ell(\secp)}{p(\secp)}
    \end{align}
    for all $k\in [\ell(\secp)]$ and I think this is an almost immediate upper bound of $\TD \left( \cE^\cS_k \left( C^\cS_k \right) , \cE^\cB_k \left( C^\cS_k \right) \right)$ considering the meaning. For formally proving it, we can consider hybrids over distributions of the transcript. (I believe the current proof is essentially the same, but just talking about distributions instead of treating them as density operators would be significantly shorter.) 
}
   Next, we define the following density matrices:\footnote{$D_0$ and $D_1$ depend on $(k,\tau_{k-1},c_k)$, but we do not write these dependence explicitly for the notational simplicity.}
   \begin{align}
       D_0 &:= \frac{1}{|f^{-1}_\secp(k,\tau_{k-1},c_k)|} \sum_{(k',r)\in f^{-1}_\secp(k,\tau_{k-1},c_k)} |k',r\rangle\langle k',r|, \\
       D_1 &:= \sum_{k',r} \Pr[(k',r)\gets\cR_p(k,\tau_{k-1},c_k)] |k',r\rangle\langle k',r|.
   \end{align}
   Let us consider the following operation:
   \begin{itemize}
       \item $\cF$:
       \begin{enumerate}
           \item Take $(k,r)$ as input.
           \item Compute $\cS_p(1^\secp;r)=(c'_1,a'_1,...,c'_\ell,a'_\ell)$.
           \item Output $a'_{k}$.
       \end{enumerate}
       Thus, $\cF$ acts as
       \begin{align}
           \cF(D_0) &= \frac{1}{|f^{-1}_\secp(k,\tau_{k-1},c_k)|} \sum_{(k',r)\in f^{-1}_\secp(k,\tau_{k-1},c_k)} \cF (|k',r\rangle\langle k',r|) \\
           &= \frac{1}{|f^{-1}_\secp(k,\tau_{k-1},c_k)|} \sum_{r \text{~s.t.~} (k,r)\in f^{-1}_\secp(k,\tau_{k-1},c_k)} \cF (|k,r\rangle\langle k,r|) \\
           &= \frac{2^{u(\secp)}}{|f^{-1}_\secp(k,\tau_{k-1},c_k)|} \sum_{r \text{~s.t.~} (k,r)\in f^{-1}_\secp(k,\tau_{k-1},c_k)} \frac{1}{2^{u(\secp)}} \cF (|k,r\rangle\langle k,r|) \\
           &= \frac{1}{\Pr[(\tau_{k-1},c_k)\gets\cS_p(1^\secp)]} \sum_{r \text{~s.t.~} (k,r)\in f^{-1}_\secp(k,\tau_{k-1},c_k)} \frac{1}{2^{u(\secp)}} \cF (|k,r\rangle\langle k,r|) \\
           &= \frac{1}{\Pr[(\tau_{k-1},c_k)\gets\cS_p(1^\secp)]} \sum_{a_k} \Pr[(\tau_{k-1},c_k,a_k)\gets\cS_p(1^\secp)] |a_k\rangle\langle a_k| \\
           &= \sum_{a_k} \Pr[a_k\gets\cS_{p}(1^\secp)|(\tau_{k-1},c_k)\leftarrow\cS_{p}(1^\secp)] |a_k\rangle\langle a_k|, \\
           \cF(D_1) &= \sum_{k',r} \Pr[(k',r)\gets\cR_p(k,\tau_{k-1},c_k)] \cF(|k',r\rangle\langle k',r|) \\
           &= \sum_{a_k} \Pr[a_k\leftarrow\cB_q(\tau_{k-1},c_k)] |a_k\rangle\langle a_k|.
       \end{align}
   \end{itemize}
   By using these operators, we obtain
   \begin{align}
       \cE^\cS_k \left( C^\cS_k \right) 
       &= \sum_{\tau_{k-1},c_k} \Pr[(\tau_{k-1},c_k)\leftarrow\cS_{p}(1^\secp)] \cE^\cS_k (|\tau_{k-1},c_k\rangle\langle\tau_{k-1},c_k|) \\
       &= \sum_{\tau_{k-1},c_k} \Pr[(\tau_{k-1},c_k)\leftarrow\cS_{p}(1^\secp)] |\tau_{k-1},c_k\rangle\langle\tau_{k-1},c_k| \\ \nonumber
       &\qquad \otimes \sum_{a_k} \Pr[a_k\leftarrow\cS_{p}(1^\secp)|(\tau_{k-1},c_k)\leftarrow\cS_{p}(1^\secp)] |a_k\rangle\langle a_k| \\
       &= \sum_{\tau_{k-1},c_k} \Pr[(\tau_{k-1},c_k)\leftarrow\cS_{p}(1^\secp)] |\tau_{k-1},c_k\rangle\langle\tau_{k-1},c_k| \otimes \cF(D_0), \\
       \cE^\cB_k \left( C^\cS_k \right) 
       &= \sum_{\tau_{k-1},c_k} \Pr[(\tau_{k-1},c_k)\leftarrow\cS_{p}(1^\secp)] \cE^\cB_k (|\tau_{k-1},c_k\rangle\langle\tau_{k-1},c_k|) \\
       &= \sum_{\tau_{k-1},c_k} \Pr[(\tau_{k-1},c_k)\leftarrow\cS_{p}(1^\secp)] |\tau_{k-1},c_k\rangle\langle\tau_{k-1},c_k| \\
       &\qquad \otimes \sum_{a_k} \Pr[a_k\leftarrow\cB_q(\tau_{k-1},c_k)] |a_k\rangle\langle a_k| \\
       &= \sum_{\tau_{k-1},c_k} \Pr[(\tau_{k-1},c_k)\leftarrow\cS_{p}(1^\secp)] |\tau_{k-1},c_k\rangle\langle\tau_{k-1},c_k| \otimes \cF(D_1).
   \end{align}
   Thus, 
   \begin{align}
       &\TD \left( \cE^\cS_k \left( C^\cS_k \right) , \cE^\cB_k \left( C^\cS_k \right) \right) \\
       &= \frac{1}{2} \left\| \sum_{\tau_{k-1},c_k} \Pr[(\tau_{k-1},c_k)\gets\cS_{p}(1^\secp)] |\tau_{k-1},c_k\rangle\langle\tau_{k-1},c_k| \otimes ( \cF(D_0) - \cF(D_1) ) \right\|_1 \\
       &\le \frac{1}{2} \left\| \sum_{\tau_{k-1},c_k} \Pr[(\tau_{k-1},c_k)\gets\cS_{p}(1^\secp)] |k,\tau_{k-1},c_k\rangle\langle k,\tau_{k-1},c_k| \otimes ( D_0 - D_1 ) \right\|_1 \\
       &= \frac{1}{2} \left\| \sum_{\tau_{k-1},c_k} \frac{|f^{-1}_\secp(k,\tau_{k-1},c_k)|}{2^u} \left( \frac{1}{|f^{-1}_\secp(k,\tau_{k-1},c_k)|} \sum_{r \text{ s.t. } (k,r)\in f^{-1}_\secp(k,\tau_{k-1},c_k)} |k,r\rangle\langle k,r| - D_1 \right) \right. \\ 
       &\qquad \otimes |k,\tau_{k-1},c_k\rangle\langle k,\tau_{k-1},c_k| \Bigg\|_1 \\
       &= \frac{1}{2} \left\| \frac{1}{2^u} \sum_{\tau_{k-1},c_k}\sum_{r\text{ s.t. }(k,r)\in f^{-1}_\secp(k,\tau_{k-1},c_k)} |k,r\rangle\langle k,r| \otimes |k,\tau_{k-1},c_k\rangle\langle k,\tau_{k-1},c_k| \right. \\ 
       &\quad \left. - \sum_{\tau_{k-1},c_k} \frac{|f^{-1}_\secp(k,\tau_{k-1},c_k)|}{2^u} \sum_{k',r'} \Pr[(k',r')\gets\cR_p(k,\tau_{k-1},c_k)] |k',r'\rangle\langle k',r'| \otimes |k,\tau_{k-1},c_k\rangle\langle k,\tau_{k-1},c_k| \right\|_1 \\ 
       &= \frac{1}{2} \left\| \frac{1}{2^u} \sum_r \left( |k,r\rangle\langle k,r| - \sum_{k',r'} \Pr[(k',r')\gets\cR_p(f_\secp(k,r))] |k',r'\rangle\langle k',r'| \right) \otimes |f_\secp(k,r)\rangle\langle f_\secp(k,r)| \right\|_1 \\ 
       &\le \frac{1}{2} \sum_{k} \left\| \frac{1}{2^u} \sum_r \left( |k,r\rangle\langle k,r| - \sum_{k',r'} \Pr[(k',r')\gets\cR_p(k,\tau_{k-1},c_k)] |k',r'\rangle\langle k',r'| \right) \otimes |f_\secp(k,r) \rangle\langle f_\secp(k,r)| \right\|_1 \\ 
       &= \ell(\secp) \frac{1}{2} \sum_k \sum_r \sum_{k'} \sum_{r'} \left| \frac{1}{\ell}\frac{1}{2^u}\delta_{k,k'}\delta_{r,r'} - \frac{1}{\ell}\frac{1}{2^u}\Pr[(k',r')\gets\cR_p(f_\secp(k,r))] \right| \\ 
       &= \ell(\secp) \SD \left( \{(x,f_\secp(x))\}_{x\gets\bit^{n(\secp)}}, \{(\cR_p(f_\secp(x)),f_\secp(x))\}_{x\gets\bit^{n(\secp)}} \right). \label{eq:bound_2}
   \end{align}
   Therefore, by \Cref{eq:bound_1} and (\ref{eq:bound_2}), we finally obtain
   \begin{align}
       &\TD \left( \cE^\cA_k \left( C^\cA_k \right) , \cE^\cB_k \left( C^\cA_k \right) \right) \\
       &\le 2\SD(\langle\cA,\cC\rangle(1^\secp),\cS_{p}(1^\secp)) 
       + \ell(\secp)\SD \left( \{(x,f_\secp(x))\}_{x\gets\bit^{n(\secp)}}, \{(\cR_p(1^\secp,f_\secp(x)),f_\secp(x))\}_{x\gets\bit^{n(\secp)}} \right) \\
       &\le \frac{2+\ell(\secp)}{p(\secp)}
       = \frac{1}{\ell(\secp)q(\secp)}
   \end{align}
   for all $k\in[\ell(\secp)]$.
\end{proof}   
   \fi

\subsection{Proof of \cref{thm:QAS/OWF_to_IV-PoQ}}
\label{subsec:OWF-SampQA_IV-PoQ}

In this subsection, we show \cref{thm:QAS/OWF_to_IV-PoQ}, namely, the QAS/OWF condition $\Rightarrow$ IV-PoQ.

\begin{proof}[Proof of \Cref{thm:QAS/OWF_to_IV-PoQ}]
Assume that the QAS/OWF condition is satisfied. Then, 
from the definition of the QAS/OWF condition,
there exist a polynomial $p$, a QPT algorithm $\cQ$ that takes $1^\secp$ as input and outputs a classical string, 
a function $f:\bit^*\to\bit^*$ that is computable in classical deterministic polynomial-time such that
for any PPT algorithm $\cS$, the following holds: Let 
\begin{align}\label{eq:def_Sigma_S}
\Sigma_{\cS}\coloneqq\left\{\secp\in\mathbb{N}:\SD(\cQ(1^\secp),\cS(1^\secp))\le\frac{1}{p(\secp)}\right\},
\end{align}
then $f$ is a classically-secure OWF on $\Sigma_\cS$.
 
By using such $p$ and $\cQ$, from \cref{lem:QAS_IVPoQ}, we obtain a non-interactive protocol $(\cP,\cV)$ that satisfies the
properties of \cref{lem:QAS_IVPoQ}. 
Moreover by using $f$, from \cref{lem:OWF_IVPoQ}, we obtain an IV-PoQ $(\cP',\cV'_1,\cV'_2)$ on $\Sigma_\cS$
with $(1-\negl(\secp))$-completeness and $\negl(\secp)$-soundness.
By combining $(\cP,\cV)$ and $(\cP',\cV'_1,\cV'_2)$,
we construct an IV-PoQ $(\cP'',\cV''_1,\cV''_2)$ with $(1-\negl(\secp))$-completeness and $(1-\frac{1}{\secp^{\log\secp}}-\frac{1}{p(\secp)^2})$-soundness 
as follows:
\begin{itemize}
    \item 1st phase. $(\cP'',\cV''_1)(1^\secp)\to \tau''$:
    \begin{enumerate}
        \item Take $1^\secp$ as input.
        \item Run $\tau\gets\cP(1^\secp)$. $\tau$ is sent to $\cV''_1$.
        \item Run $\tau'\gets\langle\cP',\cV'_1\rangle(1^\secp)$.
        \item Output $\tau'':=(\tau,\tau')$.
    \end{enumerate}
    \item 2nd phase. $\cV''_2(1^\secp,\tau'')\to\top/\bot$:
    \begin{enumerate}
        \item Take $1^\secp$ and $\tau''=(\tau,\tau')$ as input.
        \item Run $\cV(1^\secp,\tau)$ and $\cV'_2(1^\secp,\tau')$.
        \item If both $\cV$ and $\cV'_2$ output $\top$, then output $\top$. Otherwise, output $\bot$.
    \end{enumerate}
\end{itemize} 
First, we prove $(1-\negl)$-completeness of $(\cP'',\cV_1'',\cV_2'')$.
We have
\begin{align}
    \Pr[\top\gets\cV''_2(1^\secp,\tau''):\tau''\gets\langle\cP'',\cV''_1\rangle(1^\secp)] 
    &=\Pr\left[\top\gets\cV(1^\secp,\tau) \land \top\gets\cV'_2(1^\secp,\tau') : \begin{gathered} \tau\gets\cP(1^\secp) \\ \tau'\gets\langle\cP',\cV'_1\rangle(1^\secp)\end{gathered} \right] \\ 
    &=
    \Pr[\top\gets\cV(1^\secp,\tau):\tau\gets\cP(1^\secp)] \\
    &\times\Pr[\top\gets\cV'_2(1^\secp,\tau'):\tau'\gets\langle\cP',\cV'_1\rangle(1^\secp)] \\
    &\ge 1-2\negl(\secp)
\end{align}
for all sufficiently large $\secp\in\mathbb{N}$.
Here, we have used the fact that
$(\cP',\cV_1',\cV_2')$ is $(1-\negl(\secp))$-complete,
and 
the second item of \Cref{lem:QAS_IVPoQ}. 

\if0
We claim the following two facts.
\begin{itemize}
    \item[(1)] 
    For any PPT $\cP^*$, there exists $\secp^*\in\mathbb{N}$ such that
\begin{align}
    \Pr[\top\gets\cV'_2(\tau'):\tau'\gets\langle\cP^*,\cV'_1\rangle(1^\secp)] 
    &\le \negl(\secp)
\end{align}
holds for all $\secp\ge\secp^*$ in $\Sigma_\cS$.
\item[(2)]
For any PPT $\cP^*$, there exists $\secp^{**}\in\mathbb{N}$
such that
\begin{align}
    \Pr[\top\gets\cV_2(\tau):\tau\gets\langle\cP^*,\cV_1\rangle(1^\secp)] 
    & 
    < 1-\frac{1}{\secp^{\log\secp}}-\frac{1}{p(\secp)^2}
\end{align}
holds for all $\secp\ge\secp^{**}$ in $\mathbb{N}\setminus\Sigma_\cS$.
\end{itemize}
Then, by combining them, we have
\begin{align}
    \Pr[\top\gets\cV''_2(\tau''):\tau''\gets\langle\cP^*,\cV''_1\rangle(1^\secp)] 
    &=\Pr\left[\top\gets\cV(\tau) \land \top\gets\cV'_2(\tau') : \begin{gathered} \tau\gets\cP^*(1^\secp) \\ \tau'\gets\langle\cP^*,\cV'_1\rangle(1^\secp)\end{gathered} \right] \\ 
    &< 1-\frac{1}{\secp^{\log\secp}}-\frac{1}{p(\secp)^2}
\end{align}
for all sufficiently large $\secp\in\mathbb{N}$, which shows the soundness.

Let us show 
(1) is directly comes from the fact that $(\cP',\cV_1',\cV_2')$ is a $\negl$-sound IV-PoQ on $\Sigma_\cS$.
(2) is shown as follows.
Define $\Lambda_\cS\coloneqq \mathbb{N}\setminus \Sigma_\cS$.
If (2) is not satisfied,
then, from \cref{lem:QAS_IVPoQ}, there exists a PPT algorithm $\cS^*$ and an infinite subset $\Theta_\cS\subseteq\Lambda_\cS$ such that
\begin{align}
\SD(\cQ(1^\secp),\cS^*(1^\secp)) <\frac{1}{p(\secp)}
\end{align}
for all $\secp\in\Theta_{\cS}$, but this contradicts $\Lambda_{\cS^*}=\mathbb{N}\setminus\Sigma_{\cS^*}$.
\fi

Next, we prove the soundness.
For the sake of contradiction, we assume that $(\cP'',\cV_1'',\cV_2'')$ does not satisfy $(1-\frac{1}{\secp^{\log\secp}}-\frac{1}{p(\secp)^2})$-soundness on $\mathbb{N}$.
Then, there exist a PPT algorithm $\cP^*=(\cP^*_1,\cP^*_2)$ and an infinite subset $\Lambda\subseteq\mathbb{N}$ such that
\begin{align}
    1-\frac{1}{\secp^{\log\secp}}-\frac{1}{p(\secp)^2}
    &<\Pr[\top\gets\cV''_2(1^\secp,\tau''):\tau''\gets\langle\cP^*,\cV''_1\rangle(1^\secp)] \\
    &=\Pr\left[\top\gets\cV(1^\secp,\tau) \land \top\gets\cV'_2(1^\secp,\tau') : 
    \begin{gathered} (\tau,\st)\gets\cP^*_1(1^\secp) \\ \tau'\gets\langle\cP^*_2(\st),\cV'_1(1^\secp)\rangle\end{gathered} \right] 
\end{align}
holds for all $\secp\in\Lambda$.
Define a PPT adversary $\cP^*_3$ against $(\cP',\cV_1',\cV_2')$ as follows.
\begin{enumerate}
    \item 
    Before interacting with $\cV_1'$, run $(\tau,\st)\gets\cP^*_1(1^\secp)$.
    \item 
    Run $\tau'\gets\langle \cP^*_2(\st),\cV_1'(1^\secp)\rangle$.
\end{enumerate}

Then, we have that both of 
\begin{align}
    \Pr[\top\gets\cV(1^\secp,\tau):\tau\gets\cP^*_1(1^\secp)] 
    &= \Pr\left[\top\gets\cV(1^\secp,\tau) : 
    \begin{gathered} (\tau,\st)\gets\cP^*_1(1^\secp) \\ \tau'\gets\langle\cP^*_2(\st),\cV'_1(1^\secp)\rangle\end{gathered} \right] \\ 
    &\ge \Pr\left[\top\gets\cV(1^\secp,\tau) \land \top\gets\cV'_2(1^\secp,\tau') : \begin{gathered} (\tau,\st)\gets\cP^*_1(1^\secp) \\ \tau'\gets\langle\cP^*_2(\st),\cV'_1(1^\secp)\rangle\end{gathered} \right] \\
    &> 1-\frac{1}{\secp^{\log\secp}}-\frac{1}{p(\secp)^2}
    \label{soundness1}
\end{align}
and
\begin{align}
    \Pr[\top\gets\cV'_2(1^\secp,\tau'):\tau'\gets\langle\cP^*_3,\cV'_1\rangle(1^\secp)] 
    &= \Pr\left[ \top\gets\cV'_2(1^\secp,\tau') : 
    \begin{gathered} (\tau,\st)\gets\cP^*_1(1^\secp) \\ \tau'\gets\langle\cP^*_2(\st),\cV'_1(1^\secp)\rangle\end{gathered} \right] \\ 
    &\ge \Pr\left[\top\gets\cV(1^\secp,\tau) \land \top\gets\cV'_2(1^\secp,\tau') : 
    \begin{gathered} (\tau,\st)\gets\cP^*_1(1^\secp) \\ \tau'\gets\langle\cP^*_2(\st),\cV'_1(1^\secp)\rangle\end{gathered} \right] \\
    &> 1-\frac{1}{\secp^{\log\secp}}-\frac{1}{p(\secp)^2}
    \label{soundness2}
\end{align}
hold for all $\secp\in\Lambda$.

By \Cref{lem:QAS_IVPoQ} and \cref{soundness1}, there exists a PPT algorithm $\cS^*$ such that
\begin{align}
    \SD(\cQ(1^\secp),\cS^*(1^\secp)) \le \frac{1}{p(\secp)}
\end{align}
holds for all sufficiently large $\secp\in\Lambda$. 
Let $\Theta_{\cS^*}$ be the set of such $\secp$. 
Define
   \begin{align}
        \Sigma_{\cS^*}\coloneqq\left\{\secp\in\mathbb{N}:\SD(\cQ(1^\secp),\cS^*(1^\secp))\le\frac{1}{p(\secp)}\right\}.
    \end{align}
By the QAS/OWF condition, $f$ is a classically-secure OWF on $\Sigma_{\cS^*}$.
Because $\Theta_{\cS^*}\subseteq\Sigma_{\cS^*}$, this means that $f$ is a classically-secure OWF on $\Theta_{\cS^*}$.
Then $f$ is a classically-secure OWF on $\Lambda$.
From \cref{lem:OWF_IVPoQ},
$(\cP',\cV_1',\cV_2')$ is an IV-PoQ with $(1-\negl(\secp))$-completeness and
$(1-\frac{1}{\secp^{\log\secp}}-\frac{1}{p(\secp)^2})$-soundness on $\Lambda$,
but it contradict \cref{soundness2}. 
\end{proof}

\subsection{Proof of \cref{thm:QAS/OWF_to_OWPuzz}}
\label{subsec:OWF-SampQA_OWPuzzle}
In this subsection, we show \cref{thm:QAS/OWF_to_OWPuzz}, namely, the QAS/OWF condition $\Rightarrow$ classically-secure OWPuzzs.
\begin{proof}[Proof of \cref{thm:QAS/OWF_to_OWPuzz}]
    Assume that the QAS/OWF condition is satisfied. 
    Then, from the definition of the QAS/OWF condition, 
    there exist a polynomial $p$, a QPT algorithm $\cQ$ that takes $1^\secp$ as input and outputs a classical string, 
    a function $f:\bit^*\to\bit^*$ that is computable in classical deterministic polynomial-time such that
    for any PPT algorithm $\cS$, the following holds: Let 
    \begin{align}
        \Sigma_{\cS}\coloneqq\left\{\secp\in\mathbb{N}:\SD(\cQ(1^\secp),\cS(1^\secp))\le\frac{1}{p(\secp)}\right\},
    \end{align}
    then $f$ is a classically-secure OWF on $\Sigma_\cS$.

    By using such $p$ and $\cQ$, we obtain a non-interactive protocol $(\cP,\cV)$ from \Cref{lem:QAS_IVPoQ}.
    By using $(\cP,\cV)$, we construct a pair $(\Samp,\Ver)$ of algorithms as follows:
    \begin{itemize}
        \item $\Samp(1^\secp)\to(\ans,\puzz)$:
        \begin{enumerate}
            \item Run $\tau\gets\cP(1^\secp)$. 
            \item Output $\puzz\coloneqq 1^\secp$ and $\ans\coloneq \tau$.
        \end{enumerate}
        \item $\Ver(\ans^*,\puzz)\to\top/\bot$:
        \begin{enumerate}
            \item Parse $\puzz=1^\secp$ and $\ans^*=\tau^*$.
            \item Run $\cV(1^\secp,\tau^*)$.
            \item Output $\top$ if $\cV$ outputs $\top$. Output $\bot$ otherwise.
        \end{enumerate}
    \end{itemize}
    Moreover by using the classically-secure OWF $f$ on $\Sigma_\cS$, from \cref{lem:OWF_OWPuzz}, we obtain a classically-secure OWPuzz $(\Samp',\Ver')$ on $\Sigma_\cS$ with $1$-correctness and $\negl(\secp)$-security.
    By combining $(\Samp,\Ver)$ and $(\Samp',\Ver')$, we construct a classically-secure OWPuzz $(\Samp'',\Ver'')$ on $\mathbb{N}$ with $(1-\frac{1}{\secp^{\log\secp}})$-correctness and $(1-\frac{1}{\secp^{\log\secp}}-\frac{1}{p(\secp)^2})$-security as follows:
    \begin{itemize}
        \item $\Samp''(1^\secp)\to(\ans'',\puzz'')$:
        \begin{enumerate}
            \item Take $1^\secp$ as input.
            \item Run $(\ans,\puzz)\gets\Samp(1^\secp)$ and $(\ans',\puzz')\gets\Samp'(1^\secp)$.
            \item Set $\ans'':=(\ans,\ans')$ and $\puzz'':=(\puzz,\puzz')$.
            \item Output $(\ans'',\puzz'')$.
        \end{enumerate}
        \item $\Ver''(\ans''^*,\puzz'')\to\top/\bot$:
        \begin{enumerate}
            \item Take $\ans''^*=(\ans^*,\ans'^*)$ and $\puzz''=(\puzz,\puzz')$ as input.
            \item Run $\Ver(\ans^*,\puzz)$ and $\Ver'(\ans'^*,\puzz')$.
            \item If both $\Ver$ and $\Ver'$ output $\top$, then output $\top$. Otherwise, output $\bot$.
        \end{enumerate}
    \end{itemize} 

    First, we prove $(1-\frac{1}{\secp^{\log\secp}})$-correctness of $(\Samp'',\Ver'')$.
    By the second item of \Cref{lem:QAS_IVPoQ},
    \begin{align}
        \Pr[\top\gets\Ver(\ans,\puzz):(\ans,\puzz)\gets\Samp(1^\secp)] 
        &= \Pr[\top\gets\cV(1^\secp,\tau):\tau\gets\cP(1^\secp)] \\ 
        &\ge 1-\frac{1}{\secp^{\log\secp}}
    \end{align}
    holds for all sufficiently large
    $\secp\in\mathbb{N}$.
    Thus,
    by using the fact that $(\Samp',\Ver')$ is $1$-correct,
    we obtain
    \begin{align}
        &\Pr[\top\gets\Ver''(\ans'',\puzz''):(\ans'',\puzz'')\gets\Samp''(1^\secp)] \\ 
        &= \Pr\left[\top\gets\Ver(\ans,\puzz) \land \top\gets\Ver'(\ans',\puzz') : \begin{gathered} (\ans,\puzz)\gets\Samp(1^\secp) \\ (\ans',\puzz')\gets\Samp'(1^\secp)\end{gathered} \right] \\ 
        &= \Pr[\top\gets\Ver(\ans,\puzz):(\ans,\puzz)\gets\Samp(1^\secp)]\\
        &\times \Pr[\top\gets\Ver'(\ans',\puzz'):(\ans',\puzz')\gets\Samp'(1^\secp)] \\
        &\ge 1-\frac{1}{\secp^{\log\secp}}
    \end{align}
    for all sufficiently large
    $\secp\in\mathbb{N}$. 

    Next, we prove the security. 
    For the sake of contradiction, we assume that $(\Samp'',\Ver'')$ does not satisfy $(1-\frac{1}{\secp^{\log\secp}}-\frac{1}{p(\secp)^2})$-security on $\mathbb{N}$.
    Then, there exist a PPT adversary $\cA$ and an infinite subset $\Lambda\subseteq\mathbb{N}$ such that
    \begin{align}
        1-\frac{1}{\secp^{\log\secp}}-\frac{1}{p(\secp)^2}
        &<\Pr[\top\gets\Ver''(\cA(1^\secp,\puzz''),\puzz''):(\ans'',\puzz'')\gets\Samp''(1^\secp)] \\
        &= \Pr\left[
            \begin{array}{c}
                \top\gets\Ver(\ans^*,\puzz) \\ 
                \land \\ 
                \top\gets\Ver'(\ans'^*,\puzz') \\ 
            \end{array} :
            \begin{array}{l}
                (\ans,\puzz)\gets\Samp(1^\secp) \\ 
                (\ans',\puzz')\gets\Samp'(1^\secp) \\ 
                (\ans^*,\ans'^*)\gets\cA(1^\secp,\puzz,\puzz')
            \end{array}
        \right] 
   \end{align}
    holds for all $\secp\in\Lambda$.
    We define the following two PPT adversaries $\cA_1$ and $\cA_2$ against $(\Samp,\Ver)$ and $(\Samp',\Ver')$, respectively:
    \begin{itemize}
        \item $\cA_1(1^\secp,\puzz)\to\ans^*$:
        \begin{enumerate}
            \item Take $1^\secp$ and $\puzz$ as input. 
            \item Run $(\ans',\puzz')\gets\Samp'(1^\secp)$.
            \item Run $(\ans^*,\ans'^*)\gets\cA(1^\secp,\puzz,\puzz')$.
            \item Output $\ans^*$.
        \end{enumerate}
        \item $\cA_2(1^\secp,\puzz')\to\ans'^*$:
        \begin{enumerate}
            \item Take $1^\secp$ and $\puzz'$ as input. 
            \item Run $(\ans,\puzz)\gets\Samp(1^\secp)$.
            \item Run $(\ans^*,\ans'^*)\gets\cA(1^\secp,\puzz,\puzz')$.
            \item Output $\ans'^*$.
        \end{enumerate}
    \end{itemize}
    Then, both of 
    \begin{align}
        &\Pr[\top\gets\Ver(\cA_1(1^\secp,\puzz),\puzz):(\ans,\puzz)\gets\Samp(1^\secp)] \\
        &= \Pr\left[
            \top\gets\Ver(\ans^*,\puzz):
            \begin{array}{l}
                (\ans,\puzz)\gets\Samp(1^\secp) \\ 
                (\ans',\puzz')\gets\Samp'(1^\secp) \\ 
                (\ans^*,\ans'^*)\gets\cA(1^\secp,\puzz,\puzz') 
            \end{array}
        \right] \\ 
        &\ge \Pr\left[
            \begin{array}{c}
                \top\gets\Ver(\ans^*,\puzz) \\ 
                \land \\ 
                \top\gets\Ver'(\ans'^*,\puzz') 
            \end{array} : 
            \begin{array}{l}
                (\ans,\puzz)\gets\Samp(1^\secp) \\ 
                (\ans',\puzz')\gets\Samp'(1^\secp) \\ 
                (\ans^*,\ans'^*)\gets\cA(1^\secp,\puzz,\puzz') 
            \end{array}
        \right]
        > 1-\frac{1}{\secp^{\log\secp}}-\frac{1}{p(\secp)^2} \label{eq:OWPuzz_soundness1}
    \end{align}
    and 
    \begin{align}
        &\Pr[\top\gets\Ver'(\cA_2(1^\secp,\puzz'),\puzz'):(\ans',\puzz')\gets\Samp'(1^\secp)] \\
        &= \Pr\left[
            \top\gets\Ver'(\ans'^*,\puzz'):
            \begin{array}{l}
                (\ans',\puzz')\gets\Samp'(1^\secp) \\ 
                (\ans,\puzz)\gets\Samp(1^\secp) \\ 
                (\ans^*,\ans'^*)\gets\cA(1^\secp,\puzz,\puzz') 
            \end{array}
        \right] \\ 
        &\ge \Pr\left[
            \begin{array}{c}
                \top\gets\Ver(\ans^*,\puzz) \\ 
                \land \\ 
                \top\gets\Ver'(\ans'^*,\puzz') 
            \end{array} : 
            \begin{array}{l}
                (\ans,\puzz)\gets\Samp(1^\secp) \\ 
                (\ans',\puzz')\gets\Samp'(1^\secp) \\ 
                (\ans^*,\ans'^*)\gets\cA(\puzz,\puzz') 
            \end{array}
        \right]
        > 1-\frac{1}{\secp^{\log\secp}}-\frac{1}{p(\secp)^2} \label{eq:OWPuzz_soundness2}
    \end{align}
    hold for all $\secp\in\Lambda$.
    From \Cref{eq:OWPuzz_soundness1}, we obtain
    \begin{align}
        1-\frac{1}{\secp^{\log\secp}} -\frac{1}{p(\secp)^2}
        &< \Pr[\top\gets\Ver(\cA_1(1^\secp,\puzz),\puzz):(\ans,\puzz)\gets\Samp(1^\secp)] \\ 
        &= \Pr[\top\gets\Ver(\cA_1(1^\secp,1^\secp),1^\secp):(\ans,1^\secp)\gets\Samp(1^\secp)] \\ 
        &= \Pr[\top\gets\Ver(\ans^*,1^\secp):\ans^*\gets\cA_1(1^\secp)] \\ 
        &= \Pr[\top\gets\cV(1^\secp,\ans^*):\ans^*\gets\cA_1(1^\secp)] \\ 
        &= \Pr[\top\gets\cV(1^\secp,\tau):\tau\gets\cA_1(1^\secp)] 
    \end{align}
    for all $\secp\in\Lambda$.
    Thus, by the third item of \Cref{lem:QAS_IVPoQ}, there exists a PPT algorithm $\cS^*$ such that
    \begin{align}
        \SD(\cQ(1^\secp),\cS^*(1^\secp)) \le \frac{1}{p(\secp)}
    \end{align}
    holds for all sufficiently large $\secp\in\Lambda$. 
    Let $\Theta_{\cS^*}$ be the set of such $\secp$. 
    Define
   \begin{align}
        \Sigma_{\cS^*}\coloneqq\left\{\secp\in\mathbb{N}:\SD(\cQ(1^\secp),\cS^*(1^\secp))\le\frac{1}{p(\secp)}\right\}.
    \end{align}
    By the QAS/OWF condition, $f$ is a classically-secure OWF on $\Sigma_{\cS^*}$.
    Because $\Theta_{\cS^*}\subseteq\Sigma_{\cS^*}$, this means that $f$ is a classically-secure OWF on $\Theta_{\cS^*}$.
    Then $f$ is a classically-secure OWF on $\Lambda$.
    From \Cref{lem:OWF_OWPuzz}, $(\Samp',\Ver')$ is a classically-secure OWPuzz with $1$-correctness and $(1-\frac{1}{\secp^{\log\secp}}-\frac{1}{p(\secp)^2})$-security on $\Lambda$, but it contradict \Cref{eq:OWPuzz_soundness2}.
    
\end{proof}

\subsection{Proof of \cref{thm:OWPuzz_to_QAS/OWF}}
\label{subsec:OWPuzzle_OWF-SampQA}

\if0
\begin{theorem}\label{thm:OWPuzz_to_QAS/OWF}
    If $(\Samp,\Ver)$ is a classically-secure OWPuzz, then $\Samp$ satisfies the QAS/OWF condition (\Cref{def:QAS/OWF}).
\end{theorem}
\fi

In this subsection, we show \cref{thm:OWPuzz_to_QAS/OWF}, namely, classically-secure OWPuzzs $\Rightarrow$ the QAS/OWF condition.

\begin{proof}[Proof of \cref{thm:OWPuzz_to_QAS/OWF}]
    Let $(\Samp,\Ver)$ be a classically-secure OWPuzz with $c$-correctness and $s$-security such that
    $c(\secp)-s(\secp)\ge 1/q(\secp)$ for a polynomial $q$.
    For the sake of contradiction, assume that the QAS/OWF condition is not satisfied.
    Then, from \Cref{lem:quantifier} by taking $\cQ$ of the lemma as $\Samp$, we obtain the following statement: 
    For any polynomial $p$, there exists a PPT algorithm $\cS_p$ such that for any 
    efficiently-computable polynomial $n$ and
    any family $\{f_\secp:\bit^{n(\secp)}\to\bit^*\}_{\secp\in\mathbb{N}}$ of functions that are computable in classical deterministic polynomial-time, 
    there exists a PPT algorithm $\cR_p$ such that 
    \begin{align}\label{eq:cond_A_OWPuzz}
        \SD(\Samp(1^\secp),\cS_p(1^\secp)) \le \frac{1}{p(\secp)}   
    \end{align}
    and 
    \begin{align}\label{eq:cond_B_OWPuzz}
        \SD( \{x,f_\secp(x)\}_{x\gets\bit^{n(\secp)}} , \{\cR_p(1^{n(\secp)},f_\secp(x)),f_\secp(x)\}_{x\gets\bit^{n(\secp)}} ) \le \frac{1}{p(\secp)}
    \end{align}
    hold for infinitely many $\secp\in\mathbb{N}$.\footnote{We mean that both \cref{eq:cond_A_OWPuzz} and \cref{eq:cond_B_OWPuzz}
    are satisfied for {\it the same} $\secp$, and there are infinitely many such $\secp$.}
    
    Let $n(\secp)$ be the length of the random seed for $\cS_p(1^\secp)$.
    By using $\cS_p$, we define a function family $\{f_{p,\secp}\}_{\secp\in\mathbb{N}}$ as follows:
    \begin{itemize}
        \item $f_{p,\secp}:\bit^{n(\secp)}\to\bit^*$:
        \begin{enumerate}
            \item Take $r\in\bit^{n(\secp)}$ as input.
            \item Run $\cS_p(1^\secp;r)=(\puzz,\ans)$.
            \item Output $\puzz$.
        \end{enumerate}
    \end{itemize}
    As we have stated above, for this specific $\{f_{p,\secp}\}_{\secp\in\mathbb{N}}$, there exists a PPT algorithm $\cR_p^{f_{p,\secp}}$ such that \cref{eq:cond_A_OWPuzz,eq:cond_B_OWPuzz} hold for infinitely many $\secp\in \mathbb{N}$. 
    From $\cS_p$ and $\cR_p^{f_{p,\secp}}$, we construct a PPT algorithm $\cA$ such that  
    \begin{align}
       \Pr[\top\gets\Ver(\puzz,\cA(1^\secp,\puzz)):(\puzz,\ans)\gets\Samp(1^\secp)] \ge c(\secp)-\frac{1}{2q(\secp)}
    \end{align}
    holds for infinitely many $\secp\in\mathbb{N}$. 
    This means that $\cA$ breaks $s$-security of the classically-secure OWPuzz $(\Samp,\Ver)$.
    We define such $\cA$ as follows:
    \begin{itemize}
        \item $\cA(1^\secp,\puzz)\to\ans'$:
        \begin{enumerate}
            \item Take $1^\secp$ and $\puzz$ as input.
            \item Run $r\gets\cR_{6q}^{f_{6q,\secp}}(1^{n(\secp)},\puzz)$.
            \item Run $\cS_{6q}(1^\secp;r)=(\ans',\puzz')$.
            \item Output $\ans'$.
        \end{enumerate}
    \end{itemize}
    Then,
    \begin{align}
        &\Pr[\top\gets\Ver(\puzz,\cA(1^\secp,\puzz)):(\ans,\puzz)\gets\Samp(1^\secp)]\\
        &\ge \Pr[\top\gets\Ver(\puzz,\cA(1^\secp,\puzz)):(\ans,\puzz)\gets\cS_{6q}(1^\secp)] - \SD(\Samp(1^\secp),\cS_{6q}(1^\secp)) \\
        &=\Pr\left[\top\gets\Ver(\puzz,\ans'):
        \begin{array}{r}
        r\gets \bit^{n(\secp)}, \\
        \puzz=f_{6q,\secp}(r),\\
        r'\gets \cR_{6q}^{f_{6q,\secp}}(1^{n(\secp)},\puzz),\\
        (\puzz',\ans')=
        \cS_{6q}(1^\secp;r')
        \end{array}
        \right] -\frac{1}{6q(\secp)} \label{eq:trans_1} \\
        &\ge \Pr\left[\top\gets\Ver(\puzz,\ans'):
        \begin{array}{r}
        r\gets \bit^{n(\secp)}, \\
        \puzz=f_{6q,\secp}(r),\\
        (\puzz',\ans')
        =\cS_{6q}(1^\secp;r)
        \\
        \end{array}
        \right]-\frac{2}{6q(\secp)} \label{eq:trans_2} \\ 
        &=\Pr[\top\gets\Ver(\puzz,\ans):
        (\puzz,\ans)\gets\cS_{6q}(1^\secp)
        ]-\frac{2}{6q(\secp)} \\ 
        &\ge \Pr[\top\gets\Ver(\puzz,\ans):
        (\puzz,\ans)\gets\Samp(1^\secp)
        ]-\frac{3}{6q(\secp)} \label{eq:trans_3} \\ 
        &\ge c(\secp)-\frac{1}{2q(\secp)}
   \end{align}
    holds for infinitely many $\secp\in\mathbb{N}$.
    Here, we have used \Cref{eq:cond_A_OWPuzz} to obtain \Cref{eq:trans_1,eq:trans_3}, and used \Cref{eq:cond_B_OWPuzz} to obtain \Cref{eq:trans_2}.
    Therefore, $\cA$ breaks the soundness of classically-secure OWPuzz $(\Samp,\Ver)$.

    \if0
    \takashi{I don't see why this is not trivial just using $\cR_{q}$ instead of $\cR_{3q}$. 
    Formally, 
      \begin{align}
        &\Pr[\top\gets\Ver(\puzz,\cA_q(1^\secp,\puzz)):(\ans,\puzz)\gets\Samp(1^\secp)]\\
        &=\Pr[\top\gets\Ver(\puzz,\ans'):
        r\gets \bit^{n(\secp)}, 
        \puzz=f_{p,\secp}(r),r'\gets \cR_q(1^\secp,\puzz),
        \cS_p(1^\secp;r')=(\puzz',\ans')
        ]\\
        &\ge \Pr[\top\gets\Ver(\puzz,\ans'):
        r\gets \bit^{n(\secp)}, 
        \puzz=f_{p,\secp}(r),
        \cS_p(1^\secp;r)=(\puzz',\ans')
        ]-\frac{1}{q(\secp)}\\
        &=\Pr[\top\gets\Ver(\puzz,\ans):
        r\gets \bit^{n(\secp)}, 
        \cS_p(1^\secp;r)=(\puzz,\ans)
        ]-\frac{1}{q(\secp)}
        \end{align}
        where the inequality follows from (210) for $p=q$.
    }
    \shira{fixed}
    \fi
\end{proof}

\newcommand{\val}{\mathsf{val}}
\newcommand{\ExtCom}{\mathsf{ExtCom}}

\section{Variants of IV-PoQ}
In \Cref{sec:equivalence_variants} we show equivalence among variants of IV-PoQ. 
In \Cref{sec:ZKIVPoQ}, we introduce \emph{zero-knowledge} IV-PoQ and show their relationship with OWFs.  
\subsection{Equivalence Among Variants of IV-PoQ}\label{sec:equivalence_variants}
We consider the following variant of IV-PoQ.
\begin{definition}[Quantum-Verifier IV-PoQ]
A quantum-verifier IV-PoQ $(\cP,\cV_1,\cV_2)$ is defined similarly to IV-PoQ (\Cref{def:IVPoQ}) except that $\cV_1$ is QPT instead of PPT but still only sends classical messages. 
\end{definition}
We show that the following equivalence theorem. 

\begin{theorem}\label{thm:equivalence_variants}
The following are equivalent:
\begin{enumerate}
\item Public-coin IV-PoQ exist. \label{item:pc-IV-PoQ}
\item IV-PoQ exist. \label{item:IV-PoQ}
\item Quantum-verifier IV-PoQ exist. \label{item:QV-IV-PoQ}
\end{enumerate}
\end{theorem}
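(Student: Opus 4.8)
The plan is to prove the two nontrivial implications in a cycle, since \ref{item:pc-IV-PoQ}$\Rightarrow$\ref{item:IV-PoQ}$\Rightarrow$\ref{item:QV-IV-PoQ} is immediate: a public-coin IV-PoQ is in particular an IV-PoQ, and an IV-PoQ is in particular a quantum-verifier IV-PoQ because a PPT machine is a QPT machine that happens to use only classical operations. Hence it suffices to show that the existence of a quantum-verifier IV-PoQ implies the existence of a public-coin IV-PoQ. I would obtain this by routing through the characterization already built in \cref{sec:equivalence}: quantum-verifier IV-PoQ $\Rightarrow$ Int-QAS $\Rightarrow$ QAS/OWF condition $\Rightarrow$ public-coin IV-PoQ.

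\textbf{Step 1 (quantum-verifier IV-PoQ $\Rightarrow$ Int-QAS).} The key observation is that the proof of \cref{lem:IV-PoQ_to_Int-SampQA} never uses that the first verifier $\cV_1$ is PPT; it only uses that $(\cP,\cV_1)$ is a pair of interactive QPT algorithms communicating classically (so that it is a syntactically valid Int-QAS candidate, per \cref{def:IntQAS}) together with soundness against \emph{PPT} provers. Since a quantum-verifier IV-PoQ $(\cP,\cV_1,\cV_2)$ has a QPT $\cV_1$ sending only classical messages and still only requires soundness against PPT provers, the very same argument shows that $(\cP,\cV_1)$ is an Int-QAS: if some PPT $\cP^*$ could sample the transcript distribution $\langle\cP,\cV_1\rangle(1^\secp)$ to within $1/p(\secp)$ for a sufficiently large polynomial $p$ (chosen in terms of the completeness--soundness gap $c-s\ge 1/q$, e.g.\ $p=2q$), then $\cP^*$ would make $\cV_2$ accept with probability at least $c(\secp)-\frac{1}{2q(\secp)}>s(\secp)$ on infinitely many $\secp$, contradicting soundness.

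\textbf{Step 2 (Int-QAS $\Rightarrow$ QAS/OWF condition).} This is exactly \cref{thm:Int-QAS_to_QAS/OWF}, invoked as a black box.

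\textbf{Step 3 (QAS/OWF condition $\Rightarrow$ public-coin IV-PoQ).} Here I would revisit the proof of \cref{thm:QAS/OWF_to_IV-PoQ} and simply observe that the IV-PoQ it constructs, $(\cP'',\cV_1'',\cV_2'')$, is public-coin. Indeed it is the parallel composition of (i) the non-interactive protocol $(\cP,\cV)$ from \cref{lem:QAS_IVPoQ}, in which the verifier sends no message at all and is therefore vacuously public-coin, and (ii) the IV-PoQ $(\cP',\cV_1',\cV_2')$ from \cref{lem:OWF_IVPoQ}, which is public-coin by that lemma; the verifier messages of $\cV_1''$ are exactly the concatenation of the verifier messages of the two subprotocols, and a concatenation of uniformly random strings is uniformly random, while $\cV_2''$ acts only in the unbounded verification phase. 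Chaining Steps 1--3 gives quantum-verifier IV-PoQ $\Rightarrow$ public-coin IV-PoQ, closing the equivalence. The main ``obstacle'' is really just a bookkeeping check in Step 1 that the Int-QAS definition's requirement that both parties be QPT is met (it is, since PPT $\subseteq$ QPT) and in Step 3 that no step of the construction in \cref{thm:QAS/OWF_to_IV-PoQ} silently introduces a non-public-coin verifier message; both are easily verified, and the $1/\poly$ completeness--soundness gap is already handled inside that proof.
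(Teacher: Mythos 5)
Your proposal is correct and follows exactly the paper's own proof: the trivial implications, then quantum-verifier IV-PoQ $\Rightarrow$ Int-QAS (noting the proof of \cref{lem:IV-PoQ_to_Int-SampQA} never uses that $\cV_1$ is PPT), then \cref{thm:Int-QAS_to_QAS/OWF}, and finally the observation that the IV-PoQ built in \cref{thm:QAS/OWF_to_IV-PoQ} is public-coin because the non-interactive component sends no verifier messages and the OWF-based component is public-coin by \cref{lem:OWF_IVPoQ}. No differences of substance.
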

\begin{proof}[Proof of \cref{thm:equivalence_variants}]
It is clear that \Cref{item:pc-IV-PoQ} implies \Cref{item:IV-PoQ}    
and \Cref{item:IV-PoQ} implies \Cref{item:QV-IV-PoQ}. Below, we show that  \Cref{item:QV-IV-PoQ} implies \Cref{item:pc-IV-PoQ}. To show this, we first observe that quantum-verifier IV-PoQ imply Int-QAS since the proof of \Cref{lem:IV-PoQ_to_Int-SampQA} works even if $\cV_1$ is QPT. Thus, combined with \Cref{thm:Int-QAS_to_QAS/OWF}, quantum-verifier IV-PoQ imply the QAS/OWF condition. Next, we observe that the IV-PoQ constructed from the QAS/OWF condition in the proof of \Cref{thm:QAS/OWF_to_IV-PoQ} is public-coin. To see this, recall that the IV-PoQ is obtained by combining the non-interactive protocol obtained by \cref{lem:QAS_IVPoQ} and the IV-PoQ based on OWFs by \Cref{lem:OWF_IVPoQ}. The former is non-interactive and in particular there is no message sent from the verifier, and thus it is public-coin. Moreover, the latter is also public-coin by \Cref{lem:OWF_IVPoQ}.   
Thus, their combination (as in the proof of \Cref{thm:QAS/OWF_to_IV-PoQ}) also results in public-coin IV-PoQ. Thus, the IV-PoQ constructed in the proof of \Cref{thm:QAS/OWF_to_IV-PoQ} is public-coin. Combining the above observations, we complete the proof that quantum-verifier IV-PoQ imply public-coin IV-PoQ.
\end{proof}

\subsection{Zero-Knowledge IV-PoQ}\label{sec:ZKIVPoQ}
We give a definition of zero-knowledge IV-PoQ below. 
\begin{definition}[Zero-Knowledge IV-PoQ]
An IV-PoQ $(\cP,\cV_1,\cV_2)$ satisfies 
 computational (resp. statistical) zero-knowledge if for any PPT malicious verifier $\cV_1^*$, there exists a PPT simulator $\cS$ such that for any PPT (resp. unbounded-time) distinguisher $\cD$,  
\begin{equation}
\left|\Pr[\cD(\mathsf{view}\langle\cP,\cV^*_1\rangle(1^\secp))=1]-\Pr[\cD(\cS(1^\secp))=1]\right|\le \negl(\secp)
\end{equation}
where $\mathsf{view}\langle\cP,\cV^*_1\rangle(1^\secp)$ means the view of $\cV^*_1$ which consists of the transcript and the random coin of $\cV^*_1$. 

We say that an IV-PoQ $(\cP,\cV_1,\cV_2)$ satisfies honest-verifier computational (resp. statistical) zero-knowledge if the above holds for the case of $\cV^*_1=\cV_1$. 
\end{definition}
\begin{remark}
Standard definitions of the zero-knowledge property in the literature usually consider \emph{non-uniform} malicious verifiers and distinguishers. 
On the other hand, since we treat the uniform model as a default notion in this paper, we define the zero-knowledge property in the uniform-style as above. However, we remark that this choice of model of computation is not essential for the results of this subsection, and all the results of this subsection readily extend to the non-uniform setting with essentially the same proofs. 
\end{remark}
We show relationships between zero-knowledge IV-PoQ and OWFs. 
First, we show that honest-verifier statistical zero-knowledge IV-PoQ  imply classically-secure OWFs. 
\begin{theorem}\label{thm:ZKIVPoQ_to_OWF}
    If honest-verifier statistical zero-knowledge IV-PoQ exist, then classically-secure OWFs exist. 
\end{theorem}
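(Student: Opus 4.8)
The plan is to argue by contradiction using the technique of Ostrovsky~\cite{CCC:Ost91}, which also underlies \Cref{thm:Int-QAS_to_QAS/OWF}; in fact the argument here is a simplified version of the proof of that theorem. Let $(\cP,\cV_1,\cV_2)$ be an $\ell$-round honest-verifier statistical zero-knowledge IV-PoQ with $c$-completeness and $s$-soundness, where $c(\secp)-s(\secp)\ge 1/q(\secp)$ for a polynomial $q$, and suppose towards a contradiction that classically-secure OWFs do not exist. The role of zero-knowledge is that it directly supplies a PPT algorithm $\cS$ — the honest-verifier simulator with $\cV_1$'s random coins discarded — whose output distribution is within $\negl(\secp)$ statistical distance of the real transcript $\langle\cP,\cV_1\rangle(1^\secp)$; this uses that the zero-knowledge is \emph{statistical}, so that even the unbounded $\cV_2$ cannot distinguish, and that \emph{honest-verifier} suffices because our malicious prover will interact with the honest $\cV_1$. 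This $\cS$ plays the role that, in \Cref{thm:Int-QAS_to_QAS/OWF}, is played by the efficient sampler coming from the failure of the QAS/OWF condition; since we already have it for free, we will not need the universal construction of OWFs. Finally, since classically-secure OWFs do not exist, classically-secure distributional OWFs do not exist either (\cite{FOCS:ImpLub89}, \Cref{lem:OWF_DistOWF}), so every efficiently computable function family we construct admits, for every polynomial $p$, a PPT algorithm that statistically inverts it on a uniformly random preimage with error at most $1/p(\secp)$ for infinitely many $\secp$.

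Concretely, write $\tau_k=(c_1,a_1,\dots,c_k,a_k)$ for the length-$k$ partial transcript, where $c_i$ (resp. $a_i$) is $\cV_1$'s (resp. $\cP$'s) $i$-th message, and let $u(\secp)$ be the length of $\cS$'s randomness. For a polynomial $p$ to be fixed later, define a function family $\{f_\secp:\bit^{\lceil\log\ell(\secp)\rceil+u(\secp)}\to\bit^*\}_\secp$ by: on input $(k,r)$, run $\cS(1^\secp;r)=(c_1,a_1,\dots,c_\ell,a_\ell)$ and output $(k,\tau_{k-1},c_k)$ (and output a fixed value if $k\notin[\ell(\secp)]$). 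Each $f_\secp$ is deterministic polynomial-time computable. By the above, there is a PPT inverter $\cR_p$ for $\{f_\secp\}_\secp$, and an averaging argument over the index $k$ (losing a factor of at most $2\ell(\secp)$, exactly as in \Cref{thm:Int-QAS_to_QAS/OWF}) shows that for each fixed $k$, the pair $(\cR_p(k,\tau_{k-1},c_k),(k,\tau_{k-1},c_k))$ is $\frac{2\ell(\secp)}{p(\secp)}$-close to $((k,r),(k,\tau_{k-1},c_k))$ for uniform $r$. We then build a PPT malicious prover $\cP^*$: in round $k$, upon receiving $c_k$ and holding the partial transcript $\tau_{k-1}$, it runs $(k',r')\gets\cR_p(k,\tau_{k-1},c_k)$, computes $\cS(1^\secp;r')=(c'_1,a'_1,\dots,c'_\ell,a'_\ell)$, and replies with $a_k:=a'_{k'}$.

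It remains to show that $\SD(\langle\cP^*,\cV_1\rangle(1^\secp),\langle\cP,\cV_1\rangle(1^\secp))\le \frac{1}{2q(\secp)}$ for infinitely many $\secp$; then by completeness $\cV_2$ accepts $\cP^*$'s transcript with probability at least $c(\secp)-\frac{1}{2q(\secp)}>s(\secp)$ for those $\secp$, contradicting $s$-soundness. This is established by a hybrid argument over the rounds $k=1,\dots,\ell(\secp)$, verbatim as in the proof of \Cref{thm:Int-QAS_to_QAS/OWF}: the only difference between the $k$-th and $(k+1)$-th hybrids is whether $a_k$ is generated by the honest $\cP$ (conditioned on the partial transcript so far) or by the $\cR_p$-then-$\cS$ procedure of $\cP^*$, and these two ways of producing $a_k$ differ by at most $\frac{2\ell(\secp)}{p(\secp)}+\negl(\secp)$ in statistical distance, obtained by combining the inverter guarantee for $f_\secp$ with the negligible simulation error of $\cS$ and using the monotonicity of statistical distance (first to pass from $(k,r)$ to the derived message $a_k$, and then to replace the $\cS$-generated partial transcript by the real one). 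Summing over the $\ell(\secp)$ rounds and choosing, say, $p(\secp):=8\ell(\secp)^2 q(\secp)$ gives the bound. The main obstacle is carrying out this round-by-round replacement cleanly — formalizing that $\cP^*$'s next message is distributed almost exactly as $\cP$'s conditioned on the same partial transcript, and that both hybrids complete the remaining rounds in the same way — but this is precisely the content of the already-proved \Cref{thm:Int-QAS_to_QAS/OWF}, so no essentially new technical difficulty arises; the present proof is in fact strictly simpler, since the statistical simulator removes the need for both the QAS/OWF dichotomy and the universal construction of OWFs.
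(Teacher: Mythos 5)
Your proof is correct, and it reaches the conclusion by a genuinely different decomposition than the paper, although the underlying engine (the Ostrovsky-style reduction) is the same. The paper's proof is a three-step corollary of results it has already established: by the proofs of \Cref{lem:IV-PoQ_to_Int-SampQA,thm:Int-QAS_to_QAS/OWF}, the QAS/OWF condition holds with $\cQ=\langle\cP,\cV_1\rangle$ and some fixed $f$; the honest-verifier statistical simulator is then a specific PPT $\cS$ for which $\Sigma_\cS$ is co-finite (negligible simulation error eventually beats $1/p(\secp)$), so $f$ is a OWF on a co-finite set and \Cref{lem:OWF_on_N} finishes. You instead unroll the argument: assume no OWFs, use the simulator directly as the efficient transcript sampler, invoke the non-existence of distributional OWFs to invert the next-message function, and rerun the hybrid argument to break soundness. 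Your route buys two things: it avoids the QAS/OWF condition and hence the universal construction of OWFs (there is only one sampler in play, so the function may depend on it), and it makes explicit why the usual infinite-intersection obstruction (the reason the paper needs \Cref{lem:quantifier}) disappears here --- the simulator's ``good set'' is co-finite, so intersecting it with the inverter's infinite good set is harmless. This is precisely the observation the paper packages as ``$\Sigma_\cS$ consists of all but finitely many elements of $\mathbb{N}$.'' The cost is that you re-verify a special case of \Cref{thm:Int-QAS_to_QAS/OWF} rather than citing it. One small point worth spelling out if you write this up: to apply ``no OWFs $\Rightarrow$ no DistOWFs'' (\Cref{lem:OWF_DistOWF}) to your family $\{f_\secp\}$, you need the standard conversion of a security-parameter-indexed family into a single function on $\bit^*$ (as done for $f_\cS$ inside the proof of \Cref{lem:quantifier}); you currently gloss over this, though it is routine.
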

\begin{proof}[Proof of \cref{thm:ZKIVPoQ_to_OWF}]
 Let $(\cP,\cV_1,\cV_2)$  be an honest-verifier statistical zero-knowledge IV-PoQ. By the proof of \Cref{lem:IV-PoQ_to_Int-SampQA}, $(\cP,\cV_1)$ is an Int-QAS, and thus by the proof of \Cref{thm:Int-QAS_to_QAS/OWF}, the QAS/OWF condition holds where $\cQ= \langle \cP,\cV_1\rangle$. That is, there exist a polynomial $p$, and a function $f:\bit^*\to\bit^*$ that is computable in classical deterministic polynomial-time
    such that for any PPT algorithm $\cS$, the following holds:
    if we define
    \begin{align}
        \Sigma_\cS := \left\{ \secp\in\mathbb{N} \ \middle|\  \SD( \langle \cP,\cV_1\rangle(1^\secp),\cS(1^\secp)) \le \frac{1}{p(\secp)} \right\},
    \end{align}
    then $f$ is a classically-secure OWF on $\Sigma_\cS$. 
 By the honest-verifier statistical zero-knowledge property of $(\cP,\cV_1,\cV_2)$, there is a PPT simulator $\cS$ such that 
 $\SD(\langle \cP,\cV_1\rangle(1^\secp),\cS(1^\secp)) \le \negl(\secp)$.\footnote{Recall that we write $\langle \cP,\cV_1\rangle(1^\secp)$ to mean the machine that outputs a transcript of interaction between $\cP$ and $\cV_1$. 
 Since the honest-verifier statistical zero-knowledge requires the simulator to simulate both the transcript and the verifier's randomness, it is trivial to simulate only the transcript. 
 } For this $\cS$, $\Sigma_\cS$ consists of all but finite elements of  $\mathbb{N}$ by the definition of negligible functions. 
Since $f$ is a classically-secure OWF on $\Sigma_\cS$, this implies the existence of classically secure OWFs by \cref{lem:OWF_on_N}.
\end{proof}

Next, we show that OWFs imply computational zero-knowledge IV-PoQ. 
\begin{theorem}\label{thm:OWF_to_ZKIVPoQ}
    If classically-secure OWFs exist, then computational zero-knowledge IV-PoQ exist. 
\end{theorem}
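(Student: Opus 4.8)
The plan is to compile the public-coin IV-PoQ $(\cP,\cV_1,\cV_2)$ with $(1-\negl)$-completeness and $\negl$-soundness that \Cref{lem:OWF_IVPoQ} gives from classically-secure OWFs into a computational zero-knowledge one, by a commit-and-prove transformation. Recall that classically-secure OWFs also yield Naor's statistically-binding, computationally-hiding bit commitment $\mathsf{Com}$ (whose receiver message is a uniformly random string) and constant-round computational zero-knowledge arguments of knowledge (ZKAoK) for $\NP$. I would define the compiled protocol $(\cP',\cV_1',\cV_2')$ as follows. In the $i$-th round, $\cV_1'$ sends $\cV_1$'s (uniformly random) message $c_i$ together with a fresh uniform Naor receiver message $r_i$; $\cP'$ runs $\cP$ internally to obtain its original message $a_i$ and sends $\mathsf{com}_i\coloneqq\mathsf{Com}_{r_i}(a_i)$; then $\cP'$ and $\cV_1'$ execute a ZKAoK in which $\cP'$ proves knowledge of a valid opening of $\mathsf{com}_i$. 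The unbounded verifier $\cV_2'$, on input $1^\secp$ and the transcript, checks that all $\ell$ ZKAoK executions accept, uses its unbounded power together with the statistical binding of $\mathsf{Com}$ (with the honestly sampled $r_i$) to recover the committed values $\tilde a_i$, and accepts iff, in addition, $\cV_2(1^\secp,(c_1,\tilde a_1,\dots,c_\ell,\tilde a_\ell))=\top$. Note that $\cV_2'$ does not need $\cV_2$'s decision to be in $\NP$: being unbounded, it simply runs $\cV_2$.

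Completeness is immediate: honest $\cP'$ knows the openings of its own commitments, so every ZKAoK accepts, and $\tilde a_i=a_i$ by correctness and statistical binding, so $\cV_2'$ accepts whenever the original $\cV_2$ does, i.e., with probability $1-\negl(\secp)$. For computational zero-knowledge, the simulator $\cS$ would run the malicious $\cV_1^*$ internally, answer each round with $\mathsf{com}_i\coloneqq\mathsf{Com}_{r_i}(0)$, and then run the honest ZKAoK prover using the (known) opening to $0$; it outputs $\cV_1^*$'s randomness together with the resulting transcript. Indistinguishability against PPT distinguishers would be shown by a hybrid argument: first replace the honest ZKAoK proofs by simulated ones (sequential zero-knowledge of the ZKAoK against the auxiliary-input verifier $\cV_1^*$; the statements stay true), then switch the commitments from $\mathsf{Com}_{r_i}(a_i)$ to $\mathsf{Com}_{r_i}(0)$ (computational hiding, exploiting that the simulated proofs use no witness), and finally revert to honest ZKAoK proofs of the openings to $0$.

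For soundness, suppose toward a contradiction that some PPT prover $\cA$ makes $\cV_2'$ accept with probability at least $1/p(\secp)$ for infinitely many $\secp$. I would construct a PPT prover $\cP^*$ against $(\cP,\cV_1,\cV_2)$ that runs an internal copy of $\cA$: upon receiving $\cV_1$'s challenge $c_i$, it samples $r_i$ uniformly, feeds $(r_i,c_i)$ to $\cA$ to obtain $\mathsf{com}_i$, plays the round-$i$ ZKAoK with $\cA$ as the prover, and — when that ZKAoK accepts — invokes the argument-of-knowledge extractor, rewinding only within this sub-protocol, to obtain a valid opening $\hat a_i$ of $\mathsf{com}_i$; by statistical binding $\hat a_i=\tilde a_i$, so $\cP^*$ sends $a_i\coloneqq\hat a_i$ to $\cV_1$ and continues. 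The transcript produced by $\cP^*$ is then exactly $(c_1,\tilde a_1,\dots,c_\ell,\tilde a_\ell)$, which $\cV_2$ accepts with essentially the same probability as $\cV_2'$ accepted in the original run, contradicting the $\negl$-soundness of $(\cP,\cV_1,\cV_2)$. Combined with \Cref{thm:ZKIVPoQ_to_OWF}, this establishes the claimed loose equivalence between zero-knowledge IV-PoQ and classically-secure OWFs.

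The step I expect to be the main obstacle is making the soundness reduction's probability accounting go through. The AoK extractor succeeds only with probability related to $\cA$'s success in the corresponding round and up to the knowledge error, so recovering correct openings in all $\ell$ rounds at once is delicate; I would handle this by first amplifying the ZKAoK to negligible knowledge error, using expected-polynomial-time (or carefully truncated) extraction, and arranging the analysis as a round-by-round hybrid, keeping the total polynomial loss harmless since the base protocol has negligible soundness. A secondary point requiring care, already alluded to above, is that sequentially composing the $\ell$ ZKAoK executions preserves both the zero-knowledge and the argument-of-knowledge properties against a malicious, auxiliary-input $\cV_1^*$ interleaving the executions; this follows from sequential composition but should be spelled out.
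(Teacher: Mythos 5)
Your proposal is correct and follows the same skeleton as the paper's proof: start from the public-coin IV-PoQ given by \Cref{lem:OWF_IVPoQ}, have the prover commit to each of its messages, let the \emph{unbounded} verifier $\cV_2'$ brute-force the committed values and run the original $\cV_2$ on them, derive zero-knowledge from hiding, and derive soundness by extracting the committed messages in PPT to build a cheating prover against the base protocol. The difference is in how extraction is enabled. The paper uses the Pass--Wee classically-secure \emph{extractable commitment} (\Cref{thm:extcom}) as a single black box whose simulation/extraction/binding properties are exactly what the round-by-round hybrid $\cP^*_0,\ldots,\cP^*_{\ell}$ needs, followed by a Markov truncation of the expected-PPT extractor. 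You instead take a statistically-binding commitment and bolt on a ZKAoK of a valid opening, i.e., you rebuild an extractable commitment by hand. This is sound but costs you twice: the zero-knowledge argument now needs a three-step hybrid (simulate the AoK proofs, switch the commitments, revert), whereas in the paper it is a one-line consequence of computational hiding since the honest protocol reveals nothing beyond the commitments; and the soundness reduction inherits the delicate conditioning/knowledge-error accounting you flag at the end, which the extractable-commitment abstraction hides inside its extraction guarantee. Your worry about sequential composition and about amplifying the knowledge error is legitimate and is precisely the work that choosing the Pass--Wee primitive avoids; if you carry it out carefully (negligible knowledge error, truncated expected-PPT extraction, a per-round hybrid with a union bound over the $\ell$ rounds, exactly mirroring \Cref{eq:success_prob_hybrid}), the argument goes through.
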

To prove \Cref{thm:OWF_to_ZKIVPoQ} we rely on (classically-secure) extractable commitments. The following definition is based on the one in \cite{TCC:PasWee09}. 

\begin{definition}[Classically-Secure Extractable Commitments]\label{definition:ext-com} 
A classically-secure extractable commitment scheme consists of interactive PPT algorithms $\cS$ (sender) and $\cR$ (receiver). An execution of the scheme is divided into two phases, the commit phase and open phase. In the commit phase, $\cS$ takes the security parameter $1^\secp$ and message $m$ as input, $\cR$ takes the security parameter $1^\secp$ as input, and  $\cS$ and $\cR$ interact with each other where $\cR$ may declare rejection and abort at any point of the commit phase. We call a transcript of the commit phase a commitment and denote it by $\com$.  
In the open phase, $\cS$ sends $m$  and a classical string $d$ (decommitment) to $\cR$ and $\cR$ outputs $\top$ (accept) or $\bot$ (reject). We assume that $\cR$ is stateless, i.e., there is no state information of $\cR$ kept from the commit phase.\footnote{We can also define a commitment scheme with a stateful receiver, but we focus on the stateless receiver case for simplicity.} We require the following four properties.
\begin{itemize}
\item{\bf Correctness.} For any $\secp$ and $m$, if $\cS(1^\secp,m)$ and $\cR(1^\secp)$ run the commit and open phases honestly, then $\cR$ always accepts.  
\item{\bf Computational hiding.} 
For any (stateful) PPT malicious receiver $\cR^*$, we have
\begin{equation}
\left|
\Pr\left[
\mathsf{Out}_{\cR^*}\langle \cS(m_b),\cR^*\rangle(1^\secp)=b:
\begin{array}{ll}
(m_0,m_1)\gets \cR^*(1^\secp)\\
b\gets \bit\\
\end{array}
\right]
-\frac{1}{2}
\right|
\le \negl(\secp)
\end{equation}
where $\mathsf{Out}_{\cR^*}\langle \cS(m_b),\cR^*\rangle(1^\secp)$ means the output of $\cR^*$ after interacting with $\cS$ with the common input $1^\secp$ and private input $m_b$ in the commit phase. 
\item {\bf Statistical binding.\footnote{This follows from extractability, but we state it separately for convenience.}}
For any unbounded-time cheating sender $\cS^*$, let $\com$ be a commitment generated by $\cS^*$ and the honest receiver $\cR$. Then 
there is at most one $m$ such that there exists $d$ such that $\cR$ accepts $(m,d)$ as an opening of $\com$ except for a negligible probability. 
\item {\bf Extractability.} 
There is an expected PPT oracle machine (the extractor) $\mathcal{E}$ that takes the security parameter $1^\secp$ as input, 
given oracle access to any deterministic unbounded-time cheating sender $\cS^*$, and outputs a pair $(\com, m^*)$ such that: 
\begin{itemize}
\item
{\bf Simulation:} $\com$ is identically distributed to a commitment generated by $\cS^*$ and $\cR$. 
\item
{\bf Extraction:} the probability that $\com$ is accepting (i.e., the transcript indicates that $\cR$ never aborts in the commit phase) and $m^* =\bot$ is negligible.
\item
{\bf Binding:} 
If $m^*\ne \bot$, then there is no $m\ne m^*$ and $d$ such that $\cR$ accepts $(m,d)$ as an opening of $\com$. 
\end{itemize}
\end{itemize}
\end{definition} 
\begin{theorem}[\cite{TCC:PasWee09}]\label{thm:extcom}
If classically-secure OWFs exist, then classically-secure extractable commitments exist. 
\end{theorem}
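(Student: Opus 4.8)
The plan is essentially to recall the constant-round extractable commitment of \cite{TCC:PasWee09} and to check that every ingredient and every step of its analysis goes through with \emph{classical} security only. First I would reduce to the case where we are given an ordinary commitment scheme $\mathsf{Com}$ that is statistically binding and computationally hiding against PPT receivers, since it is a classical fact that such schemes follow from classically-secure OWFs (OWFs yield pseudorandom generators, and a pseudorandom generator yields a Naor-style statistically binding commitment whose hiding reduces to PRG security against PPT distinguishers). It then suffices to wrap $\mathsf{Com}$ into an extractable scheme.

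The wrapper I would use is the following. To commit to a message $m\in\bit^\ell$, the sender picks $k\coloneqq\secp$ uniformly random strings $a_1,\dots,a_k\in\bit^\ell$ and, in parallel, sends $\mathsf{Com}$-commitments to the $k$ pairs $(a_i,\,a_i\oplus m)$; the receiver replies with a uniformly random challenge $c\in\bit^k$; the sender then opens, for each $i\in[k]$, the $c_i$-th component of the $i$-th pair, and the receiver checks that these openings are valid. In the open phase the sender reveals $m$ and opens all remaining $\mathsf{Com}$-commitments, and the receiver accepts iff every pair opens to two strings whose XOR is $m$ and consistent with the halves revealed in the commit phase. This is constant-round, and the receiver is stateless as required by \cref{definition:ext-com}.

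Checking the first three properties is routine. Correctness is immediate. Statistical binding follows because a (possibly unbounded) cheating sender is statistically bound by $\mathsf{Com}$ to all $2k$ committed values, hence bound to $m=a_1\oplus(a_1\oplus m)$ except with negligible probability. Computational hiding follows from a hybrid argument over $i\in[k]$ in which the unopened half of the $i$-th pair is replaced by a $\mathsf{Com}$-commitment to an independent uniform string; consecutive hybrids are indistinguishable to any PPT malicious receiver by the computational hiding of $\mathsf{Com}$, and in the last hybrid the whole commit-phase transcript is independent of $m$.

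The step that requires real work, and the one I expect to be the main obstacle, is extractability, handled exactly as in \cite{TCC:PasWee09} (following the Goldreich--Kahan rewinding paradigm). Given a deterministic unbounded-time cheating sender $\cS^*$, the extractor runs $\cS^*$ once against the honest receiver with a random challenge $c$ to obtain a candidate commitment $\com$; if $\com$ is not accepting it outputs $(\com,\bot)$; otherwise it rewinds $\cS^*$ to the point just after the parallel $\mathsf{Com}$-commitments to the pairs, feeds fresh uniform challenges $c'$, and repeats until it obtains a second accepting transcript, from which, using any index $i$ with $c_i\neq c_i'$, it recovers both halves of the $i$-th pair and sets $m^*\coloneqq a_i\oplus(a_i\oplus m)$. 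The delicate points are: (i) \emph{expected polynomial running time} --- one first estimates, by sampling, the acceptance probability $p$ of $\cS^*$ on the fixed commit-to-pairs message so as to cap the number of rewindings at roughly $1/p$, making the expected work $O(1/p)\cdot p=O(1)$; (ii) \emph{perfect simulation} --- the extractor outputs the commitment produced in its \emph{first}, un-rewound, honest execution, so $\com$ is identically distributed to a real commitment; (iii) the event ``$\com$ accepting but $m^*=\bot$'' has negligible probability, since by statistical binding of $\mathsf{Com}$ it can only occur if every rewinding yielded the same challenge as $c$ (probability $2^{-\secp}$ each) or an inconsistent opening, which the calibration makes negligible overall; and (iv) the binding clause for $m^*$, which is again just statistical binding of $\mathsf{Com}$. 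Since $\cS^*$ is classical every rewinding is legitimate, and since each reduction above is to the classical security of $\mathsf{Com}$ (hence of the underlying classically-secure OWF), no quantum subtleties arise and the resulting scheme is a classically-secure extractable commitment.
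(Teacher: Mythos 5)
The paper does not prove this theorem itself---it is imported directly from Pass--Wee (TCC 2009)---and your proposal is a faithful reconstruction of exactly that construction: Naor-style statistically binding commitments from OWFs, wrapped in the challenge--response scheme over pairs $(a_i, a_i\oplus m)$, with extraction by Goldreich--Kahan-style rewinding. The argument is correct, all four properties of \cref{definition:ext-com} are addressed (including the expected-PPT and perfect-simulation subtleties of the extractor), and nothing further is needed.
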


We introduce a notation on extractable commitments. For a commitment $\com$, if there is a unique $m$ such that there is $d$ such that $\cR$ accepts $(m,d)$ as an opening of $\com$, then we define $\val(\com)=m$. If such $m$ does not exist or not unique, we define  $\val(\com)= \bot$. The following lemmas are easy to prove. 
\begin{lemma}\label{lem:val_com_is_m}
If $\com$ is generated by $\cS(1^\secp,m)$ and $\cR(1^\secp)$, then $\val(\com)=m$ except for a negligible probability.
\end{lemma}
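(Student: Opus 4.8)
\textbf{Proof proposal for \Cref{lem:val_com_is_m}.}

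The plan is to unpack the definition of $\val(\com)$ and invoke the two relevant properties of the classically-secure extractable commitment scheme, namely correctness and statistical binding. Recall that $\val(\com)=m$ precisely when there is a \emph{unique} message $m$ for which some decommitment $d$ is accepted by $\cR$ on input $(m,d)$, and $\val(\com)=\bot$ otherwise. So I need to argue two things: that $m$ itself is such a valid opening (existence), and that no other message $m'\ne m$ is (uniqueness), and that both hold except with negligible probability over the coins of $\cS(1^\secp,m)$ and $\cR(1^\secp)$ used to generate $\com$.

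First I would handle existence. By the correctness property of the scheme, when $\cS(1^\secp,m)$ and $\cR(1^\secp)$ run the commit and open phases honestly, $\cR$ always accepts the honestly generated opening $(m,d)$ where $d$ is the decommitment string produced by $\cS$ in the open phase. Hence, for every choice of coins, there exists a $d$ such that $\cR$ accepts $(m,d)$ as an opening of $\com$; in particular $m$ is always a valid opening. Next I would handle uniqueness via statistical binding: applied to the (honest, hence in particular unbounded-time) sender, statistical binding says that for a commitment $\com$ generated with the honest receiver $\cR$, there is at most one message admitting an accepted opening, except with negligible probability. Combining the two, except with negligible probability the unique accepted message is exactly $m$, so $\val(\com)=m$.

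The only mild subtlety — and the closest thing to an obstacle — is bookkeeping about \emph{which} sender statistical binding is stated for: the definition phrases it for an ``unbounded-time cheating sender $\cS^*$'', and here the sender is the honest $\cS(1^\secp,m)$. This is not a real problem, since the honest sender is a special case of an unbounded-time sender (one may view $\cS(1^\secp,m)$ as a particular unbounded-time strategy), so the statistical-binding guarantee applies verbatim to commitments $\com$ it produces together with $\cR$. One also notes that the randomness of $\cS$ is what plays the role of the ``generation'' randomness in the binding statement, so the negligible error is over exactly the distribution claimed in the lemma. With that observation in place, the argument is immediate and requires no calculation.
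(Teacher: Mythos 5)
Your proof is correct and matches the paper's argument, which likewise derives the lemma immediately from correctness (giving an accepted opening to $m$) and statistical binding (giving uniqueness, with the honest sender viewed as a special case of an unbounded-time sender). The paper states this in one line; your elaboration adds nothing incorrect.
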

\begin{proof}
    This immediately follows from correctness and statistical binding. 
\end{proof}
\begin{lemma}\label{lem:val_com_is_mstar_or_bot}
For any deterministic unbounded-time cheating sender $\cS^*$, 
if $\mathcal{E}^{\cS^*}(1^\secp)$ outputs $(\com,m^*)$, then $\val(\com)=m^*$ or $\val(\com)=\bot$ except for a negligible probability.  
\end{lemma}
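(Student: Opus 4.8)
\textbf{Proof plan for \Cref{lem:val_com_is_mstar_or_bot}.}
The plan is to use the \textbf{Binding} property of the extractor together with the definition of $\val$. Fix a deterministic unbounded-time cheating sender $\cS^*$, and suppose $\mathcal{E}^{\cS^*}(1^\secp)$ outputs $(\com,m^*)$. I want to show that, except with negligible probability, $\val(\com)\in\{m^*,\bot\}$. Equivalently, the bad event is that $\val(\com)=m$ for some $m\neq m^*$ with $m\neq \bot$; I will argue this event is negligible.

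First I would split on whether $m^*=\bot$ or $m^*\neq\bot$. If $m^*\neq\bot$, the \textbf{Binding} guarantee of the extractor says directly that there is no $m\neq m^*$ and $d$ such that $\cR$ accepts $(m,d)$ as an opening of $\com$; by the definition of $\val$, this forces $\val(\com)=m^*$ or $\val(\com)=\bot$ (the latter if no $m$ opens $\com$ at all), so in this case the conclusion holds with probability $1$ conditioned on $m^*\neq\bot$. If $m^*=\bot$, the conclusion $\val(\com)=m^*$ or $\val(\com)=\bot$ is simply $\val(\com)=\bot$, which I need to hold with overwhelming probability. Here I would invoke \textbf{Extraction}: the probability that $\com$ is accepting and $m^*=\bot$ is negligible. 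Combined with statistical binding (or rather, noting that a non-accepting $\com$ — one where $\cR$ aborts in the commit phase — cannot later be opened, so $\val(\com)=\bot$ for such $\com$), the only way to have $\val(\com)\neq\bot$ while $m^*=\bot$ is for $\com$ to be accepting with $m^*=\bot$, which happens with negligible probability.

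Putting these together: $\Pr[\val(\com)\notin\{m^*,\bot\}] \le \Pr[m^*\neq\bot \text{ and } \val(\com)\notin\{m^*,\bot\}] + \Pr[m^*=\bot \text{ and }\val(\com)\neq\bot]$, and the first term is $0$ by \textbf{Binding} while the second is negligible by \textbf{Extraction} (plus the observation on non-accepting transcripts). Hence the bad event has negligible probability, which is what we wanted. I do not expect any real obstacle here; the only point requiring a little care is the bookkeeping on non-accepting commitments, i.e. checking that if the commit-phase transcript indicates $\cR$ aborted, then no opening can be accepted and so $\val(\com)=\bot$ by definition — this is immediate from the syntax of the open phase, where $\cR$ is run on the commitment $\com$ and would reject.
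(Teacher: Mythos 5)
Your proposal is correct and is exactly the argument the paper intends: its proof is the one-liner ``this immediately follows from the second and third items of the extractability,'' and your case split on $m^*\neq\bot$ (handled by \textbf{Binding}) versus $m^*=\bot$ (handled by \textbf{Extraction}, plus the observation that a non-accepting commitment has $\val(\com)=\bot$) is precisely the intended unpacking of that remark.
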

\begin{proof}
This immediately follows from the second and third items of the extractability. 
\end{proof}

Then we prove \Cref{thm:OWF_to_ZKIVPoQ}. 
\begin{proof}[Proof of \Cref{thm:OWF_to_ZKIVPoQ}]
By \Cref{lem:OWF_IVPoQ}, public-coin IV-PoQ exist if classically-secure OWFs exist. 
By \Cref{thm:extcom}, classically-secure extractable commitments exist if classically-secure OWFs exist. 
Thus, it suffices to construct computational zero-knowledge IV-PoQ 
 from public-coin IV-PoQ and classially-secure extractable commitments. We show how to do it below. 

Let $\Pi=(\cP,\cV_1,\cV_2)$ be a public-coin IV-PoQ that satisfies $c$-completeness and $s$-soundness 
and let $\mathsf{ExtCom}$ be a classically-secure extractable commitment scheme. 
Without loss of generality, we assume that $\Pi$ is an $\ell$-round protocol where the first message is sent from the verifier and verifier's messages in each round are $n$-bit strings for some polynomials $\ell=\ell(\secp)$ and $n=n(\secp)$. 
Then we construct a computational zero-knowledge IV-PoQ $\tilde{\Pi}=(\tilde{\cP},\tilde{\cV}_1,\tilde{\cV}_2)$ that works as follows. 
\begin{itemize}
\item[\bf (The first phase)]
\item Upon receiving the security parameter $1^\secp$,
$\tilde{\cP}$ sets $\st_0\seteq \ket{1^\secp}$. 
\item For $i=1,2,\ldots,\ell$, do the following:
\begin{itemize}
    \item $\tilde{\cV}_1$ chooses $v_i\gets \bit^n$ and sends $v_i$ to $\tilde{\cP}$. 
    \item $\tilde{\cP}$ runs $\cP$ on the state $\st_{i-1}$ and $i$-th verifier's message $v_i$ to generate $i$-th prover's message $p_i$. 
    Let $\st_i$ be the internal state of $\cP$ at this point. 
    \item $\tilde{\cP}$ commits to $p_i$ using $\mathsf{ExtCom}$ where $\tilde{\cP}$ and $\tilde{\cV}_1$ play the roles of the sender and receiver, respectively. Let $\com_i$ be the commitment generated in this step. 
\end{itemize}
\item[\bf (The second phase)]
\item Upon receiving $1^\secp$ and a transcript $\tilde{\tau}=(v_1,\com_1,v_2,\com_2,\ldots,v_\ell,\com_\ell)$ of the first phase, $\tilde{\cV}_2$ computes $p'_i=\val(\com_i)$ by brute-force for all $i\in [\ell]$. If $p'_i=\bot$ for some $i\in [\ell]$, then $\tilde{\cV}_2$ outputs $\bot$. 
Otherwise $\tilde{\cV}_2$ runs $\cV_2$ on transcript  $\tau=(v_1,p'_1,v_2,p'_2,\ldots,v_\ell,p'_\ell)$ and outputs whatever $\cV_2$ outputs. 
\end{itemize}

By \Cref{lem:val_com_is_m}, we have $p'_i=p_i$ except for a negligible probability when we run the protocol honestly. Thus, $\tilde{\Pi}$ satisfies $(c-\negl)$-completeness by $c$-completeness of $\Pi$. 
The computational zero-knowledge property of $\tilde{\Pi}$ immediately follows from the computational hiding property of $\mathsf{ExtCom}$ since a simulator can simply commit to $0...0$ instead of to $p_i$.  

Below, we reduce $(s+\negl)$-soundness of $\tilde{\Pi}$ to $s$-soundness of $\Pi$ using the extractability of  $\mathsf{ExtCom}$ as follows. 
Toward contradiction, suppose that there is a PPT cheating prover ${\tilde{\cP}}^*$ such that 
  \begin{equation} \label{eq:success_prob_assumption}
            \Pr[\top\gets\tilde{\cV}_2(1^\secp,\tau):\tau\gets\langle\tilde{\cP}^*,\tilde{\cV}_1\rangle(1^\secp)] \ge s(\secp)+1/q(\secp)
        \end{equation}
        for some polynomial $q$ and infinitely many $\secp$. 
For each $j\in \{0,1,.\ldots,\ell\}$, we consider a (not necessarily PPT) cheating provers $\cP^*_j$ against $\Pi$ that works as follows:

\begin{itemize}
\item $\cP^*_j$ takes randomness $r$, which has the same length as randomness of ${\tilde{\cP}}^*$.
\item For $i=1,2,\ldots,j$, do the following:
\begin{itemize}
\item Receive $v_i$ from $\tilde{\cV}_1$. 
\item Let $\tilde{\cP}^*[r,\tau_i]$ be the part of $\tilde{\cP}^*$ that runs the $i$-th execution of $\mathsf{ExtCom}$ where we hardwire the randomness $r$ and the partial transcript $\tau_i=(v_1,\com_1,\ldots,v_{i-1},\com_{i-1},v_i)$. 
\item Run $(\com_i,p_i)\gets \mathcal{E}^{\tilde{\cP}^*[r,\tau_i]}(1^\secp)$ and send $p_i$ to $\tilde{\cV}_1$.
\end{itemize}
\item For $i=j+1,j+2,\ldots,\ell$, do the following:
\begin{itemize}
\item Receive $v_i$ from $\tilde{\cV}_1$. 
\item Run $\tilde{\cP}^*$ on the fixed randomness $r$  and the partial transcript $\tau_i=(v_1,\com_1,\ldots,v_{i-1},\com_{i-1},v_i)$ to complete the $i$-th execution of $\mathsf{ExtCom}$. 
Let $\com_i$ be the commitment generated in this step.  
\item Compute $p_i=\val(\com_i)$ by brute-force.
\item If $p_i=\bot$, abort. Otherwise 
send $p_i$ to $\cV_1$. 
\end{itemize}
\end{itemize}
By the definition, it is easy to see that 
\begin{equation} \label{eq:success_prob_initial}
\Pr[\top\gets\cV_2(1^\secp,\tau):\tau\gets\langle\cP^*_0,\cV_1\rangle(1^\secp)] 
=
\Pr[\top\gets\tilde{\cV}_2(1^\secp,\tau):\tau\gets\langle\tilde{\cP}^*,\tilde{\cV}_1\rangle(1^\secp)]. 
\end{equation}
For $j\in [\ell]$, the only difference between $\cP^*_{j-1}$ and $\cP^*_j$ is how to generate $(\com_j,p_j)$: $\cP^*_{j-1}$ runs $\tilde{\cP}^*$ to generate $\com_j$ and then sets $p_j=\val(\com_j)$ by brute-force whereas $\cP^*_{j}$ generates them by running the extractor. By the first item of extractability, the distributions of $\com_j$ are identical for both cases. Moreover, by \Cref{lem:val_com_is_mstar_or_bot}, $p_j$ takes the same value in both cases unless $\val(\com_j)=\bot$ except for a negligible probability. Note that if $\val(\com_j)=\bot$, then  $\cP^*_{j-1}$ aborts and thus never passes the verification of $\cV_2$. Thus, we have 
\begin{equation} \label{eq:success_prob_hybrid}
\Pr[\top\gets\cV_2(1^\secp,\tau):\tau\gets\langle\cP^*_{j},\cV_1\rangle(1^\secp)] 
\ge 
\Pr[\top\gets\cV_2(1^\secp,\tau):\tau\gets\langle\cP^*_{j-1},\cV_1\rangle(1^\secp)]-\negl(\secp). 
\end{equation}
By combining \Cref{eq:success_prob_assumption,eq:success_prob_initial,eq:success_prob_hybrid}, 
\begin{equation} \label{eq:success_prob_ell}
\Pr[\top\gets\cV_2(1^\secp,\tau):\tau\gets\langle\cP^*_{\ell},\cV_1\rangle(1^\secp)] 
\ge 
s(\secp)+1/q(\secp)-\negl(\secp)
\end{equation}
for infinitely many $\secp$. 

We remark that $\cP^*_\ell$ no longer runs brute-force, but it is still not PPT since it runs the extractor $\mathcal{E}$ that runs in \emph{expected} PPT. This can be converted into a PPT machine by a standard truncation technique. That is, let $\cP^*_{\ell+1}$ be a cheating prover that runs similarly to $\cP^*_{\ell}$ except that whenever it runs $\mathcal{E}$, if its running time is $2\ell(\secp) q(\secp)$ times larger than its expectation, then it aborts. 
Clearly, $\cP^*_{\ell+1}$ runs in PPT. Moreover, by Markov's inequality, for each invocation of $\mathcal{E}$, the probability that $\cP^*_{\ell+1}$  aborts is at most $1/(2\ell(\secp) q(\secp))$. Since $\cP^*_{\ell+1}$ runs $\mathcal{E}$ $\ell(\secp)$ times, 
by the union bound, the probability that this occurs is at most $1/(2 q(\secp))$. Thus, \Cref{eq:success_prob_ell}, we have
\begin{equation} 
\Pr[\top\gets\cV_2(1^\secp,\tau):\tau\gets\langle\cP^*_{\ell+1},\cV_1\rangle(1^\secp)] 
\ge 
s+1/(2q(\secp))-\negl(\secp)
\end{equation}
for infinitely many $\secp$. 
This contradicts $s$-soundness of $\Pi$. 
Thus, $\tilde{\Pi}$ satisfies $(s+\negl(\secp))$-soundness. 
\end{proof}

\paragraph{Zero-knowledge PoQ.}
Though our main focus is on IV-PoQ, we briefly discuss zero-knowledge (efficiently-verifiable) PoQ. First, we observe that the conversion in the proof of \Cref{thm:OWF_to_ZKIVPoQ} works in the efficiently-verifiable setting as well if we introduce an additional layer of zero-knowledge proofs where the prover proves that the committed transcript passes the verification. However, the conversion requires the base PoQ to be public-coin while most existing PoQ are not public-coin. 
Fortunately, we observe that we can relax the public-coin property to the ``transcript-independent'' property which means that the distribution of verifier's messages does not depend on the transcript and only depends on the verifier's randomness.
At first glance, one may think that it is problematic if the verifier uses its private randomness to make a decision in which case the statement that ``the committed transcript passes the verification'' is not an $\mathbf{NP}$ statement. However, this issue can be resolved by letting the verifier reveal its randomness after receiving all the commitments from the prover.\footnote{A similar idea is used in \cite{C:BKLMMVVY22}.} Since the randomness is revealed after the commitments are sent, a cheating prover can no longer change the committed transcript by the binding property of the extractable commitment, and thus this does not affect the soundness. 
In summary, we can generically upgrade any PoQ with transcript-independent verifiers into a (computational) zero-knowledge PoQ 
by additionally assuming the existence of OWFs. 
To our knowledge, all existing PoQ \cite{JACM:BCMVV21,NatPhys:KMCVY22,10.1145/3658665,ITCS:MorYam23,STOC:KLVY23,knowledge_assumptions} have  transcript-independent verifiers.

\paragraph{Toward equivalence between OWFs and zero-knowledge IV-PoQ.}
\Cref{thm:ZKIVPoQ_to_OWF,thm:OWF_to_ZKIVPoQ} can be regarded as a loose equivalence between OWFs and zero-knowledge IV-PoQ. However, there is a gap between them as \Cref{thm:ZKIVPoQ_to_OWF} assumes honest-verifier \emph{statistical} zero-knowledge while \Cref{thm:OWF_to_ZKIVPoQ} only gives \emph{computational} zero-knowledge. It is an interesting open question if we can fill the gap. 

There are two approaches toward solving that. One is to show that computational zero-knowledge IV-PoQ imply OWFs and the other is to show that OWFs imply honest-verifier statistical zero-knowledge IV-PoQ. For the former approach, the technique of \cite{OW93,Vad06}, which shows that computational zero-knowledge arguments for average-case-hard languages imply OWFs, might be useful, but it is unclear how to adapt their technique to the setting of IV-PoQ.  

We also do not have solution for the latter approach either, but we have the following observation.   
 We observe that we can construct statistical zero-knowledge IV-PoQ (or even efficiently-verifiable PoQ)  if we additionally assume the existence of an $\mathbf{NP}$ search problem that is easy for QPT algorithms but hard for PPT algorithms (or equivalently publicly-verifiable one-round PoQ). To see this, we can consider a protocol where the honest quantum prover solves the $\mathbf{NP}$ search problem and then proves the knowledge of the solution by using statistical zero-knowledge arguments of knowledge for $\mathbf{NP}$, which exists if OWFs exist~\cite{HNORV09}.  
Examples of classically-hard and quantumly-easy $\mathbf{NP}$ search problems are the factoring and discrete-logarithm problems (assuming classical hardness of them)~\cite{FOCS:Shor94}. Another example based on a random oracle was recently found in \cite{10.1145/3658665}. Thus, based on the random oracle heuristic~\cite{CCS:BelRog93}, we have a candidate construction of statistical zero-knowledge PoQ from hash functions.\footnote{This is \emph{not} a construction in the quantum random oracle model since we use the hash function in a non-black-box manner. Instead, we rely on the assumption that the problem considered in \cite{10.1145/3658665} is classically hard when the random oracle is instantiated with a concrete hash function.} 
Though this is far from a construction solely based on OWFs, this can be seen as an evidence that ``structured'' assumptions are not necessary for statistical zero-knowledge PoQ, let alone for statistical zero-knowledge IV-PoQ.\footnote{``Structure'' is a commonly used informal term that refers to problems behind constructions of existing public key encryption such as the hardness of factoring, discrete-logarithm, learning with errors, etc. On the other hand, hash functions are often regarded as ``unstructured''. See \cite{Barak17} for more context.}

\ifnum\anonymous=1
\else
{\bf Acknowledgements.}
TM is supported by
JST CREST JPMJCR23I3,
JST Moonshot R\verb|&|D JPMJMS2061-5-1-1, 
JST FOREST, 
MEXT QLEAP, 
the Grant-in Aid for Transformative Research Areas (A) 21H05183,
and 
the Grant-in-Aid for Scientific Research (A) No.22H00522.
YS is supported by JST SPRING, Grant Number JPMJSP2110.
\fi

\appendix
\section{On Uniformity of Adversaries}\label{sec:uniform}
As mentioned in \Cref{footnote_nonuniform}, we consider the uniform adversarial model in this paper. 
The only place where the uniformity of the adversary plays a crucial role is in the proof of \Cref{lem:QAS_IVPoQ} where we relate the hardness of non-interactive search and sampling problems. In the proof, we make heavy use of Kolmogorov complexity, which is defined with respect to uniform Turing machines, and thus the same proof does not work when we consider non-uniform adversaries. Thus, all the results that rely on \Cref{lem:QAS_IVPoQ} do not extend to the non-uniform setting. They include \Cref{lem:QAS_to_IV-PoQ,thm:QAS/OWF_to_IV-PoQ,thm:QAS/OWF_to_OWPuzz,thm:equivalence_variants}.
On the other hand, \Cref{thm:QASOWF_OWF_Samp,lem:IV-PoQ_to_Int-SampQA,thm:Int-QAS_to_QAS/OWF,thm:OWPuzz_to_QAS/OWF,thm:ZKIVPoQ_to_OWF,thm:OWF_to_ZKIVPoQ}  seem to extend to the non-uniform setting with appropriate adaptations. 
\section{Proof of \cref{lem:OWF_on_N}}
\label{sec:OWF_on_N}

\begin{proof}[Proof of \cref{lem:OWF_on_N}] 
The only if part is trivial. We show the if part.
Let $f:\bit^*\to\bit^*$ be a OWF on $\mathbb{N}\setminus \Sigma$ for a finite subset $\Sigma\subseteq\mathbb{N}$. 
    Then by the definition of OWFs on $\mathbb{N}\setminus \Sigma$, there exists an efficiently-computable polynomial $n$ such that
    for any PPT adversary $\cA$ and any 
    polynomial $p$
    there exists $\secp^*\in\mathbb{N}$ such that
    \begin{equation}\label{eq:OWF_security}
    \Pr[f(x')=f(x): x\gets\bit^{n(\secp)}, x'\gets\cA(1^{n(\secp)},f(x))] \le\frac{1}{p(\secp)}
    \end{equation} 
    for all $\secp\ge\secp^*$ in $\mathbb{N}\setminus \Sigma$. 
    Since $\Sigma$ is a finite set, we can assume that $\secp^*$ is larger than the 
    largest element of $\Sigma$ without loss of generality. Then, \Cref{eq:OWF_security} holds for all $\secp\ge\secp^*$ in $\mathbb{N}$. 

    From such $f$, we construct a OWF $g:\bit^*\to\bit^*$ as follows. 
  \begin{enumerate}
  \item 
  On input $x\in\bit^\ell$, find the maximum $\secp$ such that
  $n(\secp)\le \ell$. Set $\secp_\ell$ to be such maximum $\secp$.
  \item
  Output $g(x)\coloneqq f(x_{1,...,n(\secp_\ell)})$,
  where $x_{1,...,n(\secp_\ell)}$ is the first $n(\secp_\ell)$ bits of $x$.
  \end{enumerate}
 Then
 for any PPT $\cA$ and any polynomial $p$, 
if we  
  define a polynomial $q(\secp)\coloneqq p(n(\secp+1))$,
 there exists $\secp^*_{\cA,q}\in\mathbb{N}$
 such that
    \begin{align}
    &\Pr[g(x')=g(x): x\gets\bit^\ell, x'\gets\cA(1^\ell,g(x))] \\
    &=\Pr[f(x'_{1,...,n(\secp_\ell)})=f(x_{1,...,n(\secp_\ell)}): x\gets\bit^\ell, x'\gets\cA(1^\ell,f(x_{1,...,n(\secp_\ell)})] \\
    &=\Pr[f(x'_{1,...,n(\secp_\ell)})=f(x): x\gets\bit^{n(\secp_\ell)}, x'\gets\cA(1^\ell,f(x))] \\
    &=\Pr[f(w)=f(x): x\gets\bit^{n(\secp_\ell)}, w\gets\cA(1^{n(\secp_\ell)},f(x))] \\
    &\le\frac{1}{q(\secp_\ell)}\\
    &=\frac{1}{p(n(\secp_\ell+1))}\\
    &\le\frac{1}{p(\ell)}
    \end{align} 
    if $\secp_\ell\ge\secp^*_{\cA,q}$.
    For all sufficiently large $\ell\in\mathbb{N}$, $\secp_\ell\ge\secp^*_{\cA,q}$,
    and therefore the above inequality is satisfied for all sufficiently large $\ell\in\mathbb{N}$.
    This means that $g$ is a OWF.
\end{proof}
\section{Proof of \cref{lem:univ_OWF}}
\label{sec:univ_OWF}

\begin{proof}[Proof of \Cref{lem:univ_OWF}]
    The proof is similar to the universal construction technique \cite{STOC:Levin85,EC:HKNRR05} of OWFs.
    Let $M_1, M_2,...$ be an enumeration of all Turing machines.
    We construct a function $g$ as follows:
    \begin{itemize}
        \item $g:\bit^*\to\bit^*$:
        \begin{enumerate}
            \item Take $y\in\bit^\ell$ as input. 
            \item Let $N\in\mathbb{N}$ be the maximum value such that $N^2\le\ell$.
            \item Let $y_1\|y_2\|...\|y_N$ be the $N^2$-bit prefix of $y$, where $y_i\in\bit^N$ for all $i\in[N]$.
            \item For all $i\in[N]$, run $M_i(y_i)$ for $N^3$ steps. If $M_i(y_i)$ halts, then set $v_i:=M_i(y_i)$. Otherwise, set $v_i:=\bot$.
            \item Output $v_1\|v_2\|...\|v_N$.
        \end{enumerate}
    \end{itemize}
    The time to compute $g(y)$ is $O(|y|^2)$ for all $y\in\bit^*$ and therefore $g$ is computable in classical deterministic polynomial-time.
    
    Fix a subset $\Sigma\subseteq\mathbb{N}$.
    Let $f:\bit^*\to\bit^*$ be a classically-secure OWF on $\Sigma$.
    From \cref{lem:padding} below, we can assume that the time to compute $f(x)$ is $O(|x|^2)$ for any $x\in\bit^*$.
    Then, by the definition of OWFs on $\Sigma$, 
    there exists an efficiently-computable polynomial $n$ such that for any PPT algorithm $\cR$ and for any polynomial $p$, 
    there exists $\secp^*\in\mathbb{N}$ such that
    \begin{align}\label{eq:quantifier_OWF_condition}
        \Pr[f(z)=f(x):x\gets\bit^{n(\secp)},z\gets\cR(1^{n(\secp)},f(x))] \le \frac{1}{p(\secp)}
    \end{align}
    holds for all $\secp\ge\secp^*$ in $\Sigma$.
    Our goal is to show that $g$ is a classically-secure OWF on $\Sigma$.

    For the sake of contradiction, we assume that $g$ is not a classically-secure OWF on $\Sigma$.
    Then, $\Sigma$ is an infinite subset of $\mathbb{N}$, because if $\Sigma$ is finite, $g$ is trivially a OWF on $\Sigma$.
   Moreover, by the definition of OWFs on $\Sigma$,
   for any efficiently-computable polynomial $m$, there exist a PPT algorithm $\cA$, a polynomial $q$ and an infinite subset $\Lambda\subseteq\Sigma$ 
   such that
    \begin{align}\label{eq:univ_OWF_break}
        \Pr[g(w)=g(y):y\gets\bit^{m(\secp)},w\gets\cA(1^{m(\secp)},g(y))] > \frac{1}{q(\secp)}
    \end{align}
    holds for all $\secp\in\Lambda$.
    Because $f$ is computable in classical deterministic polynomial-time, there exists a deterministic Turing machine $F$ such that $F(x)=f(x)$ for all $x\in\bit^*$.
    Moreover, because as we have said the time to compute $f(x)$ is $O(|x|^2)$, 
    the running time of $F$ is $O(|x|^2)$.
    The description size of $F$ is a constant of $|x|$, and therefore
    there exists $\alpha^*\in\mathbb{N}$ such that $M_{\alpha^*}=F$.
    We define the PPT algorithm $\cR^*$ as follows:
    \begin{itemize}
        \item $\cR^*(1^{n(\secp)},f(x))\to z$:
        \begin{enumerate}
            \item Take $1^{n(\secp)}$ and $f(x)$ as input, where $x\gets\bit^{n(\secp)}$.
            \item If $n(\secp)\ge\alpha^*$, set $j=\alpha^*$. Otherwise, set $j=1$. Set $v_j:=f(x)$.
            \item Set $N:=n(\secp)$.
            \item For all $i\in[N]$ except for $j$, sample $y_i\gets\bit^{N}$ and run $M_i(y_i)$ for $N^3$ steps.
            If $M_i(y_i)$ halts, set $v_i:=M_i(y_i)$. Otherwise, set $v_i:=\bot$.
            \item Run $\cA(1^{N^2},v_1\|...\|v_{N})$ to obtain $w\in\bit^m$.\footnote{Note that the length $m$ of the output $w$ of $\cA$ is not necessarily equal to the input length $N^2$, because
            $g(w)=g(y)$ could happen in \cref{eq:univ_OWF_break} for some $w$ whose length is longer than $N^2$.}
            \item Parse $w:=w_1\|...\|w_{N}\|w'$, 
            where $w_i\in\bit^{N}$ for all $i\in[N]$ and $w'\in\bit^{m-N^2}$.
            \item Output $z:=w_j$.
        \end{enumerate}
    \end{itemize}
    We can show that
    for any efficiently computable polynomial $n$, there exists a PPT algorithm $\cR^*$ and
    a polynomial $q$ and an infinite subset 
    $\Lambda'\coloneqq\{\secp\in\Lambda:n(\secp)\ge\alpha^*\}$ 
    of $\Sigma$ such that
    \begin{align}
    &\Pr[f(z)=f(x):x\gets\bit^{n(\secp)},z\gets\cR^*(1^{n(\secp)},f(x))] \\
        &\ge \Pr[ g(w)=g(y) : y\gets\bit^{n(\secp)^2}, w\gets\cA(1^{n(\secp)^2},g(y)) ] \label{eq:botdenai}\\
       & > \frac{1}{q(\secp)} \label{eq:botdenai2}
   \end{align} 
    holds for all $\secp\in\Lambda'$.
    Here, in \cref{eq:botdenai} we have used the fact that the time to compute $f(x)$ is $O(|x|^2)$, and in
    \cref{eq:botdenai2}, we have used \Cref{eq:univ_OWF_break}.
    This contradicts \Cref{eq:quantifier_OWF_condition}, and therefore $g$ is a classically-secure OWF on $\Sigma$.
\end{proof}

\begin{lemma}
\label{lem:padding}
Let $\Sigma\subseteq\mathbb{N}$ be a set.
If there exists a OWF $f$ on $\Sigma$, 
there exists a OWF $f'$ on $\Sigma$   
such that the time to compute $f'(x)$ is $O(|x|^2)$ for any $x\in\bit^*$.
\end{lemma}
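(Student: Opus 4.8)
The plan is to prove \Cref{lem:padding} by a standard padding argument: given a OWF $f$ on $\Sigma$, define $f'$ to accept longer inputs by ignoring the padding, so that the time to evaluate $f'$ on an input of length $\ell$ is dominated by the (comparatively short) prefix that is actually fed into $f$. Concretely, first I would let $n$ be the efficiently-computable polynomial witnessing the security of $f$ on $\Sigma$ (as in \Cref{def:OWFsSigma}). Since $f$ is computable in classical deterministic polynomial time, there is a constant $d\ge 1$ such that $f(x)$ is computable in time $O(|x|^d)$. The idea is to choose the padding so aggressively that, on an input of length $\ell$, the prefix that is passed into $f$ has length at most $\ell^{1/d}$ (up to constants), so that evaluating $f$ on it costs only $O(\ell)$, and the overall bookkeeping (finding the right length, copying the prefix) costs $O(\ell^2)$.

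The construction I would use: on input $y\in\bit^\ell$, compute the largest integer $\secp$ such that $n(\secp)^{d}+n(\secp)\le \ell$ — call it $\secp_\ell$ — let $x$ be the first $n(\secp_\ell)$ bits of $y$, and output $f'(y)\coloneqq f(x)$. (If no such $\secp$ exists, output a fixed value.) Finding $\secp_\ell$ takes time polynomial in $\ell$ — in fact $O(\ell)$ steps suffice if we increment $\secp$ and evaluate the polynomial $n$ — and can be arranged to be $O(\ell^2)$ overall; copying the $n(\secp_\ell)\le \ell$ bits is $O(\ell)$; and evaluating $f(x)$ takes time $O(|x|^d)=O(n(\secp_\ell)^d)=O(\ell)$ by the defining inequality. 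Hence $f'(y)$ is computable in time $O(|y|^2)$, as required. (If one is worried that the constant in $O(\ell^2)$ is too large, one can note that any polynomial-time reduction in this section only needs \emph{some} fixed polynomial bound, and the statement as written asks precisely for $O(|x|^2)$, which the above meets with room to spare; if even that fails for pathological $n$, one can first replace $n$ by $\max(n,\mathrm{id})$ and absorb constants.)

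For security, I would argue exactly as in the proof of \Cref{lem:OWF_on_N}: an inverter $\cA$ for $f'$ on inputs of length $\ell$ yields an inverter for $f$ on inputs of length $n(\secp_\ell)$, because $f'(y)=f(x)$ where $x$ is uniform in $\bit^{n(\secp_\ell)}$ when $y$ is uniform in $\bit^\ell$, and any preimage of $f'(y)$ under $f'$ yields a preimage of $f(x)$ under $f$ by taking the appropriate prefix. Given a polynomial $p$, set $q(\secp)\coloneqq p\big(n(\secp+1)^{d}+n(\secp+1)\big)$; then the success probability of $\cA$ on length-$\ell$ inputs is bounded by $1/q(\secp_\ell)\le 1/p(\ell)$ once $\secp_\ell$ is large enough, because $\ell < n(\secp_\ell+1)^{d}+n(\secp_\ell+1)$ by maximality of $\secp_\ell$. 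Since $\secp_\ell\to\infty$ as $\ell\to\infty$, and since we only need the bound for $\ell$ such that $\secp_\ell\in\Sigma$ — which happens for all $\ell$ with $\secp_\ell$ in the (infinite) tail of $\Sigma$ — we conclude $f'$ is a classically-secure OWF on $\Sigma$. The quantumly-secure case is identical.

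I do not expect a serious obstacle here; the only mildly delicate point is making sure the time bound is genuinely $O(|x|^2)$ rather than merely $\poly(|x|)$, which is why the padding is chosen to shrink the $f$-relevant prefix to length $\ell^{1/d}$; and the second delicate point is that the index shift $\secp\mapsto\secp_\ell$ interacts correctly with ``for all sufficiently large $\secp$ in $\Sigma$'', which is handled just as in \Cref{lem:OWF_on_N} by noting $\secp_\ell$ is monotone nondecreasing and unbounded in $\ell$. Everything else is routine.
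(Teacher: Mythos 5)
Your proposal is correct and is essentially the same padding argument as the paper's proof: pad the input so that the prefix actually fed into $f$ is short enough that evaluating $f$ costs $O(\ell)$, and transfer security by a security-parameter-wise reduction that strips the padding. The only cosmetic differences are that the paper's $f'$ appends the unused suffix to the output ($f'(z)=f(z_{1,\dots,i})\,\|\,z_{i+1,\dots,r}$, with $i$ chosen as the largest index with $i^c\le r$) whereas yours discards it, and the paper indexes the cut point by the input length directly rather than via the polynomial $n$; neither difference affects correctness.
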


\begin{proof}[Proof of \cref{lem:padding}]
The proof is based on the padding argument of \cite{Goldreich04}.
Let $\Sigma\subseteq\mathbb{N}$ be a set.
Let $f$ be a OWF on $\Sigma$.
If the time to compute $f$ is $O(|x|^c)$ for $c>2$, we define a function $f'$ as follows.
\begin{enumerate}
\item 
On input a bit string, $z\in\bit^r$, find the maximum $i\in[r]$ such that $|z|\ge |z_{1,...,i}|^c$,
where $z_{1,...,i}$ is the first $i$ bits of $z$.
\item
Output $f'(z)\coloneqq f(z_{1,...,i})\|z_{i+1,...,r}$, 
where $z_{i+1,...,r}$ is the last $r-i$ bits of $z$.
\end{enumerate}
The time to compute $f'(z)$ is $O(|z|^2)$ because finding the maximum $i$  
spends $O(|z|^2)$ time\footnote{
    A single-tape Turing machine spends $O(|z|^2)$ time to find the maximum $i$.
} and computing $f(z_{1,...,i})$ spends $O(|z_{1,...,i}|^c)\le O(|z|)$ time.

Moreover, we can show that $f'$ is a OWF on $\Sigma$. 
For the sake of contradiction, assume that $f'$ is not a OWF on $\Sigma$.
Then by the definition of OWFs on $\Sigma$,
for any efficiently computable polynomial $n$, there exist a PPT adversary $\cA$, a polynomial $p$, and an infinite subset $\Lambda\subseteq\Sigma$ such that
\begin{align}
\Pr[f'(z)=f'(w):z\gets\bit^{n(\secp)},w\gets\cA(1^{n(\secp)},f'(z))]    
>\frac{1}{p(\secp)}
\end{align}
holds for all $\secp\in\Lambda$.
From such $\cA$, we can construct a PPT adversary $\cB$ that breaks the security of $f$ as follows:
\begin{enumerate}
    \item 
    Let $s$ be a polynomial.
    Take $1^{s(\secp)}$ and $f(x)$ with $x\gets\bit^{s(\secp)}$ as input.
    \item Let $n(\secp):=s(\secp)^c$ and sample $u\gets\bit^{n(\secp)-s(\secp)}$.
    \item Run $w\gets\cA(1^{n(\secp)},f(x)\|u)$.
    \item Output $w_{1,...,s(\secp)}$.
\end{enumerate}
Then, for any polynomial $s$,
there exist a PPT adversary $\cB$,
a polynomial $p$, and an infinite subset $\Lambda\subseteq\Sigma$ such that
\begin{align}
& \Pr[f(x)=f(y):x\gets\bit^{s(\secp)},y\gets\cB(1^{s(\secp)},f(x))]   \\
& = \Pr[f(x)=f(w_{1,...,s(\secp)}):x\gets\bit^{s(\secp)},u\gets\bit^{n(\secp)-s(\secp)},w\gets\cA(1^{n(\secp)},f(x)\|u)]   \\
& = \Pr[f'(x\|u)=f'(w_{1,...,s(\secp)}\|u):x\gets\bit^{s(\secp)},u\gets\bit^{n(\secp)-s(\secp)},w\gets\cA(1^{n(\secp)},f'(x\|u))]\\   
& = \Pr[f'(z)=f'(\xi):z\gets\bit^{n(\secp)},\xi\gets\cA(1^{n(\secp)},f'(z))]\\   
&>\frac{1}{p(\secp)}
\end{align}
holds for all $\secp\in\Lambda$.
This contradicts the assumption that $f$ is a OWF on $\Sigma$.
Therefore, $f'$ is a OWF on $\Sigma$.
\end{proof}
\section{Proof of \Cref{lem:OWF_OWPuzz}}
\label{sec:OWF_OWPuzz}

\begin{proof}[Proof of \cref{lem:OWF_OWPuzz}]
Let $\Sigma\subseteq\mathbb{N}$ be a subset.
    Let $f:\bit^*\to\bit^*$ be a classically-secure OWF on $\Sigma$.
    Then by the definition of OWFs on $\Sigma$, there exists an efficiently computable polynomial $n$ such that
    for any PPT algorithm $\cA$ and a polynomial $p$, there exists $\secp^*\in\mathbb{N}$ such that
\begin{align}
\Pr[f(x)=f(x'):x\gets\bit^{n(\secp)},x'\gets\cA(1^{n(\secp)},f(x))]    
\le\frac{1}{p(\secp)}
\end{align}
holds for all $\secp\ge\secp^*$ in $\Sigma$.
    From $f$, we construct a classically-secure OWPuzz $(\Samp,\Ver)$ on $\Sigma$ 
    with 1-correctness and $\negl$-security as follows:
    \begin{itemize}
        \item $\Samp(1^\secp)\to(\ans,\puzz)$:
        \begin{enumerate}
            \item Take $1^\secp$ as input.
            \item Sample $x\gets\bit^{n(\secp)}$ and set $\ans\coloneq x$.
            \item Compute $f(x)=y$ and set $\puzz\coloneq(1^{n(\secp)},y)$.
            \item Output $(\ans,\puzz)$.
        \end{enumerate}
        \item $\Ver(\ans',\puzz)\to\top/\bot$:
        \begin{enumerate}
            \item Parse $\ans'=x'$ and $\puzz=(1^{n(\secp)},y)$.
            \item If $f(x')=y$, output $\top$. Otherwise, output $\bot$.
        \end{enumerate}
    \end{itemize}
    First we show that $(\Samp,\Ver)$ satisfies 1-correctness.
    In fact, for all $\secp\in\mathbb{N}$,
    \begin{align}
        \Pr[\top\gets\Ver(\ans,\puzz):(\ans,\puzz)\gets\Samp(1^\secp)] 
        &= \Pr[f(x)=y : x\gets\bit^{n(\secp)},y:=f(x)] \\
        &= 1.
    \end{align}

    Next, we prove that $(\Samp,\Ver)$ satisfies $\negl$-security on $\Sigma$.
    From the one-wayness of $f$, for any PPT adversary $\cA$, polynomial $p$, 
    there exists $\secp^*\in\mathbb{N}$ such that
    \begin{align}
        &\Pr[\top\gets\Ver(\cA(1^\secp,\puzz),\puzz):(\ans,\puzz)\gets\Samp(1^\secp)] \\
        &= \Pr[f(x')=f(x):x\gets\bit^{n(\secp)},x'\gets\cA(1^{n(\secp)},f(x))] \\ 
        &\le \frac{1}{p(\secp)}.   
    \end{align}
    holds for for all $\secp\ge\secp^*$ in $\Sigma$.
   This means that $(\Samp,\Ver)$ satisfies $\negl$-security on $\Sigma$. 
\end{proof}

\ifnum\submission=0
\bibliographystyle{alpha} 
\else
\bibliographystyle{splncs04}
\fi
\bibliography{abbrev3,crypto,reference,text}

\if0
\ifnum\submission=1
\else
\appendix
\input{succinct_comm}
\input{working_log_owsg}
\fi

\clearpage
\newpage
\setcounter{tocdepth}{2}
\tableofcontents
\fi

\end{document}